\newif\iflipics
\title{Flipping and Forking}
\author{Wojciech Przybyszewski}{University of Warsaw, Warsaw, Poland \and \url{https://www.mimuw.edu.pl/~przybyszewski/}}{przybyszewski@mimuw.edu.pl}{https://orcid.org/0000-0003-1158-9925}{}
\author{Szymon Toruńczyk}{University of Warsaw, Warsaw, Poland \and \url{https://www.mimuw.edu.pl/~szymtor/}}{szymtor@mimuw.edu.pl}{https://orcid.org/0000-0002-1130-9033}{}
\authorrunning{W. Przybyszewski and S. Toruńczyk} 
\keywords{structural graph theory,
model theory,
finite model theory,
monadic stability,
forking independence} 
\newcommand\sz[1]{\todo[size=\scriptsize,backgroundcolor=lime]
{\textbf{sz:}#1}}
\renewcommand{\cal}[1]{\mathcal {#1}}
\newcommand{\Cc}{{\mathcal C}}
\newcommand{\CC}{\Cc}
\newcommand{\gaif}{\mathit{Gaif}}
\newcommand{\Gg}{{\mathcal G}}
\newcommand{\GG}{\Gg}
\newcommand{\Ll}{{\mathcal L}}
\newcommand{\N}{\mathbb{N}}
\renewcommand{\preceq}{\preccurlyeq}
\renewcommand{\subset}{\subseteq}
\renewcommand{\le}{\leqslant}
\renewcommand{\ge}{\geqslant}
\renewcommand{\phi}{\varphi}
\renewcommand{\epsilon}{\varepsilon}
\renewcommand{\hat}{\widehat}
\def\xInd#1#2{#1\setbox0=\hbox{$#1x$}\kern\wd0\hbox to 0pt{\hss$#1\mid$\hss}
	\lower.9\ht0\hbox to 0pt{\hss$#1\smile$\hss}\kern\wd0}
\def\xind{\mathop{\mathpalette\xInd{}}} 
\def\xnotind#1#2{#1\setbox0=\hbox{$#1x$}\kern\wd0
	\hbox to 0pt{\mathchardef\nn=12854\hss$#1\nn$\kern1.4\wd0\hss}
	\hbox to 0pt{\hss$#1\mid$\hss}\lower.9\ht0 \hbox to 0pt{\hss$#1\smile$\hss}\kern\wd0}
\def\xnind{\mathop{\mathpalette\xnotind{}}} 
\newcommand{\ind}[2]{\xind_{#1}^{#2}}
\newcommand{\nind}[2]{\xnind_{#1}^{#2}}
\newcommand{\find}[1]{\xind_{#1}}
\newcommand{\nfind}[1]{\xnind_{#1}}
\newcommand{\Types}[1]{\mathrm{Types}^{#1}}
\newcommand{\tup}{\bar}
\newcommand{\tp}{\mathrm{tp}}
\newcommand{\atp}{\mathrm{atp}}
\newcommand{\otp}{\mathrm{otp}}
\newcommand{\fdist}{\textup{flip-dist}}
\DeclareMathOperator{\Th}{\text{Th}}
\newcommand\set[1]{\ensuremath{\{#1\}}}
\newcommand{\from}{\colon}
\newcommand{\setof}[2]{\set{#1\mid#2}}
\newcommand{\str}[1]{\mathbf{#1}}
\newcommand{\mathsym}[1]{{}}
\DeclareMathOperator{\dist}{dist}
\DeclareMathOperator{\xor}{xor}
\definecolor{gray1}{rgb}{0.99,0.99,0.99}
\definecolor{gray2}{rgb}{0.97,0.97,0.97}
\definecolor{gray3}{rgb}{0.95,0.95,0.95}
\definecolor{gray4}{rgb}{0.93,0.93,0.93}
\definecolor{gray5}{rgb}{0.91,0.91,0.91}
\definecolor{gray6}{rgb}{0.89,0.89,0.89}
\definecolor{gray7}{rgb}{0.87,0.87,0.87}
\definecolor{gray8}{rgb}{0.85,0.85,0.85}
\definecolor{gray9}{rgb}{0.83,0.83,0.83}
\definecolor{gray10}{rgb}{0.81,0.81,0.81}
\definecolor{gray20}{rgb}{0.71,0.71,0.71}
\definecolor{gray40}{rgb}{0.51,0.51,0.51}
\newlength{\leftbarwidth}
\newlength{\leftbarsep}
\renewenvironment{leftbar}[1][blue]
{%
\MakeFramed{\hsize\hsize\advance\hsize-\width\FrameRestore}%
}
{\endMakeFramed}
  \newenvironment{szbar}{\begin{leftbar}[red]{\noindent\bf sz: }}{\end{leftbar}}
\def\BibTeX{{\rm B\kern-.05em{\sc i\kern-.025em b}\kern-.08em
    T\kern-.1667em\lower.7ex\hbox{E}\kern-.125emX}}
\title{Flipping and Forking}
\author{\IEEEauthorblockN{Wojciech Przybyszewski \thanks{
    \vspace*{0.2cm}
    \hspace*{-2cm}\begin{minipage}{\columnwidth}
        WP was supported by project {\sc{bobr}} that has received funding from the European Research Council (ERC) under the European Union's Horizon 2020 research and innovation programme (grant No 948057). ST was supported by ERC project {\sc{buka}} 
        (grant No 101126229).
    \end{minipage}\hfill
    \begin{minipage}{0.5\columnwidth}
        \includegraphics[width=\textwidth]{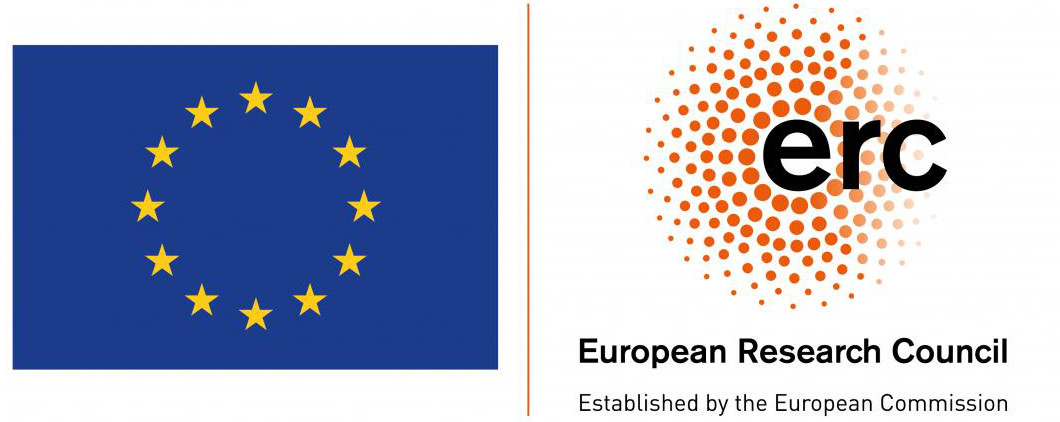}
    \end{minipage}\hfill
}}
\IEEEauthorblockA{Institue of Informatics\\
Univeristy of Warsaw\\
Warsaw, Poland\\
przybyszewski@mimuw.edu.pl}
\and
\IEEEauthorblockN{Szymon Toruńczyk}
\IEEEauthorblockA{Institue of Informatics\\
Univeristy of Warsaw\\
Warsaw, Poland\\
szymtor@mimuw.edu.pl}}
\begin{document}
\maketitle

\begin{abstract}
  Monadic stability and the more general monadic dependence (or NIP) are tameness conditions for classes of logical structures, studied in the 80's in Shelah's classification program in model theory. They recently emerged in algorithmic and structural graph theory and finite model theory as central notions in relation with the model checking problem for first-order logic: the problem was shown to be fixed-parameter tractable for inputs which come from a fixed class of graphs which is monadically stable, and is conjectured to be tractable in all monadically dependent classes. Several combinatorial characterizations of such graph classes turned out to be essential in their algorithmic treatment; they are all based on the fundamental operation of ``flipping'' a graph.

  We introduce the notions of \emph{flips} and \emph{flip independence} in arbitrary relational structures. 
  We lift prior combinatorial characterizations of monadically stable graph classes
  to monadically stable classes of relational structures.
  We show the equivalence of flip independence with \emph{forking independence} (over models) -- a logical notion of paramount importance in stability theory -- in monadically stable structures, shedding new light on the relevance of flips, also characterizing forking independence (over models) combinatorially.
  We give more precise descriptions of forking independence in the case of 
  monadically stable graphs, and relational structures with a nowhere dense Gaifman graph.
\end{abstract}

\iflipics\else
\begin{IEEEkeywords}

\end{IEEEkeywords}
\fi

\section{Introduction}

Over the past two decades, the following problem has attracted significant attention in the areas of finite and algorithmic model theory,
and algorithmic graph theory \cite[Section~8]{logic-graphs-algorithms}:
\begin{quote}\itshape
    Characterize those classes of structures for which the model checking problem for first-order logic is fixed-parameter tractable.
\end{quote}
For a fixed class $\CC$ of structures, the \emph{model-checking problem for first-order logic}  is the problem of deciding whether a given structure $\str A\in\CC$ satisfies a given  first-order sentence $\phi$.
The problem is \emph{fixed-parameter tractable} if it can be solved in  time polynomial in the size of $\str A$, where the degree of the polynomial does not depend on $\phi$.
If such an algorithm for the class $\CC$ exists, we call $\CC$ \emph{tractable} for brevity.

Most of the focus in the area has been on \emph{graph} classes, allowing to leverage  tools emerging from structural graph theory.
Tractable graph classes include the class of planar graphs, graph classes of bounded treewidth or clique-width, proper minor-closed graph classes, the class of unit interval graphs, proper hereditary classes of permutation graphs, every nowhere dense graph class \cite{gks}, 
and others. 

From the perspective of structural graph theory,
it is natural to consider \emph{hereditary} classes of graphs, or other structures -- classes, which are closed under taking induced substructures.
It is conjectured \cite{warwick-problems,tww1,flip-breakability} that for hereditary classes, tractability coincides 
with \emph{monadic dependence}.
\begin{conjecture}\label{conj:mc}
For a hereditary class $\CC$ of structures in a finite relational language,
    $$\textit{\textit{$\CC$ is tractable}}\quad\iff\quad\textit{$\CC$ is monadically dependent}.$$        
    \end{conjecture}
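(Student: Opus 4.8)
The plan is to prove the two implications separately. The implication from tractability of $\CC$ to monadic dependence of $\CC$ is essentially settled, so I would only sketch it; the reverse implication is the genuinely open part, where what I can offer is a program built on flips rather than a complete proof.

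For the direction from tractability to monadic dependence I would argue contrapositively. Assume $\CC$ is hereditary and \emph{not} monadically dependent, i.e.\ monadically independent. By the known characterization of monadic dependence for hereditary classes via transductions, such a $\CC$ transduces the class of all finite graphs; this is the combinatorial content of monadic independence in the hereditary setting, and hereditariness is exactly what lets one extract the clean, arbitrarily large witnesses that the transduction requires. Since FO model checking on the class of all graphs is $\mathrm{AW}[*]$-hard, and an efficient transduction turns a model-checking instance on graphs into one on $\CC$ with a sentence of bounded quantifier rank, this yields an FPT reduction showing that model checking on $\CC$ cannot be fixed-parameter tractable unless $\mathrm{FPT}=\mathrm{AW}[*]$.

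The reverse (tractability) direction is where the tools of this paper matter. My strategy would generalize the recursion-on-quantifier-rank scheme that already succeeds for nowhere dense classes \cite{gks} and, more recently, for monadically stable classes. The skeleton is: use Gaifman locality to reduce evaluating $\phi$ on $\str A\in\CC$ to evaluating local formulas around tuples; then apply a bounded number of \emph{flips} to simplify the relevant neighborhoods, relying on the combinatorial characterization of monadic dependence by \emph{flip-breakability} \cite{flip-breakability} to ensure that, after flipping, each neighborhood splits into a bounded number of pieces of controlled complexity, on which the algorithm recurses with lower quantifier rank. In the monadically stable case this is precisely where flip independence enters: the equivalence of flip independence with forking independence over models, established in this paper, supplies a symmetric, well-behaved notion of independent sets that bounds the branching of the recursion and certifies its termination.

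The main obstacle is the passage from the stable to the merely dependent setting. Forking independence is tame in stable theories -- symmetric, transitive, and governed by the independence theorem -- but in general monadically dependent (NIP) theories it loses these properties, so the stable algorithm does not lift verbatim. Concretely, order-type phenomena, as in interval or permutation graphs, are monadically dependent yet not stable, and a linear order cannot be ``flipped away''; the simplification step must instead cope with it directly. Finding the correct dependent analogue of flip/forking independence that still drives the recursion is exactly the crux on which the full conjecture currently rests, and it is the step I expect to resist any routine argument.
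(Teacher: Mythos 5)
The statement you are addressing is \Cref{conj:mc}, which the paper presents as an open conjecture and does not prove; there is therefore no ``paper proof'' to compare against. Your proposal is, by your own admission, not a proof either, and it is worth being precise about where it falls short of one even in principle.

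For the left-to-right implication, what is actually known (via \cite{flip-breakability}) is a \emph{conditional} statement for hereditary \emph{graph} classes: if a hereditary graph class is not monadically dependent, then model checking on it is $\mathrm{AW}[*]$-hard, so intractability follows only under $\mathrm{AW}[*]\neq\mathrm{FPT}$. Your sketch elides two gaps. First, the conjecture as stated concerns hereditary classes of \emph{relational structures} in a finite relational language, and the hardness construction has not been carried out in that generality; the paper itself remarks that obtaining hardness results for relational structures ``would require additional Ramsey-based arguments,'' and that its \Cref{thm:nowhere-dense-struct} deliberately does not yield them. The step where you say that monadic independence of a hereditary class ``transduces the class of all finite graphs'' and that this gives an FPT reduction is exactly the part that needs the Ramsey-type extraction of clean witnesses inside higher-arity relations, which is not routine. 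Second, even granting that, the implication is conditional, whereas the conjecture is stated unconditionally.

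For the right-to-left implication, your program correctly identifies the crux -- the absence of a well-behaved independence notion in the merely dependent (unstable) setting, where forking over models loses symmetry and where linear orders cannot be flipped away -- but a program is not a proof, and the recursion scheme you describe (Gaifman locality plus flip-breakability plus a hypothetical dependent analogue of flip/forking independence) has no termination or correctness argument at the point where it matters. In short: your write-up is a reasonable survey of the state of the art and of the obstacles, and it does not misstate anything fatal, but it proves neither implication, and it should not be presented as a proof attempt of the conjecture so much as a description of why the conjecture is open.
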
    
Monadic dependence is a notion originating in model theory,
as part of Shelah's vast classification program \cite{Shelah1986}. 
Briefly, 
a class $\CC$ is \emph{not} monadically dependent 
if there is a formula $\phi(x,y)$ in the language 
of $\CC$ expanded by unary predicates, such that for every finite graph $G$
there is a unary expansion $\widehat{\str A}$ of some structure $\str A\in \CC$ and a subset $X\subset V(\str A)$,
such that $G$ is isomorphic to
the graph with vertices $X$ and edges 
 $\set{u,v}$ such that $\widehat{\str A}\models\phi(u,v)$
 (another, equivalent definition is given in \Cref{sec:prelims}).

For hereditary graph classes
\cite{flip-breakability},
the left-to-right implication in \Cref{conj:mc} has been recently established (under complexity-theoretic assumptions).
For general hereditary classes of graphs, the right-to-left implication in \Cref{conj:mc} remains open, but has been confirmed in several important cases, including monotone graph classes 
and more generally, edge-stable graph classes (see below),
and -- in the twin-width framework --
for classes of permutation graphs \cite{tww1}, of ordered graphs \cite{tww4}, or of tournaments \cite{tww-tournaments}.

In particular, the celebrated result of Grohe, Kreutzer, and Siebertz \cite{gks}, states
that every \emph{nowhere dense}  graph class is tractable.
This is a  structural property introduced by Ne\v set\v ril and Ossona de~Mendez \cite{NesetrilM11a}.
Nowhere dense graph classes include many well-studied classes of sparse graphs, such as the class of planar graphs, any class excluding a fixed graph as a minor, and any graph class with bounded maximum degree.


\begin{wrapfigure}{r}{0.23\linewidth} 
    \centering
    \includegraphics[width=\linewidth]{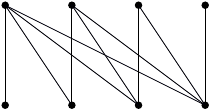}
        \caption{A half-graph of order $4$.}\label{fig:half-graph}
\end{wrapfigure}
Extending the result of Grohe, Kreutzer, and Siebertz, the following
result characterizes all  hereditary, tractable graph classes which are \emph{edge-stable}:
 avoid a \emph{half-graph} as a semi-induced\footnote{That is, for some $n\in\N$, no graph $G$ in the class contains vertices $a_1,b_1,\ldots,a_n,b_n$ with $(a_i, b_j)\in E(G)\iff i\le j$.} bipartite subgraph
(see \Cref{fig:half-graph}).

        

\begin{theorem}[\cite{rankwidth-meets-stability,ms-mc1,ms-mc2}]\label{thm:ms-mc}
    The following conditions are equivalent for a hereditary, edge-stable graph class $\CC$:
    \begin{enumerate}
        \item $\CC$ is tractable\footnote{The first item implies the other ones under a complexity theoretic assumption, $\mathrm{AW}[*]\neq \mathrm{FPT}$},
        \item $\CC$ is monadically stable,
        \item $\CC$ is monadically dependent.
    \end{enumerate}
\end{theorem}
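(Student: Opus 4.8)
The plan is to prove the three equivalences via a cycle of implications, separating the purely model-theoretic content, namely \((2)\Leftrightarrow(3)\), from the algorithmic content, namely \((2)\Rightarrow(1)\) and \((1)\Rightarrow(3)\); the edge-stability hypothesis is used only in one place. First I would dispose of the easy direction \((2)\Rightarrow(3)\): monadic stability is by definition a strengthening of monadic dependence, since the order property (witnessed by arbitrarily large half-graphs, as in \Cref{fig:half-graph}) is implied by the independence property. Thus any class exhibiting the monadic independence property also exhibits the monadic order property, and contrapositively a monadically stable class is monadically dependent. No edge-stability is needed here, and this step requires no combinatorics.

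The reverse implication \((3)\Rightarrow(2)\) is the heart of the equivalence and the only place edge-stability is used; I would argue by contraposition. Assume \(\CC\) is edge-stable but \emph{not} monadically stable, and aim to show it is not monadically dependent. Failure of monadic stability means \(\CC\) has the monadic order property: there is a formula \(\phi(\tup x,\tup y)\) which, over suitable unary expansions of members of \(\CC\), defines arbitrarily long linear orders on tuples. The key step is to show that under edge-stability such an encoded order cannot be ``absorbed'' by the Gaifman structure, and instead forces the full independence property. Intuitively, edge-stability caps the order-theoretic complexity present in the edge relation, so a class whose edge relation carries no long half-graph can realize the monadic order property only by genuinely encoding \emph{all} bipartite patterns, i.e.\ by having IP. Making this precise is the main obstacle: it relies on the structure theory of monadically dependent classes -- flips together with the combinatorial wideness notions -- to convert the abstract order property into a half-graph appearing semi-induced in the edge relation, contradicting edge-stability. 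This is exactly the dichotomy supplied by \cite{rankwidth-meets-stability}, which yields \((2)\Leftrightarrow(3)\).

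For the algorithmic upper bound \((2)\Rightarrow(1)\), I would invoke the combinatorial characterizations of monadic stability: after performing a bounded number of flips, members of the class acquire sparse-like behaviour (flip-flatness / flip-wideness), which enables a Gaifman-locality and recursive-understanding style model-checking procedure generalizing the nowhere dense algorithm of \cite{gks}. This is the most technically demanding ingredient and is precisely the content of \cite{ms-mc1,ms-mc2}; the role of flipping is to reduce the monadically stable case to a setting where locality-based dynamic programming applies.

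Finally, \((1)\Rightarrow(3)\) is the hardness direction, valid under the assumption \(\mathrm{AW}[*]\neq\mathrm{FPT}\) recorded in the footnote, and again I would argue contrapositively. If \(\CC\) is not monadically dependent, then by definition a fixed formula over unary expansions defines, on suitable subsets, every finite graph inside members of \(\CC\); since first-order model checking on the class of all graphs is \(\mathrm{AW}[*]\)-hard, this transduction transfers the hardness to \(\CC\), so \(\CC\) is not tractable. Combining the four implications, the cycle \((1)\Rightarrow(3)\Rightarrow(2)\Rightarrow(1)\) together with the elementary \((2)\Rightarrow(3)\) closes all equivalences, with the \((3)\Rightarrow(2)\) step being the decisive structural input.
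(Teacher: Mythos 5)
This statement is not proved in the paper at all: it is a background result imported verbatim from the cited works \cite{rankwidth-meets-stability,ms-mc1,ms-mc2}, so there is no in-paper argument to compare your proposal against. That said, your decomposition into the cycle $(1)\Rightarrow(3)\Rightarrow(2)\Rightarrow(1)$ plus the trivial $(2)\Rightarrow(3)$ matches how the literature assembles the theorem, and the attribution of each arrow is essentially right. Two remarks. First, for $(3)\Rightarrow(2)$ you do not need the full machinery you gesture at: for graphs, edge-stability is exactly atomic-stability, so the implication ``monadically dependent $+$ edge-stable $\Rightarrow$ monadically stable'' for hereditary classes is precisely the equivalence recorded in \Cref{thm:bl-ms} (Braunfeld--Laskowski), and the paper itself leans on that formulation; the half-graph-extraction picture you describe is the combinatorial shadow of that model-theoretic fact, originally worked out in \cite{rankwidth-meets-stability}. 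Second, your $(1)\Rightarrow(3)$ sketch has a real gap if read as a proof rather than a pointer: knowing that a transduction of all graphs \emph{exists} in unary expansions of $\CC$ does not by itself give an FPT reduction from model checking on all graphs, because one must \emph{effectively compute}, from an input graph $G$, a host structure in $\CC$ together with the colouring that encodes $G$. Making the obstruction constructive is the point of the Ramsey-based hardness arguments in \cite{ms-mc2} (and, for the general monadically dependent case, \cite{flip-breakability}), and the present paper explicitly flags that it does not reprove such effectivity for relational structures. With those caveats, your outline is a faithful account of how the cited theorem is actually established.
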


\emph{Monadic stability} is another notion originating in model theory,
introduced by Baldwin and Shelah \cite{baldwin-shelah}.
A class $\CC$ is \emph{not} monadically stable 
if there is a formula $\phi(x,y)$ in the language 
of $\CC$ expanded by unary predicates, such that for every $n\in\N$
there is a unary expansion $\widehat{\str A}$ of a structure $\str A\in \CC$ and elements $v_1,\ldots,v_n$ of $\str A$ such that $\widehat{\str A}\models\phi(v_i,v_j)$ if and only if 
$i\le j$, for $i,j\in[n]$.
 Thus, only constant-size sets of vertices can be totally ordered by a fixed first-order formula in unary expansions of structures from a  monadically stable class $\CC$.
 Monadically stable graph classes include
 all nowhere dense classes, and their (1-dimensional) \emph{interpretations},
 e.g. the class of map graphs.



\medskip
Classes of structures equipped with relations of arity three or more haven't been explored much in the context of the model checking problem. In principle, \Cref{conj:mc} might also apply in this setting. As a first step, the following result extends a previous result known for monotone graph classes,  to classes of relational structures.
A class of relational structures is \emph{monotone} if it is closed under taking weak substructures: removing tuples from relations, and elements from the domain arbitrarily.
The \emph{Gaifman graph} of a relational structure $\str A$
is the graph whose vertices are the elements of $\str A$ and edges 
connect pairs of vertices which occur in some tuple 
in some relation of the structure.

\begin{theorem}[\cite{AdlerA14,sparsity-book,gks,nowhere-dense-structures-icalp}]\label{thm:nowhere-dense-rel}
    The following conditions are equivalent for a monotone class $\CC$ of relational structures:
    \begin{enumerate}
        \item $\CC$ is tractable\footnotemark[2],
        \item the class of Gaifman graphs of structures in $\CC$ is nowhere dense,
        \item $\CC$ is monadically stable,
        \item $\CC$ is monadically dependent.
    \end{enumerate}
\end{theorem}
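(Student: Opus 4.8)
The plan is to reduce everything to the graph setting through two operations on a structure $\str A$ in the fixed finite relational signature $\sigma$ (of maximum arity $k$): its Gaifman graph $\gaif(\str A)$, and its \emph{incidence encoding} $I(\str A)$, the binary structure whose elements are those of $\str A$ together with one fresh vertex per tuple occurring in a relation, each tuple-vertex joined by finitely many colored edges (recording the relation and the coordinate) to the at most $k$ elements it contains. Two features drive the argument. First, $\str A$ is first-order interpretable in $I(\str A)$ by a fixed interpretation, while $\gaif(\str A)$ is first-order definable in $\str A$; since monadic stability and monadic dependence are preserved under interpretations and, more generally, transductions, these tameness properties transfer downward along both maps. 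Second, because $\sigma$ has bounded arity, every tuple-vertex of $I(\str A)$ has degree at most $k$, so $\gaif(\str A)$ and $I(\str A)$ are related by shallow minors with bounded parameters; consequently $\gaif(\CC)$ is nowhere dense if and only if $\{I(\str A):\str A\in\CC\}$ is. I would organize the proof as the model-theoretic cycle $(2)\Rightarrow(3)\Rightarrow(4)\Rightarrow(2)$ together with the algorithmic link $(2)\Rightarrow(1)$ and the lower bound $(1)\Rightarrow(4)$, the last of these under the complexity assumption of the footnote. The implication $(3)\Rightarrow(4)$ is immediate, as monadic stability strengthens monadic dependence.

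For $(2)\Rightarrow(1)$ I would use the incidence encoding. Assuming $\gaif(\CC)$ is nowhere dense, the remarks above give that $\{I(\str A):\str A\in\CC\}$ is nowhere dense. Given an input $\str A$ and a $\sigma$-sentence $\phi$, I translate $\phi$ through the fixed interpretation of $\str A$ in $I(\str A)$ into a sentence $\phi'$ over the finitely colored binary signature of the encoding, and evaluate $\phi'$ on $I(\str A)$; since $|I(\str A)|$ is linear in the size of $\str A$ (which counts its tuples), fixed-parameter tractability of model checking on $\CC$ follows from that on nowhere dense colored graph classes, i.e.\ from the theorem of Grohe, Kreutzer, and Siebertz \cite{gks} in its form for relational structures \cite{nowhere-dense-structures-icalp}. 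To close the loop with $(1)$, the remaining link is the lower bound $(1)\Rightarrow(4)$: a class that is not monadically dependent admits, in colored expansions, interpretations of arbitrarily complex instances, making model checking $\mathrm{AW}[*]$-hard, so under $\mathrm{AW}[*]\neq\mathrm{FPT}$ a tractable class must be monadically dependent; this is the relational instance of the left-to-right direction of \Cref{conj:mc}.

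For $(2)\Rightarrow(3)$, nowhere density of $\{I(\str A):\str A\in\CC\}$ entails that this class of finitely colored binary structures is monadically stable, since every nowhere dense class is; as $\str A$ is interpretable in $I(\str A)$ and monadic stability is preserved under interpretations, $\CC$ is monadically stable. The crux of the whole statement is $(4)\Rightarrow(2)$, which I would prove by contraposition: if $\gaif(\CC)$ is somewhere dense, then $\CC$ is not monadically dependent. By the Ne\v set\v ril and Ossona de~Mendez characterization of somewhere dense behavior for \emph{monotone} classes \cite{sparsity-book}, there is a fixed radius $r$ such that the $r$-subdivisions of all complete graphs---and hence, by closure under subgraphs, of all graphs---appear inside the Gaifman graphs of structures in $\CC$. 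Fixing $\str A$ whose Gaifman graph contains the $r$-subdivision of a large graph $H$ as a subgraph, I would use monotonicity of $\CC$ to delete all elements and tuples outside a chosen system of witnessing tuples, retaining a substructure whose Gaifman graph still contains this subdivision while all extraneous edges lie within bounded radius of it. Marking the principal vertices and subdivision paths by unary predicates and recovering adjacency by a single distance-at-most-$r$ first-order formula then yields a transduction, from colored expansions of structures in $\CC$, of the class of all finite graphs; as the latter is the prototypical non-monadically-dependent class, $\CC$ is not monadically dependent.

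The main obstacle is precisely this relational lifting. A single high-arity tuple contributes an entire clique to the Gaifman graph, so one must choose the witnessing tuples and the deletions so carefully that the surviving ``parasitic'' edges do not destroy the definability of the subdivided pattern; this is exactly where monotonicity and the bounded arity of $\sigma$ are indispensable, and it is the technical content supplied by \cite{AdlerA14,nowhere-dense-structures-icalp}. Everything else---the encoding bookkeeping, the preservation of nowhere density under the shallow-minor relation, and the transduction-invariance of monadic stability and dependence---is routine once this core combinatorial step is in place.
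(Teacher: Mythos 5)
Your proposal is essentially sound, but note first that the paper does not actually prove \Cref{thm:nowhere-dense-rel}: it is imported from the cited literature, and the paper's own contribution is the more general \Cref{thm:nowhere-dense-struct} (dropping monotonicity), proved in Appendix~\ref{app:nd-struc}. Measured against that proof, your model-theoretic cycle is the same argument. Your $(2)\Rightarrow(3)$ via the incidence encoding, interpretation of $\str A$ in $I(\str A)$, and preservation of monadic stability under interpretations is exactly \Cref{lem:nd is stable}. Your $(4)\Rightarrow(2)$ by contraposition --- Ramsey-cleaning a system of witnessing tuples inside a subdivided pattern, using monotonicity to delete everything else, and then transducing the class of all graphs --- is precisely the content of \Cref{lem:nip-monotone-to-weakly sparse} combined with \Cref{lem:nip-weakly sparse-to-nowhere-dense}. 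The one organizational difference is that the paper factors this implication through the intermediate condition that $\gaif(\CC)$ is weakly sparse (first: monotone and monadically dependent implies no large bicliques in the Gaifman graphs; then: weakly sparse and monadically dependent implies nowhere dense), whereas you run a single direct contrapositive from ``somewhere dense''; the factored route is slightly cleaner because the biclique case and the subdivision case require different Ramsey set-ups, and your single-pass version would in effect have to redo both. The algorithmic items are handled by citation in both your proposal and the paper (the paper explicitly declines to reprove the hardness direction, noting it needs additional Ramsey-based arguments), so your brief treatment of $(1)\Leftrightarrow\cdots$ is at the appropriate level for a quoted theorem.
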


For an example monadically stable class of relational structures,
 fix a first-order formula  $\phi(x_1,\ldots,x_k)$ in the language of graphs, e.g. the formula $\phi(x_1,x_2,x_3)$ expressing that $x_1,x_2,x_3$ have a common neighbor. For every graph $G$, consider the 
 structure $\phi(G)$ with domain $V(G)$ and a relation of arity $k$ 
 interpreted as $\setof{\bar a\in V(G)^k}{G\models\phi(\bar a)}$.
 If $\CC$ is a nowhere dense (or just monadically stable) graph class,
 then the class $\setof{\phi(G)}{G\in\CC}$ is also monadically stable (and usually beyond the scope of \Cref{thm:nowhere-dense-rel}).

An analogue of the equivalence (1)$\leftrightarrow$(2) in \Cref{thm:ms-mc} for classes of relational structures is not known
(here, \emph{edge-stability} is generalized by the notion 
of \emph{atomic-stability},
requiring that every atomic formula is stable; see \Cref{sec:mt-prelims}).
Note that in general, for a hereditary class $\CC$ of relational structures, the class $\gaif(\CC)$ 
does not determine whether or not $\CC$ is tractable
(for instance, if $\CC$ is an infinite class of ordered graphs -- graphs equipped with a total order on the vertex set -- then $\gaif(\CC)$ 
is always the class of all cliques).


\section{Contribution}
In this paper, we make a step towards generalizing \Cref{thm:ms-mc} to classes of relational structures.
To this end, we extend two known combinatorial characterizations 
of monadically stable graph classes, which have been pivotal in the establishment of \Cref{thm:ms-mc}, to monadically stable classes of relational structures. The prior characterizations rely on the graph-theoretic \emph{flip} operation
-- inverting the adjacency between two sets of vertices of a graph --
which has been fundamental in the  analysis of dense graphs \cite{flipper-game,flip-flatness,flip-breakability}. 
Our main contributions are as follows:
\begin{enumerate}
    \item We propose a notion of \emph{flips} for relational structures (see \Cref{def:definable-flip-intro}) and of \emph{flip-independence},
    \item We obtain the first combinatorial characterizations of monadically stable classes of relational structures, extending previous results concerning graph classes
    (see \Cref{thm:flip-flatness-rel}),
    \item We provide the first combinatorial characterization of \emph{forking independence} (over models)  -- a quintessential notion in stability theory -- in monadically stable relational structures, in terms of flips (see \Cref{thm:forking-for-structures}),
    \item We provide more refined characterizations of forking independence in monadically stable graph classes
    (see \Cref{thm:discrepancy}), and also in classes of relational structures whose Gaifman graphs form a nowhere dense graph class (see \Cref{thm:forking-nowhere-dense-structures}). 
     In addition, we characterize the latter classes, extending the result of~\cite{nowhere-dense-structures}
     (see \Cref{thm:nowhere-dense-struct}).
\end{enumerate}

We now describe those contributions in greater detail. We start with discussing the role of flips in previous results in the area, concerning graphs.

\subsection{What's in a flip?}Flips are a fundamental tool in the recent combinatorial and algorithmic study of monadically stable and monadically dependent graph classes \cite{flip-flatness, flipper-game, flip-breakability, flip-width, DBLP:conf/lics/BonnetDGKMST22}.
For a graph~$G$, \emph{flipping} a pair $X,Y$ of sets of vertices of~$G$,
results in the graph $G'$ obtained from $G$ by inverting the adjacency of pairs of different vertices in $X\times Y$.
If a graph $G'$ can be obtained from a graph $G$ by applying this operation $k$ times, we say that $G'$ is a \emph{$k$-flip} of $G$. 
Flips emerge naturally in the study of dense graphs, such as graphs of bounded clique-width, as a means for replacing the graph $G$ under consideration by another graph $G'$, which 
is somewhat ``sparser'' than $G$, yet is logically equivalent to it. 

To illustrate the use of this notion, the following result has a Ramsey-type flavor, and generalizes a property 
called \emph{uniform quasi-wideness} \cite{Dawar10} of nowhere dense graph classes.

\begin{theorem}[\cite{flip-flatness}]\label{thm:flip-flatness}
    Let $\CC$ be a monadically stable graph class. For all $r\in\N$ there is some $k\in\N$ and an unbounded function $U\from\N\to\N$ with the following property.
    For every graph $G\in\CC$ and set $A\subseteq V(G)$, there is a $k$-flip $G'$ of $G$
    and a set $B\subseteq A$ with $|B|\ge U(|A|)$,
    whose elements are pairwise at distance~${>}r$ in $G'$.
\end{theorem}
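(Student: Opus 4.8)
The plan is to combine a Ramsey-type extraction with the model-theoretic structure of indiscernible sets, reducing the statement to a single, highly uniform configuration that a bounded number of flips can pull apart. First I would fix $r$ and, for a graph $G\in\CC$ and a set $A$, color each pair $\{a,a'\}$ of elements of $A$ by its \emph{distance-$r$ local type}: the value $\min(\dist_G(a,a'),r+1)$ together with the isomorphism type of the radius-$r$ ball around $\{a,a'\}$ based at $a,a'$. Monadic stability is what keeps the relevant palette under control: such classes have bounded VC dimension and bounded local (neighborhood) complexity, so the number of distance-$r$ local types realizable on a large homogeneous set is bounded by a constant $c=c(r,\CC)$. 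Finite Ramsey applied to this bounded palette yields a subset $B_0\subseteq A$ of size $\ge U_0(|A|)$, for an unbounded $U_0$ whose growth depends only on $c$, on which all pairs share the same distance-$r$ type; this is an \emph{order-indiscernible} sequence. Monadic stability then upgrades it: since no formula defines a long linear order on $B_0$, the sequence is in fact an indiscernible \emph{set}, and I may fix a single value $d\in\{1,\dots,r,\infty\}$ so that all distinct $b,b'\in B:=B_0$ are at distance exactly $d$ in $G$.

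If $d>r$ we are done with no flips. Otherwise I must destroy all these uniform short connections with boundedly many flips, and the crucial structural input — again from monadic stability — is that the witnessing short paths are \emph{shared}: for each $s\le r$, the family of balls $\{N_s[b]:b\in B\}$ has bounded complexity, so the connections between the $b$'s are routed through a bounded family of neighborhood patterns rather than through $\Omega(|B|)$ genuinely different connectors. Concretely, I would analyze the common neighborhoods along the uniform shortest $B$--$B$ paths and use indiscernibility to show that membership of an external vertex $v$ in ``$N_s[b]$ for $b\in B$'' is governed by a bounded number of definable classes; the alternative, where these memberships vary wildly, would encode a half-graph between $B$ and the set of connectors, contradicting monadic stability.

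Having isolated this bounded pattern, I would define the flip: a bounded number $k=k(r)$ of flip-pairs, computed from the $c$ distance-$r$ types and the bounded neighborhood classes, chosen so that in the resulting $k$-flip $G'$ every edge lying on a short $B$--$B$ path is inverted and no new short path is created, leaving the elements of $B$ pairwise at distance $>r$. I expect this last step — proving that boundedly many cuts simultaneously sever all radius-$\le r$ connections between all pairs of $B$ without accidentally reconnecting them — to be the main obstacle; it is exactly here that monadic stability is used in full force (via the absence of a definable order or half-graph), and it is the step that genuinely generalizes uniform quasi-wideness, where one deletes a bounded separator instead of flipping. A cleaner packaging of the same argument proceeds by contraposition and compactness: if flip-flatness failed for some $r$, a limit of counterexamples would yield an infinite monadically stable structure containing an infinite set that no bounded flip can scatter, from which one extracts a definable infinite order, a contradiction; this isolates the combinatorial core but leaves the order-extraction as the essential difficulty.
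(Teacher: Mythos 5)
There is a genuine gap, and it sits exactly where you flag ``the main obstacle'': the construction of the flip that simultaneously severs all short connections. Your plan is a one-shot argument --- one Ramsey extraction to a homogeneous set $B$ at common distance $d\le r$, then one bounded flip computed from ``bounded neighborhood classes'' --- but both halves fail as stated. First, the palette you color with is not bounded: in a monadically stable class the isomorphism type of a radius-$r$ ball around a pair ranges over an unbounded family (monadic stability bounds the number of \emph{types over a fixed finite parameter set}, not the number of ball isomorphism types), so the Ramsey step only goes through if you color by the truncated distance $\min(\dist_G(a,a'),r+1)$, which gives you a common $d$ but no structural handle on the connectors. Second, and more seriously, the claim that the connections are ``routed through a bounded family of neighborhood patterns'' and that the alternative ``would encode a half-graph'' is precisely the theorem to be proved, not a lemma you can invoke; note also that a large set whose pairs are all at distance exactly $2$ via private connectors (a $1$-subdivision of a clique) cannot be scattered by boundedly many flips, so ruling this configuration out \emph{is} the content of the result. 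The known arguments do not do this in one shot: they proceed by induction on the radius (equivalently, via a $k$-round game), at each stage separating at distance $1$ using \emph{definability of types over an elementary submodel} --- the neighborhood of an external vertex $u$ in a model $M\preceq N$ is a positive boolean combination $\psi_u$ of neighborhoods of finitely many elements of $M$ (\Cref{thm:definability-of-types}); one flips the resulting type-classes that are mutually ``complete'' (Harrington's lemma, \Cref{lem:harrington}, makes this symmetric) and controls the newly created adjacencies via a separate finiteness statement for types of balls disjoint from the model (\Cref{lem:types-of-balls}). None of this machinery, nor a substitute for it, appears in your sketch.

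For comparison, the present paper does not reprove \Cref{thm:flip-flatness} directly either: it derives the more general \Cref{thm:flip-flatness-rel} by first establishing the separation property $a\ind{M}{r}M$ for elementary pairs $\str M\preceq\str N$ (\Cref{lem:separation-higher-arity}), then converting separability into a winning strategy in the Separation game, and only then extracting flip-flatness by a round-by-round Ramsey argument. Your compactness repackaging at the end (pass to an infinite limit structure and an indiscernible sequence) is the right reduction in spirit --- it is essentially how \cite{flip-flatness} sets things up --- but, as you concede, it leaves the flip construction as the entire remaining difficulty, so the proposal as written does not constitute a proof.
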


The following result can be viewed as a recursive process which locally simplifies a given graph; it is usually formulated in terms of a game between two players (c.f. \cite[Section 3.1]{flipper-game-arxiv}).
For a graph $G$ and number $r\in\N$, let 
$\textup{flip-rk}_r(G)=1$ if $G$ has one vertex,
and otherwise,
$$\textup{flip-rk}_r(G)\coloneqq 
1+\min_{\stackrel{G':}{\text{a $1$-flip of $G$}}}\max_{v\in V(G')} {\textup{flip-rk}_r(G'[B_r^{G'}(v)])},
$$
where $G'[B_r^{G'}(v)]$ denotes the subgraph of $G'$ induced by the vertices at distance at most $r$ from $v$ in $G'$.

\begin{theorem}[\cite{flipper-game}]\label{thm:fg-ranks}
    Let $\CC$ be a monadically stable graph class. For all $r\in\N$ there is some $k\in\N$ such that 
    $\textup{flip-rk}_r(G)<k$ for all $G\in\CC$.
\end{theorem}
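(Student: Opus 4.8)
The plan is to read $\textup{flip-rk}_r$ as the length of the \emph{Flipper game}, in which Flipper repeatedly applies a single flip to the current arena $G'$ and an adversary (Localizer) then restricts play to the radius-$r$ ball $B_r^{G'}(v)$ around a vertex $v$ of her choice; bounding $\textup{flip-rk}_r$ uniformly over $\CC$ amounts to giving Flipper a winning strategy of length depending only on $r$ and $\CC$. I would establish this by defining a well-founded progress measure on the game configurations and showing that Flipper can force it to strictly decrease in each round, with flip-flatness (\Cref{thm:flip-flatness}) serving as the engine that guarantees progress.

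Concretely, I would first apply flip-flatness, with a radius somewhat larger than $r$, to the current arena $G$ and its vertex set, obtaining a bounded flip $G'$ of $G$ together with a large set $B$ of vertices that are pairwise at distance ${>}r$ in $G'$. Since the elements of $B$ are pairwise far apart, they lie in distinct radius-$r$ balls, so whichever ball $B_r^{G'}(v)$ the Localizer selects can meet $B$ in at most one vertex; the remaining large scattered set is then permanently discarded from play. The intended progress measure records, in a well-quasi-ordered fashion, how much scattered structure remains available, and the observation above forces a strict decrease each round, so the recursion terminates after boundedly many rounds. Along the way I would use Ramsey-type arguments to make the interaction canonical: if at some stage flip-flatness failed to deliver progress, the resulting long, non-simplifiable configuration could be pumped into an arbitrarily long semi-induced order in a unary expansion of a structure from $\CC$, contradicting monadic stability.

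The main obstacle is the mismatch between the two notions of flip. Flip-flatness yields a single \emph{$k$-flip} of the whole graph, whereas the game lets Flipper apply only \emph{one} flip per round and, crucially, interleaves the Localizer's restrictions between Flipper's moves. Decomposing the $k$-flip into $k$ successive single flips costs several rounds, during which the Localizer may repeatedly cut the arena down to radius-$r$ balls, potentially destroying the scattered set before it can be exploited. Reconciling this requires a \emph{localized} strengthening of flip-flatness together with an invariant -- an ``insulator'' -- that survives localization, so that the scattered structure, or a suitable proxy for it, persists across the rounds needed to realize one global flip. Setting up this invariant and proving that the associated rank is genuinely well-founded, so that Flipper's strategy provably halts rather than merely making local progress, is where the bulk of the technical work lies.
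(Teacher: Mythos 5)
Your route is genuinely different from the one the paper (and the cited work \cite{flipper-game}) relies on, and it has a real gap at exactly the point you flag. The paper's chain is: monadic stability $\Rightarrow$ the separation lemma for elementary pairs $G\preceq H$ (every $v\in H\setminus G$ satisfies $v\ind{G}{r}G$, proved model-theoretically via definability of types, Harrington's lemma, and finiteness of $\Types{E}(B_r(v)/G)$), and then separability $\Rightarrow$ a uniform bound on the game rank by a compactness argument: if Separator had no uniform winning strategy, an infinite play would produce, in an elementary extension, an element that cannot be flip-separated from the model, contradicting the separation lemma. This sidesteps any need for an explicit decreasing rank on finite configurations, which is precisely what your argument cannot do without.

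The concrete gap is the progress measure. Flip-flatness (\Cref{thm:flip-flatness}) is a Ramsey-type statement: every large subset of the \emph{current} arena contains a large scattered subset after a bounded flip. Discarding the scattered set $B$ minus the one vertex surviving in $B_r^{G'}(v)$ does not decrease any well-founded quantity, because the new arena $G'[B_r^{G'}(v)]$ is again a (colored) graph from the hereditary closure of $\CC$ to which flip-flatness applies afresh, producing a new large scattered set; nothing prevents this from recurring indefinitely. You would need an invariant ("insulator") that is built once, survives both the decomposition of a $k$-flip into $k$ single-flip rounds and the Localizer's interleaved ball restrictions, and whose exhaustion is forced --- and you explicitly leave this unconstructed, describing it only as "where the bulk of the technical work lies." That is not a routine verification: no published proof derives the Flipper/Separation game bound directly from flip-flatness (the known equivalences close the cycle in the other direction, deriving flip-flatness \emph{from} the game, as this paper does in \Cref{lem:flip-flatness}). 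As written, the proposal identifies the correct difficulty but does not resolve it, so it does not constitute a proof.
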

\Cref{thm:fg-ranks} generalizes an earlier result \cite{gks} 
characterizing nowhere dense classes (in terms of \emph{Splitter games}), 
and
is crucially used in the model-checking algorithm for monadically stable classes in \Cref{thm:ms-mc} \cite{ms-mc1,ms-mc2}.
Both in \Cref{thm:flip-flatness} and in \Cref{thm:fg-ranks}, the conditions are in fact characterizations of monadic stability.


\medskip
Flips serve a similar role 
as vertex removals in the setting of sparse graphs.
For instance, a crucial notion in the analysis of planar graphs, or graphs of bounded treewidth, is that of a \emph{balanced separator}. Let us tentatively say that a set $S$ of vertices of an $n$-vertex graph $G$ is a \emph{balanced separator} 
if every connected component of $G-S$ 
has at most $\frac 23 n$ vertices. For instance, every tree has a balanced separator of size $1$. 

Balanced separators allow to recursively decompose a graph into smaller parts which interact with each other in a limited way, via~$S$.
However, balanced separators are not well suited for the study of dense graphs.
For instance, complementing the edge set of a tree $T$
results in a graph $\overline T$ with no  balanced separators of size $o(n)$.
On the other hand, $T$ and $\overline T$ have similar logical properties. Observe that $\overline T$ is a $1$-flip of $T$ (flip the pair $V(T),V(T)$).
Moreover, the graph $T^s$ obtained from $T$ by isolating (removing all edges incident to) some vertex $s$
is a $1$-flip of $T$ (flip the pair $\set{s}$ and $N(s)$ in $T$).
Therefore, if $\set{s}$ forms a balanced separator in $T$,
then $T^s$ is a $2$-flip of $\overline T$ in which every connected component 
has at most $\frac 23 n$ vertices. Because the operation of flipping 
is invertible, properties of $T^s$ can be translated 
into properties of $\overline T$, and vice-versa. This way, some properties of $\overline T$ can be reduced to the analysis of the connected components of its $2$-flip $T^s$.

In a similar way, \Cref{thm:fg-ranks}
allows to reduce \cite{ms-mc1} the problem of evaluating a given sentence in a graph $G\in\CC$, to the problem of evaluating some other sentences 
in the graphs $G'[B_r^{G'}(v)]$, for which the parameter $\textup{flip-rk}_r$ decreases by one.

\paragraph{Definable flips}
It appears that in the analysis of monadically stable (and dependent) graph classes,
it is enough to consider a very restricted form of flips, 
called \emph{definable} flips \cite{flipper-game}.
Let $G$ be a graph and $S$ be a set of its vertices.
A $k$-flip $G'$ of $G$ is \emph{$S$-definable}
if for all the involved pairs $X,Y$ of flipped sets, both $X$ and $Y$ are boolean combinations of sets of the form $\set{s}$ and $N(s)$ (the neighborhood of $s$ in $G$), for $s\in S$.
For instance, in the tree example above, $T^s$ is an $\set{s}$-definable $2$-flip of $\overline T$.
\Cref{thm:flip-flatness,thm:fg-ranks} can be strengthened by considering only $S$-definable $k$-flips, for $|S|\le k$.

\paragraph{Flip independence}
The following notion of \emph{flip-independence} has been proposed in 
\cite{flipper-game}.
Let $G$ be a graph and $S$ be a set of its vertices, and let $a,b$ be two vertices of~$G$. 
For $r\in\N$, write $$a\ind S r b$$ 
if there is some $S$-definable flip $G'$ of $G$ 
such that there is no path of length at most $r$ in $G'$ connecting $a$ and $b$, or $a=b\in S$.
Intuitively, the set $S$ controls the distance-$r$ interactions between $a$ and $b$, and can be thought of as a separator between $a$ and $b$ 
up to distance $r$.
It has been suggested \cite{flipper-game} that 
flip independence may be related to the notion of forking independence in model theory. Our first results confirm this expectations, as we now explain.

\subsection{Flipping and forking in graphs}
Forking independence is a tool of paramount importance in model theory \cite{Pillay1986-PILAIT}. 
It generalizes linear independence in vector spaces and algebraic independence in field theory.
Its definition 
is rather involved (see \Cref{def:forking} for a special case),
but the essential idea is that  for an  infinite structure $\str M$ and set $S$ of its elements, two elements $a,b$ of $\str M$ are \emph{forking independent} over $S$, written
$$a\find S b,$$ 
if the (first-order definable) relations between $a$ and $b$ 
are the same as between $a$ and a ``generic'' element
$b'$ with the same relationship to $S$ as $b$.
    
\begin{example}
If $T$ is an infinitely branching rooted tree and $S\subset V(T)$ is a connected subset containing the root of $T$, and $a,b\in V(T)-S$,
then $a\find S b$ 
holds if and only if $a$ and $b$ are in different connected components of $T-S$ 
(see \Cref{ex:forking} for an elaboration).\qed
\end{example}

Generalizing this example to infinite, nowhere dense graphs (such as infinite planar graphs), a result of Ivanov 
characterizes forking independence via a combinatorial separation property.
Here, \emph{algebraic closure} is an important closure property of sets
which however  will not be relevant in this paper,
as we will consider a stronger closure property (elementary substructure) later.
A graph $G$ is nowhere dense  if and only if the class $\set{G}$ is; equivalently, the class of its finite subgraphs is nowhere dense.

\begin{theorem}[\cite{ivanov}]\label{thm:ivanov}
    Let $G$ be a nowhere dense graph and $S\subseteq V(G)$ be an algebraically closed subset.
    For any distinct $a,b\in V(G)$,
    $a\find S b$ if and only if there is no $S$-avoiding path  connecting $a$ with $b$ in~$G$.
\end{theorem}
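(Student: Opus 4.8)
The plan is to work inside a monster model $\M\models\Th(G)$, using that nowhere dense graphs are stable (the implication $(2)\Rightarrow(3)$ of \Cref{thm:nowhere-dense-rel} applied to the monotone class of finite subgraphs of $G$ gives monadic stability, hence stability of $T=\Th(G)$), so that forking independence enjoys all the standard properties: symmetry, monotonicity, transitivity, finite character, existence of nonforking extensions, and -- crucially, since $S$ is algebraically closed -- stationarity of $\tp(a/S)$ (passing to $\M^{\mathrm{eq}}$ if needed, and noting that in this setting the relevant strong types reduce to types over $\mathrm{acl}(S)=S$). By finite character it suffices to analyse dividing of single formulas, and both the relation $a\find S b$ and the combinatorial separation condition are preserved under the embedding $G\hookrightarrow\M$, so I may argue in $\M$. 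I treat the cases $a\in S$ or $b\in S$ separately (there both sides hold trivially) and henceforth assume $a,b\notin S=\mathrm{acl}(S)$.

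The combinatorial engine I would isolate first is the following consequence of nowhere density: \emph{any infinite $S$-indiscernible sequence $(b_i)_{i<\omega}$ of pairwise distinct elements outside $S$ consists of elements pairwise separated by $S$}, i.e.\ no $S$-avoiding path joins $b_i$ to $b_j$ for $i\ne j$. Indeed, by indiscernibility the pair $(b_i,b_j)$ has the same type over $S$ for all $i<j$; were they joined by an $S$-avoiding path of some fixed length $m$, then infinitely many vertices would be pairwise at graph-distance ${\le}\,m$, yielding $K_t$ as a depth-$\lceil m/2\rceil$ minor for every $t$ and contradicting nowhere density.

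For the direction ``connection $\Rightarrow$ dependence'' I would prove the contrapositive of the ``only if'' part: given an $S$-avoiding $a$--$b$ path of length $n$, I want a formula in $\tp(a/Sb)$ that divides over $S$. The natural candidate is $\phi(x,b)$ asserting the existence of a length-${\le}\,n$ path from $x$ to $b$; taking an infinite $S$-indiscernible sequence $(b_i)$ with $b_0=b$, the engine above makes the $b_i$ pairwise $S$-separated, so a single $x$ satisfying two of them would produce, by concatenating the two short $S$-avoiding paths at the common endpoint $x\notin S$, an $S$-avoiding $b_i$--$b_j$ walk, a contradiction; this gives $2$-inconsistency of $\{\phi(x,b_i)\}$ and hence $a\nfind S b$. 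The delicate point, which is where the real work lies, is that ``$S$-avoiding'' is not first-order when $S$ is infinite, so $\phi$ must be replaced by a genuine formula over a finite subtuple of $S\cup\{b\}$; here I would use locality of nowhere dense graphs -- only boundedly many elements of $S$ can lie on length-${\le}\,n$ paths near $b$, and algebraic closedness of $S$ guarantees these ``locally relevant'' parameters are already captured -- to descend to finitely many parameters without breaking the concatenation argument.

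For the converse ``separation $\Rightarrow$ independence'' I would show that separation forces $\tp(a/Sb)$ to be the (unique, by stationarity) nonforking extension of $\tp(a/S)$. If some $\theta(x,b)\in\tp(a/Sb)$ divided over $S$, witnessed by an $S$-indiscernible $(b_i)$ with $b_0=b$ and $\{\theta(x,b_i)\}$ $k$-inconsistent, then, since $a$ lies in a different component of $G-S$ than $b$, there is no forced interaction between $a$ and $b$ off $S$, so the configurations $\{a,b_i\}$ amalgamate over $S$ into a single $a^\ast$ with $a^\ast b_i\equiv_S ab$ for every $i$; such $a^\ast$ realises all $\theta(x,b_i)$ at once, contradicting $k$-inconsistency. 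This amalgamation is exactly where separation enters, and in the stable setting it is precisely the consistency of the free join along the indiscernible sequence. I expect the \emph{main obstacle} throughout to be the reconciliation of the infinitary notion of an $S$-avoiding path with the finitary, first-order nature of forking: making $\phi$ genuinely first-order over finitely many parameters, and dually making the amalgamation of $a$ against the whole indiscernible sequence respect the full infinite set $S$ -- both hinging on locality of nowhere dense graphs together with the algebraic closedness of $S$.
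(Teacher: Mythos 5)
First, a point of reference: the paper does not prove this theorem at all --- it is imported from Ivanov --- so the only in-paper comparison is the proof of the model-theoretic analogue, \Cref{thm:forking-nowhere-dense-structures}, which works over an elementary substructure $\str M\preceq\str N$ and argues directly with finite satisfiability, Ivanov's finite-neighbourhood lemma, and Gaifman locality (\Cref{cor:fo-locality}), never touching indiscernible sequences or dividing. Your dividing/stationarity framework is the natural one for the stated acl-version (finite satisfiability in $S$ is not the right notion when $S$ is not a model), so the skeleton is reasonable; the problem is that the load-bearing lemma is false.

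Your ``combinatorial engine'' --- that every infinite $S$-indiscernible sequence of distinct elements outside $S$ is pairwise $S$-separated --- fails. Take $G$ to be the disjoint union of infinitely many infinite stars, $S=\mathrm{acl}(\emptyset)=\emptyset$, and $(b_i)$ an indiscernible sequence of leaves of a single star: these are pairwise joined by $\emptyset$-avoiding paths of length $2$ through the common centre, yet $G$ is a forest and hence nowhere dense. The precise error is the inference ``infinitely many vertices pairwise at graph distance $\le m$ yields $K_t$ as a depth-$\lceil m/2\rceil$ minor'': the connecting paths cannot in general be chosen internally disjoint, and in a star they all funnel through one vertex. The correct fact for nowhere dense classes is uniform quasi-wideness --- one obtains a large pairwise-far subset only after deleting a bounded number of bottleneck vertices --- and the real work in Ivanov's theorem is absorbing those bottlenecks into the algebraically closed set $S$ (the star centre shows they need not be algebraic over $S$ alone). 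Note the theorem survives in the example: $a\nfind{\emptyset}b$ for two leaves of one star is witnessed by dividing along a sequence of leaves in \emph{different} stars; so what your proof is missing is an argument that a suitably $S$-separated witnessing sequence can always be produced, which is exactly where quasi-wideness, or the local finiteness of $N(v)\cap S$ for $S$ algebraically closed, must enter. The converse direction has a parallel gap: ``the configurations $\{a,b_i\}$ amalgamate over $S$'' is asserted, not proved, and proving it is precisely where Gaifman locality is needed (as in the paper's proof of \Cref{thm:forking-nowhere-dense-structures}). As written, both directions rest on unestablished or false steps.
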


Our first two results
provide combinatorial characterizations of forking independence 
in monadically stable graphs. The first one 
is in terms of flip independence, establishing the link suggested in \cite{flipper-game}, and justifying the notation.
Below, $S\preceq G$ ($S$ is an elementary substructure of $G$) is a closure property 
stating for every tuple $\tup s$ of elements of $S$ and first-order formula $\phi(\tup x)$, evaluating the formula $\phi(\tup s)$ in $G$ and in the induced subgraph $G[S]$ yields the same outcome (in particular, it implies that $S$ is algebraically closed).

\begin{theorem}\label{thm:graphs-forking-flipping}
    Let $G$ be a monadically stable graph and $S\preceq G$ be an elementary substructure.
    For all $a,b\in V(G)$,
    $$a\find S b\qquad\iff\qquad a\ind S r b\quad\text{for all $r\in\N$}.$$
\end{theorem}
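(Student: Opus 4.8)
The plan is to prove the two implications separately, using that $\Th(G)$ is stable (monadic stability implies stability), so forking over the model $S$ admits the usual equivalent descriptions: $a\find S b$ iff $\tp(a/Sb)$ is finitely satisfiable in $S$ (the coheir description), iff it is definable over $S$, and forking equals dividing. Two preliminary reductions make the statement robust. First, flip-independence is manifestly symmetric, since $\dist_{G'}(a,b)=\dist_{G'}(b,a)$ for every flip $G'$; thus both sides of the equivalence are symmetric in $a,b$, matching the symmetry of forking. Second, the whole statement is invariant under replacing $G$ by an $S$-definable flip $G''$: the $S$-definable flips of $G$ and of $G''$ coincide (they are closed under composition and each flip is an involution), so the right-hand side is unchanged, while $G$ and $G''$ are interdefinable over $S$, so types over supersets of $S$, and hence forking over $S$, are preserved. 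This lets me apply any convenient bounded $S$-definable flip at will.

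For the implication $a\find S b\Rightarrow a\ind S r b$ for all $r$, I fix $r$ and seek a single $S$-definable flip $G'$ with $\dist_{G'}(a,b)>r$. The tool is the $S$-definable strengthening of flip-flatness (\Cref{thm:flip-flatness}): there is a bounded $S$-definable flip $G'$ in which the radius-$r$ balls are ``spread out'', so that distance-$\le r$ connectivity in $G'$ becomes tree-like and locally understood. I then use the coheir description of $a\find S b$: the formula $\dist_{G'}(a,y)\le r$ is over $Sa$, so if $b$ satisfied it, finite satisfiability of $\tp(b/Sa)$ in $S$ would place some $s\in S$ within distance $r$ of $a$ in $G'$ with the same short-range type as $b$. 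Using $S\preceq G$ and the local sparsity of $G'$ I then separate $a$ from $b$ beyond radius $r$ (composing, if needed, with one further flip centered at $s$), obtaining $a\ind S r b$. This is essentially a flip-transported version of the nowhere-dense separation of \Cref{thm:ivanov}, applied inside the locally sparse flip $G'$.

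For the converse, I prove the contrapositive: if $a\nfind S b$ then $a\nind S r b$ for some $r$, i.e.\ $a$ and $b$ stay within distance $r$ in every $S$-definable flip. Since forking equals dividing, there is a formula $\phi(x,y)$ over $S$ with $G\models\phi(a,b)$ and an $S$-indiscernible sequence $(b_i)_{i<\omega}$ with $b_0=b$ along which $\{\phi(x,b_i)\}_i$ is $k$-inconsistent. The crux is a lemma stating that in a monadically stable graph every such dividing formula creates a robust short connection: there is a radius $r$ such that $G\models\phi(a,b)$ forces $a\nind S r b$. I obtain this from flip-flatness together with the flipper-game bound (\Cref{thm:fg-ranks}): were $a$ separable from $b$ beyond every radius by $S$-definable flips, one could, using boundedly many flips, simultaneously push $a$ far from all the $b_i$ in a common flip, contradicting the $k$-inconsistency of $\{\phi(x,b_i)\}_i$ once $\phi$ has been expressed through bounded-radius flip-connectivity.

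The main obstacle is exactly this crux lemma and its counterpart in the first implication: that in a monadically stable graph first-order formulas over $S$ are controlled by bounded-radius connectivity in $S$-definable flips, uniformly enough that forking-dividing translates into a single finite radius $r$ beyond which no $S$-definable flip can separate $a$ from $b$. Making ``controlled'' precise --- showing each relation $a\nind S r b$ is itself $S$-definable and that the decreasing family $(\,a\nind S r b\,)_{r\in\N}$ stabilizes to forking --- is where monadic stability is genuinely used, via \Cref{thm:flip-flatness,thm:fg-ranks}. An attractive alternative route avoiding the two separate constructions is to verify that flip-independence (extended from pairs to tuples) satisfies the axioms characterizing forking in a stable theory --- invariance, finite and local character, monotonicity, transitivity, symmetry, extension, and stationarity over models --- and then invoke the canonicity of nonforking; in that approach symmetry and local character are immediate from distance-symmetry and \Cref{thm:fg-ranks}, and the extension axiom becomes the principal difficulty.
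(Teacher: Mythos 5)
There is a genuine gap, and it sits exactly where you locate the ``main obstacle'': both of your key steps presuppose a separation property that neither \Cref{thm:flip-flatness} nor \Cref{thm:fg-ranks} provides. The ingredient the paper actually uses is the Separation Lemma of \cite{flipper-game} (\Cref{lem:separation-lemma}): for $S\preceq G$ monadically stable and $a\in V(G)\setminus S$, one has $a\ind S r S$ for every $r$, i.e.\ a \emph{single vertex can be flipped to distance ${>}r$ from the entire elementary submodel $S$}, together with its normality strengthenings (\Cref{lem:flip-independence-normality}, \Cref{lem:flip-double-separation}). Flip-flatness is a statement about scattering a large subset of a given \emph{finite} set $A$ from itself, and the flipper-game rank bound is likewise a downstream combinatorial consequence of separation; in the paper (and in \cite{flipper-game}) the logical flow is separation $\Rightarrow$ flip-flatness/game, not the reverse. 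Your first implication (``compose with one further flip centered at $s$'' to push $a$ past radius $r$ from $b$ using ``local sparsity'' of a flip-flat witness) has no mechanism for controlling the interaction of $a$ with all of $S$ simultaneously, which is what is needed: the paper's argument for this direction instead takes the flip $G'$ with $\dist_{G'}(a,S)>2r$ given by \Cref{lem:separation-lemma}, observes that failure of $a\ind S r b$ forces $\dist_{G'}(a,b)\le r$ and hence $\dist_{G'}(b,S)>r$, and then the $S$-definable formula ``$\dist_{G'}(x,b)\le r$'' lies in $\tp(a/Sb)$ but is unsatisfiable in $S$, contradicting the coheir description.

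Your converse direction has the same problem one level up: the ``crux lemma'' (a dividing formula over $S$ forces $a\nind S r b$ for some uniform $r$) is essentially equivalent to the theorem, and the sketch for it is circular --- it assumes $\phi$ ``has been expressed through bounded-radius flip-connectivity,'' which is precisely what must be proven. The paper's route is different and non-circular: from $a\ind S r b$ for all $r$ it first upgrades to $a\ind S r bS$ via normality (\Cref{lem:flip-independence-normality}, proved by iterating a types-of-balls argument), then produces a single flip $H'$ with both $\dist_{H'}(a,Sb)>7^q$ and $\dist_{H'}(b,Sa)>7^q$ (\Cref{lem:flip-double-separation}), rewrites $\phi$ in the flipped structure, applies Gaifman locality (\Cref{cor:fo-locality}) to replace $\phi(x,b)$ by a parameter-free local formula $\alpha(x)$ near $a$, and finally uses the automatic finite satisfiability of $\tp(a/S)$ in $S$ (from $S\preceq G$) to find $a'\in S$ with $\phi(a',b)$; no indiscernible sequences or $k$-inconsistency are needed. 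Your preliminary reductions (symmetry of flip-independence, invariance under replacing $G$ by an $S$-definable flip) are sound, and the axiomatic-characterization route you mention at the end is conceivable, but as you note the extension axiom there is again the hard content; as written, the proposal does not close the argument.
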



Our second characterization extends to all monadically stable graphs a special case (for $S\preceq G)$ of the result of Ivanov, \Cref{thm:ivanov}.
It characterizes the relation $a\find S b$ 
as connected components of a certain (symmetric) binary relation $\leftrightsquigarrow_S$
on the vertices of $G$.
We defer its definition to \Cref{sec:metric-and-disc}. By $\leftrightsquigarrow_S^\ast$ we denote its transitive, reflexive closure.

\begin{theorem}\label{thm:discrepancy}
    Let $G$ be a monadically stable graph and $S\preceq G$ be an elementary substructure. 
    For all $a,b\in V(G)$,
    $$a\find S b\qquad\iff\qquad a \leftrightsquigarrow_S^* b.$$
\end{theorem}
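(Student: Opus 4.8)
The plan is to factor the statement through the characterization just obtained in \Cref{thm:graphs-forking-flipping}, which already equates forking independence with a combinatorial condition: $a \find{S} b$ holds if and only if $a \ind{S}{r} b$ for every $r\in\N$. Granting this, all of the model theory has been discharged, and it remains to prove the purely combinatorial equivalence
\[
  \bigl(\,a \ind{S}{r} b\ \text{for every } r\in\N\,\bigr)
  \quad\iff\quad
  a \leftrightsquigarrow_S^{*} b ,
\]
between flip-separability at all radii and the reflexive--transitive closure of the discrepancy relation $\leftrightsquigarrow_S$ of \Cref{sec:metric-and-disc}. The whole argument then takes place inside the graph $G$ and its $S$-definable flips, with no further appeal to types or forking.

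I would prove this by calibrating a single step of $\leftrightsquigarrow_S$ against flip-separability at a single radius, and then upgrading through the closure. The relation $\leftrightsquigarrow_S$ is designed so that one step records a bounded-discrepancy, essentially local, relationship relative to $S$; the first task is to show that such a relationship is exactly what a single $S$-definable flip can witness at a fixed radius, and conversely that a bounded-radius flip forces a bounded-discrepancy relationship. Passing from one step to the closure requires a chaining argument: a $\leftrightsquigarrow_S^{*}$-witnessing sequence $a=c_0,\ldots,c_n=b$ must be converted into information about $a \ind{S}{r} b$ for all $r$ simultaneously, and here I would use that $S$-definable flips compose into $S$-definable flips (with the numbers of flipped pairs adding), so that finitely many local flips can be merged into one.

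For the converse I would argue contrapositively: a failure of $a \leftrightsquigarrow_S^{*} b$ should produce a single radius $r$ at which $a \nind{S}{r} b$, i.e. a connection that no $S$-definable flip can sever at that radius; the discrepancy relation is built precisely to detect this surviving connection, and tracing it back along any attempted flip yields the required $\leftrightsquigarrow_S$-link. As a sanity check on the definition, in the nowhere dense case the discrepancy relation should degenerate to short-path connectivity in $G$ itself, so that $\leftrightsquigarrow_S^{*}$ becomes ordinary $S$-avoiding reachability and the equivalence specializes to the $S\preceq G$ instance of Ivanov's theorem (\Cref{thm:ivanov}), which the present theorem is meant to extend.

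The hard part will be uniformity, and it is exactly where monadic stability is needed. Matching an infinite family of radius-$r$ flip conditions to a single closure relation demands that the number of flips used be bounded independently of $r$ and of the chain length, and that the local discrepancy witnesses glue coherently along arbitrarily long $\leftrightsquigarrow_S$-chains; I would obtain these uniform flip bounds from \Cref{thm:flip-flatness} and the bounded flip-rank of \Cref{thm:fg-ranks}. A second, more conceptual obstacle is that flip-separability is not transitive on the nose, so the reflexive--transitive closure on the right must be reconciled with the all-radii condition on the left through the discrepancy relation rather than by naive composition of separating flips; controlling this is the crux of the proof, and it relies on flips being reversible first-order operations that preserve both the theory of $G$ and the relation $S\preceq G$, hence preserve forking independence and, by \Cref{thm:graphs-forking-flipping}, flip-independence at every radius.
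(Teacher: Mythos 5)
Your high-level plan --- reduce via \Cref{thm:graphs-forking-flipping} to a purely combinatorial equivalence, and calibrate a single step of $\leftrightsquigarrow_S$ against flip-(non)separability at a bounded radius --- is indeed the paper's strategy, and the calibration you describe is exactly its \Cref{lem:radius-1-dependence} (one discrepancy step is equivalent to $u\nind{G}{1}Gv$). However, the equivalence you set out to prove has the wrong sign: you equate ``$a\ind{S}{r}b$ for every $r$'' with $a\leftrightsquigarrow_S^{*}b$, whereas the provable statement (the one in the paper's restated version and its proof) is $a\nfind{S}b\iff a\leftrightsquigarrow_S^{*}b$: discrepancy-connectivity corresponds to flip \emph{dependence} at some radius. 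Since $\leftrightsquigarrow_S^{*}$ is reflexive while $a\ind{S}{r}a$ fails for $a\notin S$, your version is already false on the diagonal, and your own sanity check exposes the issue: in the nowhere dense case $\leftrightsquigarrow_S^{*}$ becomes $S$-avoiding reachability, which by \Cref{thm:ivanov} is equivalent to $\nfind{S}$, not to $\find{S}$. The displayed statement you were handed does contain this typo, but no proof can go through with that orientation, and the argument has to be organized around the correct one.

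Beyond the sign, both mechanisms you propose are off. For the chaining direction the paper does not ``merge finitely many $S$-definable flips into one''; composing separating flips is the wrong tool, because the goal is to show that \emph{no single} flip separates $a$ from $b$ at radius $3n$. That is \Cref{lem:flip-dependence-transitivity}: any flip achieving $\dist(a,b)>r+q$ would separate some consecutive pair of the chain, a contradiction --- and it must be preceded by the normality step (\Cref{lem:flip-independence-normality}) upgrading $u_{i-1}\nind{G}{1}Gu_i$ to $u_{i-1}\nind{G}{3}u_i$, which you omit. In the other direction the essential missing idea is the separation lemma (\Cref{lem:separation-lemma}): from $u\nfind{G}v$ one gets $u\nind{G}{r}vG$ for some $r$, then chooses a flip in which $u$ is at distance more than $r$ from $G$; a path from $u$ to $v$ of length at most $r$ survives in that flip, and its internal vertices are automatically far from $G$, which is precisely the hypothesis needed to read each of its edges as a discrepancy via \Cref{lem:radius-1-dependence}. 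Your ``tracing back the surviving connection'' does not supply this. Finally, the uniformity you worry about, and the appeals to \Cref{thm:flip-flatness} and \Cref{thm:fg-ranks}, are unnecessary: the witnessing radius is allowed to grow linearly with the chain length, and neither result is used in the paper's proof.
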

In the special case when $G$ is nowhere dense, $a\leftrightsquigarrow_Sb$
holds if and only if $a$ and $b$ are adjacent in $G$, and $a,b\notin S$. 
In particular, in this case, the characterization of \Cref{thm:discrepancy} coincides with that of \Cref{thm:ivanov}.

\subsection{Flipping and forking in higher arities}
 \Cref{thm:graphs-forking-flipping} above establishes a correspondence between flipping and forking in monadically stable \emph{graphs},
 partially explaining the relevance of flips in monadically stable graph classes, and characterizing forking independence combinatorially.
 Our main focus in this paper are relational structures over a finite relational language.

 \paragraph{Nowhere dense relational structures}
 As a first  step, we 
 consider the setting of \cite{nowhere-dense-structures}, of classes $\CC$ of relational structures whose Gaifman graphs form a nowhere dense graph class.
First, extending a characterization of such \emph{monotone} classes from \cite{nowhere-dense-structures} (see \Cref{thm:nowhere-dense-rel}),
we characterize all such classes as exactly those classes which are monadically dependent,
and whose class of Gaifman graphs is \emph{weakly sparse} -- avoids some biclique $K_{t,t}$ as subgraph.
\begin{restatable}{theorem}{nwdstr}\label{thm:nowhere-dense-struct}
    Let $\CC$ be a class of structures in a finite relational language and let $\Gg$ be the class of Gaifman graphs of structures in $\CC$. Then the following conditions are equivalent:
    \begin{enumerate}
        \item\label{it:nds1} $\GG$ is nowhere dense,
         \item \label{it:nds2} the monotone closure of $\CC$ is monadically stable,
        \item\label{it:nds3} $\CC$ is monadically stable and $\GG$ is weakly sparse,
        \item\label{it:nds4} $\CC$ is monadically dependent and $\GG$ is weakly sparse.
    \end{enumerate}
\end{restatable}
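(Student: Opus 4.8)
The plan is to establish the cycle $(\ref{it:nds1})\Rightarrow(\ref{it:nds2})\Rightarrow(\ref{it:nds3})\Rightarrow(\ref{it:nds4})\Rightarrow(\ref{it:nds1})$, using \Cref{thm:nowhere-dense-rel} (which resolves the \emph{monotone} case) as a black box and transferring properties between $\CC$ and its monotone closure $\CC^{\mathrm{mon}}$. The first three implications are routine. For $(\ref{it:nds1})\Rightarrow(\ref{it:nds2})$: the Gaifman graph of a weak substructure is a subgraph of the Gaifman graph of the ambient structure, so $\gaif(\CC^{\mathrm{mon}})$ consists of subgraphs of graphs in $\GG$; since nowhere density is inherited by subgraphs, $\gaif(\CC^{\mathrm{mon}})$ is nowhere dense, and \Cref{thm:nowhere-dense-rel} applied to the monotone class $\CC^{\mathrm{mon}}$ yields monadic stability. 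For $(\ref{it:nds2})\Rightarrow(\ref{it:nds3})$: monadic stability passes to the subclass $\CC\subseteq\CC^{\mathrm{mon}}$, while weak sparsity follows because $(\ref{it:nds2})$ already forces $(\ref{it:nds1})$ (again by \Cref{thm:nowhere-dense-rel}, as $\gaif(\CC^{\mathrm{mon}})$ nowhere dense is equivalent to $\GG$ nowhere dense), and every nowhere dense class excludes some fixed biclique. Finally $(\ref{it:nds3})\Rightarrow(\ref{it:nds4})$ is immediate, as monadic stability implies monadic dependence and weak sparsity is retained verbatim.

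The entire content lies in $(\ref{it:nds4})\Rightarrow(\ref{it:nds1})$, which I would prove by contraposition: assuming $\GG$ is \emph{somewhere} dense while weakly sparse, I show that $\CC$ is not monadically dependent. The first move is to pass from $\CC$ to $\GG$. The adjacency relation of the Gaifman graph is uniformly first-order definable in each structure of $\CC$ (two elements are adjacent iff they co-occur in some tuple of some relation, a fixed existential formula determined by the finite signature), so $\GG$ is the image of $\CC$ under a single interpretation. Since monadic dependence is preserved downwards under interpretations, it suffices to prove the purely graph-theoretic statement that \emph{a weakly sparse, somewhere dense graph class is not monadically dependent}; this is precisely the generalization of the monotone graph case of \Cref{thm:nowhere-dense-rel} that we are after.

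To prove that statement, I would use the standard characterization of somewhere density to obtain a fixed $p$ such that for every $n$ the exact $p$-subdivision of $K_n$ occurs as a subgraph of some $G_n\in\GG$; here weak sparsity is what forces $p\ge 1$, since a subdivision degree $0$ would produce $K_n$ itself and hence arbitrarily large bicliques. I then build a transduction onto the class of all finite graphs: mark the branch vertices of the subdivided clique by one colour, and to encode a target graph $H$ on $[n]$, colour the internal vertices of the subdivision path of a pair $\{i,j\}$ exactly when $ij\in E(H)$; adjacency in $H$ is recovered by the fixed formula asserting the existence of a length-$(p+1)$ path between two branch vertices all of whose $p$ internal vertices carry the second colour. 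As $H$ ranges over all graphs, this contradicts monadic dependence of $\GG$, and hence of $\CC$. The main obstacle is that the $p$-subdivision appears only as a (not necessarily induced) subgraph of $G_n$, so chords — extra edges among the marked vertices — could create spurious short paths that corrupt the defined adjacency. Controlling these is exactly where weak sparsity is indispensable: the bounded biclique number limits chord density, and I would apply a Ramsey-type cleanup to pass to a large subfamily of branch vertices and paths whose chord pattern is regular and harmless, so that the length-$(p+1)$ path test detects precisely the intended edges. Extracting such a sufficiently induced model from an arbitrary shallow topological-minor model in a weakly sparse graph is the technical heart of the argument.
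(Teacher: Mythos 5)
Your proposal is correct, and on the one implication that carries real content, \eqref{it:nds4}$\rightarrow$\eqref{it:nds1}, it follows the same route as the paper: pass to $\GG$ via the fixed quantifier-free interpretation of Gaifman adjacency (monadic dependence is preserved downwards), then show that a weakly sparse, somewhere dense graph class is not monadically dependent by extracting exact subdivisions of large cliques/bicliques as subgraphs, using weak sparsity to force subdivision depth at least one, regularizing chords by a bi-Ramsey argument until the subdivision is induced, and transducing arbitrary (bipartite) graphs with two unary predicates — this is precisely the paper's \Cref{lem:nip-weakly sparse-to-nowhere-dense}. Where you diverge is \eqref{it:nds1}$\rightarrow$\eqref{it:nds2} and the weak-sparsity half of \eqref{it:nds2}$\rightarrow$\eqref{it:nds3}: you invoke the previously known monotone case, \Cref{thm:nowhere-dense-rel}, as a black box applied to the monotone closure, whereas the paper proves \eqref{it:nds1}$\rightarrow$\eqref{it:nds2} directly (via labelled incidence graphs and the stability of nowhere dense classes) and establishes weak sparsity in \eqref{it:nds2}$\rightarrow$\eqref{it:nds3} by a self-contained Ramsey argument (\Cref{lem:nip-monotone-to-weakly sparse}) that needs only monadic dependence of the monotone closure. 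Your shortcut is logically sound given the cited literature, but it makes the theorem depend on the very monotone-case result it is meant to subsume, so the paper's remark that \Cref{thm:nowhere-dense-struct} immediately implies that result would become circular, and your detour through nowhere density of the monotone closure's Gaifman graphs is longer than the paper's direct argument. Finally, the Ramsey cleanup you defer in \eqref{it:nds4}$\rightarrow$\eqref{it:nds1} (chords between internal vertices of distinct paths yield forbidden bicliques; chords within a path or branch-to-internal edges allow shortening the subdivision) is where essentially all the work lies, so a complete write-up must carry it out rather than announce it.
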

The implication \eqref{it:nds1}$\rightarrow$\eqref{it:nds2}
essentially follows from \cite[Proposition 5.7]{sparsity-book}
and \cite{podewski1978stable}.
We prove the implication \eqref{it:nds2}$\rightarrow$\eqref{it:nds3},
via a straightforward Ramsey argument.
The implication \eqref{it:nds3}$\rightarrow$\eqref{it:nds4}
is immediate, while the implication \eqref{it:nds4}$\rightarrow$\eqref{it:nds1}
follows from  a result of Dvo\v rak \cite{dvorakInducedSubdivisions} (see \Cref{lem:nip-weakly sparse-to-nowhere-dense}),
of which we provide a straightforward proof.


Next, we characterize forking independence (over models) in such (infinite) structures,
extending a special case of the result of Ivanov, \Cref{thm:ivanov}.

\begin{restatable}{theorem}{fndws}
    \label{thm:forking-nowhere-dense-structures}
    Let $\str M \preceq \str N$ be two structures such that the Gaifman graph $\gaif(\str N)$ of $\str N$ is nowhere dense.
    Then, for any two elements $u, v \in V(\str N) - V(\str M)$ we have $u \find {M} v$ if and only if $u$ and $v$ are in different connected components of $\gaif(\str N) - \gaif(\str M)$.
\end{restatable}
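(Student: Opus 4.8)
The statement extends Ivanov's theorem (\Cref{thm:ivanov}) from nowhere dense graphs to relational structures with a nowhere dense Gaifman graph, with the algebraically closed parameter set replaced by the domain of an elementary substructure. The plan is to reduce to the graph statement for $\gaif(\str N)$ as far as possible, and to supply a genuinely structural argument precisely where the reduction breaks down. First I would record the preliminaries. Since $\gaif(\str N)$ is nowhere dense, $\str N$ is stable (a nowhere dense Gaifman graph forces stability, along the Podewski--Ziegler line also invoked in \Cref{thm:nowhere-dense-struct}), so forking over the model $\str M$ is tame: it is symmetric, and nonforking of $\tp(v/V(\str M)u)$ is equivalent to nonsplitting over $\str M$ and to definability of the type over $\str M$. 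I would also observe that $\gaif$ is a parameter-free reduct of $\str N$ and that $\str M\preceq\str N$ gives $\gaif(\str M)=\gaif(\str N)[V(\str M)]\preceq\gaif(\str N)$; in particular $V(\str M)$ is algebraically closed in $\gaif(\str N)$, so \Cref{thm:ivanov} applies with $S=V(\str M)$ and yields that $u$ and $v$ are forking independent over $V(\str M)$ \emph{in the graph} $\gaif(\str N)$ if and only if they lie in different connected components of $\gaif(\str N)-\gaif(\str M)$.

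For the direction ``separation $\Rightarrow$ independence'' I would work directly in $\str N$, using Gaifman's locality theorem as the main tool. Assuming $u$ and $v$ are separated by $V(\str M)$ in $\gaif(\str N)$, I would show that $\tp(v/V(\str M)u)$ does not split over $\str M$, equivalently is definable over $\str M$, and hence does not fork. Gaifman locality reduces every formula $\phi(x,\bar y)$ to a Boolean combination of basic local formulas whose truth depends only on bounded-radius neighborhoods and mutual distances of the arguments. Since the component of $v$ in $\gaif(\str N)-\gaif(\str M)$ is separated from $u$, any such local formula linking $v$ to a tuple $(u,\bar m)$ must factor through $\str M$; because $\str M\preceq\str N$, this $\str M$-mediated data is already determined over $\str M$, so the defining scheme lives over $\str M$. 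This gives definability, hence $u\find{M}v$.

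For the converse direction, ``independence $\Rightarrow$ separation'', I would argue the contrapositive: an $\str M$-avoiding path $u=w_0,w_1,\dots,w_\ell=v$ in $\gaif(\str N)$ forces $u\nfind{M}v$. In a nowhere dense (hence weakly sparse) structure the relation ``there is an $\str M$-avoiding walk of length $\le\ell$ from $x$ to $y$'' is first-order, and it pins down a definable linkage between $u$ and $v$; mimicking the graph argument behind \Cref{thm:ivanov}, this linkage produces a formula in $\tp(u/V(\str M)v)$ that divides over $\str M$. The delicate point is to exhibit the witnessing $\str M$-indiscernible sequence \emph{in $\str N$} (not merely in the reduct $\gaif(\str N)$), with $k$-inconsistent copies; I would obtain it by a Ramsey extraction performed in $\str N$, using stability and the bounded local neighborhoods supplied by nowhere density to keep the extracted sequence inside the correct type over $\str M$ while preserving $k$-inconsistency.

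The main obstacle is the ``separation $\Rightarrow$ independence'' direction, i.e.\ deriving definability and nonsplitting of $\tp(v/V(\str M)u)$ over $\str M$ from Gaifman separation. This is exactly the place where one cannot lean on the graph statement, because forking does not transfer between a structure and its reducts for free; the passage from $\gaif(\str N)$ to the full structure $\str N$ must be carried out by hand through Gaifman locality and the hypothesis $\str M\preceq\str N$. An alternative route, if one is willing to invoke the general correspondence between forking and flip-independence (\Cref{thm:forking-for-structures}), is to specialize that characterization to the nowhere dense setting and to show that flip-independence there collapses to separation in $\gaif(\str N)-\gaif(\str M)$, because weak sparsity renders the relevant $\str M$-definable flips essentially trivial.
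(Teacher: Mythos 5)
Your high-level outline matches the paper's proof in shape (Gaifman locality for ``separation $\Rightarrow$ independence'', a path formula for the converse), but both directions are missing the same key idea: the finiteness of the \emph{boundary sets}, which is what converts ``separated by $V(\str M)$'' into something locality and finite satisfiability can actually use. In the ``separation $\Rightarrow$ independence'' direction, \Cref{cor:fo-locality} requires $u$ and $v$ to be far apart \emph{in $\gaif(\str N)$}, not merely in different components of $\gaif(\str N)-\gaif(\str M)$: if $u$ and $v$ share a common neighbour in $M$ they are at Gaifman distance $2$ and locality gives nothing, so your assertion that every basic local formula linking $v$ to $(u,\bar m)$ ``must factor through $\str M$'' has no argument behind it in exactly the problematic case. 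The paper's missing step is a definable surgery: since $\gaif(\str N)$ is nowhere dense, the sets $S_u,S_v\subseteq M$ of vertices reachable from $u$, resp.\ $v$, by $M$-avoiding paths of length $\le 7^q$ are finite; one isolates $S_u\cup S_v$ and adds finitely many compensating relations (obtained by plugging $u$, $v$ and the isolated vertices into the old relations) so that $\str N$ remains quantifier-freely definable in the modified structure $\str N'$. Only in $\str N'$ are $u$, $v$ and $M$ pairwise at distance $>7^q$, and only then does \Cref{cor:fo-locality}, together with finite satisfiability of $\tp_{\str N'}(u/M)$ in $M$, produce the required witness $w\in M$.

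In the converse direction your route through dividing and $\str M$-indiscernible sequences is both heavier than needed and left unfinished (you flag the extraction of the $k$-inconsistent indiscernible sequence as ``delicate'' without supplying it). Under the paper's working definition of forking over models (\Cref{def:forking}) it suffices to exhibit a single formula in $\tp(u/Mv)$ not realized in $M$. Note also that ``there is an $M$-avoiding walk of length $\le\ell$'' is \emph{not} a first-order formula with parameters, since $M$ is infinite; the correct formula avoids only the finite set $S\subset M$ of vertices reachable from $v$ by $M$-avoiding paths of length $\le r$ (finite again by nowhere density). That formula holds of $(u,v)$ and fails for every $(w,v)$ with $w\in M$, because the last $M$-vertex on any $S$-avoiding path from $w$ to $v$ would itself have to lie in $S$ -- which immediately refutes finite satisfiability and finishes the direction without any indiscernibles.
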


\paragraph{Definable flips of structures}
 The central definition of this paper is the following notion of  flips for arbitrary  structures, which
 is then justified by \Cref{thm:forking-for-structures,thm:flip-flatness-rel}  extending 
 \Cref{thm:graphs-forking-flipping,thm:flip-flatness,thm:fg-ranks} to monadically stable classes of relational structures.
 
 \begin{restatable}{definition}{defflips}\label{def:definable-flip-intro}\label{def:flip}
    Let $\str A$ and $\str A'$ be two structures with the same domain $A$,
    and let $S\subseteq A$.
    We say that $\str A'$ is an \emph{$S$-flip} of $\str A$ if 
    every relation of $\str A'$ is definable in $\str A$ by a quantifier-free formula with parameters from $S$,
    and vice-versa.
 \end{restatable}

 (A more concrete form of flips, also sufficient for obtaining the results in the paper, is described in \Cref{def:syntactic-flip}).

 Note that in the case of graphs, the \emph{$S$-flips} in \Cref{def:definable-flip-intro} above differ  from the \emph{$S$-definable flips} of graphs mentioned earlier -- see \Cref{ex:tree-flip} below -- although this distinction is not crucial.

\begin{example}\label{ex:tree-flip}
Coming back to the tree example discussed earlier, 
if  $T$ is a tree and $T^s$
is the graph obtained from $T$ by isolating the vertex $s$,
then $T^s$ is usually \emph{not} an $S$-flip of $T$ according to the above definition (although it is an $S$-definable flip in the sense of graph flips).
However, if $T^s$ is additionally equipped with a unary predicate $U$ marking all neighbors of $s$ in $T$,  then we have:
\iflipics
\[
    T\models E(u,v)\iff T^s\models (U(u)\land (v=s))\lor 
    (U(v)\land (u=s))\lor 
    E(u,v),        
\]
\else
\begin{multline*}
    T\models E(u,v)\iff\\ T^s\models (U(u)\land (v=s))\lor 
    (U(v)\land (u=s))\lor 
    E(u,v),        
\end{multline*}
\fi
so the edge relation of $T$ is definable in $T^s$ (equipped with the relations $E$ and  $U$) by a quantifier-free formula with parameter~$s$.
Conversely, both relations $U$ and $E$ of $T^s$ 
are definable in $T$, using quantifier-free formulas with parameter~$s$.
Thus, now $T^s$ is an $\set{s}$-flip of $T$ (and vice-versa), according to \Cref{def:definable-flip-intro}. Similarly, $T^s$ is also an $\set{s}$-flip of $\overline T$,
the edge-complement of $T$.\qed
\end{example}

\begin{example}
    This example  uses a ternary relation.
    Again, let $T$ be a rooted tree,
    and let $T_1$ be the structure 
    with domain $V(T)$, equipped with a ternary relation $R$,
    consisting of all triples $(u,v,w)$ 
    such that $u$ is the closest common ancestor of $v$ and $w$ in $T$.

    Fix a vertex $s$ of $T$, and 
    let $T_2$ be the structure 
    with domain $V(T)$,
    equipped with the following three relations:
    \begin{align*}
        R_0&=\setof{(u,v,w)}{(u,v,w)\in R, u,v,w\neq s}\\
        R_1&=\setof{(v,w)}{(s,v,w)\in R}\\
        R_2&=\setof{(u,w)}{(u,s,w)\in R}.
    \end{align*}
    Then $T_1$ and $T_2$ are $\set{s}$-flips of each other; moreover, $s$ is isolated in the Gaifman graph of $T_2$.\qed
\end{example}

\paragraph{Flip independence}
Following \cite{flipper-game}, we introduce the notion of flip independence in relational structures.
\begin{definition}    
    Fix $r\in\N$, a relational structure $\str A$, and set $S$ of its elements.
    For $a,b\in V(\str A)$, we write 
    $$a\ind S r b$$ to 
    denote that there is some $S$-flip $\str A'$ of $\str A$ such that there is no path of length at most $r$ connecting $a$ with $b$ 
    in the Gaifman graph of $\str A'$,
    or $a=b\in S$.
\end{definition}

Our main result is the following combinatorial characterization of forking independence (over models) in monadically stable relational structures,  extending \Cref{thm:graphs-forking-flipping}.
\begin{restatable}{theorem}{forkingforstructures}
    \label{thm:forking-for-structures}
    Let $\str M \preceq \str N$ be two monadically stable structures in a  relational language.
    For all $a,b\in N$,
    $$a\find M b\qquad\iff\qquad a\ind M r b\quad\text{for all $r\in\N$}.$$
\end{restatable}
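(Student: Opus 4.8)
The plan is to prove both implications through the standard characterization of nonforking over a model in a stable theory. Since monadic stability implies stability and $\str M \preceq \str N$, for $a,b \in N$ the relation $a \find M b$ is equivalent to saying that $\tp(a/Mb)$ is finitely satisfiable in $M$ (the coheir of $\tp(a/M)$, equivalently its heir, equivalently its definable nonforking extension), and this relation is symmetric. I would reduce each direction to this criterion and then bridge to the combinatorial flip condition via a Gaifman-style locality statement for $M$-flips.

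For the implication $a\ind M r b$ (for all $r$) $\Rightarrow a\find M b$, I aim to verify finite satisfiability directly: given $\theta(x,b)\in\tp(a/Mb)$, I must exhibit $a'\in M$ with $\str N\models\theta(a',b)$. The key ingredient is a \emph{locality-modulo-flips} lemma. Because the class of all $M$-flips of $\str N$ is again monadically stable (flips preserve monadic stability up to renaming the language), a Gaifman/Hanf-type argument should yield a radius $r_\theta$, \emph{uniform over all $M$-flips}, such that the translate of $\theta$ into any $M$-flip is controlled by the $r_\theta$-balls around its two arguments. Applying $a\ind M R b$ with $R$ chosen past $2r_\theta$ then produces an $M$-flip $\str N'$ in which $a$ and $b$ sit in disjoint, non-interacting $r_\theta$-neighborhoods in $\gaif(\str N')$; the translated formula consequently factors (via the Gaifman normal form, using that an $r_\theta$-local formula around a widely separated pair splits into a Boolean combination of conditions local around $a$ and local around $b$). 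Since the $a$-side is a formula over $M$, one pushes its existential quantifier into $M$ using $\str M\preceq\str N$ to obtain the witness $a'$. Establishing the factorization—securing the \emph{uniform} radius across flips, handling the quantifiers in $\theta$ (flips only guarantee quantifier-free interdefinability), and, most delicately, arranging that the witness $a'\in M$ can be taken suitably separated from $b$ so that the factored formula still holds—is the technical heart of the argument and is considerably harder than in the graph case, since higher-arity relations obstruct any naive neighborhood decomposition.

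For the converse $a\find M b\Rightarrow a\ind M r b$, I would fix $r$ and argue constructively. As $\tp(a/Mb)$ is the nonforking, hence stationary, extension of $\tp(a/M)$, I take a long Morley sequence $a=a_0,a_1,\dots,a_{n-1}$ of $\tp(a/M)$ over $Mb$; each $a_i$ realizes $\tp(a/Mb)$, so an automorphism fixing $Mb$ pointwise carries any $a_i$ to $a$. Applying the relational flip-flatness theorem (\Cref{thm:flip-flatness-rel}) to $\{a_0,\dots,a_{n-1}\}$ yields an $M$-flip $\str N'$ together with a large subset pairwise at distance $>2r$ in $\gaif(\str N')$; since any two elements within distance $r$ of $b$ would lie within distance $2r$ of each other, at most one surviving element is within distance $r$ of $b$, so some $a_i$ has $\dist_{\gaif(\str N')}(a_i,b)>r$. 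Transporting $\str N'$ by an automorphism fixing $Mb$ and sending $a_i\mapsto a$ produces an $M$-flip separating $a$ from $b$ to distance $>r$, which is exactly $a\ind M r b$.

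The main obstacle is the locality-modulo-flips lemma driving the first implication, and within it the step of locating the finite-satisfiability witness inside $M$ while respecting the separation from $b$; this is where the full structure theory of monadically stable relational classes (uniform locality after flips, and the flipper-game decomposition of \Cref{thm:fg-ranks}) must be invoked. A secondary, but real, point in the second implication is ensuring that flip-flatness can be applied with defining parameters drawn from $M$, so that the automorphism transfer preserves the property of being an \emph{$M$-flip}; I would arrange this by working inside a sufficiently saturated model and choosing the Morley sequence so that the relevant $S$-definable strengthening of flip-flatness can be taken with $S\subseteq M$.
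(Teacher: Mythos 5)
Your proposal has the right high-level architecture for the right-to-left direction but leaves unproved exactly the two lemmas that carry the paper's argument, and your left-to-right direction takes a different route that contains an unrepaired gap. For ``$a\ind M r b$ for all $r$ $\Rightarrow$ $a\find M b$'': the paper does not need a radius uniform over all $M$-flips (that worry is a red herring); it fixes one flip, rewrites $\phi$ quantifier-free-interdefinably into a $\phi'$ of the \emph{same} quantifier rank $q$ in that flip, and applies ordinary Gaifman locality (\Cref{cor:fo-locality}) there. The point you correctly flag as ``the technical heart''---arranging that the finite-satisfiability witness $a'\in M$ lands far from $b$---is resolved in the paper not by the flipper game but by a normality/double-separation statement (\Cref{lem:higher-arity-normality}, resting on the separation lemma \Cref{lem:separation-higher-arity}): from $a\ind M {r'} b$ one extracts a \emph{single} $M$-flip $\str N'$ with $\dist_{\str N'}(a,bM)>7^q$ \emph{and} $\dist_{\str N'}(b,M)>7^q$. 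The second inequality is what makes the argument close: since $b$ is far from \emph{all} of $M$ in $\str N'$, every witness $a'\in M$ of the local formula $\alpha$ is automatically far from $b$, so no further ``arranging'' is needed. Without this lemma your witness could sit adjacent to $b$ in the flip and the factorization fails; gesturing at ``the full structure theory'' does not supply it.

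For ``$a\find M b\Rightarrow a\ind M r b$'' your Morley-sequence-plus-flip-flatness-plus-automorphism argument is genuinely different from the paper's, but it has a concrete gap you acknowledge and do not fix: \Cref{def:flip-flat} only produces an $S$-flip for \emph{some} $k$-element $S\subseteq V(\str N)$, with no guarantee that $S\subseteq M$. If $S\not\subseteq M$ the resulting flip is not an $M$-flip, so $a\ind M r b$ does not follow, and an automorphism fixing $Mb$ pointwise need not preserve the flip; no ``$S$-definable strengthening with $S\subseteq M$'' is available in the paper, and it is not clear one exists. There is also an architectural inversion: in the paper flip-flatness is a downstream consequence of the separation lemma (\Cref{sec:apps}), so invoking it here uses heavier machinery than necessary. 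The paper's proof of this direction is much more elementary: argue the contrapositive. If $a\nind M r b$, use \Cref{lem:separation-higher-arity} to find an $M$-flip $\str N'$ with $\dist_{\str N'}(a,M)>2r$; since $\dist_{\str N'}(a,b)\le r$ this forces $\dist_{\str N'}(b,M)>r$, and the relation $\dist_{\str N'}(x,y)\le r$ is first-order definable in $\str N$ with parameters from the finite flip-defining set, giving a formula in $\tp(a/Mb)$ with no realization in $M$. I recommend replacing your second direction with this contrapositive argument and, for the first direction, isolating and proving the separation and normality lemmas, which are where all the work lies.
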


Moreover, by a simple application of a result of Braunfeld and Laskowski \cite[Theorem 1.1]{braunfeld2021characterizations}, we obtain a converse result assuming that $G$ is stable.
\begin{restatable}{proposition}{forkingforstructuresconverse}
    \label{prop:forking-for-structures-converse}
    Let $\str M$ be a monadically stable structure and assume that for every elementary extension $\str M \preceq \str N$, every elementary substructure $\str S \preceq \str N$, and all $a,b\in V(\str N)$
    $$a\find S b\qquad\iff\qquad a\ind S r b\quad\text{for all $r\in\N$}.$$
    Then $\str M$ is monadically stable.
\end{restatable}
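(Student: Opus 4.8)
The plan is to prove the contrapositive of the intended statement: assuming that $\str M$ is stable but \emph{not} monadically stable, I will exhibit a single elementary extension $\str M\preceq\str N$, an elementary substructure $\str S\preceq\str N$, and elements $a,b\in V(\str N)$ for which the claimed equivalence fails, namely $a\find S b$ holds while $a\nind S r b$ for some $r$. The entry point is exactly the cited result: by \cite[Theorem~1.1]{braunfeld2021characterizations}, a stable theory is monadically stable if and only if it does not admit coding, so $\Th(\str M)$ must admit coding. Concretely, this furnishes a formula $\phi(\bar x,\bar y;\bar z)$ together with infinite, mutually indiscernible sequences $(\bar a_i)_{i<\omega}$ (``rows'') and $(\bar b_j)_{j<\omega}$ (``columns'') in a saturated model such that, with suitable parameters $\bar c$, the relation $\phi(\bar a_i,\bar b_j;\bar c)$ realizes every prescribed bipartite pattern between rows and columns. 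Since this coding manifests the independence property only after the row- and column-sets are named by unary predicates, it is invisible to stability but becomes a genuine obstruction once we are allowed to place those (infinite) marked sets inside the \emph{base} $\str S$.

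Next I would set up the counterexample. Working in a sufficiently saturated $\str N\succeq\str M$, I take $\str S\preceq\str N$ to be an elementary substructure containing the entire row sequence $(\bar a_i)_i$ and column sequence $(\bar b_j)_j$ (an infinite elementary substructure has room for both), and I choose $a,b\in V(\str N)\setminus S$ to be two mutually generic elements extending the row- and column-types over $\str S$, i.e.\ independent realizations prolonging the indiscernible sequences. Mutual genericity together with stability yields $a\find S b$ on the forking side; this part should be routine, as nonforking of the type of $a$ over $Sb$ is built into the choice of $a,b$ as independent generics over a model. The whole weight of the argument therefore falls on the flip side: showing $a\nind S r b$, that is, that \emph{no} $S$-flip $\str A'$ places $a$ and $b$ at Gaifman-distance greater than $r$.

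The crux, and the step I expect to be the main obstacle, is precisely this quantification over all $S$-flips. An $S$-flip is an arbitrary quantifier-free reduct of $\str N$ with finitely many parameters $\bar c\subseteq S$; I must rule out that any such modification separates $a$ from $b$. The intended mechanism is a pigeonhole/Ramsey argument against the indiscernible sequences: a single flip names only finitely many coordinates, so by indiscernibility its restriction to the rows and columns outside $\bar c$ is ``generic'' and constant; using the coding, I can locate a row $\bar a_i$ and column $\bar b_j$ untouched by $\bar c$ whose coded adjacency survives the flip, and then route a bounded-length path $a\text{--}\bar a_i\text{--}\cdots\text{--}\bar b_j\text{--}b$ through the coding structure. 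Making this robust -- extracting from the coding configuration a short connection that persists under \emph{every} fixed finite-parameter quantifier-free change, while keeping $a,b$ forking-independent -- is the delicate point, and is morally the failure of flip-flatness (\Cref{thm:flip-flatness-rel}) in the absence of monadic stability: the combinatorial entanglement forced by coding cannot be undone by the flips available over $\str S$.

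Finally, the contradiction closes the argument: the hypothesis would force $a\ind S r b$ for all $r$ from $a\find S b$, contradicting the inseparability just established. Hence $\Th(\str M)$ admits no coding, and by \cite[Theorem~1.1]{braunfeld2021characterizations} together with the stability of $\str M$, the structure $\str M$ is monadically stable. I would note in passing that the argument only uses the hypothesized equivalence for this one triple $(\str N,\str S,a,b)$, and that the direction it violates is the implication ``$a\find S b\Rightarrow a\ind S r b$'', which is exactly the implication whose proof in \Cref{thm:forking-for-structures} relies on monadic stability.
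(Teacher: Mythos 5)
Your plan --- reduce to the Braunfeld--Laskowski characterization and argue by contrapositive --- points at the right external result, but the proof has a genuine gap at exactly the place you yourself flag as ``the main obstacle'': you never establish $a \nind S r b$, and the sketch you give for it does not go through. A coding configuration is witnessed by an arbitrary first-order formula $\phi(\bar x,\bar y;\bar z)$, possibly with quantifiers, and the truth of $\phi(\bar a_i,\bar b_j;\bar c)$ carries no information about distances in the Gaifman graph; moreover your new elements $a,b$ are generic prolongations of indiscernible sequences and have no reason to be Gaifman-close to any particular row or column. So there is no ``bounded-length path $a$--$\bar a_i$--$\cdots$--$\bar b_j$--$b$ through the coding structure'' to route. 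Even granting such paths, you would still have to show they survive \emph{every} $S$-flip, i.e.\ every simultaneous quantifier-free redefinition of all relations with finitely many parameters from $S$; an indiscernibility/pigeonhole remark does not discharge this, and it is essentially as hard as the combinatorial core of \Cref{thm:forking-for-structures} itself. There are also smaller issues: the rows and columns of a coding configuration are tuples while the proposition concerns singletons, and your assertion that the direction which must fail is the implication from forking independence to flip independence is not justified (the converse implication also relies on monadic stability in the paper, via \Cref{lem:separation-higher-arity}).

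The paper's proof avoids all of this and is much shorter. It invokes the other clause of \cite[Theorem 1.1]{braunfeld2021characterizations}, the \emph{f.s.\ dichotomy}, and verifies it directly: given $\tup a\find S\tup b$, the hypothesis yields $\tup a\ind S r\tup b$ for all $r$; the triangle inequality for flip-distance (\Cref{lem:flip-dependence-transitivity}) shows that a singleton $c$ cannot be flip-dependent on both $\tup a$ and $\tup b$, so by the hypothesis again --- now used in the opposite direction --- either $c\find S\tup a$ or $c\find S\tup b$, which is precisely the f.s.\ dichotomy. No counterexample configuration, indiscernibles, or Ramsey argument is needed. If you want to salvage your route, you would have to genuinely construct, from a coding configuration, a pair $a,b$ over a model $\str S$ together with a proof that no $S$-flip separates them at some radius $r$, which I expect to be at least as much work as the forward direction of \Cref{thm:forking-for-structures}.
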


\subsection{Applications: flip-flatness and separation game}\label{sec:intro-app}
As two applications of \Cref{thm:forking-for-structures}, we prove the following two combinatorial characterizations of monadically stable classes of relational structures,
extending \Cref{thm:flip-flatness,thm:fg-ranks}.

The following notion is 
 an analogue of the condition in \Cref{thm:flip-flatness} from \cite{flip-flatness}.

\begin{definition}\label{def:flip-flat}
    Let $\CC$ be a class of structures in a finite relational language.
    The class $\CC$ is \emph{flip-flat} if for every $r\in\N$
    there is some $k\in\N$ and an unbounded function $U\from\N\to\N$ with the following property.
        For every $\str A\in\CC$ and set $X\subseteq V(\str A)$, there is a 
        set $S\subseteq V(\str A)$ with $|S|\le k$, an $S$-flip $\str A'$ of $\str A$,
        and a set $Y\subseteq X$ with $|Y|\ge U(|X|)$,
        whose elements have pairwise distance greater than $r$ in the Gaifman graph of $\str A'$.
\end{definition}
    
Fix $r\in\N$ and a structure $\str A$.
For two sets $U,S\subseteq V(\str A)$,
define the \emph{separation rank} of $U$ over $S$, denoted 
$\textup{srk}_r^{\str A}(U/S)$,
by $\textup{srk}_r^{\str A}(U/S)\coloneqq 0$
if $U\subseteq S$,
and otherwise,
$$\textup{srk}_r^{\str A}(U/S)\coloneqq1+\min_{s\in \str A}\ \max_{v\in U}\ 
\textup{srk}_r^{\str A}(U\cap B^r_{Ss}(v)/Ss),$$
where $Ss=S\cup\set s$ and $B^r_{Ss}(v)=\setof{u\in \str A}{v\nind {Ss} r u}$.
Note that for all $U,S\subseteq V(\str A)$
we have that 
$$\textup{srk}_r^{\str A}(U/S)\le 
\textup{srk}_r^{\str A}(V(\str A)/\emptyset)\eqqcolon \textup{srk}_r(\str A).$$
The parameter $\textup{srk}_r(\str A)$
can be explained in terms of a two player game on $\str A$,
see \Cref{sec:apps}.
In the case of graph classes, the condition that $\textup{srk}_r(\CC)<\infty$ holds for all $r\in\N$ 
 implies the condition in \Cref{thm:fg-ranks}  (see  \cite[Section 3]{flipper-game-arxiv}).

\begin{theorem}\label{thm:flip-flatness-rel}
    Let $\CC$ be a class of structures in a finite relational language. The following conditions are equivalent:
    \begin{enumerate}
        \item $\CC$ is monadically stable,
        \item $\CC$ is flip-flat,
        \item for every $r\in\N$ there is some $k\in\N$ such that ${\textup{srk}_r(\str A)<k}$ holds for all $\str A\in\CC$.
    \end{enumerate}
\end{theorem}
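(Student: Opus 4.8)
The plan is to use \Cref{thm:forking-for-structures} as the engine: I prove $(1)\Rightarrow(2)$ and $(1)\Rightarrow(3)$ by importing stability-theoretic facts about forking independence, and the converses $(2)\Rightarrow(1)$, $(3)\Rightarrow(1)$ by showing that definable orders are obstructions that survive flips. This establishes $(1)\Leftrightarrow(2)$ and $(1)\Leftrightarrow(3)$, hence all equivalences; equivalently one can close the cycle $(1)\Rightarrow(3)\Rightarrow(2)\Rightarrow(1)$, where $(3)\Rightarrow(2)$ is the purely combinatorial unraveling described below.

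For $(1)\Rightarrow(2)$ I fix $r$ and a structure $\str A\in\CC$ together with a large $X\subseteq V(\str A)$, and embed $\str A$ elementarily into a sufficiently saturated $\str N$, which is again monadically stable. Stability of $\str N$ produces a large $Y\subseteq X$ that is forking-independent over an elementary substructure $\str M\preceq\str N$ (a Morley sequence over a small model). By \Cref{thm:forking-for-structures}, for distinct $u,v\in Y$ we have $u\find M v$ and hence $u\ind M r v$, so every pair can be pulled to Gaifman-distance $>r$ by some $\str M$-flip. Two things then need doing: (a) promoting these pairwise separations to a single $\str M$-flip scattering all of $Y$ at once, which I expect from the joint forking-independence of $Y$ together with stationarity of types over models; and (b) bounding the number of flip-parameters by a constant $k=k(r)$ independent of $\str A$ and $|X|$, which is exactly where \emph{monadic} stability is needed rather than plain stability. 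Finally, since Gaifman distance does not decrease under induced substructures and, by elementarity and saturation, the parameters can be taken inside $\str A$, the resulting flip restricts to an $S$-flip of $\str A$ witnessing flip-flatness.

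For $(1)\Rightarrow(3)$, observe that $\textup{srk}_r$ is a foundation rank of the flip-separation relation: via \Cref{thm:forking-for-structures} the balls $B^r_{Ss}(v)=\setof{u}{v\nind {Ss} r u}$ are controlled by flip- and hence forking-independence over the growing base $Ss$, so the recursion defining $\textup{srk}_r$ descends along a well-founded forking rank, which monadic stability forces to be finite and, since monadic stability is a property of the theory of $\CC$, uniformly bounded over the class, again after cutting the model down to a bounded parameter set. The implication $(3)\Rightarrow(2)$ is then combinatorial: a bound $\textup{srk}_r(\str A)<k$ gives a depth-$k$ strategy in the separation game whose strategy tree, by pigeonhole, yields a bounded $S$, an $S$-flip, and a large $Y\subseteq X$ pairwise at distance $>r$, exactly as the Splitter/flipper game yields (uniform quasi-)wideness.

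For the converses I argue contrapositively. If $\CC$ is not monadically stable then, in some unary expansion, a formula $\phi(x,y)$ orders arbitrarily long sequences, and a Ramsey argument turns these into indiscernible, half-graph-like ladders. Such ladders are \emph{flip-robust}: an $S$-flip with $|S|\le k$ alters only adjacencies definable from the $\le k$ parameters, so a large sub-ladder keeps small Gaifman-distance, which simultaneously blocks the extraction of a large $>r$-scattered set (refuting flip-flatness) and the termination of the separation game in bounded depth (refuting $(3)$). I expect the crux to be the two $(1)\Rightarrow$ directions, and within them step (b): making the number of flip-parameters uniform across the class, and reconciling the \emph{fixed-radius}, finite-parameter combinatorial ranks with the \emph{all-radii}, over-a-model form of \Cref{thm:forking-for-structures}. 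That reconciliation should go through saturation together with the fact that forking over a model is finitely based, and is where I expect most of the technical weight to sit.
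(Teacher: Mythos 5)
Your overall architecture (a cycle of implications driven by a flipping--forking correspondence) is close in spirit to the paper's, which proves $(1)\Rightarrow$ separability of every model in $\overline\CC$ $\Rightarrow(3)\Rightarrow(2)\Rightarrow(1)$ via \Cref{lem:to-separability}, \Cref{lem:to-sep-game}, \Cref{lem:flip-flatness}, and \Cref{lem:closing-the-loop}. However, your $(1)\Rightarrow(2)$ contains a gap that is fatal as stated: you propose to extract from $X$ a large subset $Y$ that is forking-independent over a small model $\str M$ and then apply \Cref{thm:forking-for-structures} pairwise. No such $Y$ exists in general. Take $\str A$ to be a path $v_1-v_2-\cdots-v_n$ and $X=V(\str A)$: by \Cref{thm:ivanov} (or \Cref{thm:forking-nowhere-dense-structures}), two vertices outside $M$ are forking-independent over $M$ only if every path between them meets $M$, so any forking-independent subset of $X$ over a parameter set of size $k$ has at most $k+1$ elements --- yet the path is trivially flip-flat (take every $(r+1)$-st vertex; no flip is needed). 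The underlying conflation is this: flip-flatness asks for pairwise Gaifman-distance $>r$ after a single flip over a \emph{finite} parameter set, a fixed-radius notion, whereas \Cref{thm:forking-for-structures} identifies forking independence over a \emph{model} with flip-independence at \emph{all} radii; scattered sets are typically forking-dependent. The same conflation undermines your $(1)\Rightarrow(3)$: the balls $B^r_{Ss}(v)$ in the definition of $\textup{srk}_r$ are taken over the finite set $Ss$, never over a model, so there is no ``well-founded forking rank'' to descend along, and the uniformity of $k$ over $\CC$ is obtained in the paper not from ``monadic stability being a property of the theory'' but from a compactness argument over the elementary closure --- which is precisely why the paper works with atomic-stability and existential monadic dependence (preserved under elementary closure by \Cref{lem:closure}; it is not clear a priori that monadic stability itself is).

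The engine the paper actually uses is \Cref{lem:separation-higher-arity}: for $\str M\preceq\str N$ atomic-stable and existentially monadically dependent, every single element satisfies $a\ind{M}{r} M$. This, fed through the Separation Game (\Cref{lem:to-sep-game}, a compactness argument producing the uniform round bound $k$, i.e.\ condition (3)) and the combinatorial unraveling of the game tree (\Cref{lem:flip-flatness}, which is your $(3)\Rightarrow(2)$ step and is fine), produces a single bounded flip scattering a large subset of $X$; your acknowledged gaps (a) and (b) are resolved by the game and compactness, not by stationarity of types. Your contrapositive for $(2)\Rightarrow(1)$ via flip-robust half-graph ladders is the right idea and matches \Cref{lem:closing-the-loop}, but the two forward directions need to be rebuilt on the separation lemma rather than on \Cref{thm:forking-for-structures}.
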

As was the case with \Cref{thm:fg-ranks}, we expect that 
the last condition in \Cref{thm:flip-flatness-rel} will be important in resolving the 
tractability of monadically stable classes of relational structures.

\subsection{Relationship to other work}
Baldwin and Shelah \cite{baldwin-shelah} introduced the notion of monadic stability, and studied this concept from a model-theoretic perspective. Some of our results resemble 
(or reprove) some of their results in a more explicit, combinatorial setting; for instance, the last condition in \Cref{thm:flip-flatness-rel} is closely related (and a finitary analogue) of the decompositions of models from \cite{baldwin-shelah}, while \Cref{thm:forking-for-structures} implies that the forking relation on singletons defines an equivalence relation (c.f. \cite[Lemma 4.2.6]{baldwin-shelah}).

Braunfeld and Laskowski \cite{braunfeld2022existential} provide a more precise model-theoretic understanding of monadically stable and monadically dependent classes;
in particular, they study the existential theories of such classes.

Gajarsk\'y, M\"ahlmann, McCarty, Ohlmann, Pilipczuk, Przybyszewski, Siebertz, Soko{\l}owski, and Toruńczyk
   \cite{flipper-game} introduced the notion of flip independence, and proved a fundamental property (called \emph{$r$-separability}) of monadically stable graph classes, from which they then derived combinatorial properties, related to the flipper game, flip-flatness, and a characterization via obstructions.
We follow the approach of \cite{flipper-game}, and extend their results in two ways.
First, by a refined analysis of flip independence, we obtain 
a combinatorial characterization of forking independence in monadically stable graphs. Second, we 
extend their approach to monadically stable relational structures. To this end, we introduce a notion of flips in relational structures. Lifting the results from graphs to relational structures makes the analysis technically more challenging. While flips of graphs remain graphs, it seems essential that in order to study flips of relational structures, 
more relations need to be introduced.
In the case of graphs, a rather explicit description of the obstructions to monadic stability follows from the constructions (see \cite[Section 6]{flipper-game-arxiv}), which can then be used for proving algorithmic hardness results (see \cite{ms-mc2}). In the case of relational structures, we believe that a finer analysis of the proof from \Cref{sec:flip-fork} might lead to similar results; however, the analysis becomes much more cumbersome than in graphs, due to the extra added relations. 

Braunfeld, Dawar, Eleftheriadis, and Papadopoulos
 \cite{nowhere-dense-structures-icalp} 
study monotone, monadically stable classes of relational structures, and characterizes them. We provide a straightforward proof of a more general result
(see \Cref{thm:nowhere-dense-struct}).
However, we do not obtain algorithmic hardness results; this would require additional Ramsey-based arguments.

Ivanov \cite{ivanov} characterizes forking independence in nowhere dense graphs. Our results generalize his result in two ways: by characterizing (in \Cref{thm:discrepancy}) forking independence in monadically stable graphs (although only over models), and by characterizing (in \Cref{thm:forking-nowhere-dense-structures}) forking independence (over models) in relational structures whose Gaifman graphs are nowhere dense.

\paragraph*{Paper outline}
In \Cref{sec:prelims} we introduce the required basic notions from logic, model theory, and Ramsey theory.
In \Cref{sec:flip-ind-graphs}, 
we characterize forking independence in monadically stable graphs. Those results are later generalized to relational structures in \Cref{sec:flip-fork}. In the graph setting, we obtain a finer analysis,
while in \Cref{sec:flip-fork}, the proofs become more technical.
In \Cref{sec:nd-struc} we study relational structures whose Gaifman graphs are nowhere dense. In \Cref{sec:flip-fork} we prove our main result, characterizing forking independence in monadically stable structures. In \Cref{sec:apps}, we derive several characterizations of monadically stable classes, some of which are purely combinatorial. 

Due to space constraints, some proofs are relegated to the appendix.
\section{Preliminaries}\label{sec:prelims}
We denote $\N\coloneqq\set{0,1,2,\ldots}$, and for $n\in\N$ 
we denote $[n]\coloneqq\set{1,2,\ldots,n}$.

\subsection{Structures, logic}
In this paper, we only consider logical structures in a language $\cal L$
consisting of finitely many relation symbols and (possibly infinitely many) constant symbols.
Structures are denoted with boldface letters $\str A,\str B,\str M,\str N$, etc.
and their domains are denoted with the corresponding italic letters $A,B,M,N$, etc. We also write $V(\str A)$ to denote the domain of $\str A$. 
We write $R_{\str A}$ to denote the interpretation of a relation symbol $R\in\cal L$ 
in the structure $\str A$.

Graphs are undirected, without self-loops,
that is, a graph $G$ consists of a set of vertices $V(G)$, and a set of edges $E(G)\subseteq {V(G)\choose 2}$.
A graph $G=(V,E)$ is viewed as a relational structure equipped with one binary relation $E$ denoting adjacency.

For two structures $\str A$ and $\str B$ over the same relational language, and with $V(\str A)\subseteq V(\str B)$,
we say that $\str A$ is an \emph{induced substructure}  of $\str B$ if 
for every $k$ and $k$-ary relation  symbol $R$, and for every $k$-tuple $\tup a\in A^k$,
we have 
$$\tup a\in R_{\str A}\quad \iff \quad \tup a\in R_{\str B}.$$
A class $\CC$ of relational structures is \emph{hereditary} if for every $\str B\in \CC$, if $\str A$ is an induced substructure of $\str B$, then $\str A\in\CC$.

A \emph{reduct} of a structure $\str A$ is a structure $\str B$ 
obtained from $\str A$ by forgetting some relations and constants.
A \emph{lift} of $\str A$ is a structure $\str A^+$ such that $\str A$ is a reduct of $\str A^+$. If all the relations of $\str A^+$ which are forgotten in $\str A$ are unary, we say that $\str A^+$ is a \emph{monadic lift} of $\str A$.

In this paper, we only consider formulas of first-order logic.
An \emph{atomic formula}
is a formula of the form $R(\tau_1,\ldots,\tau_k)$,
where $R$ is a $k$-ary relation symbol in $\cal L\cup\set{=}$,
and each $\tau_i$ is either a variable, or a constant symbol in $\cal L$. 
For a formula $\phi$ and set $\bar x$ of variables, we sometimes write $\phi(\bar x)$ to indicate that $\bar x$ is the set of free variables of $\phi$.
A \emph{partitioned formula} $\phi(\bar x;\bar y)$ is a formula $\phi$ whose set of free variables is partitioned into two distinguished sets $\bar x$ and~$\bar y$.

If $\str M$ is an $\cal L$-structure and $S\subseteq V(\str M)$,
then 
a \emph{formula with parameters from $S$} is a formula $\phi(\bar x)$ 
in the language $\cal L\cup S$,
where each element $s\in S$ is viewed as a constant symbol,
interpreted as $s$.

\subsection{Gaifman graph}
The \emph{Gaifman graph} of a structure $\str A$,
denoted $\gaif(\str A)$, is the graph 
with vertices $V(\str A)$ and edges $ab$ such that 
there is some relation $R$ of $\str A$ such that 
$a$ and $b$ occur simultaneously in some tuple in $R_{\str A}$.
For a structure $\str A$ and elements $u,w\in V(\str A)$,
let $\dist_{\str A}(u,w)$ denote the distance between $u$ and $v$ in $\gaif(\str A)$ (that is, the length of the shortest path between $u$ and $w$, or $\infty$ if no path exists).
For two sets or tuples $U,W$ of elements of $\str A$,
 let $\dist_{\str A}(U,W)$ denote the minimum distance 
$\dist_{\str A}(u,w)$, for $u\in U,w\in W$.
For an element $a \in V(\str A)$ and $r\in\N$, by $B^r_{\str A}(a)$ denote the ball of radius $r$ around $a$ in $\gaif(\str A)$.
If the structure $\str A$ is clear from the context,
we omit it from the subscript in $\dist_{\str A}(\cdot,\cdot)$ and $B^r_{\str A}(\cdot)$.

The following result is a consequence of Gaifman's locality theorem \cite{gaifman}.

\begin{restatable}{fact}{foloc}
     \label{fact:fo-locality}
    For every first-order formula $\phi(\bar x; \bar y)$ of quantifier rank $q$ there is some $k\in \N$
    and formulas $\alpha_{1}(\bar x),\ldots,\alpha_{k}(\bar x)$ and $\beta_1(\bar y),\ldots,\beta_k(\bar y)$ 
    such that for every structure $\str M$ and tuples $\bar a\in M^{\tup x},\bar b\in M^{\tup y}$, if $\dist_{\str M}(\bar a,\bar b)>7^q$, then 
    \[
        \str M \models \phi(\bar a; \bar b) \iff \str M \models \bigvee_{j=1}^k\alpha_j(\bar a)\land \beta_j(\bar b).
    \]
\end{restatable}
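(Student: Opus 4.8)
The plan is to deduce the statement from Gaifman's locality theorem together with a composition (Feferman--Vaught) argument, exploiting that a large distance between $\bar a$ and $\bar b$ forces the relevant local neighbourhoods to fall apart. First I would apply Gaifman's theorem to $\phi(\bar x;\bar y)$, viewing $\bar x\bar y$ as a single tuple of free variables: this rewrites $\phi$ as a fixed Boolean combination of basic local sentences $\chi_1,\dots,\chi_p$ and of $r$-local formulas $\theta_1(\bar x,\bar y),\dots,\theta_\ell(\bar x,\bar y)$, where the locality radius satisfies $2r+1\le 7^q$ -- this is the bound $r\le (7^q-1)/2$ supplied by Gaifman's theorem at quantifier rank $q$, and it is what makes the threshold $7^q$ the right one. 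The basic local sentences carry no free variables, so each may be folded into the decomposition by writing it as $\alpha(\bar x):=\chi_i$ paired with $\beta(\bar y):=(y_1=y_1)$. Moreover, the class of ``separated'' formulas $\alpha(\bar x)\wedge\beta(\bar y)$ is closed under Boolean combinations and every such combination normalizes to the disjunctive shape $\bigvee_j\alpha_j(\bar x)\wedge\beta_j(\bar y)$; hence it suffices to put each $\theta(\bar a;\bar b)$ into this shape under the distance hypothesis.

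So fix an $r$-local $\theta(\bar x,\bar y)$ of quantifier rank $\le q$ and assume $\dist_{\str M}(\bar a,\bar b)>7^q$. All quantifiers of $\theta(\bar a;\bar b)$ then range over the induced substructure on $B:=B^r_{\str M}(\bar a)\cup B^r_{\str M}(\bar b)$. Because $\dist(\bar a,\bar b)>2r+1$, no relation tuple of $\str M$ meets both $B^r_{\str M}(\bar a)$ and $B^r_{\str M}(\bar b)$, so this substructure is the disjoint union $\str M[B^r_{\str M}(\bar a)]\sqcup \str M[B^r_{\str M}(\bar b)]$. The key step is now the composition theorem, which I would prove by the standard Ehrenfeucht--Fra\"iss\'e game on disjoint unions: the truth value of a rank-$q$ sentence over a disjoint union is a fixed function of the rank-$q$ theories of the two parts. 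Applying it to $\theta$ with $\bar a$ distinguished in the first part and $\bar b$ in the second yields $\theta(\bar a;\bar b)\iff\bigvee_s\alpha_s(\bar a)\wedge\beta_s(\bar b)$, where $\alpha_s(\bar x)$ is the $r$-local formula asserting that the rank-$q$ type of the $r$-ball around $\bar a$ is the $s$-th type (finitely many, as the signature is finite relational), and symmetrically for $\beta_s(\bar y)$. Substituting these decompositions back into the Gaifman combination and normalizing gives the desired $\bigvee_{j=1}^k\alpha_j\wedge\beta_j$, valid whenever $\dist_{\str M}(\bar a,\bar b)>7^q$.

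The one genuine complication, and the step I expect to be the main obstacle, is the presence of constant symbols: a formula $\phi$ may mention finitely many constants $\bar c$, and these named elements can lie close to $\bar a$ and to $\bar b$ simultaneously, so that $B^r_{\str M}(\bar a\bar b\bar c)$ need not split as a disjoint union -- the constant neighbourhoods can bridge the two sides. The remedy I would pursue is to run Gaifman and the composition relative to $\bar c$: since $\bar c$ is named, the constant symbols are available to formulas on both the $\bar x$-side and the $\bar y$-side, and the local behaviour around $\bar c$ is itself a constant-valued datum (a ``sentence'' in the constant symbols) that can be folded into one side. The distance hypothesis still separates $B^r_{\str M}(\bar a)$ from $B^r_{\str M}(\bar b)$, so the only crossing interaction is through the shared, named region $B^r_{\str M}(\bar c)$; the delicate point is to verify that the composition remains a function of the type of $(\bar a,\bar c)$ and the type of $(\bar b,\bar c)$ when these two local types overlap in the unnamed part of the constant neighbourhood, which is where an Ehrenfeucht--Fra\"iss\'e analysis over a shared core, rather than a plain disjoint union, is required.
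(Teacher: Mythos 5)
Your derivation follows the route the paper intends: the paper gives no proof of this fact beyond citing Gaifman's theorem, and your chain --- Gaifman normal form at locality radius $r\le(7^q-1)/2$, the observation that $\dist_{\str M}(\bar a,\bar b)>7^q\ge 2r+1$ makes $\str M[B^r(\bar a)\cup B^r(\bar b)]$ a disjoint union of the two ball substructures, Ehrenfeucht--Fra\"iss\'e composition over that disjoint union, and closure of separated formulas under Boolean combinations --- is correct and complete in the constant-free case. (One cosmetic point: the types in the composition step should be taken at the quantifier rank of the local formulas output by Gaifman's construction, which can exceed $q$; any fixed rank works, since over a finite relational signature with the finitely many constants occurring in $\phi$ there are only finitely many such types.) The complication you flag about constants is real for the statement read literally --- the paper's languages may contain constants, and a constant sitting between $\bar a$ and $\bar b$ can prevent the ball around all named elements from splitting --- and your ``shared core'' remedy is only sketched, not carried out. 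But this gap is immaterial to the paper: every invocation of the fact (always via \Cref{cor:fo-locality}) first folds all parameters into the free-variable tuple, e.g.\ applying it to $\phi(x;y\tup w)$ with the distance hypothesis imposed on $b\tup w$ jointly, so the constant-free case you do prove is exactly the one that is used.
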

We only use the following.

\begin{corollary}\label{cor:fo-locality}
    For every first-order formula $\phi(\bar x; \bar y)$ of quantifier rank $q$,
 structure $\str M$ and tuple $\bar b\in M^{\tup y}$,
    there is a first-order formula $\alpha(\bar x)$ with the following property.
    For every tuple $\tup a\in M^{\tup x}$
     with $\dist_{\str M}(\bar a,\tup b)>7^q$, we have: 
    \[
        \str M \models \phi(\bar a; \bar b) \iff \str M \models \alpha(\tup a).
    \]
\end{corollary}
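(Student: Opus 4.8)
The plan is to specialize the Gaifman decomposition provided by \Cref{fact:fo-locality} to the fixed tuple $\bar b$. The key observation is that once $\bar b$ is fixed, each formula $\beta_j(\bar b)$ evaluates in $\str M$ to a constant truth value, so the local disjunction furnished by \Cref{fact:fo-locality} collapses to a disjunction involving only the formulas $\alpha_j(\bar x)$.

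Concretely, I would first apply \Cref{fact:fo-locality} to $\phi(\bar x;\bar y)$, obtaining $k\in\N$ together with formulas $\alpha_1(\bar x),\ldots,\alpha_k(\bar x)$ and $\beta_1(\bar y),\ldots,\beta_k(\bar y)$ such that for all tuples $\bar a,\bar b$ with $\dist_{\str M}(\bar a,\bar b)>7^q$ one has $\str M\models\phi(\bar a;\bar b)\iff\str M\models\bigvee_{j=1}^k\alpha_j(\bar a)\land\beta_j(\bar b)$. I then fix the structure $\str M$ and the tuple $\bar b\in M^{\bar y}$ as in the statement, and set $J\coloneqq\setof{j\in[k]}{\str M\models\beta_j(\bar b)}$. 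Finally I define
\[
    \alpha(\bar x)\coloneqq\bigvee_{j\in J}\alpha_j(\bar x),
\]
with the convention that the empty disjunction (the case $J=\emptyset$) is a fixed unsatisfiable formula, e.g.\ $\neg(x_1=x_1)$.

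It then remains to verify the equivalence. Fix any $\bar a\in M^{\bar x}$ with $\dist_{\str M}(\bar a,\bar b)>7^q$. By \Cref{fact:fo-locality}, $\str M\models\phi(\bar a;\bar b)$ holds iff $\str M\models\alpha_j(\bar a)\land\beta_j(\bar b)$ for some $j\in[k]$; since $\beta_j(\bar b)$ holds in $\str M$ exactly for $j\in J$, this is equivalent to $\str M\models\alpha_j(\bar a)$ for some $j\in J$, i.e.\ to $\str M\models\alpha(\bar a)$. This is precisely the desired equivalence.

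There is essentially no obstacle here, as the statement is a direct specialization of \Cref{fact:fo-locality}; the only point requiring a word of care is the degenerate case $J=\emptyset$, where $\alpha$ must be chosen identically false. I would also remark that $\alpha$ is permitted to depend on $\str M$ and on $\bar b$ (through the set $J$), as the statement allows, whereas the family $\alpha_1,\ldots,\alpha_k$ coming from \Cref{fact:fo-locality} depends only on $\phi$.
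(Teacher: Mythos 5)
Your proof is correct and is precisely the specialization the paper intends: the corollary is stated immediately after \Cref{fact:fo-locality} with no separate proof, being treated as an immediate consequence, and your argument (fixing $\bar b$, collecting the indices $j$ with $\str M\models\beta_j(\bar b)$, and taking the corresponding disjunction of the $\alpha_j$, with the empty disjunction read as false) is exactly that intended derivation.
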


\subsection{Types}
Let $\str M$ be a structure, $\phi(\tup x;\tup y)$ be a partitioned formula, and let $B \subseteq M$ a set of elements.
The \emph{$\phi$-type} of a tuple $\tup a\in M^{\tup x}$ \emph{over $B$}, denoted $\tp^\phi(\tup a/B)$ is the set 
of all formulas $\phi(\tup x;\tup b)$ 
with parameters $\tup b$ in $B$
such that 
$\str M \models \phi(\tup a;\tup b).$
For $A\subseteq  M^{\tup x}$, we write $$\Types\phi(A/B)$$
to denote the set of $\phi$-types of tuples $\tup a\in A$.

The  \emph{type} of $\tup a$ over $B$,
denoted $\tp(\tup a/B)$,
is the set of 
all  formulas $\phi(\tup x;\tup b)$,
where $\phi(\tup x;\tup y)$ is a partitioned formula and $\tup b$ is a tuple of elements of $B$,
such that $\str M \models \phi(\tup a;\tup b).$

The \emph{atomic type} of $\tup a$ over $B$,
denoted $\atp(\tup a/B)$, is defined analogously, with $\phi$ ranging only over atomic formulas.
The \emph{atomic type} of $\tup a$ is its 
atomic type over $\emptyset$.

 \subsection{Model theory and stability}\label{sec:mt-prelims}
For two structures $\str A,\str B$ with $V(\str A)\subseteq V(\str B)$, we say that $\str A$ is an \emph{elementary substructure} of $\str B$, and write $\str A\preceq \str B$,
if for every first-order formula $\phi(\bar x)$
and any tuple $\tup a\in A^{\tup x}$,
we have 
$$\str A\models\phi(\tup a)\quad\iff\quad \str B\models\phi(\tup a).$$

Let $\CC$ be a class of structures in some language $\cal L$.
A partitioned formula $\phi(\tup x;\tup y)$ 
is \emph{stable} in $\CC$
if there is some $n\in\N$ 
such that for every $\str A\in\CC$,
there are no tuples 
$\tup a_1,\ldots,\tup a_n\in A^{\tup x}$
and $\tup b_1,\ldots,\tup b_n\in A^{\tup y}$ 
such that
$$\str A\models\phi(\tup a_i;\tup b_j)\quad\iff \quad i\le j\qquad \text{ for all $i,j\in \set{1,\ldots,n}$}.$$

A class $\CC$ is \emph{stable}, resp. \emph{atomic-stable},
if every partitioned formula, resp. atomic formula, is stable in $\CC$.
A class $\CC$ is \emph{monadically stable} 
if for every class $\CC^+$ such that 
every structure $\str A^+\in\CC^+$ is a monadic lift of some $\str A\in\CC$, the class $\CC^+$ is stable.
For hereditary classes, monadic stability is equivalent to the definition from the introduction \cite{braunfeld2022existential}, see \Cref{thm:bl-ms} below.
Examples of monadically stable classes 
coming from graph theory include 
all nowhere dense classes (see below),
as well as any class obtained from a nowhere dense graph class 
$\CC$ by means of a first-order formula $\phi(x_1,\ldots,x_k)$
(namely, for each graph $G\in\CC$ construct a structure $\phi(G)$ 
with domain $V(G)$ and relation $R$ consisting of those $k$-tuples 
$(a_1,\ldots,a_k)$ with $G\models \phi(a_1,\ldots,a_k)$).

A class $\CC$ is \emph{dependent},
resp. \emph{existentially dependent}
if for every partitioned formula $\phi(\tup x;\tup y)$ 
(resp. existential partitioned formula)
there is some $n\in\N$ 
such that 
for every $\str A\in\CC$, there are no tuples 
$(\tup a_i:i\in [n])$ and $(\tup b_J:J\subseteq [n])$
such that 
$$\str A\models\phi(\tup a_i;\tup b_J)\quad\iff \quad i\in J\qquad \text{ for all $i\in [n],J\subseteq [n]$}.$$
A class $\CC$ is \emph{monadically dependent},
resp. \emph{existentially monadically dependent},
if for every class $\CC^+$ such that 
every structure $\str A^+\in\CC^+$ is a monadic lift of some $\str A\in\CC$, the class $\CC^+$ is dependent, resp. existentially dependent.
Clearly, every stable class is dependent,
and every monadically stable class is monadically dependent.
Monadically dependent graph classes include 
all monadically stable graph classes (such as nowhere dense graph classes),
every class of bounded clique-width, and more generally, every class of bounded twin-width \cite{tww1} or of bounded flip-width \cite{flip-width}.
Note that the adjective \emph{dependent} is sometimes replaced with \emph{NIP} in model theory.
A combinatorial characterization of monadically dependent graph classes is given in \cite{flip-breakability}.

\begin{theorem}[\cite{braunfeld2022existential}]\label{thm:bl-ms}
    The following are equivalent for a hereditary class of  structures in a finite relational language:
    \begin{enumerate}
        \item $\CC$ is stable,
        \item $\CC$ is monadically stable,
        \item $\CC$ is monadically dependent and atomic-stable.
    \end{enumerate}
\end{theorem}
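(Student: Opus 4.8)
The plan is to prove the cycle (2)$\Rightarrow$(3)$\Rightarrow$(1)$\Rightarrow$(2), since the directions out of (2) are essentially immediate and all the content lies in the two collapses where heredity is essential. For (2)$\Rightarrow$(1) it suffices to take the trivial monadic lift (with no new unary predicates), which shows that stability of $\CC$ follows from monadic stability. For (2)$\Rightarrow$(3), monadic stability implies monadic dependence (as recorded in \Cref{sec:mt-prelims}), while stability of $\CC$ implies atomic-stability because atomic formulas form a subclass of all partitioned formulas. Thus the real work is in (1)$\Rightarrow$(2) and (3)$\Rightarrow$(1).

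For the flagship implication (1)$\Rightarrow$(2) I would argue by contraposition: assuming $\CC$ is hereditary and not monadically stable, I would produce an unstable formula in the base language, contradicting (1). Failure of monadic stability yields, for some finite tuple $\bar U$ of unary predicates and some partitioned formula $\phi(\bar x;\bar y)$ over $\cal L\cup\bar U$, arbitrarily long half-graph patterns; a compactness/ultraproduct argument then gives a single infinite $(\cal L\cup\bar U)$-structure $\str M^+$ in which $\phi$ orders an infinite sequence $(\bar a_i,\bar b_i)_{i\in\omega}$, arranged so that every finite induced substructure of the $\cal L$-reduct $\str M$ lies in $\CC$ (this is exactly where heredity is used). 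A standard Ramsey extraction lets me assume the sequence is order-indiscernible in $\str M^+$. The key structural step is to apply Gaifman locality (\Cref{cor:fo-locality}): for pairs $(\bar a_i,\bar b_j)$ that are far apart in $\gaif(\str M^+)$, the formula $\phi$ agrees with a separated formula, whose truth value is constant along an indiscernible sequence; since the half-graph value $[i\le j]$ is not constant on far-index pairs, the alternation of $\phi$ across the diagonal can only be witnessed by pairs lying within a bounded Gaifman radius. This confines the entire ordering phenomenon to a local, finite-radius configuration.

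The main obstacle — and the technical heart of the argument — is to remove the unary predicates $\bar U$ from this local configuration. Here the finiteness of the relational language and indiscernibility cooperate: within a bounded ball there are only finitely many atomic types, and along the indiscernible sequence the pattern in which each coordinate meets each predicate $U_k$ is constant. The goal is to show that these finitely many predicate-patterns can be realized purely inside $\CC$, i.e.\ that the unary predicates can be \emph{absorbed} into the $\cal L$-structure by instantiating them with genuine elements of suitable neighbourhoods and passing, via heredity, to an induced substructure on which they become $\cal L$-definable with parameters. Carrying this out converts the marked, ordered, local configuration into an ordered configuration of an $\cal L$-formula (ultimately an atomic one) realized in arbitrarily large members of $\CC$, contradicting stability. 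This absorption step is precisely the delicate core of \cite{braunfeld2022existential}.

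Finally, for (3)$\Rightarrow$(1) I would run the same machinery, now using monadic dependence to supply the missing control. Assuming $\CC$ is monadically dependent and atomic-stable but unstable, an unstable base-language formula again yields an order-indiscernible witnessing sequence; monadic dependence guarantees that the coded order is ``one-dimensional'' (the order property without the independence property), so that after marking the sequence by unary predicates and applying Gaifman locality the ordering reduces to that of a single atomic formula. The same heredity/absorption argument as above then removes the markings and exhibits an unstable atomic formula in $\CC$, contradicting atomic-stability. I expect the absorption of unary predicates to again be the crux; monadic dependence enters only to collapse the potentially high-arity ordering down to the atomic level before absorption.
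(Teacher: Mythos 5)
First, note that the paper does not prove \Cref{thm:bl-ms} at all: it is imported verbatim from \cite{braunfeld2022existential}, so there is no in-paper proof to compare yours against. The closest the paper comes is \Cref{thm:superloop}, which establishes the equivalence of monadic stability with ``atomic-stable and existentially monadically dependent'' (and hence the equivalence $(2)\leftrightarrow(3)$ of \Cref{thm:bl-ms}, without even needing heredity) by an entirely different route: the separation lemma (\Cref{lem:separation-higher-arity}), the Separation Game, and flip-flatness. The equivalence with condition (1), i.e.\ the fact that stability alone implies monadic stability for hereditary classes, is exactly the part the paper takes on faith from Braunfeld--Laskowski.

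Your proposal has the right skeleton --- the cheap implications out of (2) are correct, and the observation that all the content sits in $(1)\Rightarrow(2)$ and $(3)\Rightarrow(1)$ is accurate --- but it is not a proof of either hard implication. The Gaifman-locality step does what you say (on an order-indiscernible half-graph sequence, far pairs satisfy a separated formula whose truth value is constant along the sequence, so the alternation across the diagonal forces one side of it to live at bounded Gaifman distance), but this only localizes the configuration; it neither eliminates quantifiers nor touches the unary predicates. The step you call ``absorption'' --- realizing the finitely many predicate patterns by genuine elements so that the predicates become $\cal L$-definable with parameters inside a member of $\CC$ --- is asserted rather than carried out, and you say yourself that it is ``precisely the delicate core of \cite{braunfeld2022existential}.'' Your supporting claim that within a bounded ball there are only finitely many atomic types does not deliver it: a radius-$1$ ball can be infinite, and the relevant types are types over a growing parameter set, so no finiteness is available for free. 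Likewise, in $(3)\Rightarrow(1)$ the assertion that monadic dependence makes the coded order ``one-dimensional'' and hence reducible to a single atomic formula is essentially the statement being proved, not a lemma you can invoke; Gaifman locality gives nothing when the instability is witnessed by tuples at bounded distance. So what you have is an outline of the Braunfeld--Laskowski strategy with the two decisive steps left open; as a self-contained argument it has a genuine gap at each of them.
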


The following lemma is immediate.
\begin{lemma}\label{lem:to-atomic-stable-and-emNIP}
    Every monadically stable class of relational structures 
    is atomic-stable and existentially monadically dependent.
\end{lemma}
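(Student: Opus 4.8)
The plan is to reduce both assertions to elementary containments between families of formulas, combined with the two implications already recorded in the excerpt: that a stable class is dependent, and that a monadically stable class is monadically dependent. Since the claim is flagged as immediate, the plan essentially \emph{is} the proof, so I would simply organize these observations in the right order.

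First I would note that monadic stability of $\CC$ already yields plain stability of $\CC$. Indeed, take $\CC^+=\CC$, where each $\str A\in\CC$ is regarded as a trivial monadic lift of itself adding no unary predicates; by the definition of monadic stability in \Cref{sec:mt-prelims}, this class $\CC^+$ is stable, i.e.\ every partitioned formula is stable in $\CC$. As atomic formulas form a subfamily of all partitioned formulas, stability of every partitioned formula in $\CC$ immediately gives stability of every atomic formula, so $\CC$ is atomic-stable.

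For the second conclusion, I would invoke the observation from \Cref{sec:mt-prelims} that every monadically stable class is monadically dependent. Unwinding the definition, this means that for every class $\CC^+$ consisting of monadic lifts of structures from $\CC$, the class $\CC^+$ is dependent: for each partitioned formula $\phi(\bar x;\bar y)$ there is a bound $n$ forbidding the corresponding independence pattern. Now every existential partitioned formula is in particular a partitioned formula, so the very same bound applies; hence each such $\CC^+$ is also existentially dependent. Since this holds uniformly over all classes $\CC^+$ of monadic lifts, $\CC$ is existentially monadically dependent.

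Both conclusions thus follow purely from the inclusions \emph{atomic $\subseteq$ all} and \emph{existential $\subseteq$ all} among formula families together with the cited implications, so I expect no genuine obstacle. The only points warranting a line of justification are that the trivial (predicate-free) lift is admissible in the definition of monadic stability, so that stability of $\CC$ itself is recovered, and that the family of existential partitioned formulas is literally contained in the family of partitioned formulas governing the notion of dependence.
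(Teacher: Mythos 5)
Your proposal is correct and follows exactly the route the paper intends: the paper gives no proof, declaring the lemma immediate, and your argument is precisely the unpacking of that immediacy (the trivial lift recovers stability of $\CC$ itself, atomic formulas are among all partitioned formulas, and existential formulas are among all partitioned formulas, so the bounds witnessing stability and dependence transfer directly).
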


For a class $\CC$ of structures, by $\overline \CC$ 
we denote the \emph{elementary closure} of $\CC$,
which is the class of all structures $\str M$ 
which satisfy all the sentences $\phi$ that hold 
in all structures in $\CC$. Note that if $\CC$ contains 
structures of arbitrarily large size, then $\overline\CC$ contains infinite structures (even of arbitrarily large cardinality).

It is not difficult to prove that 
atomic-stability and existential monadic dependence (and similarly existential monadic stability) are preserved by taking the elementary closure.
In some of our proofs, this is why we consider those notions, rather than monadic stability, for which this is is unclear (but ultimately follows from the equivalences in \Cref{sec:apps}, or the results of \cite{braunfeld2022existential}).
\begin{restatable}{lemma}{closure}\label{lem:closure}
    Let $\CC$ be an atomic-stable class, resp. an existentially monadically dependent class of structures.
    Then $\overline \CC$ is atomic-stable, resp. existentially monadically dependent.
\end{restatable}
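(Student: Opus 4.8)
The plan is to show that each of the two properties can be read off from the first-order theory $\Th(\CC)$, so that it passes automatically to every model of $\Th(\CC)$, i.e.\ to every structure of $\overline\CC$. The two cases obey the same principle but require different amounts of work: the atomic-stable case is immediate, whereas the existentially-monadically-dependent case forces one to encode a second-order object (a monadic lift) by first-order data, which is where the real difficulty lies.

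For atomic stability, fix an atomic formula $\phi(\bar x;\bar y)$ that is stable in $\CC$, with stability bound $n$. Let $\theta_{\phi,n}$ be the $\cal L$-sentence
\[
\theta_{\phi,n}\ :=\ \exists \bar x_1\cdots\bar x_n\,\exists\bar y_1\cdots\bar y_n\ \Big(\bigwedge_{i\le j}\phi(\bar x_i;\bar y_j)\ \wedge\ \bigwedge_{i> j}\neg\phi(\bar x_i;\bar y_j)\Big),
\]
asserting the existence of an order of length $n$ for $\phi$. By the definition of stability $\str A\models\neg\theta_{\phi,n}$ for every $\str A\in\CC$, so $\neg\theta_{\phi,n}\in\Th(\CC)$, and hence $\neg\theta_{\phi,n}$ holds throughout $\overline\CC$. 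This says precisely that $\phi$ is stable in $\overline\CC$ with the same bound $n$. As this argument applies to every atomic $\phi$, the class $\overline\CC$ is atomic-stable. The key point is that atomic formulas are quantifier-free, so the whole witness of instability is a finite, first-order pattern.

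For existential monadic dependence I would first isolate the following reflection principle for $\Th(\CC)$: if an $\cal L$-sentence $\sigma$ holds in some model of $\Th(\CC)$, then $\neg\sigma\notin\Th(\CC)$, and therefore $\sigma$ holds in some member of $\CC$. Fix $m$ and an existential formula $\psi(\bar x;\bar y)=\exists\bar z\,\chi(\bar x,\bar y,\bar z)$ in the language $\cal L\cup\{U_1,\dots,U_m\}$, with $\chi$ quantifier-free, and suppose towards a contradiction that $\overline\CC$ is not existentially monadically dependent: for every $n$ some $\str M_n\in\overline\CC$ carries a monadic lift $\str M_n^+$ with a $\psi$-independence pattern of size $n$. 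The plan is to produce, for each $n$, a member of $\CC$ with a lift realizing such a pattern, contradicting the existential monadic dependence of $\CC$. By the reflection principle, it suffices to express ``$\str A$ admits a monadic lift with a $\psi$-pattern of size $n$'' by an $\cal L$-sentence $\sigma_n$; then $\sigma_n$ holds in $\str M_n\models\Th(\CC)$, hence in some member of $\CC$.

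Producing $\sigma_n$ is the main obstacle. A pattern consists of finitely many named elements — the rows $\bar a_i$, the columns $\bar b_J$ for $J\subseteq[n]$, and, for the positive instances $i\in J$, witnesses $\bar c_{i,J}$ for the existential quantifier of $\psi$ — together with the membership of all elements in $U_1,\dots,U_m$. The \emph{positive} part is readily first-order: one existentially quantifies the finitely many elements and guesses their (at most $2^m$) colors, turning each $\chi(\bar a_i,\bar b_J,\bar c_{i,J})$ into an $\cal L$-formula. The hard part is the \emph{negative} part, where for $i\notin J$ one must assert $\forall\bar z\,\neg\chi(\bar a_i,\bar b_J,\bar z)$ while the colors of the universally quantified $\bar z$ are governed by the global lift — a genuinely monadic, second-order datum that is not first-order definable in $\cal L$, and one that cannot simply be defaulted away, since the lift may need nontrivial colors far from the pattern precisely to avoid spurious witnesses. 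I expect this gap to be closed exactly as in the existential characterizations of Braunfeld and Laskowski \cite{braunfeld2022existential}: by a compactness-and-Ramsey argument showing that arbitrarily large $\psi$-patterns realizable by \emph{some} lift can be replaced by arbitrarily large $\psi$-patterns realizable by \emph{finitely controlled} lifts, in which all but finitely many elements carry a fixed default color; the latter are first-order expressible and yield the required sentences $\sigma_n$. Feeding these through the reflection principle then supplies the desired members of $\CC$ and completes the argument.
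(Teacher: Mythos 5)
Your treatment of atomic stability is correct and is exactly the paper's argument: stability of an atomic formula with a fixed bound $n$ is the negation of a single first-order sentence, hence passes from $\CC$ to $\overline\CC$.

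The existential-monadic-dependence half has a genuine gap. You correctly isolate the obstacle — expressing ``some monadic lift realizes a $\psi$-pattern of size $n$'' in the base language, with the negative instances $\forall\tup z\,\neg\chi$ being the problem — but you do not close it: you only state that you \emph{expect} a compactness-and-Ramsey reduction to ``finitely controlled lifts'' to work. That reduction is itself a nontrivial unproved claim, and it is not what the paper does. The paper's resolution is more elementary and avoids controlling the lift globally. Given a pattern in a lift $\str A^+$ of some $\str A\in\overline\CC$, collect into a finite set $S$ all elements of the tuples $\tup a_i,\tup b_J$ together with one existential witness $\tup c$ for each positive instance $i\in J$. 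Because $\phi$ is existential, the pattern survives in the induced substructure $\str A^+[S]$: positives keep their witnesses, and negatives remain negative since passing to a substructure can only destroy witnesses. The finite structure $\str A[S]$ is then realized as an induced substructure $\str B[S']$ of some $\str B\in\CC$, since ``contains an induced copy of this finite structure'' is a single first-order sentence and cannot fail throughout $\CC$ while holding in $\str A\in\overline\CC$ (this is your reflection principle, applied to a finite configuration rather than to the full pattern-with-lift). Finally, one copies the colors onto $S'$, adds one more unary predicate $U$ interpreted as $S'$, and replaces $\phi$ by the relativized existential formula $\widehat\phi=\exists\tup z.\,\beta\land\bigwedge_{z\in\tup z}U(z)$. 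Forcing the witnesses into $S'$ is precisely what makes the negative instances hold in $\str B$ with no control over the lift outside $S'$; this is legitimate because existential monadic dependence quantifies over all existential formulas in monadically expanded languages, so one is free to change the formula. This relativization step is the idea missing from your proposal.
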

We provide a proof of \Cref{lem:closure} in Appendix \ref{app:prelims}.

\subsection{Forking independence and definability of types}

Forking independence is a fundamental tool in the study of stable classes.
Instead of recalling the original definition of forking independence, we use the following characterization of forking independence over models in stable theories in terms of so-called \emph{finite satisfiability} (c.f. Definition 8.1.1 and Corollary 8.3.7 in \cite{tent-ziegler}).

\begin{definition}[Finite satisfiability]
    Let $\str N$ be a structure, $A \subseteq B\subseteq N$, and $\tup v\in N^k$.
    We say that \emph{$\tp(\tup v/B)$ is finitely satisfiable in $A$}
    if for every formula $\phi(x_1,\ldots,x_k)\in \tp(\tup v/B)$ (with parameters from $B$),
    there is some $\tup w\in A$ such that $\phi(x_1,\ldots,x_k)\in \tp(\tup w/B)$.
\end{definition}
\begin{definition}[Forking over models]\label{def:forking}
    Let $\str M \preceq \str N$ be two stable structures and let $B\subseteq N$.
    A set 
    $A\subseteq N$ is \emph{forking independent} of $B$ over $M$, denoted $A \find {M} B$,
    if for every tuple $\tup a\in A^k$, $\tp(\tup a/M\cup B)$ is finitely satisfiable in $M$.
\end{definition}

Observe that if $\str M\preceq \str N$
then $N\find{M}M$.
In other words, $\tp(\tup a/M)$ is finitely satisfiable in $M$, 
for all $\tup a\in N^k$.
Indeed, suppose $\phi(\tup x;\tup m)$ is a formula with parameters $\tup m\in M^\ell$
with $\str N\models \phi(\tup a;\tup m)$.
Then $\str N\models\exists \tup x.\phi(\tup x;\tup m)$,
and so $\str M\models\exists \tup x.\phi(\tup x;\tup m)$ since 
 $\str M\preceq \str N$.
Therefore, there is some $\tup a'\in M^k$ with $\str N\models\phi(\tup a';\tup m)$.

In the case of monadically stable structures Baldwin and Shelah proved that for any sets $A, B, C$ the independence $A \find {C} B$ holds if and only if $a \find C b$ for all $a \in A$ and $b \in B$ (see \cite[Section 4.2]{baldwin-shelah}).
Therefore, when characterizing forking independence over models in monadically stable structures, we can restrict ourselves to the case of singletons.
We remark that our work can be easily generalized to the case of tuples instead of singletons, thus reproving the result of \cite{baldwin-shelah} in the case of forking over models.

\begin{example}\label{ex:forking}
    Let $T$ be an infinite rooted tree (in the graph-theoretic sense), with infinite branching at every node.
    Let $S\subset T$ be any subtree of $T$ 
which contains the root of $T$, and also has infinite branching at every node. One can show that $S\preceq T$.

\sz{draw pic}
Then, for any $a,b\in T-S$ we 
have: $$a\find S b\iff \text{there is no $S$-avoiding path from $a$ to $b$ in $T$}.$$
We first prove the left-to-right implication, by contrapositive. Suppose that there is some $S$-avoiding path from $a$ to $b$; let $r$ be its length. Let $s\in S$ be the closest ancestor of $b$ in $S$. Let $\phi(x)$ be the formula 
with parameters $s$ and $b$ expressing that there is a path of length $r$ from 
$x$ to $b$ which avoids $s$.
Then $T\models\phi(a)$, so $\phi(x)\in \tp(a/S\cup \set{b})$.
Suppose $a'\in S$ 
is such that $\phi(x)\in \tp(a'/S\cup \set{b})$,
that is, $T\models \phi(a')$.
This is a contradiction, since every path from $a'\in S$ 
to $b\in T-S$ must pass through $s$.
Therefore, $\tp(a/S\cup \set{b})$ is not finitely satisfiable in $S$, which proves 
that $a\nfind S b$.

\smallskip
Now, suppose that there is no $S$-avoiding path from $a$ to $b$ in $T$. Denote  the closest ancestor of $a$ in $S$ by $s$.

Let $\phi(x)\in \tp(a/S\cup\set{b})$
and let $S_0\subset S$ be the (finite) set of those elements of $S$ which occur as parameters in $\phi$.
We show that there is some $a'\in S$ with $\phi(x)\in \tp(a'/S\cup\set{b})$.\sz{simplify?}



Consider the subtree $T_s$ of $T$ consisting of $s$ and all descendants of $s$.
Then $a\in T_s$, $b\notin T_s$, and $T_s\cap S$ 
forms an infinite branching subtree of $T_s$.
Pick an automorphism $\alpha$ of $T_s$ 
which is the identity on  $S_0\cap T_s$ (and all their finitely many ancestors in $T_s$) and maps $a$ to some element $a'\in T_s\cap S$.
The automorphism $\alpha$ extends to an automorphism $\alpha'$ of $T$
which is the identity on $(T-T_s)\cup S_0\cup\set{s}$,
and maps $a$ to $a'\in S$.
In particular, $\alpha'$ fixes $S_0\cup\set{b}$, as $b\in T-T_s$. This implies that  $\phi(x)\in \tp(a'/S\cup\set{b})$,
so $\tp(a/S\cup\set{b})$ is finitely satisfiable in $S$,
proving $a\find S b$.\qed
\end{example}

The following result is a fundamental fact about stable formulas (see e.g. \cite[Lemma~2.2~(i)]{pillay}).

\begin{theorem}[Definability of types]
    \label{thm:definability-of-types}
    Let $\str M\preceq \str N$ be two structures and $\phi(\tup x; \tup y)$ be a formula that is stable in $\str M$.
    For every $\tup n \in \str N^{\tup x}$ there is some formula $\psi_{\tup n}(\tup y)$ with parameters from $\str M$, which is a positive boolean combination of formulas of the form $\phi(\tup m; \tup y)$ for $\tup m\in\str M^{\tup x}$, such that:
    \begin{align}\label{eq:def-types}
        \str N\models \phi(\tup n; \tup a)\ \iff\     \str M\models \psi_{\tup n}(\tup a)\qquad \text{ for all }\tup a\in\str M^{\tup y}.
    \end{align}
    Any such formula $\psi_{\tup n}(\tup y)$ is called a \emph{definition} of the $\phi$-type of $\tup n$ over  $\str M$. 
    Moreover, the size of the formula $\psi_{\tup n}(\tup y)$ can be bounded by some constant depending only on $\str N$ and $\phi$.
\end{theorem}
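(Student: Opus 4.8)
The plan is to recognize this as the classical definability-of-types theorem for stable formulas and to give the standard combinatorial proof based on Shelah's local $\phi$-rank, adapted so as to extract the \emph{positive} boolean combination and the size bound. First I would record the two facts that make the argument run. Since $\phi(\tup x;\tup y)$ is stable in $\str M$, there is a bound $k\in\N$ on the length of any half-graph pattern for $\phi$; as $\str M\preceq\str N$, the same bound holds in $\str N$, so the relevant local rank will be finite and bounded by a function of $k$. Second, I would use the observation made just before the statement, namely that $\str M\preceq\str N$ implies $N\find M M$, so that $p:=\tp^\phi(\tup n/M)$ is finitely satisfiable in $M$; this is the feature that will ultimately yield the negation-free form of the definition. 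The goal is to define, over $M$, the set $D:=\{\tup a\in M^{\tup y}:\str N\models\phi(\tup n;\tup a)\}$ using only the family of sets $\{\phi(\tup m;M^{\tup y}):\tup m\in M^{\tup x}\}$.

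For the core of the argument I would introduce the local rank $R_\phi$ (the maximal height of a binary tree of $\phi$-instances consistent with a partial type) and show that a ladder index ${<}k$ forces $R_\phi$ to be finite and bounded by a function of $k$. Applying this to $p$, I would select a finite subtype $p_0\subseteq p$ witnessing the rank of $p$, and use uniqueness of the maximal-rank extension (again a consequence of stability) to show that, for $\tup a\in M^{\tup y}$, membership $\tup a\in D$ is equivalent to a purely finitary rank condition phrased in terms of $p_0$ together with $\phi(\tup x;\tup a)$. Since $R_\phi$ is finite and this condition only involves the finitely many parameters occurring in $p_0$, it translates into a fixed boolean combination of instances $\phi(\tup m;\tup y)$ with $\tup m\in M^{\tup x}$. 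Exactness of this combination in both directions is precisely where the ladder-index bound is used: a failure of either inclusion for arbitrarily complex parameters would let me extract, by a Ramsey-type alternation argument, a half-graph of $\phi$ of length $\ge k$, contradicting stability.

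Finally, I would upgrade the boolean combination to a positive one and read off the size bound. For positivity I would invoke finite satisfiability of $p$ in $M$: every positive instance $\phi(\tup n;\tup a)$ can be reflected to some $\tup m\in M^{\tup x}$ with $\phi(\tup m;\tup a)$, which lets me realize $D$ as a finite union of finite intersections of the sets $\phi(\tup m;M^{\tup y})$, i.e. a positive boolean combination with no complementation. The number of parameters $\tup m$ and of terms in the combination are controlled by the rank bound, hence by a function of $k$ and the arities $|\tup x|,|\tup y|$; since $k$ depends only on $\phi$ and the common theory $\Th(\str N)=\Th(\str M)$, this gives the claimed size bound depending only on $\str N$ and $\phi$. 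I expect the main obstacle to be the exactness-plus-positivity step: organizing the alternation argument so that both the correctness of the definition and its negation-free form follow uniformly from the single ladder-index bound, rather than from ad hoc estimates.
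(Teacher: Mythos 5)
The paper does not actually prove this statement: it is quoted as a classical fact with a pointer to the literature (Pillay's \emph{An Introduction to Stability Theory}, Lemma~2.2(i)), so there is no in-paper argument to compare against. Your sketch is the standard rank-based proof of definability of types for a stable formula, and its overall architecture is sound: the ladder-index bound transfers from $\str M$ to $\str N$ because the existence of a ladder of length $n$ is first-order; $\tp(\tup n/M)$ is finitely satisfiable in $M$ because $\str M\preceq\str N$ (as the paper itself observes); and the local rank $R_\phi$ is finite and bounded in terms of the ladder index, which yields the size bound depending only on $\phi$ and $\Th(\str M)=\Th(\str N)$.

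Two places in your outline carry essentially all of the real content and are stated too quickly. First, "uniqueness of the maximal-rank extension" is not a consequence of finiteness of $R_\phi$ alone; you must choose the finite subtype $p_0\subseteq p$ so that it also has $\phi$-multiplicity (degree) $1$, i.e.\ you need the rank-and-degree pair, otherwise several extensions of $p_0$ of maximal rank can coexist and the equivalence "$\phi(\tup x;\tup a)\in p$ iff the rank does not drop" fails. Moreover, the condition "$R_\phi(p_0\cup\{\phi(\tup x;\tup a)\})=R_\phi(p_0)$" is a priori a first-order formula in $\tup a$ with quantifiers ranging over $\tup x$-tuples; it is \emph{not} immediately a boolean combination of instances $\phi(\tup m;\tup y)$ with $\tup m\in M^{\tup x}$, and converting it into that specific syntactic shape is a separate step. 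Second, the positivity upgrade is exactly Harrington's argument (cf.\ Theorem~8.3.1 in Tent--Ziegler): one does not merely "reflect each positive instance into $M$" once, but iterates finite satisfiability to build sequences $\tup m_{i,1},\ldots,\tup m_{i,j}\in M^{\tup x}$ realizing successive restrictions of $p$, and uses the ladder-index bound $k$ to show that the union of intersections $\bigvee_i\bigwedge_j\phi(\tup m_{i,j};\tup y)$ stabilizes after boundedly many steps and computes $D$ exactly. You correctly identify this exactness-plus-positivity step as the main obstacle; as written it is a plan rather than a proof, but it is the right plan, and since the paper treats the whole theorem as a black-box citation, reproducing the standard argument at this level of detail is a reasonable account of why the statement holds.
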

Note  that there is a unique first-order formula $\psi_{\tup n}(\tup y)$ satisfying \eqref{eq:def-types}, up to equivalence in $\str N$:
If $\psi_{\bar n}(\bar y)$ and $\psi'_{\bar n}(\bar y)$ are two such formulas,
then by \eqref{eq:def-types}, for all $\bar a\in \str M^{\tup y}$ we have 
$\str N\models \psi_{\tup n}(\tup a)\iff \psi_{\tup n}'(\tup a)$.
Since $\str M\preceq \str N$, this implies that 
for all $\tup a\in \str N^{\tup y}$ we have 
$\str N\models \psi_{\tup n}(\tup a)\iff \psi_{\tup n}'(\tup a)$.


We also state a version of the Harrington's Lemma (c.f. \cite[Lemma 8.3.4]{tent-ziegler}) about  definitions of stable formulas.
\begin{lemma}[Harrington's lemma]
    \label{lem:harrington}
    Let $\str M\preceq \str N$ be two structures and $\phi(\tup x; \tup y)$ be a formula that is stable in $\str M$.
    For any $\tup n \in \str N^{\tup x}, \tup n' \in \str N^{\tup y}$, any definition $\psi_{\tup n}(\tup y)$ of the $\phi(\tup x;\tup y)$-type
    of $\tup n$ over $\str M$, and any definition $\psi_{\tup n'}(\tup x)$ of the $\phi(\tup y;\tup x)$-type of $\tup n'$ over $\str M$ we have $${\str N \models \psi_{\tup n}(\tup n') \iff \psi_{\tup n'}(\tup n)}.$$
\end{lemma}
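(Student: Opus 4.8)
The plan is to prove the biconditional by contradiction, constructing an arbitrarily long $\phi$-ladder inside $\str M$ so as to contradict the stability of $\phi$ in $\str M$. First I would record the ingredients I will use repeatedly. Since $\str M\preceq\str N$ we have $N\find M M$, so by the observation following \Cref{def:forking} both $\tp(\tup n/\str M)$ and $\tp(\tup n'/\str M)$ are finitely satisfiable in $\str M$. Moreover, by the defining property in \Cref{thm:definability-of-types} together with $\str M\preceq\str N$, we get \emph{honest-definition} equivalences over all of $\str M$: for every $\tup b\in M^{\tup y}$, $\str N\models\phi(\tup n;\tup b)\iff\str N\models\psi_{\tup n}(\tup b)$, and for every $\tup a\in M^{\tup x}$, $\str N\models\phi(\tup a;\tup n')\iff\str N\models\psi_{\tup n'}(\tup a)$. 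Finally, the statement is symmetric under swapping $(\phi,\tup n)$ with $(\phi^{*},\tup n')$, where $\phi^{*}(\tup y;\tup x)\coloneqq\phi(\tup x;\tup y)$ is again stable in $\str M$; hence it suffices to rule out the case $\str N\models\psi_{\tup n}(\tup n')\wedge\neg\psi_{\tup n'}(\tup n)$.

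Assuming this case, I would build tuples $\tup a_0,\tup a_1,\ldots\in M^{\tup x}$ and $\tup b_0,\tup b_1,\ldots\in M^{\tup y}$, in the interleaved order $\tup b_0,\tup a_0,\tup b_1,\tup a_1,\ldots$, maintaining the invariant $\str M\models\phi(\tup a_i;\tup b_j)\iff i\ge j$, which after reversing the indices is exactly an order-property configuration. To add $\tup b_j$, I realize in $\str M$ (using finite satisfiability of $\tp(\tup n'/\str M)$) the formula asserting $\neg\phi(\tup a_i;\tup y)$ for all previously built $\tup a_i$ ($i<j$) together with $\psi_{\tup n}(\tup y)$; this lies in $\tp(\tup n'/\str M)$ because each earlier $\tup a_i$ was chosen with $\str M\models\neg\psi_{\tup n'}(\tup a_i)$, whence $\str N\models\neg\phi(\tup a_i;\tup n')$ by the honest definition, while $\str N\models\psi_{\tup n}(\tup n')$ is the assumed half of the failure. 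Symmetrically, to add $\tup a_i$ I realize (using finite satisfiability of $\tp(\tup n/\str M)$) the formula asserting $\phi(\tup x;\tup b_j)$ for all previously built $\tup b_j$ ($j\le i$) together with $\neg\psi_{\tup n'}(\tup x)$; this lies in $\tp(\tup n/\str M)$ because each earlier $\tup b_j$ was chosen with $\str M\models\psi_{\tup n}(\tup b_j)$, whence $\str N\models\phi(\tup n;\tup b_j)$, while $\str N\models\neg\psi_{\tup n'}(\tup n)$ is the other assumed half. Carrying the construction past the stability bound of $\phi$ in $\str M$ then contradicts stability, completing the argument.

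The main obstacle, and the reason a purely syntactic manipulation of the two definitions cannot work, is that the honest-definition property of $\psi_{\tup n}$ only ties $\psi_{\tup n}(\tup b)$ to $\phi(\tup n;\tup b)$ for parameters $\tup b$ inside $\str M$, and says nothing about the cross-evaluations $\psi_{\tup n}(\tup n')$ and $\psi_{\tup n'}(\tup n)$ at the external points. Indeed, writing both definitions in disjunctive normal form over their generating instances $\phi(\tup m_{ij};\tup y)$ and $\phi(\tup x;\tup m'_{st})$ and fully unfolding reduces the two sides of the desired biconditional to the nested statements $\exists i\,\forall j\,\exists s\,\forall t\;T_{(ij),(st)}$ and $\exists s\,\forall t\,\exists i\,\forall j\;T_{(ij),(st)}$ over the fixed Boolean matrix $T_{(ij),(st)}=[\,\str M\models\phi(\tup m_{ij};\tup m'_{st})\,]$, and these two quantifier orderings are genuinely inequivalent for an arbitrary matrix $T$. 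Thus the content must come from the fact that $\psi_{\tup n}$ and $\psi_{\tup n'}$ are the canonical definitions of actual finitely satisfiable types — precisely what the ladder exploits — and the delicate point is arranging the interleaving so that the two anchoring conditions $\str N\models\psi_{\tup n}(\tup n')$ and $\str N\models\neg\psi_{\tup n'}(\tup n)$ combine into a single strict order rather than cancelling.
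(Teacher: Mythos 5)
Your proof is correct. The paper does not prove this lemma itself but cites it (Tent--Ziegler, Lemma~8.3.4), and your interleaved ladder construction --- realizing alternately $\bigwedge_{i<j}\neg\phi(\tup a_i;\tup y)\wedge\psi_{\tup n}(\tup y)$ in $\tp(\tup n'/M)$ and $\bigwedge_{j\le i}\phi(\tup x;\tup b_j)\wedge\neg\psi_{\tup n'}(\tup x)$ in $\tp(\tup n/M)$, both finitely satisfiable in $M$ since $\str M\preceq\str N$ --- is exactly the standard textbook argument, and all the steps (the honest-definition equivalences anchoring the two halves of the assumed failure, and the reindexing that turns $i\ge j$ into the paper's order-property configuration) check out.
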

\begin{corollary}
    \label{lem:complete-anticomplete-tuples}Let $\str M \preceq \str N$ be two structures 
and let $\phi(\tup x; \tup y)$ be a formula  that is stable in $\str M$.
    There are no two tuples $\tup s \in N^{\tup x}, \tup s' \in N^{\tup y} $ such that for every $\tup m \in M^{\tup y}$ we have $\str N \models \phi(\tup s; \tup m)$ and for every $\tup m' \in M^{\tup x}$ we have $\str N \not \models \phi(\tup m'; \tup s')$.
\end{corollary}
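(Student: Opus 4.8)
The plan is to derive a contradiction directly from Harrington's Lemma (\Cref{lem:harrington}), using definability of types (\Cref{thm:definability-of-types}) to produce the two definitions that the lemma compares. So suppose, towards a contradiction, that tuples $\tup s\in N^{\tup x}$ and $\tup s'\in N^{\tup y}$ exist with $\str N\models\phi(\tup s;\tup m)$ for every $\tup m\in M^{\tup y}$ (call this \emph{completeness}) and $\str N\not\models\phi(\tup m';\tup s')$ for every $\tup m'\in M^{\tup x}$ (call this \emph{anticompleteness}).

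First, I would apply \Cref{thm:definability-of-types} twice: once to $\phi(\tup x;\tup y)$ to obtain a definition $\psi_{\tup s}(\tup y)$ of the $\phi(\tup x;\tup y)$-type of $\tup s$ over $\str M$, and once to the reverse formula $\phi(\tup y;\tup x)$ to obtain a definition $\psi_{\tup s'}(\tup x)$ of the $\phi(\tup y;\tup x)$-type of $\tup s'$ over $\str M$. By construction these satisfy $\str N\models\phi(\tup s;\tup a)\iff\str M\models\psi_{\tup s}(\tup a)$ for all $\tup a\in M^{\tup y}$, and $\str N\models\phi(\tup a;\tup s')\iff\str M\models\psi_{\tup s'}(\tup a)$ for all $\tup a\in M^{\tup x}$. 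The one point requiring care here is the bookkeeping of which block of variables is the ``object'' and which is the ``parameter'' when invoking the theorem for $\phi$ and for its reverse; this is the only place anything can go wrong, and it is routine.

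Second, I would translate the two hypotheses through these definitions and then transport the resulting universal sentences from $\str M$ to $\str N$ along $\str M\preceq\str N$. Completeness, via the first equivalence, gives $\str M\models\psi_{\tup s}(\tup m)$ for all $\tup m\in M^{\tup y}$, i.e.\ $\str M\models\forall\tup y\,\psi_{\tup s}(\tup y)$; this is a sentence with parameters from $M$, so by elementarity $\str N\models\forall\tup y\,\psi_{\tup s}(\tup y)$, and instantiating at $\tup s'\in N^{\tup y}$ yields $\str N\models\psi_{\tup s}(\tup s')$. Symmetrically, anticompleteness, via the second equivalence, gives $\str M\models\forall\tup x\,\neg\psi_{\tup s'}(\tup x)$, hence $\str N\models\forall\tup x\,\neg\psi_{\tup s'}(\tup x)$ by elementarity, so instantiating at $\tup s\in N^{\tup x}$ yields $\str N\not\models\psi_{\tup s'}(\tup s)$.

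Finally, I would conclude by contradiction: Harrington's Lemma asserts $\str N\models\psi_{\tup s}(\tup s')\iff\psi_{\tup s'}(\tup s)$, whereas the previous step established $\str N\models\psi_{\tup s}(\tup s')$ together with $\str N\not\models\psi_{\tup s'}(\tup s)$. This is the desired contradiction. There is no serious obstacle in this argument; it is a short deduction whose entire content is assembling \Cref{thm:definability-of-types}, elementarity, and \Cref{lem:harrington}, with the only subtlety being the correct orientation of the variable blocks.
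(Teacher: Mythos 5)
Your proof is correct and takes exactly the route the paper intends: the statement is presented as an immediate corollary of Harrington's Lemma, and your argument (definability of types for $\phi(\tup x;\tup y)$ and its reverse, transport of the two universal statements along $\str M\preceq\str N$, then the contradiction with \Cref{lem:harrington}) is precisely the omitted derivation. The only implicit point, which you handle correctly, is that stability of $\phi(\tup x;\tup y)$ in $\str M$ also gives stability of the reversed partition $\phi(\tup y;\tup x)$, so definability of types applies to $\tup s'$ as well.
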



The following lemma is reminiscent of the classic notion of Morley sequences from model theory (see e.g. \cite[Definition 2.27]{pillay}), which was mentioned in the context of graphs in \cite[Lemma 5.4]{flipper-game}.
\begin{lemma}
    \label{lem:morley-sequence}
    Let $\str M \preceq \str N$ be structures in the language consisting of one relation $R$ of arity $r$, let $k \in \N$ be a fixed constant, let $\tup n \in \str N^{\tup x}$, and let $A \subseteq \str M$ be a finite set.
    There is an infinite sequence $\tup n_0, \tup n_1, \tup n_2, \ldots \in \str M^{\tup x}$ such that
    the following holds for $i=0,1,2,\ldots$:
    \begin{itemize}
        \item $\tup n_i$ and $\tup n$ have equal atomic types over $A \cup \tup n_0 \cup \ldots \cup \tup n_{i - 1}$; and
        \item the atomic types of the tuples $\tup n_{i_1}\ldots\tup n_{i_k}$ are the same for all $i_1 < \ldots < i_k$.
    \end{itemize}
    Moreover, if every atomic formula is stable in $\str N$,  the sequence $(\tup n_i)_{i \in \N}$ may be chosen so that the atomic types of $\tup n_{i_1}\ldots\tup n_{i_k}$ are the same for all pairwise distinct $i_1, \ldots, i_k$.
\end{lemma}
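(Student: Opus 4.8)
The plan is to build a sequence satisfying the first bullet by a routine induction using $\str M \preceq \str N$, then thin it out with Ramsey's theorem to obtain the homogeneity in the second bullet, and finally (for the ``moreover'') upgrade order-homogeneity to full, unordered indiscernibility using atomic-stability. For the construction, suppose $\tup n_0, \ldots, \tup n_{i-1} \in \str M^{\tup x}$ have been chosen and set $B_i := A \cup \tup n_0 \cup \ldots \cup \tup n_{i-1}$, a finite subset of $M$. Since the language has a single relation of fixed arity $r$ (together with equality), there are only finitely many atomic formulas in the variables $\tup x$ with parameters in $B_i$, so $\atp(\tup n/B_i)$ is pinned down by a single quantifier-free formula $\theta_i(\tup x)$ with parameters from $B_i \subseteq M$ (the conjunction of the atomic and negated atomic formulas it contains). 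From $\str N \models \theta_i(\tup n)$ we get $\str N \models \exists \tup x\, \theta_i(\tup x)$, and as all parameters lie in $M$ and $\str M \preceq \str N$, also $\str M \models \exists \tup x\, \theta_i(\tup x)$; any witness $\tup n_i \in \str M^{\tup x}$ satisfies $\atp(\tup n_i/B_i) = \atp(\tup n/B_i)$, which is the first bullet.

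Next I would apply the infinite Ramsey theorem: color each increasing $k$-tuple $i_1 < \ldots < i_k$ by the atomic type of the concatenation $\tup n_{i_1}\ldots\tup n_{i_k}$. This is the atomic type of a tuple of fixed length $k|\tup x|$ in a language with one relation of fixed arity, hence there are only finitely many colors, and some infinite index set is monochromatic; restricting the sequence to it yields the second bullet. The first bullet survives this thinning: keeping indices $j_0 < j_1 < \ldots$, we have $A \cup \tup n_{j_0} \cup \ldots \cup \tup n_{j_{t-1}} \subseteq B_{j_t}$, and the atomic type of $\tup n_{j_t}$ over any subset of $B_{j_t}$ is the restriction of $\atp(\tup n_{j_t}/B_{j_t}) = \atp(\tup n/B_{j_t})$, so it still matches that of $\tup n$.

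For the ``moreover'' part, assume every atomic formula is stable in $\str N$; I claim the homogeneous sequence just produced is automatically an indiscernible set. If not, some atomic formula $\alpha(\tup y_1, \ldots, \tup y_k)$ takes different truth values on $\tup n_{i_1}\ldots\tup n_{i_k}$ and on some reordering. Writing the permutation as a product of adjacent transpositions and using the order-homogeneity already established, one isolates a single adjacent transposition, at positions $t,t+1$, across which $\alpha$ flips; fixing the remaining $k-2$ coordinates to suitable sequence elements turns $\alpha$ into an atomic formula $\hat\alpha(\tup u; \tup v\,\tup w)$ with $\str N \models \hat\alpha(\tup n_i; \tup n_j\,\tup c) \iff i < j$ for $i,j$ ranging over an infinite index set and fixed parameters $\tup c$. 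This is an arbitrarily long half-graph pattern for the atomic formula $\hat\alpha$, contradicting its stability in $\str N$; hence no such $\alpha$ exists.

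I expect the last step to be the only substantive one: it is where the hypothesis is used, and it is the classical fact that order-indiscernible sequences are indiscernible sets in stable theories, here in its atomic form. The main point to get right is the extraction of the half-graph from a failure of unordered indiscernibility via adjacent transpositions; the earlier steps are bookkeeping with elementarity (giving realizability of $\atp(\tup n/B_i)$ inside $\str M$) and Ramsey, together with the observation that the first bullet is preserved under subsequences.
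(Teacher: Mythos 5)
Your proof is correct. The paper does not actually include a proof of \Cref{lem:morley-sequence} (it is treated as a classical fact, ``reminiscent of Morley sequences''), so there is nothing to compare against line by line; but your route is exactly the standard one: realize $\atp(\tup n/A\cup\tup n_0\cup\ldots\cup\tup n_{i-1})$ inside $\str M$ using that this type is isolated by a single quantifier-free formula (finite relational language, finite parameter set) together with $\str M\preceq\str N$, thin by infinite Ramsey to get order-homogeneity (noting the first bullet passes to subsequences), and then invoke the classical fact that an order-indiscernible sequence is totally indiscernible when the relevant formulas are stable, via the adjacent-transposition reduction and extraction of a half-graph.

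One small point to tighten in the last step: with index set $\N$ you cannot in general get a \emph{single infinite} index set over which $\hat\alpha(\tup n_i;\tup n_j\tup c)\iff i<j$ with the parameters $\tup c$ fixed, because the context elements occupying positions after $t+1$ must carry indices larger than all the indices you vary. What you do get, for every $n$, is a half-graph of order $n$ for the fixed partitioned atomic formula $\hat\alpha(\tup u;\tup v\tup w)$, by re-choosing the context indices (which are part of the $\tup v\tup w$-tuples, not fixed parameters) for each $n$. Since the paper's definition of stability only forbids arbitrarily long finite half-graph configurations, this already yields the desired contradiction, so the argument goes through after this cosmetic adjustment.
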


\subsection{Ramsey theory}
We use the following consequence of Ramsey's theorem.
For a totally ordered set $(I,\le)$
and tuple $\tup a\in I^k$, the \emph{order type} of $\tup a$,
denoted $\otp(\tup a)$, is the atomic type of the tuple $\tup a$ in the structure $(I,\le)$.

\begin{lemma}[{see \cite[Lemma 4.4]{flipbreakability-arxiv}}]\label{thm:bi-ramsey-tuples}
    Fix $k,\ell\in\N$.
    There is an unbounded function $U_{k,\ell}\from\N\to\N$
    such that for every two sets $I,J$ with $|I|=|J|$
    and function $f\from I^\ell\times J^\ell\to[k]$,
    there are subsets $I'\subset I$ and $J'\subset J$
    with $|I'|=|J'|\ge U_{k,\ell}(|I|)$,
    such that for all 
    $\tup a\in (I')^\ell$ and $\tup b\in (J')^\ell$,
    the value $f(\tup a,\tup b)$ depends only on $\otp(\tup a)$ and $\otp(\tup b)$, i.e. there is a function $g$ such that 
    $f(\tup a,\tup b)=g(\otp(\tup a),\otp(\tup b))$
for all $\tup a\in (I')^\ell$ and ${\tup b\in (J')^\ell}$.
\end{lemma}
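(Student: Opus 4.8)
The plan is to reduce the statement to the \emph{Product Ramsey Theorem} for pairs of uniform hypergraphs, and then upgrade the resulting finite threshold into an unbounded function by a standard inversion trick.

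First I would fix arbitrary linear orders on $I$ and $J$, so that $\otp$ is well defined, and reduce to \emph{increasing} tuples. An order type $\sigma$ of an $\ell$-tuple is determined by an equivalence relation on the $\ell$ coordinates together with a linear order on its classes; if $\sigma$ has $m$ classes, then the tuples $\tup a\in I^\ell$ with $\otp(\tup a)=\sigma$ are in bijection with the strictly increasing $m$-tuples over $I$ (each increasing tuple supplies the distinct values of $\tup a$ in order). Hence, for every pair $(\sigma,\tau)$ of order types, with $m$ and $m'$ classes respectively, the restriction of $f$ to tuples of types $\sigma$ and $\tau$ is the same data as a coloring $g_{\sigma,\tau}\from\binom{I}{m}\times\binom{J}{m'}\to[k]$ of pairs of increasing subsets. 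The goal that $f$ depends only on $\otp$ on $I'\times J'$ is then exactly the requirement that every $g_{\sigma,\tau}$ be constant on $\binom{I'}{m}\times\binom{J'}{m'}$.

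Next I would isolate the combinatorial core: for fixed $k$ and $m,m'\le\ell$, and for every target size $t$, there is $n_0$ such that every coloring $g\from\binom{[n]}{m}\times\binom{[n]}{m'}\to[k]$ with $n\ge n_0$ admits $H\subseteq[n]$, $H'\subseteq[n]$ with $|H|=|H'|=t$ on which $g$ is constant. This is the two-factor Product Ramsey Theorem, a standard consequence of the (hypergraph) Ramsey theorem; I would either cite it or derive it by induction on $m+m'$, pushing homogeneity through one factor at a time. Since there are only finitely many pairs $(\sigma,\tau)$ (a number bounded in terms of $\ell$), I would apply this core statement to them one by one: starting from $I,J$, at each step I shrink the current pair of sets to make one more $g_{\sigma,\tau}$ constant, always trimming to a common size so that $|I'|=|J'|$ is maintained; constancy of the previously treated colorings persists because we only pass to subsets. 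After finitely many steps all $g_{\sigma,\tau}$ are constant simultaneously, and setting $g(\sigma,\tau)$ to the corresponding constant value yields the claim.

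Finally, to convert thresholds into the unbounded function $U_{k,\ell}$, I would use the inversion trick: the iterated application of the core statement yields, for every target size $t$, a threshold $n_0(t)<\infty$ above which a homogeneous rectangle of size $t$ exists; defining $U_{k,\ell}(n):=\max\set{t : n_0(t)\le n}$ gives an unbounded function with the required property, since for $n\ge n_0(t)$ one can guarantee output size at least $t$. The main obstacle is the core Product Ramsey statement; everything else — the reduction to increasing tuples, the finitely-many-types bookkeeping, maintaining $|I'|=|J'|$, and the inversion to an unbounded function — is routine. If one instead wants an explicit, logarithmic-type lower bound rather than mere unboundedness, one can replace the Ramsey core in the $\ell=1$ case by a K\H{o}v\'ari--S\'os--Tur\'an argument: the majority color has density at least $1/k$ in $I\times J$, and a bipartite graph of that density on $n+n$ vertices contains a biclique $K_{t,t}$ with $t$ unbounded in $n$.
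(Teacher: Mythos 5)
Your proposal is correct. Note that the paper does not prove this lemma at all -- it is imported verbatim from the cited source (Lemma 4.4 of the flip-breakability paper), whose proof proceeds by essentially the same route you describe: reduce tuples to their order types (equality pattern plus relative order of the distinct entries, so that each pair of order types yields a coloring of $\binom{I}{m}\times\binom{J}{m'}$), apply the product Ramsey theorem once per pair of order types while shrinking to a common size, and invert the resulting threshold function to obtain the unbounded $U_{k,\ell}$. All the steps you flag as routine really are routine, and the concluding K\H{o}v\'ari--S\'os--Tur\'an remark, while not needed, is a correct observation for the $\ell=1$ case.
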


We also use the following result.
\begin{lemma}[see \cite{GRAVIER2004719}]\label{thm:folklore}
    Let $G=(A,B,E)$ be a bipartite graph
    with $\Types E(A/B)$ infinite.
    There is a relation ${\sim}\in\set{=,\neq,\le,\ge}$ and pairwise distinct elements $a_1,a_2,\ldots \in A,b_1,b_2,\ldots\in B$
      such that 
$$G\models E(a_i,b_j)\iff i\sim j\qquad\text{for }i,j\in\N.$$
\end{lemma}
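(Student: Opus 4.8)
The plan is to give a self-contained Ramsey-theoretic argument; the statement concerns a single fixed bipartite graph, so no model theory is needed. First I would reduce to a countable situation. Since $\Types E(A/B)$ is infinite, I can choose a sequence $a_0,a_1,a_2,\ldots\in A$ whose neighborhoods $N_i:=\setof{b\in B}{G\models E(a_i,b)}$ are pairwise distinct. For each pair $i<j$ fix an element of $B$ witnessing $N_i\neq N_j$, and let $B_0\subseteq B$ be the (countable) set of all these witnesses; the $N_i$ remain pairwise distinct after intersecting with $B_0$, so I may replace $B$ by $B_0$. Enumerating $B=\set{\beta_0,\beta_1,\ldots}$, each $a_i$ is encoded by its characteristic function $f_i\in 2^{\N}$, where $f_i(n)=1\iff G\models E(a_i,\beta_n)$, and the $f_i$ are pairwise distinct.

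Next, using sequential compactness of the Cantor space $2^{\N}$, I would pass to a subsequence $f_{i_1},f_{i_2},\ldots$ converging to some $f_\infty\in 2^{\N}$; discarding at most one index I may assume $f_{i_k}\neq f_\infty$ for all $k$, so the first-difference coordinate $d_k:=\min\setof{n}{f_{i_k}(n)\neq f_\infty(n)}$ is finite, and convergence forces $d_k\to\infty$, hence after a further subsequence the $d_k$ are strictly increasing. Put $b_k:=\beta_{d_k}$; these are pairwise distinct since the $d_k$ are. The key point is that this choice already controls one triangular half of the pattern: for $l>k$ we have $d_l>d_k$, so $f_{i_l}$ agrees with $f_\infty$ below $d_l$ and in particular $E(a_{i_l},b_k)=f_\infty(d_k)$, while on the diagonal $E(a_{i_k},b_k)=1-f_\infty(d_k)$.

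It then remains to tame the other half ($l<k$) and the diagonal uniformly. A pigeonhole step on the value $f_\infty(d_k)\in\set{0,1}$ lets me assume $f_\infty(d_k)=\epsilon$ is constant, and a single application of Ramsey's theorem to the $2$-coloring of pairs $\set{l,k}$ (with $l<k$) by the bit $E(a_{i_l},b_k)=f_{i_l}(d_k)$ yields an infinite subsequence on which this bit equals a constant $c$. After relabeling, the final sequences satisfy
$$E(a_l,b_k)=\begin{cases} c & l<k,\\ 1-\epsilon & l=k,\\ \epsilon & l>k.\end{cases}$$
A four-way case analysis on $(c,\epsilon)$ finishes the proof. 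If $c=\epsilon$, the pattern is $E(a_l,b_k)\iff l=k$ (when $\epsilon=0$, the relation $=$) or $E(a_l,b_k)\iff l\neq k$ (when $\epsilon=1$, the relation $\neq$). If $c\neq\epsilon$, it is a half-graph: $l\le k$ (when $\epsilon=0$, the relation $\le$) or $l>k$ (when $\epsilon=1$), the latter becoming $\ge$ after the harmless index shift $a_i\mapsto a_{i+1}$. In every case we obtain pairwise distinct $a_i$ and $b_j$ realizing one of $\set{=,\neq,\le,\ge}$.

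The main obstacle, and the reason a naive ``pick a witness for each pair'' argument fails, is controlling the entire bipartite pattern simultaneously rather than merely distinguishing the rows: the first-difference trick only pins down the interaction of each $b_k$ with the \emph{later} $a_l$'s (the $l>k$ triangle), so a second Ramsey step is genuinely needed for the $l<k$ triangle, and only combining the two halves with the diagonal reveals that exactly these four canonical patterns arise. Equivalently, one could first split via Ramsey according to whether the neighborhoods $N_i$ form an inclusion-chain or an antichain, the chain case producing the half-graphs and the antichain case the matching and co-matching; but the unified argument above avoids treating these separately.
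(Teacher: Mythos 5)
Your proof is correct. Note that the paper itself gives no proof of this lemma: it is stated with a pointer to \cite{GRAVIER2004719}, which establishes a finitary, quantitative version (any $n$ pairwise distinct sets contain roughly $\log n$ members forming one of the four canonical patterns); your argument is instead a self-contained infinitary one, trading the counting for sequential compactness of $2^{\N}$ plus one application of the infinite Ramsey theorem. The mechanics all check out: the first-difference coordinates $d_k$ must tend to infinity by pointwise convergence, the choice $b_k=\beta_{d_k}$ pins down the diagonal ($f_{i_k}(d_k)=1-f_\infty(d_k)$) and the $l>k$ triangle ($f_{i_l}(d_k)=f_\infty(d_k)$ since $d_l>d_k$), the remaining $l<k$ triangle is homogenized by Ramsey, and the four $(c,\epsilon)$ cases land exactly on $=,\neq,\le,\ge$ (with the index shift in the $\ge$ case being legitimate, since the shifted $a$'s remain pairwise distinct elements of $A$). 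Two points are worth making explicit but are immediate: the witness set $B_0$ is necessarily infinite (a finite $B_0$ admits only finitely many traces $N_i\cap B_0$), and each successive subsequence extraction preserves the constraints already secured, because those constraints are statements about ordered pairs of indices and subsequences preserve order. Either route is acceptable; yours has the advantage of being short and self-contained, while the cited finitary statement would be needed if one wanted explicit bounds.
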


\subsection{Sparsity}
A subgraph of a graph $G$ is a graph obtained from $G$ by removing vertices and edges arbitrarily.
A graph class is \emph{monotone} if it is closed under taking subgraphs.

A graph class $\CC$ is \emph{nowhere dense}
if for every $r\in\N$ there is some $n\in\N$ such that 
the $r$-subdivision of $K_n$ (obtained by replacing each edge of $K_n$ with a path of length $r-1$)
is not a subgraph of any graph $G\in\CC$.

The following characterization of nowhere dense graph classes is known.
A graph class $\CC$
is \emph{weakly sparse} if some biclique $K_{t,t}$ 
is not a subgraph of any graph in~$\CC$.

Nowhere dense graph classes include: the class of planar graphs, any class which excludes a fixed graph as a minor, and any class of graphs of maximum degree bounded by a fixed constant. On the other hand, the 
class of cliques, or the class of bicliques, is not nowhere dense, but is monadically stable.

\begin{theorem}[Consequence of \cite{podewski1978stable,AdlerA14} and \cite{dvorakInducedSubdivisions}]\label{thm:nowhere-dense-g}
    The following conditions are equivalent for a graph class $\CC$:
    \begin{enumerate}
        \item\label{it:nds1-g} $\CC$ is nowhere dense,
         \item \label{it:nds2-g} the monotone closure of $\CC$ is monadically stable,
        \item\label{it:nds3-g} $\CC$ is monadically stable and is weakly sparse,
        \item \label{it:nds4-g} $\CC$ is monadically dependent and is weakly sparse.
    \end{enumerate}
\end{theorem}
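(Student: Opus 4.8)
The plan is to establish the cycle of implications $\eqref{it:nds1-g}\rightarrow\eqref{it:nds2-g}\rightarrow\eqref{it:nds3-g}\rightarrow\eqref{it:nds4-g}\rightarrow\eqref{it:nds1-g}$, reducing everything to the known fact that nowhere dense classes are monadically stable \cite{podewski1978stable,AdlerA14} together with Dvořák's induced subdivision theorem \cite{dvorakInducedSubdivisions}. For $\eqref{it:nds1-g}\rightarrow\eqref{it:nds2-g}$, I would first observe that the monotone closure of a nowhere dense class is again nowhere dense: a subgraph $H$ of some $G\in\CC$ contains the $r$-subdivision of $K_n$ as a subgraph only if $G$ does, so passing to subgraphs creates no new shallow clique subdivisions. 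Since a nowhere dense class is monadically stable by \cite{podewski1978stable,AdlerA14}, the monotone closure is monadically stable. The implication $\eqref{it:nds3-g}\rightarrow\eqref{it:nds4-g}$ is immediate, as every monadically stable class is monadically dependent.

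For $\eqref{it:nds2-g}\rightarrow\eqref{it:nds3-g}$, monadic stability of $\CC$ follows from that of its monotone closure, since monadic stability passes to subclasses. For weak sparsity I argue by contraposition: if $\CC$ is not weakly sparse, then for every $t$ some graph in $\CC$ contains $K_{t,t}$ as a subgraph, so the monotone closure contains every $K_{t,t}$ and hence every bipartite graph on at most $2t$ vertices, in particular the half-graph of order $t$. Since the edge relation witnesses the order property on half-graphs of unbounded order, the monotone closure is not stable, contradicting its monadic stability; therefore $\CC$ is weakly sparse.

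The main obstacle is $\eqref{it:nds4-g}\rightarrow\eqref{it:nds1-g}$, which I would prove by contraposition using Dvořák \cite{dvorakInducedSubdivisions} (this is \Cref{lem:nip-weakly sparse-to-nowhere-dense}). Assume $\CC$ is weakly sparse but somewhere dense, so for some fixed $r$ the $r$-subdivisions of $K_n$ appear as subgraphs of members of $\CC$ for unbounded $n$. Dvořák's induced subdivision theorem upgrades this, in weakly sparse classes, to exact $r'$-subdivisions of $K_n$ appearing as \emph{induced} subgraphs for some bounded $r'$ and unbounded $n$. The key observation is that an induced $r'$-subdivision of $K_n$ contains, as an induced subgraph, the $r'$-subdivision of \emph{any} graph $H$ on at most $n$ vertices: one deletes the internal vertices of the subdivision paths corresponding to non-edges of $H$, and since distinct branch vertices are non-adjacent when $r'\ge 1$, the result remains induced. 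Applying this with $H$ the bipartite incidence graph between $[m]$ and its $2^m$ subsets, I would mark the two families of branch vertices by unary predicates in a monadic lift and take the formula $\phi(x;y)$ expressing that $x$ and $y$ are joined by a path of length $r'+1$ through unmarked (subdivision) vertices. This formula realizes $\phi(a_i;b_J)\iff i\in J$, witnessing the independence property in a monadic lift, so $\CC$ is not monadically dependent. The delicate points are checking that Dvořák's theorem applies in the stated quantitative form and verifying that the path formula detects exactly adjacency in $H$ on the induced subdivision; both are routine once the induced subdivision is in hand.
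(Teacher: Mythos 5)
Your implications \eqref{it:nds1-g}$\rightarrow$\eqref{it:nds2-g}, \eqref{it:nds2-g}$\rightarrow$\eqref{it:nds3-g} and \eqref{it:nds3-g}$\rightarrow$\eqref{it:nds4-g} match the paper's argument (the paper also reduces \eqref{it:nds1-g}$\rightarrow$\eqref{it:nds2-g} to closure of nowhere denseness under subgraphs plus Podewski--Ziegler, and also disposes of \eqref{it:nds2-g}$\rightarrow$\eqref{it:nds3-g} by observing that a non-weakly-sparse class has all bipartite graphs in its monotone closure). For \eqref{it:nds4-g}$\rightarrow$\eqref{it:nds1-g} you invoke Dvo\v r\'ak as a black box, whereas the paper gives a self-contained Ramsey extraction (via \Cref{thm:bi-ramsey-tuples}) of an induced, \emph{exact} $\ell$-subdivision of $K_{t,t}$; the paper explicitly notes that the Dvo\v r\'ak route is available, so this substitution is legitimate in principle, though you do still owe the check you flag yourself, namely that the induced subdivision you obtain has uniform path lengths (otherwise ``path of length exactly $r'+1$'' is the wrong test).

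The genuine gap is in the encoding of the independence property at the end of \eqref{it:nds4-g}$\rightarrow$\eqref{it:nds1-g}. You mark only the two families of branch vertices and let $\phi(x;y)$ say ``$x,y$ are joined by a path of length $r'+1$ through unmarked vertices.'' But the formula is evaluated in the whole graph $G\in\CC$, not in the induced subdivision of $H$: since $\CC$ is not assumed hereditary (nor monotone), the ``deletion'' of the subdivision paths corresponding to non-edges of $H$ cannot actually be performed. Consequently \emph{every} pair of branch vertices of the induced subdivision of $K_n$ (or $K_{t,t}$) is still joined by its own subdivision path of length $r'+1$ through unmarked internal vertices, so your $\phi$ realizes the complete bipartite pattern rather than $\phi(a_i;b_J)\iff i\in J$; moreover nothing prevents additional length-$(r'+1)$ paths wandering through the rest of $G$. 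The fix is exactly what the paper's proof of \Cref{lem:nip-weakly sparse-to-nowhere-dense} does: introduce a \emph{second} unary predicate marking the internal vertices of the subdivision paths you wish to keep, and require in $\phi$ that the path's internal vertices carry that predicate. Because the subdivision is induced in $G$, the marked vertices induce exactly the subdivision of $H$, every internal vertex there has degree two, and the length-$(r'+1)$ test then detects precisely the kept edges. (Alternatively one could pass to the hereditary closure and invoke preservation of monadic dependence under hereditary closure, but you would then need to cite that preservation explicitly.) With this repair your argument goes through.
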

For the implication \eqref{it:nds1-g}$\rightarrow$\eqref{it:nds2-g},
it is enough to show that every nowhere dense class $\CC$ is monadically stable. This is the result of \cite{podewski1978stable}, see also \cite{AdlerA14}.
The implication \eqref{it:nds2-g}$\rightarrow$\eqref{it:nds3-g}
is immediate, since if $\CC$ is not weakly sparse then its monotone closure contains all bipartite graphs, and is therefore not monadically stable.
The implication \eqref{it:nds3-g}$\rightarrow$\eqref{it:nds4-g} is immediate. 
The implication \eqref{it:nds4-g}$\rightarrow$\eqref{it:nds1-g} follows from the following lemma, which can be derived from a result of Dvo\v rak \cite{dvorakInducedSubdivisions}. We provide a more direct, self-contained proof in Appendix \ref{app:prelims}. 
    
    \begin{restatable}{lemma}{dvorak}
        \label{lem:nip-weakly sparse-to-nowhere-dense}
        Let $\CC$ be a monadically dependent, weakly sparse graph class.
        Then $\CC$ is nowhere dense.
    \end{restatable}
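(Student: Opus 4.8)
The plan is to prove the contrapositive: if $\CC$ is weakly sparse but \emph{not} nowhere dense, then $\CC$ is not monadically dependent. Unpacking the definition of nowhere density, its failure gives a \emph{fixed} $r\in\N$ such that for every $n\in\N$ the $r$-subdivision of $K_n$ occurs as a (not necessarily induced) subgraph of some $G_n\in\CC$; let $R$ be the constant (depending only on $r$) bounding the lengths of its subdivision paths. Weak sparsity gives a fixed $t$ with $K_{t,t}\not\subseteq G$ for all $G\in\CC$. The key reduction is to upgrade these \emph{subgraph} subdivisions to \emph{induced} ones of bounded depth: I will show that each $G_n$ contains, as an induced subgraph, a \emph{clean} subdivision of $K_{m_n}$ in which every branch pair is joined by an internally disjoint induced path of length between $2$ and $R$, with no further edges, and $m_n\to\infty$. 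Once this is done, such structures transduce the class of all finite graphs: mark the $m_n$ branch vertices by a unary predicate $B$, and for a desired target graph $H$ mark by a unary predicate $P$ the internal vertices of exactly those paths joining pairs that are edges of $H$; then the formula $\phi(x,y)$ asserting $B(x)$, $B(y)$, $x\neq y$, and the existence of a path of length at most $R$ from $x$ to $y$ all of whose internal vertices satisfy $P$, recovers exactly $H$ on $B$. Correctness of this decoding uses inducedness: distinct branch vertices are non-adjacent and internal vertices of different paths are non-adjacent, so a short $P$-path between two branch vertices is confined to a single designated path and exists iff that path was marked. As $H$ ranges over all graphs and $m_n\to\infty$, this exhibits the coding of all finite graphs witnessing the failure of monadic dependence.

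The heart of the argument is thus a self-contained extraction of a large induced clean subdivision (a proof of the lemma of Dvo\v rak). Fix $n$, write $B=\{b_1,\dots,b_n\}$ for the branch vertices and $P_{ij}$ ($i<j$) for the internally disjoint paths of the $r$-subdivision inside $G_n$. First I replace each $P_{ij}$ by a shortest induced path $\tilde P_{ij}$ from $b_i$ to $b_j$ inside $G_n[V(P_{ij})]$; this has length in $\{2,\dots,R\}$, removes all chords \emph{within} a single path, and keeps internal disjointness. The remaining obstructions to inducedness are edges of $G_n$ of three kinds: between two branch vertices; between a branch vertex $b_k$ and an internal vertex of a foreign path $\tilde P_{ij}$ (with $k\notin\{i,j\}$); and between internal vertices of two distinct paths. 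I encode each obstruction as a hyperedge on the index set $[n]$, consisting of the at most four branch indices involved; deleting any one index of a hyperedge destroys the obstruction, so any conflict-free $I'\subseteq[n]$ yields an induced clean subdivision of $K_{|I'|}$.

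To bound the number of hyperedges I use the K\H{o}v\'ari--S\'os--Tur\'an bound implied by $K_{t,t}$-freeness. There are at most $O(n^{4-2/t})$ obstructions that are $4$-uniform (internal--internal edges between vertex-disjoint pairs of paths). The crucial observation is that obstructions touching a \emph{shared} branch vertex are far fewer: all internal--internal edges among paths through a fixed $b_i$, as well as all branch--internal edges into foreign paths, live inside vertex sets of size $O(n)$ (a single branch star, resp.\ the bipartite graph between $B$ and the internal vertices), so K\H{o}v\'ari--S\'os--Tur\'an caps them at $O(n^{3-1/t})$ in total; these obstructions are $3$-uniform. Keeping each index independently with probability $p=\varepsilon\, n^{-1+1/(2t)}$, the expected number of surviving obstructions is at most $O(n^{4-2/t})p^4+O(n^{3-1/t})p^3\le \tfrac12 pn$ for a suitable constant $\varepsilon$ and large $n$; deleting one index per surviving obstruction retains a conflict-free set of size $\Omega(n^{1/(2t)})$, which tends to infinity with $n$.

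The main obstacle is precisely this counting step. A naive deletion fails because two paths sharing a branch vertex form a $3$-uniform obstruction whose count, if bounded only by the global edge count $O(n^{4-2/t})$, would overwhelm the $\Omega(pn)$ surviving branch vertices. The resolution is the \emph{local} sparsity bound above, which confines shared-endpoint obstructions to stars of linear size and thus caps them at $O(n^{3-1/t})$; balancing this $3$-uniform term against the $4$-uniform term is what dictates the exponent $1/(2t)$. The only further points requiring care are that the cleaned structure is genuinely induced and that the bounded-radius decoding formula is correct, both immediate from the construction. This reproves \Cref{lem:nip-weakly sparse-to-nowhere-dense}, and hence the implication \eqref{it:nds4-g}$\rightarrow$\eqref{it:nds1-g}, without appealing to \cite{dvorakInducedSubdivisions}.
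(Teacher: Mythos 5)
Your proof is correct, and for the crucial step --- extracting a large \emph{induced} bounded-depth subdivision from a subgraph subdivision --- it takes a genuinely different route from the paper's. The paper homogenizes the adjacency pattern among the branch and subdivision vertices via the bi-Ramsey lemma (\Cref{thm:bi-ramsey-tuples}), indexing chords by positions and order types $\otp(i,i')$, $\otp(j,j')$, and then eliminates every non-induced homogeneous pattern either by exhibiting a large biclique (contradicting weak sparsity) or by producing a shorter subdivision (contradicting the minimality of the subdivision length $\ell$); it works with subdivisions of $K_{t,t}$ rather than $K_n$, which is immaterial. You instead run a deletion-method argument: chords are classified by the number of branch indices they involve, the $4$-index obstructions are bounded by the global K\H{o}v\'ari--S\'os--Tur\'an count $O(n^{4-2/t})$, the $3$-index ones by local counts $O(n^{3-1/t})$ inside linear-size stars, and sparsifying at rate $p=\varepsilon n^{-1+1/(2t)}$ leaves a conflict-free set of size $\Omega(n^{1/(2t)})$. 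What your approach buys is a polynomial lower bound on the size of the extracted induced subdivision, in place of the far weaker Ramsey-type function implicit in the paper's argument; it also sidesteps the minimality-of-$\ell$ case analysis by allowing the cleaned paths to have any length in $\{2,\dots,R\}$, which is all the bounded-radius decoding formula needs. The final transduction (marking branch vertices and the internal vertices of selected paths, then decoding with a bounded-length path formula whose correctness rests on inducedness) is essentially identical to the paper's. Two small points to tighten, neither of which affects the conclusion: your claim that the shortened paths $\tilde P_{ij}$ have length at least $2$ fails when $b_i$ and $b_j$ are adjacent, so branch--branch chords must be carried along as $2$-element hyperedges --- there are only $O(n^{2-1/t})$ of them, contributing $O(\varepsilon^2)$ expected survivors --- and accordingly the displayed expectation bound should include this $2$-uniform term $O(n^{2-1/t})p^2$.
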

    
    

    \paragraph*{Convention}
    We say that 
    a (usually infinite) graph or structure $\str M$ is 
    nowhere dense or monadically stable,
    if the class $\set{\str M}$ has the corresponding property.
    
\section{Forking independence over models in monadically stable graphs}\label{sec:flip-ind-graphs}
The goal of this section is to provide a combinatorial characterization 
of forking independence in monadically stable graphs over models,
in terms of \emph{flip independence}.

We first recall the relevant definitions and lemmas from \cite{flipper-game}, and then state the main result of the section.

\begin{definition}\label{def:flips-graphs}
    Let $G$ be a graph and $\cal P$ be a partition of its vertices.
    A \emph{$\cal P$-flip} of $G$ is a graph $G'$ with the same vertex set as $G$,
    and such that for all $A,B\in\cal P$
    one of the following cases holds:
    \begin{itemize}
        \item $G\models E(a,b)\iff G'\models E(a,b)$ holds for all distinct $a\in A,b\in B$, or
         \item $G\models E(a,b)\iff G'\not\models E(a,b)$ holds for all distinct $a\in A,b\in B$.
    \end{itemize}
    If $S\subseteq V(G)$ is a finite vertex set, then an \emph{$S$-definable flip} of $G$ 
    is a $\cal P$-flip of $G$, where $\cal P$ is the partition of $V(G)$ into atomic types over~$S$
    (that is, the parts of $\cal P$ are sets of vertices with equal atomic types over $S$).
    \sz{inconsistent with intro}
\end{definition}

\begin{definition}
    Let $G$ be a graph and $A, B, C \subseteq V(G)$ be three sets of its vertices.
    We say that $A$ is \emph{radius-$r$ flip independent} of $C$ over $B$ and write $A \ind Br C$ if there is a finite set $B' \subseteq B$ and a $B'$-definable flip $G'$ of $G$ such that 
    $\dist_{G'}(a,c)>r$ holds for all distinct
     $a \in A$ and $c \in C$.
\end{definition}

The following is the main result of \Cref{sec:flip-ind-graphs},
and the first result of this paper:
a~combinatorial characterization of forking independence over models in monadically stable graphs in terms of flip independence.


\begingroup
\renewcommand\thetheorem{\ref{thm:graphs-forking-flipping}} 
\begin{restatable}{theorem}{forkinggraphs}
    \label{thm:forking-characterization-graphs}
    Let $G \preceq H$ be monadically stable graphs and let $u, v \in H \setminus G$.
    Then $u \find G v$ if and only if $u \ind Gr v$ for every $r \in \N$.
  \end{restatable}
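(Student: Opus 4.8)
The plan is to unfold forking independence via finite satisfiability (\Cref{def:forking}): $u \find G v$ means precisely that $\tp(u / G \cup \{v\})$ is finitely satisfiable in $G$. Since $G \preceq H$, recall that $\tp(u/G)$ is automatically finitely satisfiable in $G$ (as observed just after \Cref{def:forking}) and, by monadic stability together with \Cref{thm:definability-of-types}, definable over a finite subset of $G$; these two facts are the engine of both implications. I would prove the two directions separately, each by contraposition, using \Cref{fact:fo-locality}/\Cref{cor:fo-locality} to convert the metric condition defining $\ind Gr$ into a logical one, and using the fact that an $S$-definable flip is quantifier-free definable over $S$ in both directions, so that substituting its definitions into a formula preserves quantifier rank.

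For the direction ``$u \ind Gr v$ for all $r$ implies $u \find G v$'', let $\phi(x; \bar g, v)$ be a formula in $\tp(u/G\cup\{v\})$, with $\bar g$ from $G$, of quantifier rank $q$. Using $u \ind G{7^q} v$, I fix a finite $S \subseteq G$ and an $S$-definable flip $H'$ with $\dist_{H'}(u,v) > 7^q$. Rewriting $\phi$ over $H'$ by substituting the $S$-definitions of the flipped relations yields a formula of the same quantifier rank whose parameters lie in $G \cup \{v\}$. As $u$ is now far from $v$ in $H'$, I would apply the partitioned Gaifman normal form of \Cref{fact:fo-locality}, splitting parameters into those near $u$ (which lie in $G$) and those near $v$, to reduce the truth of the rewritten formula at $u$ to a formula $\alpha(x)$ with parameters in $G$, valid for all $x$ far from $v$. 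Since $\tp(u/G)$ is finitely satisfiable in $G$, there are arbitrarily many $w \in G$ with $\alpha(w)$; \Cref{lem:morley-sequence} and \Cref{thm:bi-ramsey-tuples} then let me select such a $w$ that is additionally at distance $> 7^q$ from $v$ in $H'$, so the Gaifman equivalence transfers and $\phi(w; \bar g, v)$ holds. Hence every formula of $\tp(u/G\cup\{v\})$ is realized in $G$, giving $u \find G v$.

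For the constructive direction ``$u \find G v$ implies $u \ind Gr v$ for all $r$'', I fix $r$ and pick $q$ with $7^q \ge r$. By \Cref{thm:definability-of-types} the $\le r$-distance type of $u$ over $G$ is definable over a finite $S_0 \subseteq G$; since $u \find G v$, the corresponding type over $G\cup\{v\}$ is the non-forking extension, so $v$ interacts with the short-range neighborhood of $u$ only through $S_0$. I would then build the separating $S$-flip, with $S = S_0$ together with the boundedly many ``contact'' vertices, by inverting adjacencies between the $S$-definable classes that carry the short $u$–$v$ paths; here \Cref{lem:harrington} and \Cref{lem:complete-anticomplete-tuples} are used to certify that after flipping no class remains complete on one side and anticomplete on the other, so that genuinely no path of length $\le r$ from $u$ to $v$ survives. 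Gaifman locality guarantees that only this bounded-radius interaction is relevant, so the flip succeeds and $u \ind Gr v$.

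The main obstacle is this constructive direction: producing a \emph{single} finite $S$-flip that simultaneously cuts every length-$\le r$ path between $u$ and $v$ while not creating new short paths. Flips are double-edged, since inverting adjacencies can both destroy and create connections, so the correctness of the construction rests delicately on stability (through definability of types and Harrington's lemma) to pin down exactly which classes must be flipped, and on a careful separation of parameters lying near $u$ from those near $v$ in the Gaifman decomposition. I expect the bookkeeping required to make ``interacts only through $S_0$'' precise, and to rule out newly created short connections, to be the technically heaviest part of the argument.
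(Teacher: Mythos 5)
There are genuine gaps in both directions, and they both trace back to the same missing ingredient: the proposal never separates $u$ (or $v$) from $G$, only from each other, whereas the paper's proof runs entirely through the separation lemma (\Cref{lem:separation-lemma}) and its normality strengthening (\Cref{lem:flip-independence-normality,lem:flip-double-separation}). Concretely, in your direction ``$u \ind Gr v$ for all $r$ implies $u \find G v$'', a flip $H'$ witnessing only $\dist_{H'}(u,v)>7^q$ is not enough to invoke \Cref{cor:fo-locality}: the formula $\phi(x;\bar g,v)$ carries parameters $\bar g$ from $G$, and nothing prevents $\bar g$ from sitting right next to $u$ in $H'$, so the locality hypothesis $\dist_{H'}(u,\bar g v)>7^q$ can fail. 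Even granting $\alpha$, your plan to pick a witness $w\in G$ with $\alpha(w)$ that is additionally far from $v$ via \Cref{lem:morley-sequence} and \Cref{thm:bi-ramsey-tuples} is unsupported: there is no a priori reason any realization of $\alpha$ in $G$ is far from $v$ in $H'$. The paper resolves both problems at once by first upgrading $u\ind Gr v$ to $u\ind Gr Gv$ (\Cref{lem:flip-independence-normality}) and then producing a single flip with $\dist_{H'}(u,Gv)>7^q$ \emph{and} $\dist_{H'}(v,Gu)>7^q$ (\Cref{lem:flip-double-separation}), after which every $w\in G$ is automatically far from $v$. That upgrade is a nontrivial iterative construction (repeatedly flipping balls around $v$ using the finiteness of $\Types E(B_r(v)/G)$), and your proposal contains no substitute for it.

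Your other direction has the same shape of problem but worse: you set out to directly build one flip killing every $u$--$v$ path of length at most $r$, which is essentially re-proving the separation machinery from scratch. The sketch (``invert adjacencies between the $S$-definable classes that carry the short $u$--$v$ paths'') does not address the intermediate vertices of such paths, which live in $H\setminus G$ and whose types over $G$ must be controlled before any class-wise flip can be certified correct; this is exactly where \Cref{lem:types-of-balls} and the iteration in \Cref{lem:normality-sublemma} do the work. The paper sidesteps the construction entirely by arguing the contrapositive: from $u\nind Gr v$, apply \Cref{lem:separation-lemma} to get a flip with $\dist_{H'}(u,G)>2r$; since $u$ and $v$ stay within distance $r$, also $\dist_{H'}(v,G)>r$, and the $S$-definable relation ``$\dist_{H'}(x,v)\le r$'' is then a formula in $\tp(u/Gv)$ with no realization in $G$. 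I'd recommend restructuring around that contrapositive and importing the normality lemmas rather than attempting the one-shot construction.
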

  \endgroup

  The following lemma establishes a kind of transitivity of the flip dependence relation. It can be viewed as the triangle inequality for a suitable metric; see the discussion following \Cref{lem:structures-flip-dependence-transitivity}.
  \begin{lemma}
      \label{lem:flip-dependence-transitivity}
      Let $H$ be a graph and $S\subseteq V(H)$.
      If  $r, q \in \N$ are positive integers, then $u \nind S{r} v$ and  $v \nind S{q} w$ imply $u \nind S{r + q} w$.
  \end{lemma}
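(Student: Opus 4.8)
The plan is to read the relation $u \nind S r v$ as a universally quantified statement over flips, and then apply both hypotheses inside a single fixed flip. Recall that $u \ind S r v$ asserts the \emph{existence} of a finite set $S' \subseteq S$ and an $S'$-definable flip $H'$ of $H$ with $\dist_{H'}(u,v) > r$; consequently its negation $u \nind S r v$ says that \emph{for every} finite $S' \subseteq S$ and every $S'$-definable flip $H'$ of $H$ we have $\dist_{H'}(u,v) \le r$. I would record this universal reformulation explicitly at the outset, since it is the only conceptual point in the argument.

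To establish $u \nind S{r+q} w$, I would fix an arbitrary finite $S' \subseteq S$ together with an arbitrary $S'$-definable flip $H'$ of $H$, and prove $\dist_{H'}(u,w) \le r+q$. The crucial observation is that this very same flip $H'$ is an admissible witness for both hypotheses simultaneously: applying $u \nind S r v$ to $H'$ yields $\dist_{H'}(u,v) \le r$, and applying $v \nind S q w$ to $H'$ yields $\dist_{H'}(v,w) \le q$. Since $\dist_{H'}$ is a genuine graph metric, the triangle inequality gives
\[
\dist_{H'}(u,w) \le \dist_{H'}(u,v) + \dist_{H'}(v,w) \le r + q.
\]
As $S'$ and $H'$ were arbitrary, this is precisely the statement $u \nind S{r+q} w$.

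There is essentially no obstacle here: the entire content lies in recognizing that both hypotheses quantify over the \emph{same} family of $S$-definable flips of $H$, so that a single fixed flip $H'$ can be used to bound $\dist_{H'}(u,v)$ and $\dist_{H'}(v,w)$ at once, after which the triangle inequality for the graph distance $\dist_{H'}$ closes the argument. The only minor matter to dispatch is a degenerate coincidence among $u,v,w$, where the relevant distance in every flip is $0$ and the desired bound is immediate; this case is trivial and can be mentioned in one line.
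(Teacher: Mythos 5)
Your proposal is correct and is essentially the paper's own argument: both proofs hinge on observing that the universally quantified negations $u\nind S r v$ and $v\nind S q w$ apply to one and the same $S$-definable flip $H'$, after which the triangle inequality for $\dist_{H'}$ finishes the job. The paper merely phrases this contrapositively (assuming a flip witnessing $u\ind S{r+q}w$ and deriving a contradiction), which is a cosmetic difference from your direct formulation.
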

  \begin{proof}
  
      Towards a contradiction, assume that $u \ind G{r + q} w$.
      Therefore, there is an $S$-definable flip $H'$ of $H$ such that $\dist_{H'}(u, w) > r + q$.
      As $u \nind S{r} v$ we have $\dist_{H'}(u, v) \le r$, so by the triangle inequality $\dist_{H'}(v, w) > q$.
      This is a contradiction with $v \nind S{q} w$.
  \end{proof}
  
  \Cref{thm:forking-characterization-graphs} implies  that if $G\preceq H$ are monadically stable graphs, the non-independence relation $u \nfind G v$ on singletons (from  $H$) is equivalent to the existence of some $r\in \N$ such that $u\nind Gr v$. 
  By \Cref{lem:flip-dependence-transitivity},
  it follows that the relation $\nfind G$ is reflexive, symmetric and transitive, so it is an equivalence relation on $H$ (this also follows from the results of \cite{baldwin-shelah}).

  \medskip
  \Cref{thm:forking-characterization-graphs} will be later generalized 
  to relational structures, in \Cref{thm:forking-for-structures}.
  However, we present a separate proof of \Cref{thm:forking-characterization-graphs}, as it gives a more precise description of the employed flips, which can be then used 
  to prove the second main result of this section, giving a closely related, alternative characterization of forking independence over monadically stable graphs 
in terms of a certain auxiliary graph 
(see \Cref{sec:metric-and-disc} for a more precise formulation).
This involves an auxiliary (symmetric) binary relation $\leftrightsquigarrow_S$
on the vertices of $G$, and its transitive, reflexive closure  $\leftrightsquigarrow_S^*$, defined in \Cref{sec:metric-and-disc}.

\begingroup
\renewcommand\thetheorem{\ref{thm:discrepancy}} 
\begin{restatable}{theorem}{discrepancy}
    Let $G \preceq H$ be two monadically stable graphs.
    For two vertices $u, v \in H$ we have $u \nfind G v$ if and only if $u \leftrightsquigarrow_G^* v$ holds.
\end{restatable}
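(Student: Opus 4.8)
The plan is to show that the two relations $\nfind G$ and $\leftrightsquigarrow_G^\ast$ coincide as equivalence relations on $H\setminus G$. (The elements of $G$ are algebraic over $G$, hence forking independent from everything and $\leftrightsquigarrow_G$-isolated, so they form degenerate singleton classes which I would dispose of by hand.) By the discussion following \Cref{lem:flip-dependence-transitivity}, $\nfind G$ is reflexive, symmetric and transitive, while $\leftrightsquigarrow_G^\ast$ is an equivalence relation by construction, being the reflexive--transitive closure of the symmetric relation $\leftrightsquigarrow_G$. It therefore suffices to prove the two inclusions $\leftrightsquigarrow_G \subseteq {\nfind G}$ and ${\nfind G} \subseteq {\leftrightsquigarrow_G^\ast}$: the first, combined with the fact that $\nfind G$ is an equivalence relation, upgrades to $\leftrightsquigarrow_G^\ast \subseteq {\nfind G}$, and together with the second it yields equality.

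For the soundness inclusion $\leftrightsquigarrow_G \subseteq {\nfind G}$, I would unfold the definition of $\leftrightsquigarrow_G$ from \Cref{sec:metric-and-disc} and verify that $u\leftrightsquigarrow_G v$ forces $u\nind G r v$ for some fixed radius $r$, i.e. that no $G$-definable flip can push $u$ and $v$ to distance greater than $r$. Since \Cref{thm:forking-characterization-graphs} says, in contrapositive form, that $u\nfind G v$ holds precisely when $u\nind G r v$ for some $r$, this immediately gives $u\nfind G v$. As a sanity check, in the nowhere dense case $u\leftrightsquigarrow_G v$ means that $u,v$ are adjacent and outside $G$, giving a $G$-avoiding path of length $1$ and hence $u\nfind G v$ by \Cref{thm:ivanov}. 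Transitivity of $\nfind G$ then promotes this to $\leftrightsquigarrow_G^\ast \subseteq {\nfind G}$.

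The substance of the theorem lies in the completeness inclusion ${\nfind G} \subseteq {\leftrightsquigarrow_G^\ast}$. Given $u\nfind G v$, \Cref{thm:forking-characterization-graphs} provides an $r$ with $u\nind G r v$, so in every $G$-definable flip $u$ and $v$ stay within distance $r$. The idea is to fix the distinguished flip $G^\star$ produced in the separate, explicit proof of \Cref{thm:forking-characterization-graphs}, which realizes the maximal achievable separation of $u$ from $v$, to take a geodesic $u=x_0,x_1,\ldots,x_\ell=v$ with $\ell\le r$ in $G^\star$, and to argue that each consecutive pair is already a $\leftrightsquigarrow_G$-edge, whence $u\leftrightsquigarrow_G^\ast v$. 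I would organize this as an induction on $r$: it is enough to locate a vertex $w$ with $u\leftrightsquigarrow_G w$ and with a strictly smaller radius witnessing $w\nfind G v$, and then chain by the inductive hypothesis and transitivity of $\nfind G$.

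The main obstacle is exactly the claim that consecutive vertices of the geodesic are \emph{genuine} $\leftrightsquigarrow_G$-edges rather than mere edges of one particular flip: adjacency in $G^\star$ alone is not flip-invariant, so one must show that such an edge cannot be locally severed by any further flip, which is precisely what the discrepancy relation $\leftrightsquigarrow_G$ is designed to detect. This is where the refined description of the flips employed in the proof of \Cref{thm:forking-characterization-graphs} becomes essential, and is the reason that theorem is reproved with explicit flips rather than invoked as a black box. Concretely, I expect to combine definability of types (\Cref{thm:definability-of-types}) and Harrington's lemma (\Cref{lem:harrington}, \Cref{lem:complete-anticomplete-tuples}) to show that the canonical flip cannot increase the $G^\star$-distance between $x_i$ and $x_{i+1}$, together with a Ramsey/Morley-sequence argument (\Cref{lem:morley-sequence}, \Cref{thm:bi-ramsey-tuples}) to extract the uniformity needed to certify $x_i\leftrightsquigarrow_G x_{i+1}$ directly from the discrepancy definition; Gaifman locality (\Cref{cor:fo-locality}) reduces all the distance conditions to first-order formulas so that the type-definability machinery applies.
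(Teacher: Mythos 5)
Your proposal follows essentially the same route as the paper: the forward inclusion is obtained by showing that each discrepancy edge forces flip dependence at a fixed radius (radius $3$, via \Cref{lem:radius-1-dependence} and the normality lemma) and then chaining, and the converse is obtained by taking a geodesic from $u$ to $v$ in a distinguished flip and certifying each consecutive pair as a $\leftrightsquigarrow_G$-edge; the crux you correctly isolate --- that an edge of such a flip between vertices lying at distance greater than $1$ from $G$ is a genuine discrepancy edge, and conversely --- is exactly the paper's \Cref{lem:radius-1-dependence}, which is proved from the definability of types and the structure of $S$-definable flips (no Morley-sequence or Ramsey argument is needed at that point). One small correction to your sketch: the distinguished flip must be the one that maximizes the separation of $u$ from $G$ (supplied by \Cref{lem:separation-lemma}), not of $u$ from $v$ --- it is the bound $\dist_{H'}(u,G)>r$ together with $u\nind Gr vG$ that forces $\dist_{H'}(u,v)\le r$ and, by the triangle inequality, places every internal vertex of the geodesic at distance greater than $1$ from $G$, which is precisely the hypothesis needed to invoke \Cref{lem:radius-1-dependence} on each consecutive pair.
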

\endgroup



\subsection{Flipping and forking in monadically stable graphs}
Towards a proof of \Cref{thm:forking-characterization-graphs}, we first  establish that  flip independence 
enjoys some finitary analogues of the properties enjoyed by forking independence.

One of the standard properties of forking is called \emph{normality}.
In full generality, it says that for any three sets $A, B, C$ we have
\[
    A \find B C \iff A \find B BC \iff AB \find B C.
\]
From the definition of flip independence it is not clear whether it satisfies a similar property.
For instance, for any graphs $G \preceq H$, vertex $v \in H \setminus G$, and $r \in \N$ we obviously have $v \ind Gr \emptyset$, while showing that $v \ind Gr G$ is a nontrivial objective, and is established by the following fundamental result of \cite{flipper-game}.
\begin{lemma}[{\cite[Lemma 7.1]{flipper-game}}]
    \label{lem:separation-lemma}
    Let $G \preceq H$ be monadically stable graphs and let $v \in H \setminus G$.
    Then, $v \ind Gr G$ for every $r \in \N$.
\end{lemma}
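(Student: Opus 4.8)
The plan is to induct on $r$. The base case $r=1$ separates $v$ from $G$ by a single flip built directly from the definition of the $E$-type of $v$ over $G$, and the inductive step reduces separation at radius $r+1$ to making $v$ non-adjacent to $G$ and simultaneously pushing all of $v$'s neighbours in the flipped graph to distance ${>}r$ from $G$. Throughout, the only parameters allowed are a finite set $S\subseteq G$, and the key device is definability of types for stable (distance) formulas, which converts a ``cut'' into something expressible by atomic types over a finite $S$.

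For the base case I would argue as follows. Since $H$ is monadically stable, the edge relation $E(x;y)$ is stable in $H$, so \Cref{thm:definability-of-types} applied to $v$ gives a definition $\psi_v(y)$ of the $E$-type of $v$ over $G$, which is a positive boolean combination of formulas $E(m;y)$ with $m\in G$; let $S\subseteq G$ be the finite set of parameters occurring in $\psi_v$. For every $a\in G$ we have $H\models E(v,a)\iff H\models\psi_v(a)$, and since $\psi_v$ is a boolean combination of atomic formulas with parameters in $S$, the truth of $\psi_v(a)$ depends only on the atomic type of $a$ over $S$. Thus the adjacency of $v$ to $G$ is constant on each class of the partition of $V(H)$ into atomic types over $S$. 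I would then take the $S$-definable flip $H'$ that flips the atomic-type class of $v$ over $S$ against every class on which $\psi_v$ holds (a self-flip when these coincide): for $a\in G$ this inverts $E(v,a)$ exactly when $E(v,a)$ held, so $v$ becomes non-adjacent to all of $G$, giving $\dist_{H'}(v,G)>1$, i.e.\ $v\ind{G}{1}G$.

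For the inductive step I would use that the ``distance ${\le}r$'' relation $\delta_r(x;y)$ is first-order of bounded quantifier rank (\Cref{fact:fo-locality}, \Cref{cor:fo-locality}) and hence stable in $H$; definability of types for $\delta_r$ isolates, through a finite $S\subseteq G$, the $G$-vertices reachable from $v$ within $r$ steps as a union of atomic-type classes over $S$. Working with a single common finite parameter set $S\subseteq G$ is legitimate, since enlarging $S$ only refines the partition and preserves all definitions, and the composition of an $S_1$-definable and an $S_2$-definable flip is an $(S_1\cup S_2)$-definable flip of $H$ (the partition into atomic types over $S_1\cup S_2$ is the same in $H$ and in any of its $S_1\cup S_2$-definable flips). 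The aim is to produce one $S$-definable flip $H'$ that makes $v$ non-adjacent to $G$ \emph{and} places every neighbour of $v$ in $H'$ at distance ${>}r$ from $G$; the triangle inequality for $\dist_{H'}$ (the content of \Cref{lem:flip-dependence-transitivity}) then yields $\dist_{H'}(v,G)>r+1$ and hence $v\ind{G}{r+1}G$.

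The main obstacle is exactly this inductive step: the set of neighbours of $v$ that must be separated from $G$ can be infinite, whereas a single flip over a finite $S$ acts uniformly on each atomic-type class and uses only finitely many parameters from $G$. The crux is therefore to invoke stability — through definability of types together with a Ramsey reduction (\Cref{thm:bi-ramsey-tuples}) — to collapse these infinitely many neighbours into boundedly many ``distance profiles'' over $G$, so that one $S$-definable flip separates an entire profile at once; checking that the resulting flip genuinely cuts every length-${\le}r{+}1$ path, rather than inadvertently creating new short paths, is the delicate point. As a parallel route I would also set up the contrapositive: assuming that for some fixed $r$ no bounded flip separates $v$ from $G$, I would extract from the flip-resistant $G$-vertices, via Ramsey and Harrington's lemma (\Cref{lem:harrington}), an infinite half-graph-like pattern definable with finitely many (monadic) parameters, contradicting monadic stability; this is the form in which I expect the finiteness/uniformity to be forced.
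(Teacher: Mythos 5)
The paper does not prove this lemma itself: it imports it from \cite{flipper-game} and instead proves a strengthening (\Cref{lem:flip-independence-normality}) and a higher-arity generalization (\Cref{lem:separation-higher-arity}) by essentially the scheme you outline --- a one-step separation built from definability of types, iterated over the growing ball around $v$. Your base case is correct and is exactly the singleton instance of \Cref{lem:normality-sublemma}: since $\psi_v$ is a quantifier-free positive combination of formulas $E(m;y)$ with $m\in S$, its truth value is constant on atomic-type classes over $S$, and flipping the class of $v$ against each class on which $\psi_v$ holds kills all edges from $v$ to $G$.

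The genuine gap is in the inductive step, and it is precisely the point you flag as ``the crux'' without resolving it. You need that the set to be separated at stage $i$ --- the ball around $v$ in the current flip, which is disjoint from $G$ --- realizes only \emph{finitely many} $E$-types over $G$; otherwise the one-step separation (which picks one representative per type and finitely many parameters from $G$ distinguishing them) cannot be carried out with a finite $S$. This finiteness does not follow from stability of $E$ plus definability of types; it is where \emph{monadic} stability is genuinely used, and it is a substantive lemma in its own right (\Cref{lem:types-of-balls}, i.e.\ Lemma~7.10 of \cite{flipper-game}; the paper proves the higher-arity analogue \Cref{lem:finite-numer-of-types} via Morley sequences and a coding argument, \Cref{lem:morley-sequence,lem:monadic-stability-obstruction}, in which infinitely many types over $G$ would let a monadic lift define arbitrary bipartite patterns, using paths inside the ball as the coding device --- which is why disjointness of the ball from $G$ matters). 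Your proposed Ramsey reduction to ``boundedly many distance profiles'' and your contrapositive sketch point in the right direction but are not a proof: Ramsey alone yields a homogeneous infinite subfamily, not a bound on the number of types. A second, smaller gap: your step needs one flip that simultaneously separates all (possibly infinitely many) neighbours, and composing the stage-$i$ flip with the next one can destroy the separation already achieved; the paper controls this with the explicit distance-preservation guarantees of \Cref{lem:normality-sublemma} (the clauses asserting that vertices far from $S$ keep their adjacencies), which you would need to state and prove rather than only observe that composed definable flips are definable.
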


We prove the following,  stronger version of normality of the flip independence relation, but only over models (that is, when $B$ induces an elementary substructure).
The proof is based on refining the proof of \cite[Lemma 7.1]{flipper-game}, and is deferred to Appendix \ref{app:graphs}.

\begin{restatable}{lemma}{graphsnormality}
    \label{lem:flip-independence-normality}
    Let $G \preceq H$ be monadically stable graphs, let $u, v \in H \setminus G$, and let $r \in \N$ be a positive integer.
    If $u \ind G{3r} v$ then $u \ind G{r} vG$.
\end{restatable}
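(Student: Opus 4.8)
The plan is to manufacture, out of the single witness for $u \ind G{3r} v$, one $G$-definable flip that at once expels $v$ and the whole model $G$ from the radius-$r$ ball of $u$. Two ingredients are available: the separation lemma (\Cref{lem:separation-lemma}), which isolates a single external vertex from an entire model, and the hypothesis flip, which already isolates $u$ from $v$ at radius $3r$. I would reuse the construction behind \Cref{lem:separation-lemma} --- that is, the proof of \cite[Lemma 7.1]{flipper-game} --- but apply it to an elementary substructure that contains $v$ in addition to $G$, and then push the parameters of the resulting flip back down into $G$, paying for this with the loss of radius from $3r$ to $r$.

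First I would fix a $B_0$-definable flip $H_1$ of $H$, with $B_0 \subseteq G$ finite, witnessing $\dist_{H_1}(u,v) > 3r$, and pass to $H_1$; since a flip is a quantifier-free parameter-interpretation it preserves both monadic stability and the relation $\preceq$, so $G_1 := H_1[G] \preceq H_1$ is again monadically stable and $u,v \in H_1 \setminus G_1$ with $u$ and $v$ now $3r$-far. Working in a sufficiently saturated elementary extension if necessary, I would choose an elementary substructure $G^+$ with $G \cup \{v\} \subseteq G^+ \preceq H_1$ and $u \notin G^+$; this is possible provided $u$ does not lie in the definable closure of $G \cup \{v\}$, and the degenerate case $u \in \dcl(G\cup\{v\})$ can be handled directly, since there the whole interaction of $u$ with $G$ and $v$ is already $(G\cup\{v\})$-controlled. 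Running the separation construction for the external vertex $u$ over the model $G^+$ produces a finite $S \subseteq G^+$ and an $S$-definable flip $H_2$ with $\dist_{H_2}(u, G^+) > r$; in particular $B^r_{H_2}(u)$ meets neither $\{v\}$ nor $G$, which is exactly the statement $u \ind {G^+} r vG$ \emph{except} that the parameter set $S$ may stray outside $G$.

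The remaining, and genuinely delicate, step is to confine the parameters to $G$. Here the construction of \cite[Lemma 7.1]{flipper-game} must be refined rather than quoted: one has to track which atomic-type-classes over $S$ actually participate in the radius-$r$ ball of $u$ in $H_2$, and show that the flipping decisions involving parameters of $S \setminus G$ (the parameters clustered near $v$ and the other $G^+$-witnesses) are irrelevant to that ball, so that deleting them leaves an $(S \cap G)$-definable flip still isolating the radius-$r$ ball of $u$ from $\{v\} \cup G$. The quantitative reason such parameters are irrelevant is the $3r$-buffer: in $H_1$ the vertices forcing the extra parameters sit at distance $> 3r$ from $u$, so a length-$\le r$ path through a class built from an $S \setminus G$-parameter could be concatenated, along the lines of the triangle-inequality estimate of \Cref{lem:flip-dependence-transitivity}, into a length-$\le 3r$ path from $u$ to $v$ in $H_1$ --- contradicting $\dist_{H_1}(u,v) > 3r$. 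This accounting is precisely what converts the radius $3r$ into the radius $r$ of the conclusion.

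The hard part will be making this last ``irrelevance'' argument rigorous: a $G$-definable flip acts globally through atomic types and is in no way spatially local, so it is not a priori clear that a single flip can separate $u$ both from an entire model and from the extra vertex $v$, nor that the defining parameters can be kept away from $v$'s region. The whole weight of the proof rests on controlling how the isolating flip produced by the refined \cite{flipper-game} construction interacts with the $3r$-neighbourhood of $u$, and this careful bookkeeping is what I would relegate to the appendix.
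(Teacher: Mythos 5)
There is a genuine gap, and it sits exactly where you placed the ``whole weight of the proof'': the step that converts the $G^+$-definable separating flip into a $G$-definable one is not carried out, and the heuristic you offer for it does not survive inspection. In the construction of \cite[Lemma 7.1]{flipper-game} (and in \Cref{lem:normality-sublemma} of this paper), the parameters of the separating flip are chosen at distance at most $2$ from the set being isolated --- here, a ball around $u$ --- because they are witnesses distinguishing the $E$-types of that ball over the base model, together with the parameters of the type-definitions $\psi_{u_i}$, which can be taken within distance $2$ of the $u_i$. If the base model is $G^+\supseteq G\cup\{v\}$, the offending parameters in $S\setminus G$ are therefore elements of $G^+\setminus G$ lying \emph{near $u$}, not ``clustered near $v$'': an elementary substructure $G^+$ containing $G\cup\{v\}$ necessarily contains many new elements, and nothing prevents some of them from sitting right next to $u$ in $H_1$ (note that the hypothesis only bounds $\dist_{H_1}(u,v)$, not $\dist_{H_1}(u,G)$ or $\dist_{H_1}(u,G^+)$). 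Consequently the proposed triangle-inequality argument --- that any class defined from an $S\setminus G$-parameter is reachable only via vertices at distance $>3r$ from $u$ --- is false as stated, and the $3r$-buffer does not pay for discarding those parameters. There is also a second, unaddressed issue with ``deleting'' parameters: removing a parameter coarsens the partition into atomic types, and the flip is only well-defined on the coarser partition if the original flipping decisions were constant on each coarser class, which would itself require proof. Finally, the degenerate case $u\in\mathrm{acl}(G\cup\{v\})$ (you wrote $\mathrm{dcl}$, but algebraic closure is what obstructs omitting $u$ from $G^+$) is dismissed with the phrase ``already $(G\cup\{v\})$-controlled'', which is not an argument for the conclusion $u\ind{G}{r} vG$.

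The paper avoids this obstacle by never enlarging the base model at all. It iterates a strengthened separation step (\Cref{lem:normality-sublemma}) whose parameters are always taken from $G$, because the types being defined are types of a ball $U^{(i)}\ni u$ over $G$ (finite in number by \Cref{lem:types-of-balls}). The price of each such $G$-parameter flip is quantified directly: it decreases $\dist(\cdot,U^{(i)})$ by at most $2$ and does not decrease $\dist(\cdot,G)$ once that is positive. Starting from the flip witnessing $\dist(u,v)>3r$ and running $r$ iterations --- growing the ball around $u$ by one each time and re-separating it from $G$ --- costs $2r$ of the distance to $v$, leaving $\dist(u,v)>r$ and $\dist(u,G)>r$ simultaneously, with all parameters in $G$ throughout. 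If you want to salvage your route, you would need either a reason why the separation over $G^+$ can be performed with parameters in $G$ only (which is essentially the content of the paper's sublemma, so nothing is saved), or a genuinely new parameter-elimination argument; as written, the proposal restates the difficulty rather than resolving it.
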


By \Cref{lem:flip-independence-normality} and the symmetry of flip independence we know that whenever $u \ind Gr v$ for every $r \in \N$ then both $u \ind Gr Gv$ and $uG \ind Gr v$ for every $r \in \N$.
A priori it is not clear that there is a single $G$-definable flip $H'$ in which both $\dist(u, Gv)$ and $\dist(v, Gu)$ are large.
However, this is indeed true, as stated below; see Appendix \ref{app:graphs} for a proof.

\begin{restatable}{lemma}{flipdoublesep}
    \label{lem:flip-double-separation}
    Let $G \preceq H$ be monadically stable graphs, let $u, v \in H \setminus G$, and let $r \in \N$ be a positive integer.
    If $u \ind G{3r} Gv$ then there is a $G$-definable flip $H'$ of $H$ such that $\dist_{H'}(u, Gv) > r$ and $\dist_{H'}(v, Gu) > r$.
\end{restatable}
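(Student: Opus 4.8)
The plan is to first reduce the statement to a single, symmetric separation requirement, and then to obtain that requirement by pushing the construction behind \Cref{lem:flip-independence-normality} through for \emph{both} vertices at once, using the factor-$3$ slack in the radius as a buffer. Concretely, applying the hypothesis $u\ind G{3r}Gv$ yields a finite $S_1\subseteq G$ and an $S_1$-definable flip $H_1$ of $H$ with $\dist_{H_1}(u,Gv)>3r$; in particular $\dist_{H_1}(u,v)>3r$ and $\dist_{H_1}(u,g)>3r$ for every $g\in G$. The desired conclusion is exactly the assertion that there is a single $G$-definable flip $H'$ in which $u$ and $v$ are each at distance $>r$ from $G$ and also $\dist_{H'}(u,v)>r$. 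On the $v$-side, the hypothesis also gives $u\ind G{3r}v$, hence $v\ind G{3r}u$ by symmetry of flip independence, and hence $v\ind Gr Gu$ by normality (\Cref{lem:flip-independence-normality}); so $v$ can be separated from $Gu$ at radius $r$ by some $S_2$-definable flip $H_2$ with $S_2\subseteq G$. The whole difficulty is that $H_1$ and $H_2$ are a priori different flips, while the conclusion demands a single one.

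The main step is therefore to merge these separations. Since $G$-definable flips of $H$ are closed under composition over the union of their parameter sets, it suffices to find a $G$-definable flip of $H_1$ that additionally separates $v$ from $Gu$ at radius $r$, while leaving the short-radius behaviour around $u$ intact. The cleanest target is a flip of $H_1$ that alters \emph{no} edge incident to $B^{2r}_{H_1}(u)$: then, since $\dist_{H_1}(u,Gv)>3r$, an easy induction on path length shows that every walk of length ${\le}r$ from $u$ in the new graph uses only unchanged edges and hence stays inside $B^r_{H_1}(u)$, which is disjoint from $G\cup\set v$. Thus $\dist_{H'}(u,Gv)>r$ is preserved automatically, and simultaneously $\dist_{H'}(v,Gu)>r$ holds because the flip realises the $v$-separation.

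The hard part is that flips act on atomic-type classes over a finite subset of $G$, and such classes are global: a single class may meet both $B^{2r}_{H_1}(u)$ and the radius-$r$ ball of $v$, so the flip $H_2$ separating $v$ cannot, in general, be localised away from $u$ merely by inspecting classes. This is precisely where the radius buffer and monadic stability are needed. I would resolve it by re-running the separating construction for $v$ inside $H_1$ rather than in $H$: the vertices of $B^{2r}_{H_1}(u)$ lie at distance $>r$ from $G$ in $H_1$, so they carry the ``empty'' adjacency type to the relevant parameters, and by refining the partition and using definability of types (\Cref{thm:definability-of-types}) together with the indiscernible sequences supplied by \Cref{lem:morley-sequence}, one can choose the definition of $v$'s distance-$r$ type so that the pairs of classes it flips are confined to the region near $v$ and never involve the classes active around $u$. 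Equivalently, this amounts to a simultaneous two-vertex version of \Cref{lem:flip-independence-normality} applied to the pair $\set{u,v}$, in which the $3r\to r$ loss in radius is exactly the slack consumed by decoupling the two neighbourhoods.

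Carrying out this decoupling rigorously --- proving that the class-level flip separating $v$ can be taken to avoid every edge incident to $B^{2r}_{H_1}(u)$ --- is the principal obstacle, and it mirrors the corresponding difficulty already handled in the proof of \Cref{lem:flip-independence-normality} (itself a refinement of \Cref{lem:separation-lemma}). Once the combined flip $H'$ is produced, the two distance bounds $\dist_{H'}(u,Gv)>r$ and $\dist_{H'}(v,Gu)>r$ follow directly from the two design constraints above, completing the argument.
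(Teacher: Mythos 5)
Your overall skeleton coincides with the paper's: take a flip $H_1$ witnessing $u\ind G{3r}Gv$, then run the $v$-versus-$G$ separation on top of it and control how much of $u$'s separation gets destroyed. The gap is in how you propose to do the controlling. You set as your target a further $G$-definable flip that alters \emph{no} edge incident to $B^{2r}_{H_1}(u)$, and you propose to achieve this by confining the flipped classes "to the region near $v$". This localization is not established by your sketch and is not what the available machinery provides: the $v$-separation is an iteration (as in the proof of \Cref{lem:flip-independence-normality}) in which the set being isolated grows from $\set v$ to a radius-$r$ ball around $v$, the parameter set of the $i$-th round sits within distance $2$ of that growing set, and $\dist(u,v)$ itself erodes along the way; after a few rounds one can no longer guarantee that the vertices of $B^{2r}_{H_1}(u)$ lie in the "empty" atomic-type class over the current parameters, so nothing prevents classes meeting $B^{2r}_{H_1}(u)$ from being flipped. (Also, \Cref{lem:morley-sequence} plays no role in the graph argument; it is only needed for higher arities.) Note too that if your invariant did hold you would conclude $\dist_{H'}(u,Gv)>2r$, which signals that the $2r$ of slack is not being spent where you say it is.

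The paper spends that slack differently and more weakly: it iterates \Cref{lem:normality-sublemma}, whose second through fourth properties say that a single round (i) leaves unchanged all edges at vertices far from the round's parameter set, hence (ii) decreases the distance from any fixed vertex to the set $U$ being isolated by at most $2$, and (iii) does not decrease its distance to $G$ at all. Running the $v$-separation for $r$ rounds therefore costs exactly $2r$ on $\dist(u,v)$ and nothing on $\dist(u,G)$, yielding $\dist_{H'}(u,Gv)>3r-2r=r$ together with $\dist_{H'}(v,G)>r$. So the fix is not to localize the flip away from $u$ but to track this bounded per-round erosion; replace your "no altered edge near $u$" invariant with the quantitative guarantees of \Cref{lem:normality-sublemma} and the argument closes.
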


\Cref{thm:forking-characterization-graphs} follows 
from \Cref{lem:separation-lemma,lem:flip-independence-normality,lem:flip-double-separation} and \Cref{cor:fo-locality},
see Appendix~\ref{app:graphs}. A proof of a  similar result, in a more general context will be shown in \Cref{sec:flip-fork}.

\subsection{Discrepancy graph}\label{sec:metric-and-disc}

Let $G\preceq H$ be monadically stable graphs.
 Our aim now is to get a more detailed description of the equivalence relation $\nfind G$.
We will show that it corresponds to the connected components in a certain graph which can be 
combinatorially defined in terms of $G$, as follows.

Recall from \Cref{thm:definability-of-types}
that for $G\preceq H$ and every $u\in H\setminus G$,
there is a (unique, up to equivalence in $H$) quantifier-free formula $\psi_u(y)$ with parameters from $G$ such that 
for all $v\in G$ we have 
$$H\models E(u,v)\iff H\models \psi_u(v).$$
We first define an auxilliary binary relation $E'$ on $V(H)$ 
so that 
$$E'(u,v)\iff \psi_u(v).$$
Intuitively, $E'(u,v)$ means that 
it is ``expected'' that $H\models E(u,v)$, basing 
only on the knowledge of $N(u)\cap G$ and $N(v)\cap G$.
The relation $E'$ is symmetric, as by \Cref{lem:harrington}, we have that $H \models \psi_u(v) \iff H \models \psi_v(u)$ for any two formulas $\psi_u$ and $\psi_v$ that define the atomic types of $u$ and $v$ over $G$ respectively. However, $E'$ may have self-loops, e.g. if $G\preceq H$ are both complete graphs and $v\in H\setminus G$, then $E'(v,v)$ holds.
Note that $E'$ is definable with parameters from $G$.

Next, we define a binary relation $\leftrightsquigarrow_G$ on $V(H)$ as the symmetric difference of $E$ and $E'$, that is, 
    \iflipics
    \[
        u\leftrightsquigarrow_G v  \iff H \models E'(u,v) \xor E(u, v)
        \iff H\models \psi_u(v)\xor E(u,v)
    \]
    \else
    \begin{multline*}
        u\leftrightsquigarrow_G v  \iff H \models E'(u,v) \xor E(u, v)\\
        \iff H\models \psi_u(v)\xor E(u,v)
    \end{multline*}
    \fi
    for $u,v\in V(H)$.
    Observe that the relation $\leftrightsquigarrow_G$ is symmetric,
    since both $E$ and $E'$ are.
    Intuitively, $u\leftrightsquigarrow_G v$ means that there is a discrepancy between the expected and actual relation between $u$ and $v$. 
    Note also that $\neg(u\leftrightsquigarrow_G v)$ holds whenever either $u$ or $v$ is in $G$, by the defining property of $\psi_u(v)$.
    Let $\leftrightsquigarrow_G^*$ denote the transitive, reflexive closure of $\leftrightsquigarrow_G$.

    \discrepancy*



In the special case when $G$ and $H$ are nowhere dense, 
$u\leftrightsquigarrow_G v$ holds if and only if $u$ and $v$ are adjacent in $H$ and 
are not elements of $G$.
Indeed, for every vertex $u \in H \setminus G$, its neighborhood in $G$ is finite \cite[Lemma 1.1]{ivanov}, say equal to $\set{v_1,\ldots,v_k}\subset V(G)$.
Therefore, it can be defined by a formula $\psi_u(y)$ with parameters $v_1,\ldots,v_k$ which just enumerates the set of neighbors of $u$, that is,
$\bigvee_{i=1}^k(y=v_i)$.
In particular, for any $v \in H \setminus G$ we have $H \not \models \psi_u(v)$.
Thus, the relation $\leftrightsquigarrow_G$ in the statement of \Cref{thm:discrepancy} coincides with the edge relation of the graph obtained from $H$  by removing all the edges incident to  $V(G)$.
This agrees with \Cref{thm:ivanov}.

\medskip
To prove \Cref{thm:discrepancy}, we first characterize the relation $\leftrightsquigarrow_G$, in the following lemma, proved in Appendix~\ref{app:graphs}.

\begin{restatable}{lemma}{radiusone}
    \label{lem:radius-1-dependence}
    Let $G \preceq H$ be two monadically stable graphs, let $u, v \in H \setminus G$ be two vertices, and let $\psi_u(x)$ be any formula that defines the neighborhood of $u$ on $G$,
    as given by \Cref{thm:definability-of-types}.
    Then, the following conditions are equivalent:
    \begin{enumerate}
        \item\label{it:rad1-1} $u \nind G1 Gv$;
        \item\label{it:rad1-2} there exists an $S$-flip $H'$ of $H$ for some finite $S \subseteq G$ such that $\dist_{H'}(u, G) > 1$ and $\dist_{H'}(u, v) = 1$;
        \item\label{it:rad1-3} $H \models \psi_u(v) \xor E(u, v)$, equivalently, $u\leftrightsquigarrow_G v$.
    \end{enumerate}
\end{restatable}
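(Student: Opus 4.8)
The plan is to prove the three equivalences by isolating two facts about those $S$-definable flips (in the sense of \Cref{def:flips-graphs}) that \emph{isolate $u$ from $G$ at radius $1$}, i.e. flips $H'$ with $\dist_{H'}(u,G)>1$. Write $S_0\subseteq G$ for the finite set of parameters occurring in $\psi_u$. By \Cref{thm:definability-of-types}, $\psi_u(y)$ is a positive boolean combination of formulas $E(m,y)$ with $m\in S_0$, so its truth value at a vertex $w$ depends only on $\atp(w/S_0)$, and $E(u,w)\iff\psi_u(w)$ for every $w\in G$. Throughout I assume $u\neq v$ (if $u=v$ then $\dist_{H'}(u,v)=0$ and none of the three conditions can hold).

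\textbf{Rigidity (the crux).} I would show that for every finite $S$ with $S_0\subseteq S\subseteq G$ and every $S$-definable flip $H'$ with $\dist_{H'}(u,G)>1$, one has $u\sim_{H'}v\iff u\leftrightsquigarrow_G v$, \emph{independently} of which such flip is chosen. The decisive step uses $G\preceq H$: the atomic type $\atp(v/S)$ is a type over parameters lying in $G$, it is realized in $H$ by $v$, and $v\notin S$, so by elementarity there is a vertex $w\in G$ with $\atp(w/S)=\atp(v/S)$. Since $S\supseteq S_0$ this gives $\psi_u(w)\iff\psi_u(v)$, hence $E(u,w)\iff\psi_u(v)$ because $w\in G$. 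Now $v$ and $w$ lie in the same part of the type partition, so a single flip decision $\delta$ on the pair of parts $(\atp(u/S),\atp(v/S))$ governs both $u\sim_{H'}w$ and $u\sim_{H'}v$. As $H'$ isolates $u$ from $G$ we have $u\not\sim_{H'}w$, which forces $\delta=E(u,w)=\psi_u(v)$; substituting into $u\sim_{H'}v\iff E(u,v)\xor\delta$ yields $u\sim_{H'}v\iff E(u,v)\xor\psi_u(v)$, which is exactly $u\leftrightsquigarrow_G v$.

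\textbf{Existence and conclusion.} I would exhibit one isolating flip: the $S_0$-definable flip $H'$ that flips the pair $(\atp(u/S_0),Q)$ precisely when $\psi_u$ holds on the part $Q$, leaving all other pairs unchanged; this is well defined since $\psi_u$ is constant on parts, and for $w\in G$ it gives $u\sim_{H'}w\iff E(u,w)\xor\psi_u(w)=\bot$, so $\dist_{H'}(u,G)>1$. Combining the two facts settles everything. For $(3)\Rightarrow(2)$: this flip isolates $u$ from $G$, so by rigidity $u\sim_{H'}v\iff u\leftrightsquigarrow_G v$, and $(3)$ then gives $\dist_{H'}(u,v)=1$, witnessing $(2)$. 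For $(2)\Rightarrow(3)$: given a witnessing flip over parameter set $S$, enlarge it to $S\cup S_0$ (a $\mathcal P_S$-flip is also a $\mathcal P_{S\cup S_0}$-flip, since the finer type partition refines the coarser one), and apply rigidity; $\dist_{H'}(u,v)=1$ gives $u\leftrightsquigarrow_G v$. Finally $(1)\iff(3)$: $u\ind G1 Gv$ holds iff some isolating flip also has $u\not\sim_{H'}v$; by rigidity every isolating flip satisfies $u\sim_{H'}v\iff u\leftrightsquigarrow_G v$, while the existence flip always isolates $u$. Hence if $u\leftrightsquigarrow_G v$ no isolating flip can separate $u$ from $v$, so $u\nind G1 Gv$, i.e. $(1)$; and if $\neg(u\leftrightsquigarrow_G v)$ the existence flip separates $u$ from both $G$ and $v$, giving $u\ind G1 Gv$, i.e. $\neg(1)$.

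\textbf{Main obstacle.} The delicate point is the rigidity claim, and specifically the use of $G\preceq H$ to pull $\atp(v/S)$ back to a genuine vertex $w\in G$: this is what forces the flip decision $\delta$ and makes $u\sim_{H'}v$ independent of the particular isolating flip. Secondary care is needed for the degenerate same-part case $\atp(u/S)=\atp(v/S)$, where $\delta$ is the single within-part flip value and the argument still goes through because $u,v,w$ are pairwise distinct, and for reconciling the terminology ``$S$-flip'' in the statement with the ``$S$-definable flip'' of \Cref{def:flips-graphs} used here.
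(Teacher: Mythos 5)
Your proof is correct and follows essentially the same route as the paper's: the heart of both arguments is the ``rigidity'' observation that any flip isolating $u$ from $G$ at radius $1$ must have flipped the class of $u$ with exactly those classes on which $\psi_u$ holds, established by using $G\preceq H$ to realize $\atp(v/S)$ by some $w\in G$ and thereby force the flip decision $\delta=E(u,w)=\psi_u(v)$. The only cosmetic difference is that you construct the isolating flip explicitly from $\psi_u$ (and absorb its parameters $S_0$ into $S$ by refining the partition), whereas the paper invokes the separation lemma for its existence and shows $\psi_u$ is constant on $S$-classes by a further elementarity step; both are sound.
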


We  now prove \Cref{thm:discrepancy}.
\begin{proof}[Proof of \Cref{thm:discrepancy}]
    Assume $u\leftrightsquigarrow_G^*v$. Then $u=u_0 \leftrightsquigarrow_G u_1 \leftrightsquigarrow_G  \cdots \leftrightsquigarrow_G u_{n-1} \leftrightsquigarrow_G u_n=v$ for some $u_0,u_1,\ldots,u_{n-1},u_n\in V(H)$.
    By  \Cref{lem:radius-1-dependence} we have that $u_{i-1} \nind G1 Gu_i$ for $i=1,\ldots,n$. In particular,
    by \Cref{lem:flip-independence-normality},
    $u_{i-1}\nind G3 u_i$ for $i=1,\ldots,n$.
    Finally, by  \Cref{lem:flip-dependence-transitivity} we get that $u \nind G{3n} v$, so $u \nfind G v$ by \Cref{thm:forking-characterization-graphs}.

    For the converse implication, by  \Cref{thm:forking-characterization-graphs} we have that $u \nfind G v$ implies that there is a positive integer $r$ such that $u \nind Gr vG$.
    Consider an $S$-flip $H'$ of $H$ for some finite $S \subseteq G$ such that $\dist_{H'}(u, G) > r$.
    Since $u \nind Gr vG$, we have that $\dist_{H'}(u, v) \le r$.
    It follows that there are vertices $u_1, u_2, \ldots, u_{n-1} \in H$ such that $u, u_1, \ldots, u_{n-1}, v$ is a path in $H'$.
    Moreover, by the triangle inequality, $\dist_{H'}(u_i, G) > 1$ for every $i = 1, \ldots, n - 1$.
    By \Cref{lem:radius-1-dependence} we obtain $u\leftrightsquigarrow_G u_1\leftrightsquigarrow_G u_2 \leftrightsquigarrow_G\cdots\leftrightsquigarrow_G u_{n-1} \leftrightsquigarrow_G v$.
\end{proof}

\section{Nowhere dense relational structures}\label{sec:nd-struc}

The aim of a few next sections is to generalize the results about forking independence in monadically stable graphs to monadically stable structures with relations of arbitrary arity.
In \Cref{sec:nd-struc}, as a motivating example, we consider the case of \emph{nowhere dense relational structures}.
We say that a class $\CC$ of  structures $\CC$ in a finite relational language is \emph{nowhere dense} if the class of Gaifman graphs of structures in $\CC$ is a nowhere dense graph class.

In \Cref{sec:nd-char} we characterize nowhere dense classes of relational structures, generalizing a result of Braunfeld, Dawar, Eleftheriadis, and Papadopoulos \cite{nowhere-dense-structures}.
In \Cref{sec:forking-nd} we lift Ivanov's result, \Cref{thm:ivanov}, from nowhere dense graphs to nowhere dense structures.

\subsection{Nowhere dense classes of relational structures}\label{sec:nd-char}

\sz{this section is just repeating the intro}
In \cite{nowhere-dense-structures} Braunfeld, Dawar, Eleftheriadis, and Papadopoulos proved that for every \emph{monotone} class of structures $\Cc$ in a finite relational language, $\CC$ is monadically dependent if and only if $\CC$ is nowhere dense (cf. \cite[Theorem 1.1]{nowhere-dense-structures}). 
Here, $\CC$ is monotone if for every two structures
$\str A,\str B$ such that 
$V(\str A)\subseteq V(\str B)$ and $R_{\str A}\subseteq R_{\str B}$ for all relation symbols $R$ in the language of $\CC$,
if $\str B\in\CC$ then $\str A\in\CC$.
The \emph{monotone closure} of a class $\CC$ 
is the smallest monotone class containing $\CC$.

In Appendix~\ref{app:nd-struc}, we provide a simple proof of the following result, 
characterizing all nowhere dense classes of relational structures,
generalizing \Cref{thm:nowhere-dense-g} to classes of relational structures.
This immediately implies the main result\footnote{Our result does not imply the hardness result from \cite{nowhere-dense-structures}, stating that model checking is W[1]-hard on every monotone class of relational structures which is not nowhere dense.} of~\cite{nowhere-dense-structures}.

\nwdstr*

\subsection{Forking over models in nowhere dense structures}
\label{sec:forking-nd}
The following is a straightforward generalization the characterization of forking independence over models in nowhere dense graphs implied by \Cref{thm:ivanov}, to the case of structures in a finite relational language with a nowhere dense Gaifman graph.
See Appendix~\ref{app:nd-struc} for a proof.

\fndws*

\section{Flip independence in relational structures}
\label{sec:flip-fork}
The goal of this section is to generalize the results of \Cref{sec:flip-ind-graphs},
relating flip independence and forking independence
in monadically stable graphs, to structures with higher arity relations.
Recall our notion of (definable) flips in 
relational structures:

\defflips*




\medskip

\begin{definition}
    For a relational structure $\str A$ and sets $S,X,Y\subseteq A$, we say that $X$ and $Y$ are \emph{flip independent at radius $r$ over $S$}, denoted
    $X \ind {S}r Y,$ if there is some $S$-flip $\str A'$ of $\str A$ such that there is no path of length at most $r$ in the Gaifman graph of $\str A'$ connecting a vertex in $X \setminus S$ with a vertex in $Y$ or connecting a vertex in $X$ with a vertex in $Y \setminus S$.        
\end{definition}
(We allow $X$ and $Y$ to be single elements, interpreted as singleton sets).
We  prove the following fundamental result, extending \Cref{lem:separation-lemma}.

Recall that every monadically stable class 
is atomic-stable and existentially monadically dependent.
In the end (see \Cref{thm:superloop}), it will follow that the converse implication also holds (this also follows from \cite{braunfeld2022existential}). However, for some arguments to work (specifically, \Cref{lem:closure}, which is later used in \Cref{lem:to-separability}), it turns out that it is convenient to work with the seemingly weaker condition. The following lemma is proved in \Cref{app:flip-fork}.

\begin{restatable}{lemma}{separationhigherarity}
    \label{lem:separation-higher-arity}
    Let $\str M \preceq \str N$ be two existentially monadically dependent and atomic-stable structures in a finite relational language.
    Then $a\ind{M}r  M$ for all $a \in N$ and $r\in\N$.  
\end{restatable}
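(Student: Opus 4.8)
The plan is to prove $a\ind{M}r M$ by constructing, for each fixed $r$, an explicit $M$-flip $\str N'$ witnessing separation, working by induction on $r$ after a preliminary reduction. Unpacking the definition, for $a\in M$ the statement is vacuous, so assume $a\notin M$; then $a\ind{M}r M$ means precisely that some $M$-flip $\str N'$ satisfies $\dist_{\str N'}(a,M)>r$. First I would pass to a sufficiently saturated elementary extension $\str M\preceq\str N\preceq\str N^{+}$: by \Cref{lem:closure} the hypotheses (atomic-stability and existential monadic dependence) persist in the elementary closure, and since a quantifier-free formula defining a relation is absolute between an induced substructure and its ambient structure, any $M$-flip of $\str N^{+}$ restricts to an $M$-flip of $\str N$; because deleting vertices can only increase Gaifman distances, a separation found in $\str N^{+}$ restricts to one in $\str N$. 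This lets me exploit homogeneity and realize types freely. I would also reduce to controlling one relation at a time and then take a common refinement of the finitely many atomic-type partitions, since the Gaifman graph of $\str N'$ is the union over all relations.

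The core of the construction is the base/radius-$1$ case, built from definability of types. For each relation symbol $R$ and each distinguished coordinate, the atomic formula $R$, read as a partitioned formula with that coordinate free, is stable in $\str M$ by atomic-stability, so \Cref{thm:definability-of-types} supplies a quantifier-free definition $\psi(\bar y)$ over $\str M$ computing the ``expected'' co-occurrences of $a$ with tuples from $M$. The flip $\str N'$ is then obtained by XOR-ing each relation with the formula predicting these expected co-occurrences on the atomic-type class of $a$, mirroring the radius-$1$ argument for graphs; the crucial point is that $a$'s \emph{actual} atomic interaction with $M$ coincides with the expected one by the defining property of $\psi$, so the residual relations carry no $a$--$M$ co-occurrence. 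Here \Cref{lem:harrington} and \Cref{lem:complete-anticomplete-tuples} are used to guarantee symmetry and to rule out leftover complete/anticomplete interactions, and — exactly as in \Cref{ex:tree-flip} and the ternary example — the flip may need to introduce \emph{lower-arity} relations that project out the finitely many parameters of $\psi$ actually appearing.

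For the inductive step from $r$ to $r+1$ I would use Gaifman locality (\Cref{cor:fo-locality}) to reduce distance-$(r{+}1)$ interactions of $a$ with $M$ to distance-$1$ interactions emanating from the $r$-sphere around $a$, applying the inductive hypothesis shell by shell so that, after separating the inner ball, the outer interactions are governed by a finite parameter set $S\subseteq M$. The genuine crux is establishing that a \emph{finite} $S$ always suffices. I expect to argue this by contradiction: if no bounded $M$-flip separated $a$ from $M$ at the relevant radius, then realizing $\tp(a/M)$ by a Morley sequence $a_0,a_1,a_2,\dots\in M$ from \Cref{lem:morley-sequence} and applying the Ramsey-type extraction of \Cref{thm:bi-ramsey-tuples} (or \Cref{thm:folklore}) would produce an infinite half-graph-like or set-system pattern in an atomic formula, contradicting atomic-stability, or an independence pattern in a monadic lift, contradicting existential monadic dependence.

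The main obstacle, and the place where the argument departs essentially from the graph case of \Cref{lem:separation-lemma}, is the higher-arity phenomenon: a Gaifman edge from $a$ to $m\in M$ may arise from a tuple that simultaneously bundles $a$, $m$, and further vertices outside $M$, so paths of length $\le r$ can no longer be analyzed edge by edge, and redefining an entire relation affects many co-occurrences at once. Controlling the effect of a single flip on \emph{all} length-$\le r$ paths, while keeping the flip quantifier-free and invertible over a finite $S\subseteq M$, is where the bulk of the technical work lies; taming it is precisely what forces the added lower-arity relations and the combined use of definability of types, the symmetry of \Cref{lem:harrington}, and the Morley/Ramsey machinery to bound the number of relevant behaviors.
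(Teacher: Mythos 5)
Your plan follows the paper's proof in all essentials: finiteness of the number of atomic types of the ball $B^r(a)$ over $M$ (established by contradiction via \Cref{thm:folklore}, \Cref{lem:morley-sequence}, and an existential monadic-dependence obstruction), definability of types over finitely many representative tuples to build a flip over a finite parameter set $S\subseteq M$, and an induction on $r$ that processes relations by decreasing arity while tracking the newly introduced lower-arity relations. The one misstep is attributing the inductive reduction to Gaifman locality --- \Cref{cor:fo-locality} plays no role in this lemma (passing from radius $r$ to radius $r+1$ amounts to killing all Gaifman edges between $B^r(a)$ and $M$, which is just the definition of distance); what actually carries the induction is the finiteness-of-types lemma together with the guarantee that the empty atomic-type classes are never flipped, so that $B^r(a)$ is unchanged by each successive flip --- a point you gesture at as ``the bulk of the technical work'' but do not resolve.
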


We will see later in \Cref{sec:apps} 
some consequences of \Cref{lem:separation-higher-arity}
in the finitary setting.
In \Cref{sec:forking-stable-structures}, using \Cref{lem:separation-higher-arity} (and its variant, \Cref{lem:higher-arity-normality}),
we obtain the following, combinatorial characterization of forking independence over models in monadically stable structures in a finite relational language.

\forkingforstructures*
Similarly as in \Cref{lem:flip-dependence-transitivity}, the flip dependence relation is transitive for structures of higher arity, in the following sense.

\begin{lemma}
    \label{lem:structures-flip-dependence-transitivity}
    Let $\str N$ be a structure in a finite relational language, let $S\subseteq N$, and let $u, v, w \in N$, and let $r, q \in \N$ be positive integers.
    If $u \nind {S}{r} v$ and $v \nind {S}{q} w$ then $u \nind {S}{r + q} w$.
\end{lemma}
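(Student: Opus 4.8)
The plan is to reproduce, in the Gaifman-distance setting, the short contradiction argument used for graphs in \Cref{lem:flip-dependence-transitivity}; the only new feature is that distances are now measured in the Gaifman graph of an $S$-flip of $\str N$, which is an ordinary graph and hence still obeys the triangle inequality. The first step is to unwind the definition of the non-independence relation. Since $a \ind S r b$ asserts the \emph{existence} of some $S$-flip separating $a$ from $b$ to radius $r$, its negation $a \nind S r b$ asserts that for \emph{every} $S$-flip $\str N'$ of $\str N$ there is a path of length at most $r$ between $a$ and $b$ in $\gaif(\str N')$, i.e. $\dist_{\str N'}(a,b) \le r$. Thus the hypotheses $u \nind S r v$ and $v \nind S q w$ are universally quantified over all $S$-flips, and this universal quantification is precisely what lets me evaluate both of them in one and the same flip.

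I would then argue by contradiction. Suppose $u \ind S{r+q} w$. In the main case this is witnessed by a concrete $S$-flip $\str N'$ with $\dist_{\str N'}(u,w) > r+q$. Instantiating the two hypotheses at this particular $\str N'$ gives $\dist_{\str N'}(u,v) \le r$ and $\dist_{\str N'}(v,w) \le q$, so the triangle inequality in $\gaif(\str N')$ yields $\dist_{\str N'}(u,w) \le \dist_{\str N'}(u,v) + \dist_{\str N'}(v,w) \le r+q$, contradicting the choice of $\str N'$. Hence no separating flip exists, and so $u \nind S{r+q} w$, as required.

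The one point needing care — and the only genuine obstacle, minor as it is — is the degenerate clause in the definition of flip independence: $u \ind S{r+q} w$ could in principle hold not through a separating flip but because $u$ and $w$ both lie in $S$. I would dispose of this by observing that if $u \in S$ then $u \nind S r v$ already forces $v \notin S$, and symmetrically $v \nind S q w$ forces $v \notin S$ when $w \in S$; in the setting where the lemma is applied $S$ induces an elementary substructure, so \Cref{lem:separation-higher-arity} supplies an $S$-flip separating $v$ from every element of $S$, in particular from $u$, which contradicts $u \nind S r v$. Consequently the degenerate case cannot coexist with the hypotheses, and the triangle-inequality argument above is exhaustive.
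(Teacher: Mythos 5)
Your argument is correct and is exactly the paper's: the proof of \Cref{lem:structures-flip-dependence-transitivity} is declared to be the same as that of \Cref{lem:flip-dependence-transitivity}, namely the triangle-inequality contradiction carried out in the Gaifman graph of a single $S$-flip witnessing $u \ind {S}{r+q} w$, against which both hypotheses are instantiated. The one place you go beyond the paper, the degenerate clause, is handled by appealing to \Cref{lem:separation-higher-arity}, which requires monadic stability and $S$ inducing an elementary substructure — neither of which is assumed in this lemma; a hypothesis-free fix is to note that any single element $s \in S$ can be isolated in the Gaifman graph by an $S$-flip (as in the paper's ternary-tree example), so $u \nind {S}{r} v$ already forces $u, v \notin S$ and the degenerate case cannot coexist with the hypotheses for an arbitrary structure $\str N$ and arbitrary $S \subseteq N$.
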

\begin{proof}
    The same as for \Cref{lem:flip-dependence-transitivity}.
\end{proof}

\Cref{lem:structures-flip-dependence-transitivity} can be seen as the triangle inequality, which shows that
\[
    \fdist_{S}(u, v) \coloneqq \min \setof{r}{u \nind {S}r v}
\]
is a metric on $N$ (with values in $\N\cup\set{\infty}$).
 \Cref{thm:forking-for-structures} establishes that
 if $\str M\preceq \str N$ are monadically stable structures, 
 the non-independence relation $u \nfind {M} v$ for $u,v\in N$ is equivalent to $\fdist_{M}(u, v) < +\infty$. 
In particular, the relation $\nfind {M}$ is an equivalence relation.
This reproves a result of \cite[Section 4]{baldwin-shelah}.

\subsection{Proof of \Cref{thm:forking-for-structures}}
\label{sec:forking-stable-structures}

\Cref{lem:separation-higher-arity} can be strengthened,
in the same way as we obtained \Cref{lem:flip-double-separation},
to prove the following.

\begin{restatable}{lemma}{highnormality}
    \label{lem:higher-arity-normality}
    Let $\str M \preceq \str N$ be two existentially monadically dependent and atomic-stable structures in a finite relational language.
    Let $a, b \in N \setminus M$ be two elements and let $r$ be a nonnegative integer $r$.
    There exists a constant $r'$ that depends only on $\str N$ and $r$ such that whenever $a \ind {M}{r'} b$ then there exists a definable flip $\str N'$ of $\str N$ such that $\dist_{\str N'}(a, bM)>r$ and $\dist_{\str N'}(b, aM) > r$.
\end{restatable}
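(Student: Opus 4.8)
The plan is to follow the proof of the graph statement \Cref{lem:flip-double-separation}, which combined the separation lemma \Cref{lem:separation-lemma} with the normality lemma \Cref{lem:flip-independence-normality}; in the present setting the role of separation is played by \Cref{lem:separation-higher-arity}. Two preliminary observations would be recorded first. Since distance in a Gaifman graph is symmetric, the relation $a\ind M{s} b$ is symmetric in $a,b$. Moreover, composing $M$-flips yields an $M$-flip (immediate from \Cref{def:definable-flip-intro}, as substituting quantifier-free $S$-formulas into quantifier-free $S$-formulas stays quantifier-free), so the $M$-flips of any $M$-flip $\str{N}_1'$ of $\str N$ are exactly the $M$-flips of $\str N$. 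Consequently the predicate $x\ind M{s} y$ is insensitive to which $M$-flip of $\str N$ we regard as ambient, and \Cref{lem:separation-higher-arity} may be applied inside any $M$-flip of $\str N$.

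The argument would then proceed in two steps. \textbf{Normality.} Starting from $a\ind M{r'} b$ with $r'$ chosen large relative to $r$, I would first upgrade this to $a\ind M{r_1} bM$, that is, absorb $M$ into the right-hand side at the cost of shrinking the radius --- the exact analogue of \Cref{lem:flip-independence-normality}. As in the graph case, this uses that $a$ is already separated from $M$ by \Cref{lem:separation-higher-arity}, together with Gaifman locality in the form of \Cref{cor:fo-locality}: once $a$ lies at distance greater than $7^q$ from $M$ in a suitable flip, the membership of tuples incident to $a$ in the flipped relations is controlled by a formula of bounded quantifier rank $q$, and $q$ is bounded in terms of $\str N$ via the bound on the size of type-definitions in \Cref{thm:definability-of-types}. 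This is exactly where the dependence of $r'$ on $\str N$, and not only on $r$, enters. \textbf{Combination.} Fixing an $M$-flip $\str{N}_1'$ witnessing $a\ind M{r_1} bM$ --- so that $a$ is at distance $>r_1$ from $M\cup\{b\}$ in $\str{N}_1'$ --- I would apply \Cref{lem:separation-higher-arity} to $b$ inside $\str{N}_1'$ to obtain a further $M$-flip $\str{N}_2'$ of $\str{N}_1'$ with $\dist_{\str{N}_2'}(b,M)>r$, and then merge $\str{N}_1'$ and $\str{N}_2'$ into a single $M$-flip $\str{N}'$ that retains $\str{N}_1'$'s separation of $a$ while importing $\str{N}_2'$'s separation of $b$. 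Because $a$ and $b$ are more than $r_1\gg r$ apart in $\str{N}_1'$, the two regions being modified are disjoint; after passing to the common refinement of the two type-partitions over a finite $S\subseteq M$, one can choose the flipped pairs so that both $\dist_{\str{N}'}(a,bM)>r$ and $\dist_{\str{N}'}(b,M)>r$ hold, yielding the claimed $\dist_{\str{N}'}(a,bM)>r$ and $\dist_{\str{N}'}(b,aM)>r$.

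The main obstacle will be this combination step. In graphs a flip is again a graph and its Gaifman graph is literally the flipped edge relation, so merging two flips and bounding the resulting distances is transparent. For relational structures a flip introduces auxiliary relations (cf.\ the examples following \Cref{def:definable-flip-intro}), and the adjacencies of $\gaif(\str{N}')$ are governed by quantifier-free formulas over $S$ whose interaction is more delicate; one must verify that the merge creates no short path through $M$, nor between $a$ and $b$. Gaifman locality confines these interactions to bounded-radius balls, so that the modification near $a$ is genuinely independent of the one near $b$, but the bookkeeping --- tracking which flipped relations can contribute Gaifman edges incident simultaneously to $a$, to $b$, and to $M$ --- is substantially heavier than in the graph case, and is the part where the proof of \Cref{lem:flip-double-separation} requires the most care to adapt.
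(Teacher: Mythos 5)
Your high-level plan matches the paper's: normalize ($a\ind M{r'}b$ to separation of $a$ from $bM$ at a smaller radius), then run the separation of $b$ from $M$ inside the resulting flip, with $r'$ chosen large in terms of $r$ and $\str N$. But the crux --- why the second round of flips does not destroy the first separation --- is exactly the point you leave open, and the mechanism you gesture at is not the one that works. There is nothing to ``merge'': since $\str N_2'$ is an $M$-flip of $\str N_1'$, it is already an $M$-flip of $\str N$, so no common refinement of type-partitions is needed. What must be verified is purely quantitative: each application of \Cref{lem:one-separation} uses a parameter set $S$ whose elements lie at Gaifman-distance at most $2$ from the set $U$ being separated (the elements of $S$ are chosen as witnesses/parameters for the types of representatives $\tup b_i$ drawn from $U$, exactly as in \Cref{lem:normality-sublemma} for graphs), and the class of elements far from $S$ is not flipped with anything; hence a single such flip decreases the distance from any element to the complement of a neighborhood of $U$ by at most $2$. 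Combined with the fact that the number of flips needed to achieve $a\ind M r M$ in \Cref{lem:separation-higher-arity} is bounded in terms of $r$ and $\str N$ alone, this gives the bounded total perturbation that lets you pick $r'$. Your appeal to ``Gaifman locality confining interactions to bounded-radius balls'' and to the disjointness of the ``regions being modified'' is not a substitute: a flip inverts the relation between entire atomic-type classes, which are in general spread across the whole structure, so disjointness of neighborhoods of $a$ and $b$ does not by itself prevent the $b$-round from creating tuples incident to $a$. The distance-$\le 2$ property of the parameter sets is what rules this out, and your proposal neither states nor proves it.

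A secondary misattribution: you route the normality step through \Cref{cor:fo-locality} and the quantifier-rank bounds of \Cref{thm:definability-of-types}. In the paper, locality enters only in the proofs of the forking characterizations (\Cref{thm:forking-characterization-graphs}, \Cref{thm:forking-for-structures}); the normality lemmas (\Cref{lem:flip-independence-normality} in graphs, and the present statement) are proved by the same bounded-perturbation bookkeeping described above, iterated via \Cref{lem:one-separation} and the finiteness of type-sets from \Cref{lem:finite-numer-of-types}. Your concern about the auxiliary relations $R_A$ introduced by syntactic flips adding new Gaifman edges is legitimate, but it is already discharged in the proof of \Cref{lem:separation-higher-arity} (the new relations are of strictly smaller arity and are processed in the same iteration), so it does not need separate treatment here.
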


Flips behaves nicely with respect to being an elementary substructure, as expressed in the following straightforward lemma. See Appendix \ref{app:flip-fork} for a proof.

\begin{restatable}{lemma}{flipsubcommute}\label{lem:flipsubcommute}
    Let $\str M \preceq \str N$ be two structures and let $\str N'$ be an $S$-flip of $\str N$ for a set $S \subseteq M$. Then $\str N'[M] \preceq \str N'$, i.e. the set $M$ induces an elementary substructure of $\str N'$.
\end{restatable}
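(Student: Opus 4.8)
The plan is to exploit the fact that an $S$-flip leaves the domain unchanged, so that passing from $\str N$ to $\str N'$ is a quantifier-free interpretation with parameters in $S\subseteq M$ whose underlying map of domains is the identity. Since such an interpretation requires no relativization of quantifiers, it commutes with passing to the induced substructure on $M$, and the claim will reduce to the elementarity $\str M\preceq\str N$ applied to translated formulas. Note that for this direction only the first half of \Cref{def:flip} is needed: every relation of $\str N'$ is quantifier-free definable in $\str N$ with parameters from $S$.

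Concretely, I would first fix, for each relation symbol $R$ of $\str N'$ of arity $k$, a quantifier-free formula $\theta_R(x_1,\dots,x_k)$ with parameters from $S$ such that $R_{\str N'}=\setof{\bar a\in N^k}{\str N\models\theta_R(\bar a)}$, as provided by \Cref{def:flip}. Extend this to a translation $\psi\mapsto\psi^*$ from formulas over the language of $\str N'$ to formulas over the language of $\str N$ with parameters from $S$, by replacing each atomic subformula $R(\bar t)$ with $\theta_R(\bar t)$ and leaving boolean connectives and quantifiers untouched. The key claim, proved by a routine induction on $\psi$, is that for every $\psi(\bar x)$ over the language of $\str N'$,
\begin{align*}
\str N'\models\psi(\bar a) &\iff \str N\models\psi^*(\bar a) \qquad\text{for all }\bar a\in N^{\bar x},\text{ and}\\
\str N'[M]\models\psi(\bar a) &\iff \str M\models\psi^*(\bar a) \qquad\text{for all }\bar a\in M^{\bar x}.
\end{align*}
The atomic case of the first line is the definition of $\theta_R$; for the second line I would use that $R_{\str N'[M]}$ is the restriction of $R_{\str N'}$ to $M$, that the parameters of $\theta_R$ lie in $S\subseteq M$, and that $\str M=\str N[M]$ is an induced substructure of $\str N$, so that $\theta_R$ still defines $R_{\str N'[M]}$ inside $\str M$. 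The boolean cases are immediate, and the quantifier case goes through precisely because the flip and the restriction preserve the domain: the quantifier in $\psi$ ranges over the same set ($N$, resp.\ $M$) as the one in $\psi^*$, so no relativization to a domain formula is required.

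Finally, I would chain the two equivalences through elementarity. For any $\psi(\bar x)$ over the language of $\str N'$ and any $\bar a\in M^{\bar x}$,
\[
\str N'[M]\models\psi(\bar a)\iff\str M\models\psi^*(\bar a)\iff\str N\models\psi^*(\bar a)\iff\str N'\models\psi(\bar a),
\]
where the middle equivalence is $\str M\preceq\str N$ applied to $\psi^*$, whose parameters lie in $S\subseteq M$ and hence may be absorbed into the tuple $\bar a$. This is exactly the statement $\str N'[M]\preceq\str N'$.

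I expect no real obstacle: the argument is bookkeeping in the inductive translation, and the single subtle point is the quantifier case, which hinges entirely on the fact that an $S$-flip keeps the domain fixed (making it an interpretation along the identity rather than a general interpretation with a definable domain). Everything else is the standard observation that quantifier-free formulas with parameters inside a substructure are preserved under taking induced substructures, combined with elementarity of $\str M$ in $\str N$.
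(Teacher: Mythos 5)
Your proof is correct and follows essentially the same route as the paper's: translate each formula over the language of $\str N'$ into one over the language of $\str N$ by substituting the quantifier-free definitions of the relations, observe that this translation is compatible both with evaluation in $\str N$ and with evaluation in the induced substructures on $M$ (since the domains are unchanged and the parameters lie in $S\subseteq M$), and then chain through the elementarity $\str M\preceq\str N$. The paper states the two translation equivalences without spelling out the induction; your version merely makes that routine induction explicit.
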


We now prove \Cref{thm:forking-for-structures}, repeated below.

\forkingforstructures*
\begin{proof}[Proof of \Cref{thm:forking-for-structures}]
    We may assume that $\str M$ is over a finite relational signature, 
    since both conditions in the claimed equivalence reduce to considering every reduct of $\str M$ over a finite relational language.

    Assume that $a \nind {M}r b$ for some  $r\in\N$.
    By \Cref{lem:separation-higher-arity}, we have that $a \ind {M} {2r} M$.
    Therefore, there is a finite set $S\subset G$ and an $S$-flip $\str N'$ of $\str N$ satisfying $\dist_{\str N'}(a,M)>2r$.
    Since $a \nind { M}r b$,
    we have that $\dist_{\str N'}(a,b)\le r$,
    which implies $\dist_{\str N'}(b,M)>r$.
    Since $\str N'$ is an $S$-flip of $\str N$,
    in particular,
    $\str N'$ is definable in $\str N'$, using parameters from $S$.
    Therefore, 
    as the relation $\dist_{{\str N'}}(x,y)\le r$ is definable in $\str N$,
    it follows that there is some first-order formula $\phi(x,y)$ with parameters from $S$ such that 
    $\str N\models \phi(c,d)$
    if and only if $\dist_{\str N'}(c,d)\le r$,
    for all $c,d\in N$.
    Observe that $\str N\models\phi(a,b)$.
    On the other hand, there is no $a'\in M$ such that 
    $\str N\models \phi(a',b)$, as we have $\dist_{\str N'}(b,M)>r$.
    Therefore, $a\nfind M b$.
    \medskip
    
    Now assume that $a \ind { M}r b$ for all $r\in\N$. 
    We prove that $a\find M b$.
    To this end, pick some formula $\phi(x; y)$
    with parameters from $M$ such that 
    $\str N\models \phi(a,b)$. 
    Let $q$ denote the quantifier rank of $\phi$.
    By \Cref{lem:higher-arity-normality}, we have there is  an $M$-flip $\str N'$ of $\str N$ such that
     $$\dist_{\str N'}(a,bM)>7^q,\qquad \dist_{\str N'}(b,M)>7^q.$$
    Since $a\ind{ M}{7^q} bM$, there is
    As $\str N$ is interpretable in $\str N'$ via a quantifier-free formula with parameters from $\str M$,
    we can rewrite $\phi(x, y)$ into a formula $\phi'(x, y)$ of quantifier rank $q$ (with parameters from $M$) such that 
    \[
        \str N \models \phi(c, d) \iff \str N' \models \phi'(c, d)\qquad\text{for all $c,d\in N$.}
    \]
    Let $\tup w\subset M$ denote the parameters occurring in $\phi'$. Note that $\dist_{\str N'}(a,b\tup w)>7^q$, since $\dist_{\str N'}(a,bM)>7^q$.
    By \Cref{cor:fo-locality} applied to the formula $\phi(x; y\tup w)$, there is a formula $\alpha(x)$ such that: 
    \begin{enumerate}
        \item $\str N' \models \alpha(a)$, and 
        \item for all $a'\in N$, if $\str N' \models \alpha(a')$ and $\dist_{\str N'}(a', b\tup w) > q$, then $\str N' \models \phi'(a', b)$.
    \end{enumerate}
    Since 
    $\tp_{\str N'}(a/M)$ is finitely satisfiable in $M$, there is some $a' \in M$ such that $\dist_{\str N'}(a', \tup w) > 7^q$ and $\str N'\models\alpha(a')$.
    Since $\dist_{\str N'}(b, M) > 7^q$, we have $\dist_{\str N'}(a', b)>7^q$
     and so $\dist_{\str N'}(a',b\tup w)>7^q$.
    Therefore, $\str N' \models \phi'(a', b)$, which yields $\str N \models \phi(a', b)$. This proves that $\tp(a / Mb)$ is finitely satisfiable in $M$.
\end{proof}

    We defer the proof of \Cref{prop:forking-for-structures-converse}, which is a converse of \Cref{thm:forking-for-structures} assuming that $\str M$ is stable, to Appendix \ref{app:converse}.

\section{Separation game, flip flatness, and closing the loop of implications}\label{sec:apps}

In this section, we prove the following (see below for definitions). Recall that $\overline \CC$ denotes the elementary closure of the class $\CC$.
\begin{theorem}\label{thm:superloop}
    Let $\CC$ be a class of structures in a finite relational language. The following conditions are equivalent:
    \begin{enumerate}
        \item\label{cond:ms} $\CC$ is monadically stable,
        \item\label{cond:a-md} $\CC$ is atomic-stable and existentially monadically dependent,
        \item\label{cond:sep} every structure $\str M\in\overline\CC$ is $r$-separable for all $r\in\N$,
        \item\label{cond:game} for every $r\in\N$ there exists $k\in\N$ such that Separator wins the Separation Game with radius $r$ in $k$ rounds on every $\str M\in\CC$,
        \item\label{cond:flat} $\CC$ is flip-flat.
    \end{enumerate}
\end{theorem}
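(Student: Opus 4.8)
The plan is to establish the five conditions as a single cycle of implications $(\ref{cond:ms})\Rightarrow(\ref{cond:a-md})\Rightarrow(\ref{cond:sep})\Rightarrow(\ref{cond:game})\Rightarrow(\ref{cond:flat})\Rightarrow(\ref{cond:ms})$, reusing the technical work of \Cref{sec:flip-fork} rather than proving each equivalence directly. The first arrow is nothing but \Cref{lem:to-atomic-stable-and-emNIP}. For $(\ref{cond:a-md})\Rightarrow(\ref{cond:sep})$ I would first apply \Cref{lem:closure} to see that $\overline\CC$ is again atomic-stable and existentially monadically dependent, and observe that $\overline\CC$ is closed under elementary extensions. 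Then, for any $\str M\in\overline\CC$ and any elementary extension $\str M\preceq\str N$ (which again lies in $\overline\CC$), \Cref{lem:separation-higher-arity} gives $a\ind{M}{r} M$ for every $a\in N$ and every $r\in\N$; this is exactly the separation property asserting that $\str M$ is $r$-separable, for every $r$.

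The two implications $(\ref{cond:sep})\Rightarrow(\ref{cond:game})$ and $(\ref{cond:game})\Rightarrow(\ref{cond:flat})$ pass between the infinitary world of $\overline\CC$ and the finitary statements about $\CC$, and both rely on compactness. For $(\ref{cond:sep})\Rightarrow(\ref{cond:game})$ I would argue by contradiction: if, for some fixed $r$, the number of rounds Separator needs were unbounded over $\CC$, equivalently $\textup{srk}_r(\str A)$ were unbounded for $\str A\in\CC$, then a sufficiently saturated elementary limit $\str M\in\overline\CC$ of such structures would admit an infinite descending play of the separation game, that is, a witness that $\str M$ is not $r$-separable, contradicting $(\ref{cond:sep})$. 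The point to check is that the recursion defining $\textup{srk}_r$ is local: the balls $B^r_{Ss}(v)$ are cut out by flip independence, which is first-order over the flip, so an infinite play survives the limit.

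For $(\ref{cond:game})\Rightarrow(\ref{cond:flat})$, a uniform bound $\textup{srk}_r(\str A)<k$ supplies, for every $\str A\in\CC$ and every $X\subseteq V(\str A)$, a separation tree of depth below $k$ whose branch of flips assembles into a single $S$-flip with $|S|\le k$. Selecting elements of $X$ that fall into pairwise-separated leaves and feeding the resulting colouring of pairs into the Ramsey-type \Cref{thm:bi-ramsey-tuples} produces an unbounded function $U$ and a subset $Y\subseteq X$ with $|Y|\ge U(|X|)$ that is $r$-scattered in the Gaifman graph of that flip, which is precisely \Cref{def:flip-flat}. The remaining arrow $(\ref{cond:flat})\Rightarrow(\ref{cond:ms})$ closes the loop and, I expect, is the main obstacle.

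I would prove $(\ref{cond:flat})\Rightarrow(\ref{cond:ms})$ contrapositively. If $\CC$ is not monadically stable, then some partitioned formula $\phi(\tup x;\tup y)$ has the order property in a monadic lift: for each $n$ there are $\str A_n\in\CC$, a unary expansion, and elements $v_1,\dots,v_n$ with $\phi(v_i,v_j)$ holding iff $i\le j$. Taking $r$ just above the Gaifman locality radius $7^q$ of $\phi$ (with $q$ its quantifier rank, via \Cref{cor:fo-locality}), I would show that no flip of $\str A_n$ using at most $k$ parameters can $r$-scatter a large subset of $\{v_1,\dots,v_n\}$: such a flip is quantifier-free definable from its $k$ parameters, so by locality and a pigeonhole count the ordered adjacency pattern forced by $\phi$ keeps most pairs $v_i,v_j$ within Gaifman distance $r$, defeating flip-flatness. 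The delicate part is transporting the order property, which lives in the monadic lift, into short Gaifman paths in flips of the base structure $\str A_n$; the higher-arity relations make this bookkeeping heavier than in the graph case behind \Cref{thm:flip-flatness,thm:fg-ranks}. Closing the cycle yields in particular the converse of \Cref{lem:to-atomic-stable-and-emNIP} and immediately recovers \Cref{thm:flip-flatness-rel}.
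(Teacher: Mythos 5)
Your proposal is correct and follows essentially the same route as the paper: the identical cycle $(\ref{cond:ms})\Rightarrow(\ref{cond:a-md})\Rightarrow(\ref{cond:sep})\Rightarrow(\ref{cond:game})\Rightarrow(\ref{cond:flat})\Rightarrow(\ref{cond:ms})$, with the first arrow given by \Cref{lem:to-atomic-stable-and-emNIP} and the second by \Cref{lem:closure} together with \Cref{lem:separation-higher-arity}, exactly as in \Cref{lem:to-separability}. The only difference is presentational: for the last three arrows the paper defers to \emph{mutatis mutandis} adaptations of prior results (\Cref{lem:to-sep-game}, \Cref{lem:flip-flatness}, \Cref{lem:closing-the-loop}), whereas you sketch the underlying compactness, separation-tree/Ramsey, and locality-versus-order arguments, which match the content of those cited lemmas.
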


We note that the equivalence 
\eqref{cond:ms}$\leftrightarrow$\eqref{cond:a-md} 
(for a slightly different, although equivalent definition 
of monadic stability and monadic dependence, see Appendix \ref{app:converse}) was established in \cite[Theorem 3.21]{braunfeld2022existential}.


\begin{definition}
    We say that a structure $\str M$ in a finite relational language is $r$-separable if for every elementary extension $\str M \preceq \str N$ and an element $a \in N \setminus M$ we have $a \ind{M}{r} M$.
\end{definition}

\begin{lemma}\label{lem:to-separability}
    Let $\CC$ be 
    a monadically dependent and atomic-stable class of structures in a relational language.
    Then every structure $\str M\in \overline \CC$ is $r$-separable, for every $r\in\N$.
\end{lemma}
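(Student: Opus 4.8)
The plan is to obtain $r$-separability as a direct corollary of the single-structure separation lemma, \Cref{lem:separation-higher-arity}. The only real work is to verify that its hypotheses -- atomic-stability and existential monadic dependence -- hold simultaneously for $\str M$ and for an \emph{arbitrary} elementary extension $\str N$, which I would arrange by passing to the elementary closure $\overline\CC$.

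First I would observe that monadic dependence implies existential monadic dependence, since existential partitioned formulas form a subclass of all partitioned formulas; hence $\CC$ is atomic-stable and existentially monadically dependent. By \Cref{lem:closure}, the elementary closure $\overline\CC$ inherits both properties. Now fix $\str M\in\overline\CC$, an elementary extension $\str M\preceq\str N$, an element $a\in N\setminus M$, and a radius $r\in\N$; the goal is $a\ind Mr M$. Since $\overline\CC$ is by definition the class of all models of the set of sentences holding throughout $\CC$, it is axiomatised by a theory and is therefore closed under elementary equivalence. As $\str M\preceq\str N$ gives $\str N\equiv\str M$, this yields $\str N\in\overline\CC$, so that \emph{both} $\str M$ and $\str N$ are atomic-stable and existentially monadically dependent. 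Applying \Cref{lem:separation-higher-arity} to the pair $\str M\preceq\str N$ then gives $a\ind Mr M$ for every $a\in N$ (in particular for $a\in N\setminus M$) and every $r\in\N$, which is exactly the definition of $r$-separability.

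I expect the genuinely delicate point to be securing existential monadic dependence for the extension $\str N$ rather than only for $\str M$. Atomic-stability is expressed by first-order sentences (a ladder bound for each atomic formula) and so transfers along $\str N\equiv\str M$, but existential monadic dependence quantifies over monadic expansions and is not a first-order property of $\str N$ in its own language; it must be imported through the membership $\str N\in\overline\CC$ together with \Cref{lem:closure}, which is precisely why the argument is routed through the elementary closure rather than read off from $\str N\equiv\str M$ alone. A second, purely technical point is that \Cref{lem:separation-higher-arity} is stated for a finite relational signature; when the language of $\CC$ is infinite one reduces to finite reducts, which inherit atomic-stability and existential monadic dependence, exactly as in the proof of \Cref{thm:forking-for-structures}, leaving the remainder of the argument unchanged.
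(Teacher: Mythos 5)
Your proof is correct and follows essentially the same route as the paper's: pass to $\overline\CC$ via \Cref{lem:closure} and invoke \Cref{lem:separation-higher-arity}. You are in fact more careful than the paper's two-line proof about the point that the elementary extension $\str N$ must also satisfy the hypotheses (which you secure via $\str N\in\overline\CC$), and about reducing an infinite language to finite reducts; both refinements are sound and welcome.
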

\begin{proof}
    By \Cref{lem:closure}, $\overline\CC$ 
 monadically dependent and atomic-stable;
 in particular, so is every structure $\str M\in \overline \CC$.
 The conclusion follows by 
 \Cref{lem:separation-higher-arity}.
\end{proof}

Similarly as in \cite{flipper-game} we can define separation game for relational structures.

Fix a radius $r \in \N$. The Separation Game of radius r is played by two players, Separator and Connector, on a relational structure $\str M$, as follows. Let $A_0 = M$ and $S_0 = \emptyset$. For $k = 1, 2, \ldots$, the $k$th
round proceeds as follows.
\begin{itemize}
    \item[--] If $|A_{k - 1}| = 1$, then Separator wins.
    \item[--] Otherwise, Connector picks $c_k \in A_{k-1}$ and we set
    \[
        A_k \coloneqq A_{k-1} - \setof{w}{w \ind{S_{k - 1}}{r} c_k}
    \]
    (where separation is evaluated in the structure $\str M$).
    \item[--] Then Separator picks $s_k \in M$ and we set $S_k \coloneqq S_{k-1} \cup s_k$, and proceed to the next round.
\end{itemize}

We say that \emph{Separator wins the Separation Game} with radius $r$ in $k$ rounds 
on a structure $\str M$, 
to mean that Separator has such a winning strategy in the game.
This is equivalent to the condition $\textup{srk}_r(\str M)<k$ (see \Cref{sec:intro-app}).

In \cite[Theorem 8.1]{flipper-game-arxiv} it is proved that in the case of graphs, separability yields a winning strategy for Separator in the Separation game.
The proof presented there, although states for graphs, is \textit{mutatis mutandis} the same for classes of structures in a finite relational language.
Therefore we get the following theorem.

\begin{lemma}
    \label{lem:to-sep-game}
    Fix $r \in \N$, and let $\CC$ be a class of structures in a finite relational language such that every $\str M \in \overline{\CC}$ is $r$-separable.
    Then there exists $k \in \N$ such that Separator wins the Separation Game with radius $r$ in $k$ rounds on every ${\str M \in \CC}$.
\end{lemma}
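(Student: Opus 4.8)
The plan is to reduce the uniform round bound over $\CC$ to the $r$-separability of a single structure in $\overline\CC$ by a first-order compactness argument, and then to contradict separability using an infinite play of the Separation Game. The starting point is to record that \emph{bounded-radius flip independence is uniformly first-order} when the parameter set is bounded. Since the language is finite and relational, for $|S|\le k$ there are only finitely many $S$-flips up to equivalence (each is given by quantifier-free definitions of the relations with parameters in $S$, of which there are boundedly many), and for each such flip the property ``$\dist(\cdot,\cdot)\le r$ in its Gaifman graph'' is first-order with parameters from $S$. Taking the finite disjunction over flips, ``$w\ind{S}{r}c$'' is uniformly first-order for $|S|\le k$. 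Unwinding the definition of $\textup{srk}_r$, the statement ``Separator wins in at most $k$ rounds'' is then a bounded alternation $\exists s_1\forall c_1\cdots$ of element quantifiers over a first-order winning condition, hence equivalent to a single first-order sentence $\theta_{r,k}$, uniformly over all structures in the language.

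Next I would run the compactness reduction. If the conclusion failed, then for each $k$ some $\str M_k\in\CC$ satisfies $\neg\theta_{r,k}$, and since winning in fewer rounds is only harder, $\str M_k\models\neg\theta_{r,j}$ for all $j\le k$. Hence every finite subset of $\Th(\CC)\cup\{\neg\theta_{r,k}:k\in\N\}$ is satisfiable (in a suitable $\str M_K$), so by compactness this theory has a model $\str N$; as $\str N\models\Th(\CC)$ we get $\str N\in\overline\CC$, and $\str N\models\neg\theta_{r,k}$ for every $k$, i.e.\ Separator never wins within a bounded number of rounds on $\str N$. Applying Löwenheim--Skolem (first-orderness of $\theta_{r,k}$ and membership in $\overline\CC$ are preserved under $\preceq$), I may take $\str N$ countable; by hypothesis $\str N$ is $r$-separable. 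The goal becomes to derive a contradiction from ``$\str N\in\overline\CC$ is $r$-separable, yet Separator cannot win on it in boundedly many rounds''.

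Finally I would play the game on $\str N$ with Separator using the strategy that enumerates $N=\{s_1,s_2,\dots\}$, so that $S_\infty=N$. Since $\str N\models\neg\theta_{r,k}$ for all $k$, Connector can survive arbitrarily long against this strategy; combining these into a single infinite surviving play $c_1,c_2,\dots$ is where determinacy of this open game, together with saturation, is invoked. Along such a play every $A_k$ stays infinite: once $A_{k-1}\subseteq S_{k-1}$ the set collapses, because distinct elements of $S_{k-1}$, and each element together with itself in $S_{k-1}$, are flip-independent by definition, so Separator wins; since a decreasing chain of finite sets stabilizes and would eventually be swallowed by the growing $S_k$, no $A_k$ can be finite. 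Consequently the type $p(x)=\{\,x\nind{S_{j-1}}{r}c_j : j\in\N\,\}\cup\{\,x\ne s : s\in N\,\}$ over $N$, which is first-order by the first step, is finitely satisfiable (each fragment is met inside the infinite set $A_k$), and is realized by some $a\in N^\ast\setminus N$ in an elementary extension $\str N\preceq\str N^\ast$. Now $r$-separability of $\str N$ yields $a\ind{N}{r}N$, i.e.\ an $N$-flip $\str N''$ of $\str N^\ast$ with $\dist_{\str N''}(a,N)>r$; since $\str N''$ uses only finitely many parameters $S\subseteq N$ and $S_{j-1}\uparrow N$, for large $j$ we have $S\subseteq S_{j-1}$, so $\str N''$ is also an $S_{j-1}$-flip (flips are monotone in the parameter set). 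But then $a\nind{S_{j-1}}{r}c_j$ forces $\dist_{\str N''}(a,c_j)\le r$ with $c_j\in N$, contradicting $\dist_{\str N''}(a,N)>r$.

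The hard part will be this last step. The first-order definability of bounded-radius flip distance is exactly where the finiteness of the relational language is essential, and extracting from the unbounded survival an honest \emph{infinite} play in a sufficiently saturated model (so that a genuinely new, separability-violating element can be read off, rather than an element trapped inside $N$) is the delicate point; the observation that survival keeps the $A_k$ infinite is the mechanism that guarantees the realizing element is new. These are precisely the ingredients established for graphs in \cite[Theorem 8.1]{flipper-game-arxiv}, and the only structural inputs beyond that argument are the monotonicity of $S$-flips in $S$ and the first-order definability of bounded-radius flip independence, both of which transfer verbatim to finite relational languages.
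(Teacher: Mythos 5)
Your proposal follows the same route the paper takes: the paper offers no argument of its own for this lemma, but points to \cite[Theorem 8.1]{flipper-game-arxiv} and asserts the proof transfers \emph{mutatis mutandis}, and what you have written is a reconstruction of exactly that compactness-plus-separability argument. The final step --- the infinitude of the sets $A_k$ along a surviving play, the finite satisfiability of the limit type, realizing it by a new element $a$ in an elementary extension, and contradicting $r$-separability via monotonicity of $S$-flips in $S$ --- is correct as you describe it. Two earlier steps, however, need repair rather than just elaboration. First, your justification that $u\ind{S}{r}v$ is uniformly first-order for $|S|\le k$ (``only finitely many $S$-flips up to equivalence'') is not literally true for the flips of \Cref{def:flip}: the language of an $S$-flip is unconstrained, so it may carry relations of arbitrarily large arity, and the number of quantifier-free definitions is unbounded once the arity is. One must argue that, for the purpose of Gaifman distance, every $S$-flip can be replaced by one from a fixed finite list (e.g.\ a syntactic flip as in \Cref{def:syntactic-flip}, or one whose relations have arity bounded in terms of the original language); this normal form is needed anyway for $\textup{srk}_r$ to be a first-order quantity, but it is not free.

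The more serious gap is the passage from ``Connector survives $k$ rounds against the enumeration strategy, for every $k$'' to ``there is an infinite play against that strategy.'' Determinacy does not give this: with Separator's strategy fixed there is no game left to determine, and the tree of surviving partial plays is infinitely branching, so it can have nodes at every finite depth while still being well-founded (Separator could win every play without a uniform round bound, which is all that $\neg\theta_{r,k}$ for all $k$ excludes). The standard repair --- and what the cited proof effectively does --- is to fold the infinite play into the compactness step: introduce fresh constants $c_1,s_1,c_2,s_2,\dots$, axioms asserting that after $j$ rounds the game has not ended (first-order by the previous point), and axioms forcing $\{s_j\}_{j}$ to enumerate an elementary submodel (Skolem-style, or simply all of a countable model). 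Finite fragments are satisfiable in suitable $\str M_k\in\CC$ because Connector there survives $j$ rounds against \emph{every} Separator strategy. A model of this theory lies in $\overline\CC$ and comes equipped with the infinite play and the submodel in one stroke, after which your concluding contradiction goes through verbatim.
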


From the Separation Game we can go to  flip flatness.
Recall Definition~\ref{def:flip-flat}.
Again, by repeating \textit{mutatis mutandis} \cite[Lemma 4.1]{flipper-game-arxiv} we get the following lemma.

\begin{lemma}
    \label{lem:flip-flatness}
    Let $\CC$ be a class of structure in a finite relational language such that for every $r$ there exists $k$ such that Separator wins the Separation game of radius $r$ on any $\str M \in \CC$, in at
    most $k$ rounds. Then $\CC$ is flip-flat.
\end{lemma}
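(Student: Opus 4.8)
The plan is to transfer the proof of the cited graph result \cite[Lemma 4.1]{flipper-game-arxiv} to the relational setting. First I would invoke the equivalence recorded just before the statement, between Separator winning the radius-$r$ Separation Game in at most $k$ rounds on $\str M$ and the inequality $\textup{srk}_r(\str M) < k$; the hypothesis thus provides, for each $r$, a uniform bound $k = k(r)$ with $\textup{srk}_r(\str A) < k$ for all $\str A \in \CC$. It therefore suffices to show that a uniform bound on $\textup{srk}_r$ yields flip-flatness, which I would establish by induction on the separation rank, the balls $B^r_{S'}$ playing the role of metric balls thanks to the transitivity in \Cref{lem:structures-flip-dependence-transitivity}.

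Concretely, I would prove the following by induction on $d$: there is a constant $c = c(d,r)$ and an unbounded function $U_{d,r}\colon\N\to\N$ such that for every structure $\str A$, every finite $S \subseteq V(\str A)$, and every $X \subseteq V(\str A)$ with $\textup{srk}_r^{\str A}(X/S) \le d$, there is a set $T \supseteq S$ with $|T \setminus S| \le c$, a $T$-flip $\str A'$ of $\str A$, and a set $Y \subseteq X$ with $|Y| \ge U_{d,r}(|X|)$ whose elements are pairwise at distance greater than $r$ in $\gaif(\str A')$. For $d = 0$ we have $X \subseteq S$, so $|X|$ is bounded and a single element, isolated by a flip, suffices. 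For the inductive step, let $s$ witness the minimum in $\textup{srk}_r^{\str A}(X/S) = d$, set $S' = S \cup \{s\}$, and choose a maximal $Z \subseteq X$ whose elements are pairwise flip-independent at radius $r$ over $S'$ (pairwise $\ind{S'}{r}$). Splitting along a threshold balancing $|X|$: if $Z$ is large I perform a single-flip extraction (below); if $Z$ is small, then by maximality $X$ is covered by $Z$ together with the balls $B^r_{S'}(z)$, so some ball $B^r_{S'}(z) \cap X$ is large and satisfies $\textup{srk}_r^{\str A}(\,\cdot\,/S') \le d-1$, into which I recurse (adding one element to $T$, so $c(d,r) = 1 + c(d-1,r)$). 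A rank-$0$ ball is contained in $S'$ and hence bounded, so the recursion only ever descends into balls of rank at least $1$; this is why the extracted set stays unbounded in the original $|X|$.

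The single-flip extraction is the heart of the argument and the only place where the passage from graphs to relational structures must be re-examined. For each pair $u \ne v$ in $Z$ there exists, by $u \ind{S'}{r} v$, an $S'$-flip separating $u$ and $v$ to Gaifman-distance greater than $r$, and I color the pair by one such flip. The key point is that, since the language is finite relational and $|S'|$ is bounded, each relation may be redefined only by one of finitely many quantifier-free formulas over $S'$ (up to logical equivalence), so there are finitely many $S'$-flips and hence finitely many colors. Ramsey's theorem then yields a monochromatic $Y \subseteq Z$ of size unbounded in $|Z|$ on which a single $S'$-flip realizes distance greater than $r$ for every pair at once, giving exactly the required scattered set. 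Applying the statement at the top level with $S = \emptyset$ and $\textup{srk}_r^{\str A}(V(\str A)/\emptyset) < k$ then yields flip-flatness, with flip size at most $k$ up to an additive constant and with the unbounded function obtained by composing the logarithmic Ramsey bound with the $d$-fold losses incurred by the dichotomy. The main obstacle, and the sole genuine deviation from the graph proof, is precisely this finiteness of the relevant flip family in higher arity; everything else is the verbatim graph argument with \Cref{def:definable-flip-intro} in place of graph flips.
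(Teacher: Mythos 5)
Your overall architecture is exactly the one the paper intends: the paper's own ``proof'' of \Cref{lem:flip-flatness} is a one-line appeal to \cite[Lemma 4.1]{flipper-game-arxiv} ``mutatis mutandis'', and your induction on the separation rank, the dichotomy between a large pairwise $\ind{S'}{r}$-independent set $Z$ and a large ball $B^r_{S'}(z)\cap X$ of strictly smaller rank, and the bookkeeping of the flip size and the unbounded function are all correct and are precisely the graph argument transferred. The problem is that the single step you yourself flag as ``the sole genuine deviation from the graph proof'' --- the finiteness of the family of $S'$-flips --- is justified by a claim that is false under \Cref{def:definable-flip-intro}. You argue that ``each relation may be redefined only by one of finitely many quantifier-free formulas over $S'$, so there are finitely many $S'$-flips.'' This would be true if a flip were required to have the \emph{same signature} as $\str A$, but \Cref{def:definable-flip-intro} places no constraint on the signature of $\str A'$: it may contain any finite number of new relation symbols of arbitrary arities, each quantifier-free definable over $S'$. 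This is not a pathology one can wave away --- the paper's own flips essentially require new symbols (see the ternary tree example with $R_0,R_1,R_2$, and the syntactic flips of \Cref{def:syntactic-flip}, which explicitly add relations $R_A$), and restricting to same-signature flips would change the independence relation $\ind{S'}{r}$ and hence the balls and the separation rank in the hypothesis. Concretely, the family of $S'$-flips is infinite even up to Gaifman graph: for a graph $G$, the structure with the single relation $\{(x_0,\dots,x_k): \bigwedge_i E(x_i,x_{i+1})\}$ is an $\emptyset$-flip of $G$ (one recovers $E(x,y)$ as the atom with $x$ and $y$ alternating), and its Gaifman graph is the $k$-th power of $G$, for every $k$.

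Consequently your Ramsey coloring of the pairs of $Z$ does not have finitely many colors as written, and the monochromatic extraction --- the heart of the single-flip step --- does not go through. The gap is repairable, but not for free: one must either (i) show that whenever $u\ind{S'}{r}v$ the witnessing flip can be normalized to lie in a canonical finite family (e.g.\ syntactic $S'$-flips built from a number of elementary operations bounded in terms of $r$ and the language, which \emph{do} form a finite family for fixed $|S'|$), or (ii) set up the entire chain of definitions (balls, separation rank, the game) with such a restricted notion of flip from the start, and check that the preceding lemmas (\Cref{lem:separation-higher-arity} and its consequences) still deliver the hypothesis in that restricted sense --- which is plausible, since their proofs construct syntactic flips of bounded complexity, but it is an argument you would have to supply rather than a remark. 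As it stands, the proposal's crucial finiteness claim needs to be replaced by one of these normalizations.
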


To close the loop of implications we need to prove that flip-flatness implies monadic stability.
The proof is again repetition \textit{mutatis mutandis} of \cite{flip-flatness}[Lemma 5.3].
Thus we get the following lemma.

\begin{lemma}
    \label{lem:closing-the-loop}
    Every definably flip-flat class of structures in a finite relational language is monadically stable.
\end{lemma}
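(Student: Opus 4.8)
The plan is to prove the contrapositive, following \cite{flip-flatness}[Lemma 5.3]: if $\CC$ is not monadically stable, then $\CC$ is not flip-flat. So assume some partitioned formula $\phi(\tup x;\tup y)$ of quantifier rank $q$ is unstable in a class $\CC^+$ of monadic lifts of structures from $\CC$; restricting to the finitely many unary predicates occurring in $\phi$, we may regard $\CC^+$ as a class in a finite relational language in which, for every $n$, some lift $\str B$ of an $\str A\in\CC$ carries a half-graph of order $n$, i.e.\ $\str B\models\phi(\tup a_i;\tup b_j)\iff i\le j$ for $i,j\in[n]$. First I would observe that flip-flatness of $\CC$ transfers to $\CC^+$: unary predicates do not change the Gaifman graph, and an $S$-flip $\str A'$ of the base structure $\str A$, re-equipped with the unary predicates of $\str B$, is an $S$-flip of $\str B$ with the same Gaifman graph. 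Hence it suffices to derive a contradiction working entirely inside the lift $\str B$, treating its unary predicates as part of a finite relational language.

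The core mechanism combines flip-flatness with Gaifman locality (\Cref{fact:fo-locality}). Fix a large $n$, set $r=2\cdot 7^{q}$, and apply \Cref{def:flip-flat} to $\str B$ with $X$ the set of all elements occurring in the witnessing tuples, obtaining a set $S$ with $|S|\le k$, an $S$-flip $\str B'$, and a subset $Y\subseteq X$ with $|Y|\ge U(|X|)$ whose elements are pairwise at distance $>r$ in $\gaif(\str B')$. Since $\str B$ is quantifier-free interpretable in $\str B'$ with parameters from $S$, I rewrite $\phi$ into a formula $\phi'(\tup x;\tup y)$ with parameters from $S$ of the same quantifier rank $q$. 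Because the interpretation is quantifier-free and, over a finite relational language with a parameter set of size $\le k$, there are only finitely many quantifier-free formulas up to equivalence, the number $k'$ of pieces in the Gaifman decomposition of $\phi'$ is bounded by a constant $K$ depending only on $\phi$, $\CC$, and $k$ (crucially, not on $\str B$ or on $n$). The decisive consequence of \Cref{fact:fo-locality} is then that on pairs of arguments that are far apart in $\str B'$ \emph{and} far from $S$, $\phi'$ equals a disjunction $\bigvee_{l=1}^{k'}\alpha_l(\tup x)\wedge\beta_l(\tup y)$, so its truth value on such pairs is determined by at most $2^{k'}\le 2^{K}$ distinct ``row types'' $(\alpha_l(\tup x))_l$. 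A half-graph of order $m$ has $m$ pairwise distinct rows, so realizing such a half-graph on far-apart arguments forces $m\le 2^{K}$.

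It remains to extract from the configuration inside $Y$ a large sub-half-graph whose arguments are pairwise far apart in $\str B'$ and far from $S$. The elements of $Y$ are already pairwise at distance $>r>7^{q}$; moreover, since they are pairwise far and $|S|\le k$, at most $k$ of them lie within distance $7^{q}$ of $S$ (two such elements sharing a close $s\in S$ would be within distance $r$ of each other), so discarding these leaves a set $Y''$ with $|Y''|\ge U(|X|)-k$ all of whose elements are far from $S$ and pairwise far. In the single-variable case this suffices immediately, as the witnesses themselves carry the order and any subcollection inherits it. The main obstacle is the tuple case: here one additionally needs a Ramsey argument (\Cref{thm:bi-ramsey-tuples}) to pass to an index set on which the whole tuples $\tup a_i,\tup b_j$ are scattered, so that the tuple version of \Cref{fact:fo-locality} applies to them. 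Applying the locality bound on $Y''$ then yields $|Y''|\le 2^{K}$, contradicting $|Y''|\ge U(|X|)-k\to\infty$ as $n\to\infty$ (since $U$ is unbounded and $|X|$ grows with $n$). This contradiction shows that a flip-flat class must be monadically stable.
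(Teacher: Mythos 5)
Your overall strategy --- arguing the contrapositive, transferring flip-flatness from $\CC$ to its monadic lifts, rewriting the unstable formula through the quantifier-free interpretation supplied by the flip, and then bounding the length of a half-graph via Gaifman locality --- is exactly the route the paper intends, since its proof of this lemma is a one-line deferral to Lemma 5.3 of the flip-flatness paper, which is this argument. The transfer to lifts (unary predicates add no Gaifman edges and are preserved by $S$-flips) and the rewriting step are fine. However, there is a genuine gap in the extraction step, that is, in passing from the scattered set $Y$ produced by \Cref{def:flip-flat} to a sub-half-graph on which \Cref{fact:fo-locality} actually applies.

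First, flip-flatness hands you an \emph{arbitrary} scattered subset $Y\subseteq X$ of the prescribed size; you do not get to choose which elements survive. In the tuple case $Y$ may meet each witness tuple $\tup a_i$, $\tup b_j$ in at most one coordinate, so no complete tuple need survive, and \Cref{thm:bi-ramsey-tuples} cannot repair this: Ramsey only homogenizes a colouring on an index set you already have, it cannot make tuples scattered. Second, even with singleton variables the claim that the single-variable case ``suffices immediately'' is not right: $Y$ may consist entirely of $a$'s, and two survivors $a_i,a_{i'}$ with $i<i'$ have distinct rows only because they disagree on columns $b_j$ with $i\le j<i'$ --- and nothing in the flip-flatness guarantee prevents every such $b_j$ from lying inside the radius-$7^q$ balls around $a_i$, $a_{i'}$, or $S$ in $\str B'$, where \Cref{fact:fo-locality} says nothing. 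Consequently the concluding bound $|Y''|\le 2^{K}$ does not follow from what you have established: you have bounded the number of $\alpha$-types of the surviving $a$'s, but not shown that distinct survivors must have distinct $\alpha$-types. This is precisely where the cited Lemma 5.3 does its nontrivial work (the unstable configuration has to be prepared before flip-flatness is invoked, so that the distinguishing witnesses are guaranteed both to survive and to be mutually far); as written, your sketch does not close that step.
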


\begin{proof}[Proof of \Cref{thm:superloop}]
\ 

    \eqref{cond:ms}$\rightarrow$\eqref{cond:a-md} is by \Cref{lem:to-atomic-stable-and-emNIP}.

    \eqref{cond:a-md}$\rightarrow$\eqref{cond:sep} is by \Cref{lem:to-separability}.

    \eqref{cond:sep}$\rightarrow$\eqref{cond:game} is by \Cref{lem:to-sep-game}.

    \eqref{cond:game}$\rightarrow$\eqref{cond:flat} is by \Cref{lem:flip-flatness}.

    \eqref{cond:flat}$\rightarrow$\eqref{cond:ms} is by \Cref{lem:closing-the-loop}.
\end{proof}
\noindent\Cref{thm:flip-flatness-rel} follows.

\begin{szbar}
    \begin{enumerate}
        
        \item zmieścić w 12 stronach
        \item wyrównać bibliografię        
        \item zrobić spellcheck
        \item wrzucić na arxiv
    \end{enumerate}
\end{szbar}

\iflipics\else
\section*{Acknowledgments}
We are grateful to Joanna Fijalkow for many valuable discussions on flips of relational structures.
    \IEEEtriggeratref{21}
\fi
\bibliography{ref}

\newpage
\iflipics\appendix\else\appendices\fi

\section{Omitted proofs from \Cref{sec:prelims}}
\label{app:prelims}

\closure*
\begin{proof}
    We start with the case of atomic stability.
    Take an atomic formula $\alpha(\tup x; \tup y)$.
    Since it is stable in $\Cc$, there is a bound $K$ such that for any structure $\str A \in \Cc$ there are no $K$ tuples $\tup a_1, \ldots, \tup a_K \in A^{\tup x}$ and $\tup b_1, \ldots, \tup b_K \in A^{\tup y}$ satisfying \[\str A \models \alpha(\tup a_i; \tup b_j) \iff i < j.\]
    Observe that this is a property expressible in FO.
    Since it is satisfied by every structure in $\Cc$, it is also satisfied by every structure in $\overline \Cc$.
    As $\alpha$ was arbitrary, we get that $\overline \Cc$ is atomic-stable.

    We prove the case of existential monadic dependence by contrapositive.
    Assume that $\overline \Cc$ is not existentially monadically dependent.
    Therefore there exists an existential formula $\phi(\tup x; \tup y) \equiv \exists_{\tup z}. \beta(\tup x, \tup y, \tup z)$ where $\beta$ is a quantifier-free formula, such that for every $K$ there exists a structure $\str A \in \overline \Cc$, a monadic lift $\str A^+$ of $\str A$ and two sequences of tuples $(\tup a_i: i \in [K])$ and $(\tup b_J: J \subseteq [K])$ such that \[\str A^+ \models \phi(\tup a_i; \tup b_J) \iff i \in J.\]
    Consider a subset $S$ of the universe $A$ that consists of:
    \begin{itemize}
        \item all the elements appearing in tuples $\tup a_i$ and $\tup b_j$; and
        \item for every $i \in J$ elements of a tuple $\tup c$ such that $\str A^+ \models \beta(\tup a, \tup b, \tup c)$.
    \end{itemize}
    Clearly, $|S|$ can be bounded in terms of $K$ and $\phi$.
    Since $\phi$ is an existential formula we get that \[\str A^+[S] \models \phi(\tup a_i; \tup b_j) \iff i \in J.\]

    Observe that there is a structure in $\Cc$ that has an induced substructure isomorphic to $\str A[S]$.
    Indeed, since $S$ is finite, there is an FO sentence $\psi$ expressing that a structure has an induced substructure isomorphic to $\str A[S]$.
    If $\psi$ was false in every structure in $\Cc$, then it would be false also in every structure in $\overline \Cc$.
    Thus, we get a structure $\str B \in \Cc$ and a subset of its universe $S' \subseteq B$ such that $\str B[S']$ is isomorphic to $\str A[S]$.

    Take a unary predicate $U$ that doesn't appear in $\phi$ and consider a slightly modified formula
    \[
        \widehat{\phi}(\tup x; \tup y) \coloneqq \exists_{\tup z}. \beta(\tup x, \tup y, \tup z) \land \bigwedge_{z \in \tup z} U(z).
    \]
    Observe that we can color $\str B$ to obtain $\str B^+$ so that $\str B^+[S']$ is isomorphic to $\str A^+[S]$.
    Additionally, interpret predicate $U$ in $\str B^+$ as the set $S'$, thus obtaining a structure $\str B^{++}$.
    Clearly, we can find in $B$ tuples $(\tup a_i' : i \in [K])$ and $(\tup b_J' : J' \subseteq [K])$ (those that are mapped by the isomorphism to $\tup a_i$ and $\tup b_J$ respectively) such that
    \[
        \str B^{++} \models \widehat\phi(\tup a_i'; \tup b_J') \iff i' \in J'.
    \]
    Since $K$ was arbitrary we get that $\widehat\phi$ is not monadically dependent in $\Cc$, which is a contradiction.
\end{proof}

\dvorak*

\begin{proof}
    Suppose $\CC$ is not nowhere dense.
    Then there is a number $\ell$ such that 
    for every $t\in \N$ there is a graph $G \in \CC$ such that an $\ell$-subdivision of $K_{t, t}$ is a subgraph of $G$.
    Since $\CC$ is weakly sparse, we get that $\ell > 0$.
    Let us denote the $t$ vertices of $G$ corresponding to one part of the biclique $K_{t,t}$ by $v_1, \ldots, v_t$, and the $t$ vertices corresponding to the other part of $K_{t,t}$ by $u_1, \ldots, u_t$.
    Let us also denote vertices on the path between $v_i$ and $u_j$ by $w_{ij}^1, \ldots, w_{ij}^\ell$.

    By applying \Cref{thm:bi-ramsey-tuples} (and potentially going to a subgraph) we can assume that whether or not there is an edge between $w_{ij}^k$ and $w_{i'j'}^{k'}$ depends only on $k, k', \otp(i, i')$, and $\otp(j, j')$.
    If for some $k, k'$ (potentially equal) there are order types (other than $i = i'$ and $j = j'$) for which we have edges between $w_{ij}^k$ and $w_{i'j'}^{k'}$ we get a contradiction with $\CC$ being weakly sparse.
    The only case in which we don't get a contradiction is $i = i'$ and $j = j'$. However, if $k$ and $k'$ are not two consecutive integers, than we can find an $\ell'$-subdivision of $K_{t, t}$ for $\ell' < \ell$.

    Again by applying \Cref{thm:bi-ramsey-tuples} (and potentially going to a subgraph) we can assume that whether or not there is an edge between $v_i$'s and $w_{i'j'}^k$ depends only on $k$ and $\otp(i, i')$.
    Similarly as before, if we have edges between $v_i$ and $w_{i'j'}^k$ for some $k$ and every $i' > i$ we would get a large biclique -- a contradiction with $\CC$ being weakly sparse.
    On the other hand, the case $i = i'$ for $k > 1$ means we can take smaller $\ell$.
    Therefore, we get that we can find an $\ell$-subdivision of $K_{t, t}$ for some large $t$ and $\ell > 0$ as an induced subgraph of $G$.

    Now observe that we can define the $\ell$-subdivision of any bipartite graph in $\Cc$.
    Indeed, we can mark all the principal vertices of our subdivided $K_{t, t}$ with one unary predicate, and with second unary predicate we can mark subdividing vertices on these edges that we wish to keep.
    Since this subdivided $K_{t, t}$ is induced in the Gaifman graph, then we can recover all the selected subdividing paths. 
\end{proof}
\section{Omitted proofs from \Cref{sec:flip-ind-graphs}}
\label{app:graphs}

\subsection{Proof of \Cref{lem:flip-independence-normality}}
In this section, we prove \Cref{lem:flip-independence-normality},
 repeated below.
\graphsnormality*


For the proof of \Cref{lem:flip-independence-normality}, we use Lemma 7.10 from \cite{flipper-game} which we restate below.

\begin{lemma}
    \label{lem:types-of-balls}
    Fix $r \in \N$. Let $G$ be a monadically stable graph, let $H$ be its elementary extension, and let $v \in H$ be such that the $r$-ball $B_r(v)$ around $v$ in $H$ is disjoint from $G$.
    Then $\Types E (B_r(v) / G)$ is finite.
\end{lemma}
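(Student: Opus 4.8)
The plan is to argue by contraposition: assuming that $\Types E(B_r(v)/G)$ is infinite, I would manufacture a configuration witnessing the order property for some formula in a monadic lift of $H$, contradicting monadic stability. Before starting, I would record that since $G$ is monadically stable and $G\preceq H$ gives $H\equiv G$, the structure $H$ is itself monadically stable; in particular the edge formula $E(x;y)$ is stable in $H$, and so is every first-order formula, e.g. each bounded-distance formula $\dist(x,y)\le d$.

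First I would extract a clean bipartite pattern. Since $\Types E(B_r(v)/G)$ is infinite, applying \Cref{thm:folklore} to the bipartite graph with sides $B_r(v)$ and $G$ yields a relation ${\sim}\in\set{=,\neq,\le,\ge}$ and pairwise distinct $a_1,a_2,\ldots\in B_r(v)$, $b_1,b_2,\ldots\in G$ with $H\models E(a_i,b_j)\iff i\sim j$. If $\sim$ is $\le$ or $\ge$ this is a half-graph, directly contradicting stability of $E$ in $H$, so these two cases are immediate.

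The hard part will be the two remaining cases, where $\sim$ is $=$ (a matching) or $\neq$ (a co-matching). Here the bipartite pattern between the $a_i$ and the $b_i$ is not yet contradictory, since a plain matching or co-matching is itself monadically stable; the extra geometry must be used. Every $a_i$ lies in $B_r(v)$, hence is joined to the single vertex $v$ by a path of length ${\le}\,r$ which, because $B_r(v)\cap G=\emptyset$, avoids $G$; and, crucially, $G\preceq H$ forces each $b_i\in G$ to carry its full first-order neighbourhood inside $G$. As a structural preliminary I would note that any $G$-neighbour of an element of $B_r(v)$ lies at distance exactly $r+1$ from $v$, so that by \Cref{thm:definability-of-types}, applied to the stable formula $\dist(x;y)\le r+1$ and the point $v$, all the $b_i$ fall inside a single $G$-definable set; in particular the $b_i$ are pairwise at distance ${\le}\,2r+2$ (through $v$), a distance realised inside $G$ by elementarity.

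To finish, I would regularise the enriched tuples $\bar c_i=(a_i,P_i,b_i)$ over the parameter $v$: using that $H$ is atomic-stable, \Cref{thm:bi-ramsey-tuples} and \Cref{lem:morley-sequence} let me pass to a set-indiscernible subsequence. On this sequence I would then attempt to build a half-graph in a monadic lift of $H$ by marking an initial segment of the $b_i$ with a fresh unary predicate and using a distance-${\le}\,(2r+2)$ formula that links two marked vertices through the common basepoint $v$ and the matched $a$'s, so that the order of the indices becomes first-order visible; an arbitrarily long half-graph in the coloured structure would then contradict monadic stability of $H$. I expect exactly this step to be the main obstacle: turning the index order of an indiscernible matching (resp. co-matching) into a genuine order-property witness via a single monadic colour, while essentially exploiting both $G\preceq H$ and the shared basepoint $v$, is precisely the point where the bipartite pattern alone is insufficient and where the structure theory of monadically stable graphs has to be brought in.
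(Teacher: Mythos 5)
Your opening moves coincide with the argument the paper relies on (the lemma itself is quoted from \cite[Lemma 7.10]{flipper-game}, but its higher-arity generalization, \Cref{lem:finite-numer-of-types}, is proved in full in Appendix~\ref{app:flip-fork} and follows the same scheme): apply \Cref{thm:folklore}, discard the $\le$ and $\ge$ cases by stability of $E$, and observe that the remaining matching/co-matching between $(a_i)\subseteq B_r(v)$ and $(b_i)\subseteq G$ is not yet a contradiction, so the paths from $v$ to the $a_i$ and the elementarity $G\preceq H$ must be exploited, via a Morley-type sequence (\Cref{lem:morley-sequence}). Up to that point you are on track.

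The endgame, however, is both missing and, as sketched, would fail. Marking an initial segment of the $b_i$ with one unary predicate cannot make the index order first-order visible: a set-indiscernible matching plus a single colour still has no definable linear order on the indices, and any distance-$\le(2r{+}2)$ formula routed through the common basepoint $v$ holds for \emph{every} pair $b_i,b_j$ (all $a_i$ lie in $B_r(v)$), so it carries no index information whatsoever. The contradiction is not obtained from the order property at all. The correct finish, as in \Cref{lem:finite-numer-of-types}, is to use the Morley sequence over the set $\{b_1,\dots,b_K\}$ to plant $L=\binom{K}{2}$ pairwise disjoint copies of the whole configuration ``$v$ together with its $G$-avoiding paths to $a_1,\dots,a_K$'' inside $G$, and then to use unary predicates to mark, in the $j$-th copy, only the paths and endpoints corresponding to the two indices in $f(j)$ for a fixed bijection $f\from[L]\to\binom{[K]}{2}$. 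Indiscernibility (the analogue of \Cref{cl:mixed-tuple}) guarantees that marked paths from distinct copies do not interfere, so an existential formula ``there is a marked path from $u$ to a marked endpoint adjacent to $y$'' connects the $j$-th copy's centre to exactly the $b_i$ with $i\in f(j)$. This is precisely the pair-encoding configuration of \Cref{lem:monadic-stability-obstruction}: adding one more unary predicate selecting a subset of the centres lets one interpret an arbitrary graph on $\{b_1,\dots,b_K\}$, contradicting (existential) monadic dependence of $H$ --- no half-graph is ever produced. (A smaller point: your assertion that $H$ is monadically stable because $H\equiv G$ is exactly the transfer the paper is careful to avoid assuming; it works with atomic-stability and existential monadic dependence, which do pass to elementary extensions by \Cref{lem:closure}.)
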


We also prove a stronger version of Lemma 7.3 from \cite{flipper-game}.

\begin{lemma}
    \label{lem:normality-sublemma}
    Let $G \preceq H$ be two graphs with stable edge relation and let $U \subseteq H \setminus G$ be a set of vertices such that $\Types E (U/G)$ is finite.
    Then there exists a finite set $S \subseteq G$ and an $S$-definable flip $H'$ of $H$ such that:
    \begin{enumerate}
        \item $\dist_{H'}(U,G)>1$,        
        \item unless there is $u \in U$ which is complete to $G$, then for all $v, w \in H$, if $\dist_{H}(w, S) > 1$ 
        then $H\models E(v,w)\iff H'\models E(v,w)$.
    \end{enumerate}
    Moreover, for every   
     $v \in H \setminus G$,
     \begin{enumerate}[resume]
         \item if $\dist_H(v, U) > r\ge 3$ then $\dist_{H'}(v, U) > r - 2$, and 
         \item if $\dist_H(v, G) > r\ge 1$ then $\dist_{H'}(v, G) > r$.
    \end{enumerate}
\end{lemma}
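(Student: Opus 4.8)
The plan is to build the flip explicitly: isolate the finitely many ways a vertex of $U$ can attach to $G$, and flip each attachment pattern against its definable neighborhood, using positivity of type-definitions to keep the flip local. Since $\Types E(U/G)$ is finite, let $t_1,\dots,t_k$ be the distinct $E$-types over $G$ realized in $U$, and let $N_i\subseteq G$ be the common neighborhood in $G$ of all vertices of $U$ realizing $t_i$. As $E$ is stable in $H$, \Cref{thm:definability-of-types} supplies for each $i$ a definition $\psi_i(y)$ of $t_i$ over $G$ that is a \emph{positive} (negation-free) boolean combination of atoms $E(m,y)$ with $m\in G$; thus $N_i=\{v\in G: H\models\psi_i(v)\}$. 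Let $S_0\subseteq G$ collect all parameters occurring in $\psi_1,\dots,\psi_k$. I would then enlarge $S_0$ to a finite $S\subseteq G$ by adjoining (i) for each $i\neq j$ a vertex $w_{ij}\in G$ with $\psi_i(w_{ij})\neq\psi_j(w_{ij})$ (possible in $G$ since $G\preceq H$), and (ii) for each $i$ with $N_i\neq\emptyset$ a representative $s_i\in N_i$. The witnesses in (i) force the atomic type of any $u\in U$ over $S$ to determine which $t_i$ it realizes, hence to determine $N(u)\cap G=N_i$; since each $N_i$ is $\psi_i$-definable over $S_0\subseteq S$, the whole adjacency between $U$ and $G$ factors through atomic types over $S$.

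Next I would take $H'$ to be the $S$-definable flip obtained by composing, over all $i$, the inversion of adjacency between the class-union $\Uu_i$ of atomic-type-over-$S$ classes realizing $t_i$ and the class-union $N_i$. Two observations make $H'$ \emph{local}, confined to $N[S]:=\{w:\dist_H(w,S)\le 1\}$. By (ii), every $u\in U$ with $N(u)\cap G\neq\emptyset$ is adjacent to some $s_i\in S$, so each relevant $\Uu_i$ lies entirely in $N[S]$; the empty $N_i$ contribute no flip. By positivity, every atom $E(m,v)$ is false for a vertex $v$ non-adjacent to all of $S_0$, and a nonconstant monotone formula is false at the all-false assignment, so $\psi_i(v)$ is false there unless $\psi_i\equiv\top$, i.e. $N_i=G$; hence, away from the complete type, each $N_i\subseteq N[S]$ as well. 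Thus every flipped pair $(\Uu_i,N_i)$ has both sides inside $N[S]$, so $H'$ and $H$ differ only on edges incident to $N[S]$ — precisely item~(2). The escape clause ``some $u\in U$ complete to $G$'' absorbs exactly the unconfined case $N_i=G$, where $\psi_i\equiv\top$.

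The separation assertion, item~(1), is the heart of the argument and the step I expect to be the main obstacle. For $u\in U$ realizing $t_i$ the flip of $(\Uu_i,N_i)$ turns $N(u)\cap G=N_i$ into the empty set, so if the flips did not interfere $\dist_{H'}(U,G)>1$ would be immediate. Interference can arise only at a class $Q$ that is realized in $U$ (by some $u_Q$) while also containing a $G$-vertex $v$: the flip of the pair indexed by $Q$ may re-create the edge $u$--$v$. Ruling this out amounts to \emph{symmetry of the factored adjacency across types} — that $U$-vertices of type $P$ are adjacent to $G$-vertices of type $Q$ exactly when $U$-vertices of type $Q$ are adjacent to $G$-vertices of type $P$. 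This is where stability is used in earnest, exactly as in the proof of \cite[Lemmas~7.1,~7.3]{flipper-game}: one either refines $S$ by finitely many further witnesses to destroy each defect of symmetry (termination being controlled by the fixed number $k$ of $E$-types), or argues the defect-free case directly via the finite-satisfiability bookkeeping and \Cref{lem:harrington}; a persistent defect would, after iterating and applying Ramsey (\Cref{thm:bi-ramsey-tuples}), yield an unbounded half-graph in the stable relation $E$, a contradiction. I would follow the direct route, adapting the argument of \cite{flipper-game}.

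Finally, items~(3) and~(4) follow formally from the locality established in item~(2). Since $H$ and $H'$ agree off $N[S]$ and $S\subseteq G$, every edge of $H'$ absent from $H$ has both endpoints at $H$-distance $\le 1$ from $G$. For item~(4), such a new edge joins two vertices already within distance $1$ of $G$, so it cannot yield a path from a vertex $v$ to $G$ shorter than $\dist_H(v,G)$; hence $\dist_H(v,G)>r$ gives $\dist_{H'}(v,G)>r$. For item~(3), converting an $H'$-path from $v$ to $U$ into an $H$-walk by accounting for the new edges inside $N[S]$ costs at most two in length (a single shortcut created within the closed neighborhood of $S$), so $\dist_H(v,U)>r\ge 3$ forces $\dist_{H'}(v,U)>r-2$. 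Both are routine once the flip is known to be confined to $N[S]$.
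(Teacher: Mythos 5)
Your overall strategy matches the paper's (type representatives, positive definitions of types via \Cref{thm:definability-of-types}, distinguishing witnesses in $S$, homogeneity of $S$-classes, Harrington's lemma for symmetry, locality for the last two items), but the construction of the flip itself has a concrete defect. You define $H'$ by \emph{composing}, over all $i$, the inversion of adjacency between $\Uu_i$ and the ($S$-definable) region where $\psi_i$ holds. Consider two classes $A\ni u_i$ and $B\ni u_j$, both meeting $U$ (and hence both meeting $G$, since $G\preceq H$), with $u_i$ adjacent to $G\cap B$. Then operation $i$ flips the pair $(A,B)$; and by \Cref{lem:harrington} --- the very symmetry you invoke --- $u_j$ is adjacent to $G\cap A$, so operation $j$ flips $(A,B)$ again. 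The two inversions cancel, and the edges from $U\cap A$ to $G\cap B$ survive in $H'$, so item (1) fails. You have the role of symmetry backwards: in an XOR-composition of per-type flips, symmetry is precisely what \emph{causes} the cancellation rather than ruling it out. The repair is the paper's construction: define a single flip whose flipped pairs are those $(A,B)$ such that $U\cap A$ is nonempty and complete to $G\cap B$ \emph{or} $U\cap B$ is nonempty and complete to $G\cap A$; by Harrington these two conditions agree whenever both classes meet $U$, so each offending pair is inverted exactly once. Relatedly, you explicitly defer the proof of item (1) --- which you correctly identify as the heart of the lemma --- to ``adapting the argument of \cite{flipper-game}''; as it stands that deferral conceals the cancellation error rather than fixing it.

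A second, smaller gap concerns items (3) and (4). Your locality argument needs every element of $S$ to lie within $H$-distance $2$ of $U$: otherwise a ``new'' edge inside $N[S]$ can appear far from $U$, and the claim that rerouting costs ``at most two in length'' has no justification (membership of both endpoints in $N[S]$ does not give a common neighbour in $S$ through which to reroute). This holds automatically for your witnesses $w_{ij}$ and representatives $s_i$, but not for the parameter set $S_0$ of the definitions $\psi_i$; the paper secures it by choosing those parameters inside the $2$-ball of $u_i$ (citing \cite[Lemma~2.10]{casanovas}) and then arguing that the entire ball $B^{r-3}_H(v)$ lies in the unflipped class $T$, so that balls around $v$ are unchanged. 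You should either import that choice of parameters or rework the distance bookkeeping; your item (4) argument survives without it (since $S\subseteq G$), but your item (3) argument does not.
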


\begin{proof}
    Let $\set{u_1, \ldots, u_k} \subseteq U$ contain one vertex from each type in $\Types E (U/G)$.
    Let $W_0\subseteq G$ be obtained by taking for each pair $u_i, u_j \in U$ a vertex $w \in G$ that is adjacent to exactly one of $u_i, u_j$.
    Moreover, for every $u_i$ that has a neighbor we take at least one such neighbor to $W_0$.
    
    By \Cref{thm:definability-of-types}, the neighborhood of every $u_i$ in $G$ is definable by a formula $\psi_{u_i}(y)$ which is a positive boolean combination 
    of formulas $E(y,w^i_1),\ldots, E(y,w^i_\ell)$, for some  
    some $w^i_1, \ldots, w^i_\ell \in G$, where $\ell$ depends only on $H$.
    Moreover, we can assume that $\dist_{H}(u_i, w^i_j) \le 2$ for each $1 \le i \le k, 1 \le j \le \ell$ (c.f. \cite[Lemma 2.10]{casanovas}).


    Define $$S \coloneqq W_0\cup \setof{w^i_j}{i\in\set{1,...,k},j\in\set{1,\ldots,\ell}}.$$
    Note that $\dist_H(s,U)\le 2$ for all $s\in S$.
    By an \emph{$S$-class} we mean an equivalence class 
of the partition of $H$ into atomic types over~$S$.
Let $A_i$ denote the $S$-class of $u_i$, for $i=1,\ldots,k$.

We say that a set $X \subseteq V(\str H)$ of vertices is complete (resp. anti-complete) towards a disjoint set of vertices $Y \subseteq V(\str H)$ if for every $x \in X, y \in Y$ we have $\str H \models E(x, y)$ (resp. $\str H \not \models E(x, y)$).

 \begin{claim}\label{cl:homo}
    Fix two $S$-classes $A,B$ such that $A$ intersects $U$.
    Then $U\cap A$ is either complete, or anti-complete towards 
$B\cap G$ in $H$.
 \end{claim}
\begin{claimproof}
Since $A$ is an $S$-class and $W_0\subseteq S$,
it follows that all vertices 
in $U\cap A$ have identical neighborhoods in $G$.
In particular, 
$\set{u_1, \ldots, u_k} \cap A=\set{u_i}$ for some $i\in[k]$.
Furthermore, it is enough to show that $u_i$ is adjacent to either all, or no vertices 
in $B\cap G$, in the graph $H$.

Suppose $b,b'\in B\cap G$ and $u_i$ is adjacent to $b$.
Therefore,  $H\models\psi_{u_i}(b)$.
Since $\psi_{u_i}(y)$ is a quantifier-free formula with parameters from $S$
and $\atp(b/S)=\atp(b'/S)$, it follows that $H\models\psi_{u_i}(b')$.
Therefore, $u_i$ is adjacent to $b'$ in~$H$.
\end{claimproof}

 \begin{claim}\label{cl:symmetry}
    Let $A,B$ be two $S$-classes
    intersecting $U$.
Then $U\cap A$ is complete towards $B\cap G$ 
if and only if $U\cap B$ is complete towards $A\cap G$.
 \end{claim}
 \begin{claimproof}
    Fix $i,j\in[k]$
    such that $u_i\in U\cap A$ and $u_j\in U\cap B$.
    We show that  $u_i$ is complete towards $A_j\cap G$ implies that 
  $u_j$ is complete towards $A_i\cap G$.

    Since $G \preceq H$, there is some $b \in G$ with $\atp(b/S)=\atp(u_j/S)$.
    Therefore, $b \in A_j \cap G$, so it is a neighbor of $u_i$ in $H$.
    Then $H\models\psi_{u_i}(b)$ by \Cref{thm:definability-of-types}.
    As $\atp(b/S)=\atp(u_j/S)$ and $\psi_{u_i}(y)$ is a quantifier-free formula
    with parameters from $S$,
    it follows that $H\models\psi_{u_i}(u_j)$. 
    By \Cref{lem:harrington} we get that 
    $H\models\psi_{u_j}(u_i)$.
    For any $a\in G \cap A_i$ we have 
    $\atp(a/S)=\atp(u_i/S)$.
    Since $\psi_{u_j}(y)$ is a quantifier-free formula with parameters from $S$,
    it follows that $H\models\psi_{u_j}(a)$. This implies that $a\in G\cap A_i$ is a neighbor of $u_j$ in $H$. This proves the claim.
\end{claimproof}

Let $H'$ be the $S$-definable flip of $H$ such that for every 
    two $S$-classes $A,B$,
we have that the pair $A,B$ is flipped, that is,
    $$H\models E(a,b)\iff H'\models \neg E(a,b)$$
holds for all distinct $a\in A,b\in B$,
if and only if
$U\cap A$ is nonempty and is complete towards $G\cap B$, or $U\cap B$ is nonempty and is complete towards $G\cap A$.
We verify that $H'$ satisfies the properties stated in \Cref{lem:normality-sublemma}.
\medskip

Let $T$ denote the unique $S$-class with no neighbors in $S$. 
We argue that unless there is $u \in U$ which is complete to $G$, the $S$-class $T$ is not flipped with any $S$-class,
which proves the second property.
As $T\cap U$ has no neighbors in $W_0$,
it follows that $T\cap U$ has no neighbors in $G$ either. 
In particular, $T\cap U$ is not complete towards $G\cap B$, for all $S$-classes $B$.
Therefore, if $T$ was flipped, then there is an $S$-class $B$ such that $B \cap U$ is nonempty and is complete to $T \cap G$.
Take a vertex $u_i \in B \cap U$ and a vertex $w \in T \cap G$.
Since $H \models E(u_i, w)$ then $H \models \psi_{u_i}(w)$.
As $\atp(w / S) = \emptyset$ and $\psi_{u_i}$ is a positive quantifier-free formula, then $\psi_{u_i}$ is equivalent in $H$ to $\top$.
This implies $u_i$ is complete to $G$.

We verify the first property. Let $u\in U$. 
If $u$ has no neighbors in $G$ in the graph $H$,
then in particular, $u$ has no neighbors in $S$,
and therefore, $u\in T$.
Therefore, $u$ also has no neighbors in $G$ in the graph $H'$, by the second property proved above.
So suppose that $u$ has some neighbor in $G$.
Then there is some $i\in[k]$ such that $\atp(u/G)=\atp(u_i/S)$.
In particular, $u$ is in the $S$-class $A_i$ of $u_i$, and 
 $u$ and $u_i$ have equal neighborhoods in $G$ in the graph $H$.
As $H'$ is an $S$-definable flip of $H$,
the neighborhoods of $u$ and of $u_i$ in $G$ in the graph $H'$ are also equal.
Therefore, it is enough to argue that $u_i$ has no neighbors in $G$ in the graph $H'$.
This is because by \Cref{cl:homo}, the class $A_i$ is complete or anti-complete towards 
$B\cap G$, for every $S$-class $B$,
and $A_i$ is flipped with $B$ if and only if $A_i$ is complete towards $B\cap G$,
by \Cref{cl:symmetry}.

We now verify the third and fourth property.
For every $s \in S$ we have $\dist_{H}(s, U) \le 2$.
Together with $\dist_H(v,U)>r$, this implies  $\dist_{H}(s, v) > r - 2$, and therefore $B^{H}_{r - 3}(v) \subseteq T$.
Moreover, if there is a vertex $u \in U$ that is complete to $G$, then $s$ has to be anticomplete to $G$, since otherwise $\dist_{H}(u, v) = 2$.
This contradicts \Cref{lem:complete-anticomplete-tuples},
which implies that there cannot simultaneously be a vertex that is complete to $G$ and a vertex that is anti-complete to $G$.
Therefore, by the second property, $B^{H}_{r - 3}(v)$ has the same neighborhood both in $H$ and $H'$ (which is disjoint from $U$) and finally we get $\dist_{H'}(v, U) > r - 2$.
Similarly, if $\dist_{H}(v, G) > r\ge 1$, then in particular $B^H_{r-1}(v)\subseteq T$ and there is no $u \in U$ that is complete to $G$, so $B^H_{r-1}(v)$ has the same neighborhood both in $H$ and in $H'$ (which is disjoint from $G$), and we conclude that $\dist_{H'}(v,G)>r$.
\end{proof}

We now prove \Cref{lem:flip-independence-normality}.
\graphsnormality*
\begin{proof}[Proof of \Cref{lem:flip-independence-normality}]
We construct a sequence of finite sets $S_0\subseteq S_1\subseteq \ldots\subset G$,
sets $\set u= U_0\subseteq U_1\subseteq \ldots\subseteq G$ 
and graphs $G^{(0)}\preceq H^{(0)},G^{(1)}\preceq H^{(1)},\ldots$,
such that for $i=0,1,2,\ldots$:
\begin{itemize}
    \item $H^{(i)}$ is an $S_i$-definable flip of $H$,
    \item $G^{(i)}\preceq H^{(i)}$ is the subgraph of $H^{(i)}$ induced by $V(G)$,
    \item  $\dist_{H^{(i)}}(u,v)>3r-2i$ 
    \item the set $\Types{E}(U^{(i)}/G^{(i)})$ (evaluated in $H^{(i)}$) has finite size.
\end{itemize}
As $u \ind G{3r} v$, there is 
     an $S_0$-definable flip $H^{(0)}$ of $H$, for some finite $S_0\subset G$, such that $\dist_{H^{(0)}}(u, v) > 3r$.
    Let $G^{(0)}$ be the subgraph of $H^{(0)}$ induced by $V(G)$; then $G^{(0)}$ is an $S_0$-definable flip of $G$ and $G^{(0)}\preceq H^{(0)}$.

    We proceed by induction on $i=0,1,2,\ldots$.
Apply \Cref{lem:normality-sublemma} to $G^{(i)}\preceq H^{(i)}$ and $U^{(i)}$, obtaining an $S_{i+1}$-definable flip $H^{(i+1)}$ of $H^{(i)}$, for some finite $S_{i+1}\subset G$ (without loss of generality, $S_i\subseteq S_{i+1}$)
    such that $\dist_{H^{(i+1)}}(v, U^{(i)}) > 3r - 2i-2$ and $U_i$ is disjoint from $G$.
In particular, $\dist_{H^{(i+1)}}(v,u)>3r-2(i+1)$.

Let $G_{(i+1)}$ be the subgraph of $H^{(i+1)}$ induced by $V(G)$, and 
    let $$U^{(i+1)} \coloneqq B_1^{H^{(i+1)}}(U_i).$$
    By \Cref{lem:types-of-balls} we have that $\Types{E}(U^{(i+1)}/G^{(i+1)})$ (evaluated in $H^{i+1})$ has finite size.
    This completes the inductive step.

    For $i=r$ we finally obtain a flip $H^{(r)}$ of $H$ in which $\dist_{H^{(r)}}(u, G) > r$ and $\dist_{H^{(r)}}(u, v) > 3r - 2r = r$.
\end{proof}

\subsection{Proof of \Cref{thm:forking-characterization-graphs}}
Before proving \Cref{thm:forking-characterization-graphs}, we prove the following.
\flipdoublesep*
\begin{proof}
    Since $u \ind G{3r} Gv$ there is a flip $H''$ of $H$ such that $\dist_{H''}(u, Gv) > 3r$.
    By iteratively applying \Cref{lem:normality-sublemma} we obtain a flip $H'$ of $H''$ such that $v$ is at distance more than $r$ from $G$ and at the same time the distance from $u$ to $Gv$ stays above $r$.
\end{proof}

We are ready to prove \Cref{thm:forking-characterization-graphs},
repeated below.
\forkinggraphs*
\begin{proof}
    First, assume that $u \nind Gr v$ for some $r$.
    By \Cref{lem:separation-lemma} we have that $u \ind G{2r} G$. Therefore, there is
     a finite set $S \subset G$ and
    an $S$-definable flip $H'$ of $H$ satisfying $\dist_{H'}(u, G) > 2r$.
    As $u \nind Gr v$, in particular $\dist_{H'}(u, v) \le r$, which implies $\dist_{H'}(v, G) > r$.
    Since the flip $H'$ is definable using $S$, 
    there is a first-order formula $\phi(x, y)$ with parameters from $S$ that stipulates $\dist_{H'}(x, y) \le r$.
    Now, observe that $H \models \phi(u, v)$, so $\phi(x, v) \in \tp(u/Gv)$.
    On the other hand, there is no $w \in G$ such that $H \models \phi(w, v)$, as we have $\dist_{H'}(v, G) > r$.
    It follows that $\tp(u/Gv)$ is not finitely satisfiable in $G$, so $u \nfind G v$.

    Conversely, assume that $u \ind Gr v$ for every $r$.
    By \Cref{lem:flip-independence-normality} we have that $u \ind Gr vG$ for every $r \in \N$.
We prove that $u\find Gv$. To this end, 
pick any formula $\phi(x, y)$ with parameters $\tup w$ from $G$ such that $H \models \phi(u, v)$;
we need to show that there is some $u'\in G$ 
with $H\models \phi(u',v)$.
    Let $q$ denote the quantifier rank of $\phi$.
    Since $u \ind G{3\cdot7^q} Gv$, by \Cref{lem:flip-double-separation} there is a finite set $S \subset G$ and an $S$-definable flip $H'$ of $H$ such that both $\dist_{H'}(u, Gv) >  7^q$ and $\dist_{H'}(v, Gu) > 7^q$.
    Consider a unary lift $\hat H'$ of $H'$ which colors every element of $H$ based on its atomic type over $S$ in $H$. Define a unary lift $\hat G'$ of $G$ analogously.
    Since the coloring is definable with parameters from $G$, we have that $\hat G' \preceq \hat H'$.
    Since the edge relation of $H$ can be defined in $\hat H'$ by a quantifier-free formula, there is a formula $\phi'(x, y)$ of the same quantifier rank as $\phi(x,y)$, namely $q$, with parameters $\tup w$ and in the language of $\hat H'$, such that
    \begin{align}\label{eq:flip-rewriting}
        H \models \phi(x, y) \iff \hat H' \models \phi'(x, y).    
    \end{align}
    By \Cref{cor:fo-locality}, there is a first-order formula $\alpha(x)$ without parameters such that for any $u'$ that satisfies $\dist_{H'}(u', v\tup w) > 7^q$ we have $\hat H' \models \alpha(u')$ if and only if $\hat H' \models \phi'(u', v)$. 
As $\dist_{H'}(u,Gv)>7^q$ and $\hat H'\models \phi'(u,v)$ by \eqref{eq:flip-rewriting}, in particular $\hat H'\models \alpha(u)$.
    Since $\tp_{\hat H'}(u/G)$ is finitely satisfiable in $G$ there is a vertex $u' \in G$ such that $\hat H' \models \alpha(u')$ and $\dist_{H'}(u', \tup w) > 7^q$.
    Moreover, as $\dist_{H'}(v, G) > 7^q$, in particular $\dist_{H'}(u', v) > 7^q$.
    It follows that $\hat H' \models \phi'(u', v)$, so $H \models \phi(u', v)$ by \eqref{eq:flip-rewriting}.
    This proves $\tp(u/Gv)$ is finitely satisfiable in $G$, so $u \find G v$.
\end{proof}

\subsection{Proof of \Cref{lem:radius-1-dependence}}
\radiusone*
\begin{proof}
    \eqref{it:rad1-1}$\rightarrow$\eqref{it:rad1-2}.
    Let $H'$ be an $S$-definable flip of $H$ for some $S \subseteq G$ such that $\dist_{H'}(u, G) > 1$.
    Since $u \nind G1 Gv$ we have $\dist_{H'}(u, v) = 1$.

    \eqref{it:rad1-2}$\rightarrow$\eqref{it:rad1-3}.
    Condition $\dist_{H'}(u, G) > 1$ means that we flipped in $H$ the $S$-class which contains $u$ (denote it by $U$) with every $S$-class that contains a neighbor of $u$ in $G$.
    In particular, for every $S$-class $T$ we have that $u$ is either complete or anti-complete to $T \cap G$.
    Observe that $G \preceq H$ implies that for every nonempty $S$-class $T$ the set $T \cap G$ is also nonempty.
    Therefore, we flipped $U$ with every $S$-class $T$ such that $u$ is complete to $T \cap G$.
    Denote the $S$-class of $v$ by $V$.
    We obtain that $\dist_{H'}(u, v) = 1$ if either $E(u, v)$ and $u$ is anti-complete to $V \cap G$ or $\neg E(u, v)$ and $u$ is complete to $V \cap G$.
    If $u$ is complete to $T \cap G$ then $H \models \psi_u(w)$ for every $w \in T \cap G$.
    As $S$-classes are definable with parameters from $G$ and $G \preceq H$, then $u$ being complete to $T \cap G$ implies $H \models \psi_u(w)$ for every $w \in T$.
    Similarly, $u$ being anti-complete to $T \cap G$ implies $H \not\models \psi_u(w)$ for every $w \in T$.
    
    \eqref{it:rad1-3}$\rightarrow$\eqref{it:rad1-1}.
    Assume by contradiction $u \ind G1 Gv$.
    Therefore, there exists an $S$-definable flip $H'$ of $H$ for some finite $S \subseteq G$ such that $\dist_{H'}(u, G) > 1$ and $\dist_{H'}(u, v) > 1$.
    Similarly as in the previous implication, we observe that in order to obtain $H'$ we flipped the $S$-class of $u$ (denoted $U$) with every $S$-class $T$ such that $H \models \psi_u(w)$ for every $w \in T$.
    Let $V$ denote the $S$-class of $v$.
    Since $\dist_{H'}(u, v) > 1$ either $E(u, v)$ and $H \models \psi_u(v)$ or $\neg E(u, v)$ and $H \not \models \psi_u(v)$, a contradiction.
\end{proof}
\section{Omitted proofs from \Cref{sec:nd-struc}}
\label{app:nd-struc}

\subsection{Proof of \Cref{thm:nowhere-dense-struct}}

\nwdstr*
The following lemma yields the implication \eqref{it:nds1}$\rightarrow$\eqref{it:nds2} in \Cref{thm:nowhere-dense-struct}.
\begin{lemma}[{Follows from \cite[Proposition 5.7]{sparsity-book} and \cite{podewski1978stable}, see \cite[Theorem 30]{nowhere-dense-structures-icalp}}]\label{lem:nd is stable}
    Let $\CC$ be a class of structures in a finite relational language such that 
     $\gaif(\CC)$ is nowhere dense.
    Then $\CC$ is monadically stable.
\end{lemma}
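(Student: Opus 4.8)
The plan is to reduce the statement to the already-known graph case by viewing each structure $\str A \in \CC$ as being first-order \emph{interpretable} in a suitable colored graph whose underlying graph class is nowhere dense. Concretely, I would use two external facts: first, that every nowhere dense graph class is monadically stable \cite{podewski1978stable} (this is exactly the implication \eqref{it:nds1-g}$\rightarrow$\eqref{it:nds2-g} of \Cref{thm:nowhere-dense-g}), and second, the standard fact that monadic stability is preserved under first-order interpretations (and under adding finitely many unary predicates, since monadic stability quantifies over \emph{all} monadic lifts). Thus it suffices to build, for each $\str A \in \CC$, a colored graph $\str H(\str A)$ such that $\str A$ is interpretable in $\str H(\str A)$ by fixed first-order formulas, and such that the class $\{\str H(\str A) : \str A \in \CC\}$ is nowhere dense.

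The first key observation is a locality property: if $\bar a = (a_1,\dots,a_{k})$ is a tuple belonging to some relation of $\str A$, then the entries $a_1,\dots,a_k$ pairwise co-occur in a tuple, hence are pairwise adjacent (or equal) in $\gaif(\str A)$. Using this, I would let $\str H(\str A)$ be the \emph{augmented incidence graph}: its vertices are $V(\str A)$ together with, for each relation symbol $R_i$ and each tuple $\bar a \in (R_i)_{\str A}$, a central vertex $c_{\bar a}$ colored by $R_i$ and position vertices $t^1_{\bar a},\dots,t^{k_i}_{\bar a}$ colored by their position, where $t^j_{\bar a}$ is joined to $c_{\bar a}$ and to the $j$-th entry of $\bar a$; original vertices carry a distinguished color. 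All the newly added vertices have degree bounded by a constant depending only on the signature. The interpretation recovering $\str A$ is then immediate and of bounded quantifier rank: the domain is the definable set of original vertices, and $R_i(a_1,\dots,a_{k_i})$ holds iff there is a central vertex colored $R_i$ whose position-$j$ neighbor is adjacent to $a_j$ for every $j$.

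The main obstacle is the remaining step: showing that $\{\str H(\str A) : \str A \in \CC\}$ is nowhere dense as a graph class. The point is that a long subdivided clique inside $\str H(\str A)$ can be ``shortcut'' down into $\gaif(\str A)$. Since for large $n$ every branch vertex of an $r$-subdivision of $K_n$ has degree exceeding the bounded degree of the added vertices, all branch vertices must be original vertices; replacing every two-edge detour through an added vertex $c_{\bar a}$ or $t^j_{\bar a}$ by the corresponding Gaifman edge between its original-vertex neighbors yields an $\lceil r/2\rceil$-subdivision of $K_n$ (as a topological minor) in $\gaif(\str A)$. Since $\gaif(\CC)$ is nowhere dense, this bounds $n$ for each fixed $r$, so the incidence class is nowhere dense; this closure-under-augmentation property is exactly the content of \cite[Proposition~5.7]{sparsity-book}, which I would cite to avoid reproving it. Combining the pieces, the augmented incidence graphs form a nowhere dense, hence monadically stable, class of (colored) graphs, and therefore the interpretable class $\CC$ is monadically stable as well.
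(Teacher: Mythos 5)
Your proof follows essentially the same route as the paper's: both pass to a colored incidence-graph encoding of each structure, invoke \cite[Proposition~5.7]{sparsity-book} to transfer nowhere density from the Gaifman graphs to the incidence graphs, apply the Podewski--Ziegler result \cite{podewski1978stable} that nowhere dense (colored) graph classes are monadically stable, and recover $\CC$ by a fixed first-order interpretation. The only difference is cosmetic: you encode tuple positions by colored subdivision vertices of bounded degree, where the paper uses the plain bipartite incidence graph with labelled edges.
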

\begin{proof}
    The \emph{incidence graph} $I(\str A)$ of a relational structure $\str A$ 
    is the bipartite graph where one part is $V(\str A)$, and the other part is $\coprod_{R}R_{\str A}$ -- the disjoint union of all relations in the language of $\str A$ -- and a tuple $\tup a\in R_{\str A}$ is adjacent in $I(\str A)$ to all elements in $V(\str A)$ which occur in $\tup a$.
    Let $I'(\str A)$ denote the binary relational structure obtained from $I(\str A)$ by labelling the vertices and edges, so that each 
tuple $(a_1,\ldots,a_k)\in R_{\str A}$ is labelled by the unary predicate $U_R$,
and each edge connecting $\tup a$ with an element $a\in V(\str A)$ 
is labelled by the set $\setof{i\in \set{1,\ldots,k}}{a_i=a}$.
    \cite[Proposition 5.7]{sparsity-book},
     establishes that if $\cal G(\CC)$ is nowhere dense then so is $\cal I(\CC)\coloneqq\setof{I(\str A)}{\str A\in\CC}$.
     Since the class $\cal I'(\CC)\coloneqq\setof{I'(\str A)}{\str A\in\CC}$
     is obtained from the nowhere dense class $I(\CC)$ by labelling the vertices and edges, 
     it follows from the result of \cite[Corollary 10 and Remark]{podewski1978stable}
     that  $\cal I'(\CC)$ is monadically stable.
    It is easy to see that class $\cal I'(\CC)$ transduces $\CC$, 
    see \cite[Section 6]{nowhere-dense-structures-icalp}.
    Therefore, $\CC$ is monadically stable.
\end{proof}

The following lemma yields the implication \eqref{it:nds2}$\rightarrow$\eqref{it:nds3} in \Cref{thm:nowhere-dense-struct}.
\begin{lemma}
    \label{lem:nip-monotone-to-weakly sparse}
    Let $\Cc$ be a monotone and monadically dependent class of structures in a finite relational language.
    Then the class of Gaifman graphs of structures from $\Cc$ is weakly sparse.
\end{lemma}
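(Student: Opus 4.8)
The plan is to argue by contraposition: assuming that $\gaif(\CC)$ is \emph{not} weakly sparse, I will show that $\CC$ is already not dependent, and hence not monadically dependent (the trivial monadic lift, adding no predicates, witnesses that monadic dependence implies plain dependence). So suppose that for every $t\in\N$ there is a structure $\str A_t\in\CC$ whose Gaifman graph contains $K_{t,t}$ as a subgraph, on parts $\set{a_1,\dots,a_t}$ and $\set{b_1,\dots,b_t}$. Fix a target $n\in\N$ and choose $t$ large (to be specified).

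First I would record, for each biclique edge $a_ib_j$, a witness: since $a_i$ and $b_j$ are distinct and adjacent in $\gaif(\str A_t)$, they co-occur in some tuple $\tup c_{ij}$ of some relation $R_{ij}$, with $a_i$ in some coordinate $p_{ij}$ and $b_j$ in some coordinate $q_{ij}\neq p_{ij}$. Colouring the pair $(i,j)$ by the finite datum $(R_{ij},p_{ij},q_{ij})$ (finiteness uses that the language has finitely many relation symbols of bounded arity) and applying the bipartite Ramsey lemma \Cref{thm:bi-ramsey-tuples} with $\ell=1$, I pass to a monochromatic sub-biclique of size at least $U(t)$ on each side. Choosing $t$ so that $U(t)\ge 2^n$ and relabelling, I obtain elements $a_1,\dots,a_n$ and $b_J$ for $J\subseteq[n]$, a single relation $R$, and fixed coordinates $p\neq q$, such that each witness $\tup c_{iJ}$ lies in $R$ with $a_i$ in coordinate $p$ and $b_J$ in coordinate $q$.

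The key observation is that these witnesses are pairwise distinct: if $\tup c_{i'J'}$ had $a_i$ in coordinate $p$ and $b_J$ in coordinate $q$, then $a_{i'}=a_i$ and $b_{J'}=b_J$, forcing $(i',J')=(i,J)$. Using monotonicity, I then form the weak substructure $\str A'$ of $\str A_t$ on the same domain in which every relation is emptied except that $R$ is set to $\setof{\tup c_{iJ}}{i\in J}$; since $R_{\str A'}\subseteq R_{\str A_t}$ for every relation symbol, $\str A'\in\CC$. Finally, let $\phi(x;y)$ be the existential formula asserting that some tuple of $R$ has $x$ in coordinate $p$ and $y$ in coordinate $q$. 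By the distinctness observation, the unique tuple of $R_{\str A'}$ that could place $a_i$ at $p$ and $b_J$ at $q$ is $\tup c_{iJ}$, so $\str A'\models\phi(a_i;b_J)$ holds exactly when $\tup c_{iJ}\in R_{\str A'}$, i.e. exactly when $i\in J$. As $n$ was arbitrary, $\phi$ has the independence property in $\CC$, so $\CC$ is not dependent, contradicting monadic dependence.

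The only genuinely delicate point is the last equivalence: after deleting tuples one must be certain that $\phi$ neither loses an edge it should keep nor acquires a spurious one. Both are guaranteed by emptying all relations except the chosen copies of $R$, together with the distinctness of the witnessing tuples, which is in turn the reason the homogenisation of the coordinate positions via \Cref{thm:bi-ramsey-tuples} is essential. Note that no unary predicates are required in this argument, so we directly contradict plain dependence, which is a priori stronger than needed.
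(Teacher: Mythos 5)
Your proof is correct, and it follows the same skeleton as the paper's argument (extract a large biclique from the Gaifman graph, homogenise the witnessing tuples with the bipartite Ramsey lemma, prune the relations using monotonicity, and exhibit an IP configuration), but your endgame is genuinely simpler. The paper, after the first Ramsey application, performs a \emph{second} application of \Cref{thm:bi-ramsey-tuples} with $\ell=2$ to homogenise the atomic types of pairs of witness tuples, and then argues that no principal vertex $a_{i'}$ or $c_{j'}$ can occur as an internal element of another witness tuple $\tup b_{ij}$; this is needed because the paper wants to control the \emph{Gaifman graph} of the pruned structure exactly, so that the class of Gaifman graphs realises every bipartite graph and is therefore not monadically dependent. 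You sidestep all of this by having your formula $\phi(x;y)$ read off the two fixed coordinates $p$ and $q$ of a tuple of $R$ directly in the structure: since every kept witness $\tup c_{i'J'}$ carries $a_{i'}$ at position $p$ and $b_{J'}$ at position $q$, and the principal vertices on each side are pairwise distinct, the only kept tuple that can match $(a_i,b_J)$ at those positions is $\tup c_{iJ}$ itself, regardless of what appears in the remaining coordinates. This buys you a shorter argument (one Ramsey application instead of two, no containment analysis) at the cost of contradicting dependence of $\CC$ rather than of $\gaif(\CC)$ --- which is all the lemma needs.

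One cosmetic point worth tightening: the homogenised datum $(R,p,q)$, and hence the formula $\phi$, is produced separately for each $n$, so a priori it could vary with $n$. Since the language is finite with bounded arities there are only finitely many candidates, so by pigeonhole a single $\phi$ witnesses shattered configurations of arbitrarily large (hence all) sizes; you should say this explicitly, as the definition of (non-)dependence quantifies over a single formula. The same implicit step is present in the paper's proof, so this is not a substantive gap.
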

\begin{proof}
    Assume that for some large $t\in \N$ there is a structure $\str M \in \Cc$ such that the biclique $K_{t, t}$ is an induced subgraph of the Gaifman graph of $\str M$.
    Let us denote elements on one side of this $K_{t, t}$ by $a_1, \ldots, a_t$ and on the other by $c_1, \ldots, c_t$.
    Every edge of this $K_{t, t}$ is witnessed by a tuple which is in some relation in the language.
    By applying \Cref{thm:bi-ramsey-tuples} (for $\ell=1$), and going to a subgraph of $K_{t, t}$ we can assume there is one relation $R$ such that for every $a_i, c_j$ we have a tuple $\tup b_{ij}$ such that $R(a_i, \tup b_{ij}, c_j)$ holds (possibly after permuting arguments of $R$).
    Since our class of structures is monotone we can assume that $a_i \tup b_{ij} c_j$ are the only tuples on which relation $R$ holds and all the other relation don't hold on any tuple.
    Again, by using \Cref{thm:bi-ramsey-tuples} (for $\ell=2$) and going to a subgraph of $K_{t, t}$ we get that the atomic type of the tuple $(a_i, \tup b_{ij}, c_j, a_{i'}, \tup b_{i'j'}, c_{j'})$ for any $i, i', j, j'$ depends only on  $\otp(i, i')$ and  $\otp(j, j')$.

    Clearly, we get that $a_{i'}$ can't be an element of $\tup b_{ij}$ for $i \neq i'$.
    Indeed, if $a_{i'} = \tup b_{ij}^{(k)}$ for $i < i'$ then $a_{i'} = \tup b_{1j}^{(k)}$ and for any $\ell > 1$ we have $a_{\ell} = \tup b_{1j}^{(k)}$ implying that nearly all $a$'s are equal.
    The case $i' < i$ is analogous.
    Similarly, $c_{j'}$ can't be an element of $\tup b_{ij}$ for $j \neq j'$.

    As a result, we can observe that for any $i, j$ the only tuple in the relation $R$ that contains both $a_i$ and $c_j$ is $a_i \tup b_{ij} c_j$.
    Indeed, any other tuple $a_{i'} \tup b_{i'j'} c_{j'}$ for $i \neq i'$ doesn't contain $a_i$ and similarly for $c_j$.
    Therefore, by removing certain tuples from the relation $R$ we can get an arbitrary subgraph of $K_{t, t}$ thus contradicting the class of Gaifman graphs of structures from $\Cc$ being monadically dependent.
\end{proof}

\begin{proof}[Proof of \Cref{thm:nowhere-dense-struct}]
    \eqref{it:nds1}$\rightarrow$\eqref{it:nds2}.
    As $\CC$ is nowhere dense, so is the monotone closure $\CC'$ of $\CC$. Therefore, $\CC'$ is monadically stable by 
    \Cref{lem:nd is stable}.

    \eqref{it:nds2}$\rightarrow$\eqref{it:nds3}. It is enough to prove that if $\CC$ is monotone and monadically stable, then $\GG\coloneqq\gaif(\CC)$ is weakly sparse. This is proved in \Cref{lem:nip-monotone-to-weakly sparse}.
    
The implication \eqref{it:nds3}$\rightarrow$\eqref{it:nds4} is immediate.
    Finally, for the implication \eqref{it:nds4}$\rightarrow$\eqref{it:nds1}, note that if $\CC$ is monadically dependent then so is $\GG$,
    since $\GG$ interprets in $\CC$ via a simple interpretation.
    As $\GG$ is weakly sparse and monadically dependent,
    it follows that $\GG$ is nowhere dense, by \Cref{lem:nip-weakly sparse-to-nowhere-dense}.
    \end{proof}




\subsection{Proof of \Cref{thm:forking-nowhere-dense-structures}}

\fndws*
\begin{proof}
    First, assume that $u, v$ are at distance $r < \infty$ in $\gaif(\str N) \setminus \gaif(\str M)$.
    Since $\gaif(N)$ is nowhere dense, by \cite[Lemma 1.1]{ivanov} there is a finite set of vertices $S \subset M$ which are reachable from $v$ by a path of length at most $r$ with all the internal vertices outside of $M$.
    Consider a formula $\phi(x, y)$ with parameters in $S$ which stipulates ``$x$ and $y$ are connected by a path of length at most $r$ which avoids all vertices from $S$''.
    Clearly $\phi(u, v) \in \tp(u / Mv)$ but for every element $w \in M$ we have $\str N \not \models \phi(w, v)$.
    Therefore $\tp(u / Mv)$ is not finitely satisfiable in $M$ and $u \nfind {\str M} v$.

    For the other direction assume that $u, v$ are in different connected components of $\gaif(\str N) \setminus \gaif(\str M)$ and take any formula $\phi(u, v) \in \tp(u / Mv)$ with parameters from $M$.
    Denote the quantifier rank of $\phi$ by $q$.

    Pick finite sets $S_u, S_v \subset M$ of vertices reachable from $u$ and $v$ respectively by paths of length at most $7^q$ with all internal vertices outside of $M$.
    Let $\str N'$ be the structure obtained from $\str N$ by isolating all the vertices in $S_u \cup S_v$ and adding for every relation $R$ in the language of $\str N$ all the relations of arity smaller than $R$ that are defined by atomic formulas that use $R$ and constants for $u$ and $v$.
    Informally speaking, these are relations that are obtained by taking $R$ and placing $u$ and $v$ on some positions.
    Clearly, $M$ induces an elementary substructure of $\str N'$, which we will call $\str M'$.
    We can write a formula $\psi(x,y)$ with the parameters from $M$ occurring in $\phi$ and additionally $S_u \cup S_v$ and of the same quantifier rank $q$ as $\phi$, such that
    \begin{align}\label{eqeq}
        \str N \models \phi(x, y) \iff \str N' \models \psi(x, y).
    \end{align}
        
    In $\gaif(\str N')$ the distance between $u$ and $v$ is greater than $7^q$ and the same is true for each of $u$, $v$ and any element of $M$. By \Cref{cor:fo-locality} there is a formula $\alpha(x)$ (without parameters)
    such that for all $w$ whose distance to all parameters in $\psi$ and to $v$ is greater than $7^q$, we have that 
    \begin{align}\label{eq:some-equivalence}
    \str N'\models \alpha(w)\iff \str N'\models \psi(x,v).    
    \end{align}
    
    In particular, $\str N'\models \alpha(u)$.
    Since $\tp_{\str N'}(u/M)$ is finitely satisfiable in $\str M'$ there is an element $w \in M$ such that the distance between $w$ and any parameter in $\psi$ is greater than $7^q$, and both $\str N'\models \alpha(w)$.
    Since $w \in M$ the distance between $w$ and $v$ in $\str N'$ is also greater than $7^q$, by  \eqref{eq:some-equivalence} we get that $\str N' \models \psi(w, v)$, 
    which implies $\str N\models \phi(w,v)$ by \eqref{eqeq}.
    Therefore, $\tp(u/Mv)$ is finitely satisfiable in $\str M$ and $u \find {M} v$.
\end{proof}

\section{Omitted proofs from \Cref{sec:flip-fork}}
\label{app:flip-fork}
In this appendix, we prove \Cref{lem:separation-higher-arity}. We start with the following simple lemma.
\flipsubcommute*
\begin{proof}
    Take a formula $\phi(\tup x)$ in the language of $\str N'$ without parameters.
    Since every relation in $\str N'$ is definable in $\str N$ with parameters from $S$, replace every relation that appears in $\phi(\tup x)$ with its definition in $\str N$, thus obtaining a formula $\psi(\tup x)$ in the language of $\str N$.
    From the construction, we can easily see
    \[
        \str N' \models \phi(\tup n) \iff \str N \models \psi(\tup n) \qquad \text{ for every $\tup n \in N^{\tup x}$}
    \]
    but since $S \subseteq M$, then we also have
    \[
        \str N'[M] \models \phi(\tup m) \iff \str M \models \psi(\tup m) \qquad \text{ for every $\tup m \in M^{\tup x}$}.
    \]
    Moreover, as $\str M \preceq \str N$ then
    \[
        \str M \models \psi(\tup m) \iff \str N \models \psi(\tup m) \qquad \text{ for every $\tup m \in M^{\tup x}$}.
    \]
    As a combination of these implications we get
    \[
        \str N'[M] \models \phi(\tup m) \iff \str N' \models \phi(\tup m) \qquad \text{ for every $\tup m \in M^{\tup x}$},
    \]
    so the statement follows.
\end{proof}

The rest of this section is devoted to a proof of
 \Cref{lem:separation-higher-arity}.
The proof is split into two parts.
First, we show that every ball in the Gaifman graph that has an empty intersection with a model, has bounded number of atomic types over that model.
Then, we show that we can separate elements of every set with a finite number of atomic types over a model by a flip which is definable with parameters from the model.

\subsection{Finite number of types of a ball}

We start with the following lemma which describes a simple obstruction for a structure to be monadically dependent.
\begin{lemma}
    \label{lem:monadic-stability-obstruction}
    Let $\str N$ be a structure and let $\phi(\tup x, y)$ be a formula in a language $\cal L^+$ expanding the language of $\str N$ by additional unary predicates.
    If for every integer $k$ there is a sequence of tuples $\tup m_1, \ldots, \tup m_k \in N^{\tup x}$, a set of elements $A_k = \set{\tup a_{ij} \in N: 1 \le i < j \le k}$, and a monadic lift $\str N^+$ of $\str N$ in the language $\cal L^+$ such that $\str N^{+} \models \phi(\tup m_k, a_{ij}) \iff k \in \set{i, j}$ then $\str N$ is not monadically dependent.
    Moreover, if $\phi$ is an existential formula then $\str N$ is not existentially monadically dependent. 
\end{lemma}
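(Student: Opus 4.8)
The plan is to show that the hypothesised codings of the vertex--edge incidences of the complete graphs $K_k$ are already ``universal'' enough to witness the independence property in monadic lifts, for one fixed formula; this directly contradicts monadic dependence. The heuristic is that in $K_k$ every pair of vertices is joined by a unique edge, so one extra unary predicate marking a chosen set of edges lets a single existential formula decode an arbitrary membership relation between ``ground'' vertices and ``subset'' vertices.

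Concretely, I would fix $n\in\N$ and produce a monadic lift of $\str N$ in which a fixed formula $\psi$ admits an independence-property configuration of size $n$ (in the sense of the definition of dependence). Applying the hypothesis with $k:=n+2^n$ yields tuples $\tup m_1,\ldots,\tup m_k$, elements $a_{ij}$ for $1\le i<j\le k$, and a monadic lift $\str N^+$ over $\cal L^+$ with $\str N^+\models\phi(\tup m_l,a_{ij})\iff l\in\set{i,j}$; note the $a_{ij}$ are automatically pairwise distinct, as distinct pairs induce distinct incidence patterns. I would then fix an injection $S\mapsto j_S$ from the subsets of $[n]$ into $\set{n+1,\ldots,k}$ (possible since $k-n=2^n$), using $\tup m_1,\ldots,\tup m_n$ as the ground set and $\tup m_{j_S}$ as the parameter coding $S$. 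The marking is the crux: expand $\str N^+$ by a fresh unary predicate $U$ interpreted as $\setof{a_{i,j_S}}{i\in[n],\,S\subseteq[n],\,i\in S}$, obtaining a monadic lift $\str N^+_n$ of $\str N$, and set
\[
    \psi(\tup x;\tup y)\;:=\;\exists z\,\bigl(\phi(\tup x,z)\land\phi(\tup y,z)\land U(z)\bigr).
\]

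The verification I anticipate is that $\str N^+_n\models\psi(\tup m_i;\tup m_{j_S})\iff i\in S$ for all $i\in[n]$ and $S\subseteq[n]$: any witness $z$ satisfies $U(z)$, hence equals some $a_{i',j_{S'}}$ with $i'\in S'$, and the conjuncts $\phi(\tup m_i,z)$, $\phi(\tup m_{j_S},z)$ force $i,j_S\in\set{i',j_{S'}}$; since $i,i'\le n<j_S,j_{S'}$ this pins down $i=i'$, $S=S'$ and $z=a_{i,j_S}$, which lies in $U$ exactly when $i\in S$. Thus $(\tup m_i)_{i\in[n]}$ and $(\tup m_{j_S})_{S\subseteq[n]}$ form the required configuration for $\psi$ (with $S$ playing the role of the index set $J$). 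Taking $\CC^+:=\setof{\str N^+_n}{n\in\N}$, a class of monadic lifts of $\str N$ in the common language $\cal L^+\cup\set U$, the single formula $\psi$ then has arbitrarily large such configurations across $\CC^+$, so $\CC^+$ is not dependent and $\str N$ is not monadically dependent. For the moreover part, if $\phi$ is existential then, writing $\phi(\tup x,y)=\exists\tup z\,\theta$ with $\theta$ quantifier-free and pulling the quantifiers out after renaming variables apart, $\psi$ is equivalent to an existential formula, so the same $\CC^+$ shows $\str N$ is not existentially monadically dependent.

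The one genuinely delicate point, which I would flag explicitly, is that the quantifier $\exists z$ in $\psi$ ranges over all of $N$ and not only over the designated edge elements; this is precisely why $U$ is made to mark exactly the relevant edges, so that the conjunct $U(z)$ confines every witness to the intended set and no spurious element of $N$ can interfere. Everything else is a routine check, and no elementary extension of $\str N$ is needed, since all witnesses already live in $N$.
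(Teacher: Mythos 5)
Your proposal is correct and follows essentially the same route as the paper: the paper likewise introduces one fresh unary predicate $U$ selecting a subset of the elements $a_{ij}$ and uses the identical formula $\exists z.\, U(z)\land\phi(\tup x,z)\land\phi(\tup y,z)$ to decode an arbitrary (bipartite) pattern on the $\tup m_i$. The only difference is presentational: the paper states that this interprets an arbitrary graph on the $\tup m_i$ and leaves the extraction of the independence-property configuration implicit, whereas you instantiate it directly with $k=n+2^n$ and the powerset coding.
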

\begin{proof}
    Fix an integer $k$ and take a unary lift $\str N^+$ of $\str N$ that interprets all the unary predicates in the language of $\phi$ such that $\str N^{+} \models \phi(\tup m_k, a_{ij}) \iff k \in \set{i, j}$. Observe that by adding one more unary predicate $U$ to $\str N^+$ that selects a subset $B_k \subseteq A_k$ (thus obtaining a structure $\str N^{++}$) we can interpret arbitrary graph on $k$ vertices with the vertex set given by tuples $\tup m_1, \ldots, \tup m_k$.
    Indeed, for the formula \[\psi(\tup x, \tup y) \equiv \exists z.  U(z) \land \phi(\tup x, z) \land \phi(\tup y, z)\] we have $\str N^{++} \models \psi(\tup m_i, \tup m_j) \iff a_{ij} \in B_k$ (assuming that $i < j$).
    Since $k$ can be arbitrarily large and $\psi$ does not depend on $k$, it follows that $\str N$ is not (existentially) monadically dependent.
\end{proof}

\begin{restatable}{lemma}{fintypes}
    \label{lem:finite-numer-of-types}
    Let $\str M \preceq \str N$ be two existentially monadically dependent structures in a relational language and let $\alpha(\tup w)$ be an atomic formula
    which is stable in $\str N$.
    Fix $r,k\in\N$,
    and let $a \in N \setminus M$ be such that $B_{\str N}^r(a) \cap M = \emptyset$.
    Assume that for every partition $\tup w=x\cup \tup y\cup \tup z$ with $|x|=1$ and $|\tup y| < k$ we have
    \[
        \str N \models \neg \alpha(v, \tup s; \tup s') \; \text{ for all } \; v \in B_{\str N}^r(a), \tup s \in (N \setminus M)^{\tup y}, \tup s' \in M^{\tup z}.
    \]
    Then, for every partition $\tup w=x\cup \tup y\cup \tup z$ with $|x|=1$ and $|\tup y| = k$,  
    \begin{align}\label{eq:finite-types}
        |\Types{\alpha}((B^r(a)^x \times (N \setminus M)^{\tup y})/M^{\tup z})|\ <\ \infty.    
    \end{align}
\end{restatable}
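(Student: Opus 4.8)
The plan is to argue by contradiction. Suppose that for some partition $\tup w = x \cup \tup y \cup \tup z$ with $|x|=1$ and $|\tup y| = k$ the quantity in \eqref{eq:finite-types} is infinite. Viewing $\alpha$ as the edge relation of the bipartite graph with parts $A = B^r(a)^x \times (N\setminus M)^{\tup y}$ and $B = M^{\tup z}$, infinitude of $\Types{\alpha}(A/B)$ lets me invoke \Cref{thm:folklore}: I obtain a relation ${\sim} \in \set{=,\neq,\le,\ge}$, pairs $(v_i,\tup s_i)\in A$ and parameter tuples $\tup m_i \in B$, all pairwise distinct, with $\str N \models \alpha(v_i,\tup s_i;\tup m_j) \iff i \sim j$. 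Since $\alpha$ is stable in $\str N$, an order pattern ${\sim}\in\set{\le,\ge}$ would exhibit an infinite half-graph of $\alpha$ (with coordinates split as $x\tup y$ versus $\tup z$) and is therefore excluded; hence ${\sim}$ is $=$ or $\neq$.

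First I would extract from the minimality hypothesis the \emph{essentiality} of the non-model coordinates: if $\str N \models \alpha(v,\tup s;\tup m)$ with $v\in B^r(a)$, $\tup s\in (N\setminus M)^{\tup y}$ and $\tup m \in M^{\tup z}$, then replacing any single coordinate of $\tup s$ by an element of $M$ falsifies $\alpha$, since otherwise the resulting tuple would have one ball coordinate and fewer than $k$ non-model coordinates, contradicting the assumption for $|\tup y| < k$. Next, as $\alpha$ is a single atom, each witnessed instance places $v_i$, the coordinates of $\tup s_i$, and those of $\tup m_i$ in a common relation tuple, so these elements are pairwise adjacent in $\gaif(\str N)$; together with $B^r(a)\cap M = \emptyset$ this pins down the local shape of each witness. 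I would then apply \Cref{thm:bi-ramsey-tuples} to homogenize the atomic types of all \emph{mixed} tuples built by recombining coordinates drawn from distinct witnesses together with the parameters, so that their truth under $\alpha$ depends only on the order types of the indices involved. After this reduction I may assume the witness tuples $\tup s_i$ are pairwise disjoint and disjoint from the $v_i$ (a coordinate shared uniformly across all witnesses can be frozen and absorbed, lowering the number of varying non-model coordinates).

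With disjoint, homogeneous witnesses in hand, the crux is to manufacture, for every $K$, the clique-incidence configuration required by \Cref{lem:monadic-stability-obstruction}: tuples $\tup m_1,\dots,\tup m_K$, elements $a_{ij}$ for $i<j$, and an existential formula $\phi$ in a monadic lift $\str N^+$ of $\str N$ with $\str N^+ \models \phi(\tup m_\ell, a_{ij}) \iff \ell \in \set{i,j}$. The essentiality of coordinates is exactly what prevents the matching (or co-matching) from being a benign, monadically dependent pattern: it forces the mixed instances $\alpha(v_{i},\tup s')$, where $\tup s'$ recombines coordinates of distinct witnesses, to be genuinely sensitive to which witnesses contribute. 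Analyzing these mixed instances under the homogenization should yield one of two contradictory outcomes: either some order type produces an infinite half-graph of $\alpha$, contradicting its stability, or one reads off the clique-incidence above, contradicting existential monadic dependence of $\str N$ (equivalently of $\str M$; cf.\ \Cref{lem:closure}). Either way the assumption of infinitely many types collapses, giving \eqref{eq:finite-types}. I expect the main obstacle to be precisely this last step: turning the one-dimensional matching together with the essentiality of the $k$ non-model coordinates into a two-dimensional incidence that encodes arbitrary graphs. The Ramsey homogenization of the mixed instances, combined with careful bookkeeping of which coordinates remain essential after recombination, is what should make either the encoding or the forbidden half-graph materialize.
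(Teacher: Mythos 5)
Your opening is sound and matches the paper: assume infinitely many $\alpha$-types, invoke \Cref{thm:folklore}, discard the order patterns $\le,\ge$ by stability of $\alpha$, and observe that the hypothesis for $|\tup y|<k$ forces every non-model $\tup y$-coordinate of a witness to be ``essential''. But the proof stops exactly where the real work begins, and you say so yourself: you do not show how to turn the one-dimensional matching $\alpha(v_i,\tup s_i;\tup m_j)\iff i=j$ into the two-dimensional incidence $\phi(\tup m_\ell,a_{ij})\iff \ell\in\set{i,j}$ demanded by \Cref{lem:monadic-stability-obstruction}. The paper's mechanism for this is a concrete construction that is absent from your proposal: fix $K$, let $\tup n$ be the concatenation of all $K$ witnesses together with shortest Gaifman paths $\tup p_i$ from $a$ to each $\tup v_i[x]$ (these exist and avoid $M$ because $B^r_{\str N}(a)\cap M=\emptyset$), and use \Cref{lem:morley-sequence} to plant $L=\binom{K}{2}$ copies $\tup n_0,\ldots,\tup n_{L-1}$ of this entire configuration \emph{inside $M$}, each with the same atomic type over the parameter set $A$ and mutually indiscernible. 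The copy $a_j$ of $a$ plays the role of the edge-element for the pair $f(j)\in\binom{[K]}{2}$; unary predicates mark, within each copy, only the two witness tuples $\tup v_{(j,i)}$ with $i\in f(j)$ and the paths leading to them, and an existential formula walks the marked path and then tests $\alpha$. The essentiality observation is used precisely to prove that no ``mixed'' tuple recombining coordinates from distinct copies can satisfy $\alpha$ with any $\tup m_q$, which is what kills spurious incidences. Without the duplication step there is simply no second dimension to encode arbitrary graphs on, and your hope that ``either the encoding or the forbidden half-graph materialize'' is not a proof.

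Two further points. First, your proposed reduction ``a coordinate shared uniformly across all witnesses can be frozen and absorbed, lowering the number of varying non-model coordinates'' is unjustified: the hypothesis for $|\tup y|<k$ only applies when the removed coordinate ranges over $M^{\tup z}$, so freezing a coordinate at a fixed element of $N\setminus M$ does not put you in the scope of the induction hypothesis, and the disjointness you want does not follow. (The paper never needs this disjointness for the original witnesses; disjointness of the \emph{copies} $\tup n_j$ from each other and from $\tup n$ is derived from the indiscernibility properties of the Morley sequence.) Second, your appeal to \Cref{thm:bi-ramsey-tuples} to homogenize mixed recombinations of the original witnesses is not the right tool here: homogenization alone cannot create the $\binom{K}{2}$-fold incidence structure, and the paper obtains all the uniformity it needs from the two indiscernibility clauses of \Cref{lem:morley-sequence}. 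The gap is therefore not a technicality but the central idea of the argument.
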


\begin{proof}
    Towards a contradiction, assume that \eqref{eq:finite-types} fails 
     for some partition 
    $\tup w=x\cup \tup y\cup \tup z$ with $|x|=1$ and $|\tup y| = k$.    
    By \Cref{thm:folklore} we can find 
    tuples 
    $\tup v_1, \tup v_2, \ldots \in B^r_{\str N}(a)^x \times (N \setminus M)^{\tup y}$
    and $\tup m_1,\tup m_2,\ldots \in M^{\tup z}$ 
    such that 
    \begin{align}
        \str N\models\alpha(\tup v_i;\tup m_j)\quad\iff\quad i \sim j,\quad\text{for all }i,j\in\N,\label{eq:matching}    
    \end{align}
    where ${\sim}\in\set{=,\neq,\le,\ge}$.
    The cases ${\sim}\in\set{\le,\ge}$ are impossible as $\alpha$ is stable, and the cases ${\sim}\in\set{=,\neq}$ are similar, so we assume $\sim$ is $=$.
    For $i\in\N$, 
      let $\tup p_i$ be the tuple of elements on a shortest path with endpoints $a$ and  
      $\tup v_i[x]\in B^r_{\str N}(a)$
      in the Gaifman graph of $\str N$.
    If the path is of length less that $r$ we extend it with multiple copies of $a$, so we can assume each tuple $\tup p_i$ has length exactly $r$.
    
    Fix a large integer $K$ and denote by $\tup n$ the tuple obtained by concatenating all the tuples $\tup v_i$ and $\tup p_i$, for all $i \in [K]$.
    Denote by $A$ the set of all elements in the tuples $\tup m_1, \ldots, \tup m_K$.
    Apply \Cref{lem:morley-sequence}  to $\tup n$ and $A \subset M$, yielding tuples $\tup n_0, \tup n_1, \tup n_2, \ldots$
    of elements of $M$, of the same length, and atomic type over $A$ as $\tup n$.
    For $j\in\N$ and $i\in [K]$, 
    let $\tup v_{(j, i)}\subseteq \tup n_j$ denote the sub-tuple corresponding 
    to the sub-tuple 
    $\tup v_i\subseteq \tup n$,
     let $\tup p_{{(j,i)}}\subseteq \tup n_j$ 
    denote the sub-tuple corresponding to the sub-tuple $\tup p_i\subseteq \tup n$,
    and let $a_j$ denote the element of $\tup n_j$ that corresponds to the element $a$ of $\tup n$.

    \begin{claim}
    \label{cl:mixed-tuple}
        Let $\tup u\in N^{x\cup \tup y}$ be such that
        \begin{itemize}
            \item for every $z \in x\cup \tup y$ there is some 
            some $i_z\in [K]$ and $j_z\in\N$ with $\tup u[z]=\tup v_{(j_z, i_z)}[z]$; and
            \item $\tup u \not \subseteq \tup n_j$ for every $j\in\N$.
        \end{itemize}
        Then, for every $\tup m_q$ with $q \in [K]$ we have $\str N \not \models \alpha(\tup u, \tup m_q)$.
    \end{claim}
    \begin{claimproof}
        By the second item of \Cref{lem:morley-sequence}, the concatenation of the tuples $\tup n_{j_z}$, for $z\in x\cup\tup y$, has the same atomic type over $A$, regardless of the order of the concatenation.
        We can assume that $j_x\ge j_{z}$ for $z\in \tup y$, and denote $\ell:=j_x$.
        (Otherwise, we can  
        replace $\tup u$ with a different tuple $\tup u'$ satisfying the assumptions of the claim, and such that $\atp(\tup u'/A)=\atp(\tup u/A)$, and for which this assumption holds. Then $\str N\not\models \alpha(\tup u',\tup m_q)$ 
        implies $\str N\not\models \alpha(\tup u,\tup m_q)$, as $\tup m_q\subseteq A$.)

        Then, by the first item of \Cref{lem:morley-sequence} we know that $\tup n_{\ell}$ and $\tup n$ have the same atomic type over $A \cup n_{0} \cup \ldots \cup \tup n_{\ell - 1}$.
        
        Consider the tuple $\tup u'$ which is obtained from $\tup u$ by replacing all the components of $\tup u$ that come from $\tup n_\ell$ with the respective components that come from $\tup n$ instead.
        Observe that $\tup u'[x] \in B^r_{\str N}(a)$ and, because $\tup u\not\subseteq \tup n_\ell$, less than $k=|\tup y|$ other components of $\tup u'$ are in $N \setminus M$.
        Therefore, by the assumption in the statement of the lemma, $\str N \not \models \alpha(\tup u', \tup m_q)$ for every $q \in [K]$.
        As $\tp^{\alpha}(\tup u / A) = \tp^{\alpha}(\tup u' / A)$ we get that $\str N \not \models \alpha(\tup u, \tup m_q)$ for every $q \in [K]$.
    \end{claimproof}

    Denote $L={K\choose 2}$,
    and fix a bijection $f\from[L]\to {[K]\choose 2}$.

    \begin{claim}
        There is a monadic lift $\str N^+$ of $\str N$ and an existential formula $\phi(u, \tup z)$ (independent of $K$),
        such that for every 
        $j\in [L]$ and 
        $i\in [K]$
        we have 
        $$\str N \models \phi(a_j, \tup m_i)\quad\iff\quad i\in f(j).$$
    \end{claim}
    \begin{claimproof}
        The monadic lift $\str N^+$ expands $\str N$ by the following unary predicates.
    For $d\in [r]$,
    the predicates $P_d, P_d'$ are defined by: 
    \[
        P_d \coloneqq \setof{\tup p_{(j,i)}[d]}{j\in [L], i, i'\in f(j), i < i'},
    \]\[
        P_d' \coloneqq \setof{\tup p_{(j,i)}[d]}{j\in [L], i, i'\in f(j), i > i'}.
    \]
    For each $z\in x\cup\tup y$, the predicates
    $S_z, S_z'$ are defined as follows:
    \[
        S_z \coloneqq \setof{\tup v_{(j,i)}[z]}{j\in[L],i, i'\in f(j), i < i'},
    \]\[
        S_z' \coloneqq \setof{\tup v_{(j,i)}[z]}{j\in[L],i, i'\in f(j), i > i'}.
    \]
    We now define the formula $\phi(u,\tup y)$.

    Let $\beta(u,x)$ be the existential formula expressing that there exist $r$ vertices marked with consecutive predicates $P_1, \ldots, P_r$ which form a path in the Gaifman graph joining $u$ and $x$ (maybe with some copies of $u$ at the beginning), and $x$ satisfies the predicate $S_x$.
    Observe that  for any $j\in[L]$ and $v\in N$ we have that $\str N^+\models\beta(a_j,v)$ if and only if $v= \tup v_{(j,i)}[x]$, for some $i, i'\in f(j), i < i'$.
    This is because no internal node of the path $\tup p_{(j, i)}$ is adjacent in the Gaifman graph of $\str N$ to any other $\tup n_h$ for $h \neq j$.
    Indeed, for $h < j$ we know that the $\atp(\tup n_j / A \cup \tup n_h) = \atp(\tup n / A \cup \tup n_h)$.
    Since the internal nodes of $\tup p_i$ are at distance at least $2$ from $M$, then there are no edges in the Gaifman graph between $\tup p_i$ and $\tup n_h$.
    Therefore, the same is true for $\tup p_{(j, i)}$.
    The case $h > j$ follows from the fact that $\atp(\tup n_j\tup n_h) = \atp(\tup n_h\tup n_j)$.

 Next, 
 define 
 $$\gamma(x,\tup z):=\exists \tup y.\alpha(x,\tup y;\tup z)\land S_x(x)\land \bigwedge_{y\in \tup y}S_y(y).$$
    We argue that for $j\in [L]$ and $i\in [K]$
    we have 
    $$\str N^+\models \gamma(\tup v_{(j,i)}[x],\tup m_i)\quad\iff\quad i, i'\in f(j), i < i'.$$
    The right-to-left implication holds, as we 
    can pick the tuple $\tup v_{(j,i)}[\tup y]$ to witnesses the existential quantifiers in $\gamma$.
    Indeed, first, 
    $\str N\models\alpha(\tup v_{i};\tup m_i)$ by \eqref{eq:matching} (recall that $\sim$ is $=$). Next,
    $\str N\models\alpha(\tup v_{(j,i)};\tup m_i)$,
    since $\atp(\tup v_{(j,i)}/A)=
    \atp(\tup v_{i}/A)$ and $\tup m_i\subseteq A$.

    For the left-to-right implication, 
    let $\tup u\in N^{x\cup \tup y}$ be a 
    tuple with $\tup u[x]=\tup v_{(j,i)}[x]$,
    satisfying the formula
    $$\alpha(x,\tup y;\tup m_i) \land S_x(x)\land\bigwedge_{y\in\tup y}S_y(y).$$
    By \Cref{cl:mixed-tuple}, $\tup u\subseteq\tup n_{j'}$ holds for some $j'\in [L]$.
    For $j > h$ we have that $\atp(\tup n_j / \tup n_h) = \atp(\tup n / \tup n_h)$.
    Since $\tup n \subseteq N \setminus M$ and $\tup n_h \subseteq M$ we get that $\tup n_j$ and $\tup n_h$ have disjoint vertex sets.
    As the tuples $(\tup n_{j'}:j'\in [L])$ have pairwise disjoint vertex sets
and $\tup u[x]=\tup v_{(j,i)}[x]$ is an element of $\tup n_j$, it follows that $\tup u\subseteq \tup n_j$. As $\tup u$ 
satisfies the formula
$S_x(x) \land \bigwedge_{y\in\tup y}S_y(y)$ and for each $\tup n_h$ vertices marked with predicates $S_x, S_y$ belong to one tuple $\tup v_{(h, \ell)}$, we get that $\tup u = \tup v_{j, \ell}$ for some $\ell \in [K]$. (A priori it might be $\ell \neq i$ but $\tup v_{(j, i)}[x] = \tup v_{(j, \ell)}[x]$).
However, we know that $\str N \models \alpha(\tup v_{j, \ell}, \tup m_i) \iff \ell = i$, so it must be the case that $\tup u=\tup v_{(j,i)}$.
Finally, if vertices of the tuple $\tup v_{(j, i)}$ are marked with corresponding predicates $S_x, S_y$ it means that $i, i' \in f(j), i < i'$.

    Define 
    $$\psi(u,\tup z):=\exists x.\beta(u,x)\land \gamma(x,\tup z).$$ 

    By the above properties of $\beta$ and $\gamma$ it follows, that $$\str N \models \psi(a_j, \tup m_i)\quad\iff\quad i, i'\in f(j), i < i'.$$

    Define $\psi'(u, \tup z)$ analogically to $\psi$ with predicates $P_d'$ and $S_x', S_y'$ in place of $P_d$ and $S_x, S_y$.
    In the same way we get $$\str N \models \psi'(a_j, \tup m_i)\quad\iff\quad i, i'\in f(j), i > i'.$$

    Finally, define $$\phi(u, \tup z) \coloneqq \psi(u, \tup z) \lor \psi'(u, \tup z).$$
    The statement in claim follows, by the above properties of $\psi$ and $\psi'$.

\end{claimproof}


    As $K$ can be made arbitrarily large and $\phi$ does not depend on $K$, 
    the assumptions of \Cref{lem:monadic-stability-obstruction} 
    are satisfied. Therefore,  $\str N$ is not existentially monadically dependent, a contradiction.
\end{proof}

\subsection{Separating at distance $1$}
In \Cref{lem:one-separation} below we prove a variant of \Cref{thm:forking-for-structures}, for $r=1$.
The proof will produce a concrete form of flips, which we now define.

Flips in the sense of \Cref{def:flip} are a generalization of definable flips in graphs,
 considered in  \Cref{def:flips-graphs}.
According to \Cref{def:flip}, a flip of a graph does not need to be a graph itself.
The following is an analog of \Cref{def:flips-graphs} in relational structures.

\begin{definition}
    \label{def:syntactic-flip}
    Let $\alpha(\tup x;\tup y)$ be a partitioned atomic formula involving a relation $R$.
    We say that $\str M'$ is an \emph{syntactic $(\alpha,S)$-flip} of $\str M$ for some set $S \subseteq M$ if there exists a finite subset $S' \subseteq S$ and $\str M'$ can be obtained from $\str M$ in the following way:
    \begin{enumerate}
        \item Partition  $M^{\tup x}$  according to $\alpha(\tup x;\tup y)$-types over $S'$,
        and $M^{\tup y}$
        according to $\alpha(\tup y;\tup x)$-types over $S'$.
        We will refer to these parts as $(M^{\tup x}, S')$-classes and $(M^{\tup y}, S')$-classes respectively.
        \item Change the interpretation of $R$ in $\str M'$ such that for every $(M^{\tup x}, S')$-class $A$ and every $(M^{\tup y}, S')$-class $B$ we have either \[\str M \models \alpha(\tup a, \tup b) \iff \str M' \models \alpha(\tup a, \tup b)\] for every $\tup a \in A, \tup b \in B$ or \[\str M \models \alpha(\tup a, \tup b) \iff \str M' \not\models \alpha(\tup a, \tup b)\] for every $\tup a \in A, \tup b \in B$.
        
        In the first case we say that the pair $(A, B)$ is not flipped and in the other we say that this pair is flipped. We say that a class is flipped if it is a member of at least one flipped pair.
        \item For each $(M^{\tup x}, S')$-class $A$ that is flipped add to the language of $\str M'$ a new relation symbol $R_A$, interpreted as~$A\subseteq M^{|\tup x|}$. Do the same for each $(M^{\tup y}, S')$-class that is flipped.
        \item Interpretations of all the other relations remain unchanged.
    \end{enumerate}

    We say that a structure $\str N'$ is a \emph{syntactic $S$-flip} of $\str N$ if it can be obtained from $\str N$ by a sequence of such operations.
\end{definition}



Clearly an syntactic $S$-flip is a special case of an $S$-flip in the sense of \Cref{def:flip}.

\medskip

In the proof below, for a formula $\phi(\tup x; \tup y)$ and tuples $\tup a \in N^{\tup x}, \tup b \in N^{\tup y}$, we say that $\tup a$ and $\tup b$ are $\phi$-connected (or $\tup a$ is $\phi$-connected to $\tup b$) if $\str N \models \phi(\tup a; \tup b)$.
We also say that the $\phi$-type of a tuple $\tup c \in N^{\tup x}$ over a set $B$ is empty if $\tup c$ is not $\phi$-connected to any $\tup y$-tuple of vertices from $B$.

\begin{lemma}
    \label{lem:one-separation}
    Let $\str M \preceq \str N$ be two relational structures and let $\alpha(\tup w)$ be a stable atomic formula in their language.
    Fix a set $U \subseteq N \setminus M$ and a partition of $\tup w=x\cup \tup y\cup \tup z$ with $|x|=1$, such that $\Types{\alpha}(U^x \times (N \setminus M)^{\tup y}/ M^{\tup z})$ is finite.
    Then there is a finite set $S\subseteq M$ and an syntactic  $(\alpha(x\tup y; \tup z),S)$-flip $\str N'$ of $\str N$ such that
    \[
        \str N' \models \neg \alpha(v,\tup s; \tup s') \; \text{ for all } \; v \in U^{x}, \tup s \in (N \setminus M)^{\tup y}, \tup s' \in M^{\tup z}.
    \]

    Moreover, if there is a tuple $\tup s \in N^{x \tup y} \setminus M^{x\tup y}$ 
    such that $\str N\models \neg \alpha(\tup s; \tup u)$ for all $\tup u\in  M^{\tup z}$, then the $(S, N^{x\tup y})$-class that corresponds to the empty $\alpha(x\tup y;\tup z)$-type is not flipped.
    The same is true for the $(N^{\tup z},S)$-class that corresponds to the empty $\alpha(\tup z;x\tup y)$-type.

    Assume further that for every partition  $\tup w=x'\cup\tup y'\cup \tup z'$ with $|x'|=1$ and $|\tup y'| < k$ and for all tuples $\tup t \in (N \setminus M)^{\tup y'}$, $\tup t' \in M^{\tup z'}$ and an element $w \in U^{x'}$ we have $\str N\not\models \alpha(w \tup t; \tup t')$.
    Then, for each partition of $\tup w$ into three sets $x''$, $\tup y''$ and $\tup z''$ with $|x''|=1$ and $|\tup y''| \le k$ if $\str N\not\models \alpha(w \tup t; \tup t')$ for all tuples $\tup t \in (N \setminus M)^{\tup y''}$, $\tup t' \in M^{\tup z''}$ and an element $w \in U^{x''}$ then $\str N'\not\models \alpha(w \tup t; \tup t')$ for all tuples $\tup t \in (N \setminus M)^{\tup y''}$, $\tup t' \in M^{\tup z''}$ and an element $w \in U^{x''}$.
\end{lemma}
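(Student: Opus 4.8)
The plan is to follow the blueprint of \Cref{lem:normality-sublemma}, replacing the edge relation by the atomic formula $\alpha$ and its two ``sides'' $x\tup y$ and $\tup z$. First I would fix, inside $U^x\times(N\setminus M)^{\tup y}$, a finite set of representatives $\tup n_1,\ldots,\tup n_p$, one for each of the finitely many $\alpha(x\tup y;\tup z)$-types over $M$ realized there (here the finiteness hypothesis is used). Since $\alpha$ is stable, \Cref{thm:definability-of-types} supplies for each $i$ a definition $\psi_{\tup n_i}(\tup z)$, a positive boolean combination of atoms $\alpha(\tup m;\tup z)$ with $\tup m\in M^{x\tup y}$, and I would collect all these parameters into a finite set $S_0\subseteq M$. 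To make the types separable I would enlarge $S_0$ to $S$ by throwing in, for every pair $i\ne j$, a tuple of $M^{\tup z}$ distinguishing $\tup n_i$ from $\tup n_j$, and, for every $i$ for which $\tup n_i$ is $\alpha$-connected to some tuple of $M^{\tup z}$, one witnessing such tuple. I would then take $\str N'$ to be the syntactic $(\alpha(x\tup y;\tup z),S)$-flip of \Cref{def:syntactic-flip} in which a pair $(A,B)$ of an $(N^{x\tup y},S)$-class $A$ and an $(N^{\tup z},S)$-class $B$ is flipped precisely when $A$ meets $U^x\times(N\setminus M)^{\tup y}$, with representative $\tup n_A$, and $\psi_{\tup n_A}$ holds on $B$. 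This is well defined because $\psi_{\tup n_A}$, being a combination of atoms with parameters from $S$, is constant on each $(N^{\tup z},S)$-class.

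The verification of the displayed conclusion is then immediate. The distinguishing tuples in $S$ yield a homogeneity property: all tuples of $U^x\times(N\setminus M)^{\tup y}$ lying in one class $A$ realize the same $\alpha(x\tup y;\tup z)$-type over $M$, so $\tup n_A$ is unambiguous. For $v\in U^x$, $\tup s\in(N\setminus M)^{\tup y}$ and $\tup s'\in M^{\tup z}$, letting $A$ be the class of $v\tup s$ and $B$ that of $\tup s'$, homogeneity together with \Cref{thm:definability-of-types} give that $\str N\models\alpha(v\tup s;\tup s')$ holds exactly when $\psi_{\tup n_A}$ holds on $B$; since $(A,B)$ is flipped exactly in that case, $\str N'\models\neg\alpha(v\tup s;\tup s')$ in all cases. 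The witnessing tuples dispose of the empty $\alpha(x\tup y;\tup z)$-type class $A_\emptyset$: any $U$-representative with empty type over $S$ is $\alpha$-anticomplete to all of $M^{\tup z}$ (otherwise its witness in $S$ would be connected to it), so $\psi_{\tup n_{A_\emptyset}}\equiv\bot$ and $A_\emptyset$ is never flipped.

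For the empty $\alpha(\tup z;x\tup y)$-type class $B_\emptyset$ I would argue by contradiction. If $B_\emptyset$ were flipped, then $\psi_{\tup n_A}$ would hold on $B_\emptyset$ for some $A$ meeting $U$; but on $B_\emptyset$ every atom $\alpha(\tup m;\tup z)$ with $\tup m\in S^{x\tup y}$ is false, so a positive combination of such atoms can hold only when it is trivially true, which forces $\tup n_A$ to be $\alpha$-complete to $M^{\tup z}$ (choosing the definitions $\psi_{\tup n_i}$ non-tautologically whenever possible). Here the hypothesis enters: combining \Cref{lem:harrington} with \Cref{lem:complete-anticomplete-tuples}, the existence of the anticomplete tuple $\tup s\in N^{x\tup y}\setminus M^{x\tup y}$ must preclude any $U$-tuple being complete to $M^{\tup z}$, since a simultaneously complete and anticomplete configuration is exactly what \Cref{lem:complete-anticomplete-tuples} forbids. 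Turning the $x\tup y$-anticomplete witness of the hypothesis into the $\tup z$-anticomplete tuple required to apply \Cref{lem:complete-anticomplete-tuples} through Harrington's symmetry is the step I expect to be the main obstacle, and is precisely where the asymmetry between the graph case (where $E$ is symmetric and the two empty classes coincide) and the relational case must be handled.

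Finally, for the ``assume further'' clause I would check that the flip creates no new $\alpha$-connection of the constrained form. The only tuples whose $\alpha$-value changes are those whose $x\tup y$-coordinates lie in a flipped class, which by construction meets $U^x\times(N\setminus M)^{\tup y}$ and, by the two previous paragraphs, is neither empty class. A newly created connection $\alpha(w\tup t;\tup t')$ with $w\in U^{x''}$, $\tup t\in(N\setminus M)^{\tup y''}$ and $\tup t'\in M^{\tup z''}$, read through the partition $(x\tup y\mid\tup z)$ of the lemma, would therefore have to sit in such a flipped pair; the assumption on all partitions with $|\tup y'|<k$, together with the fact that the empty classes are unflipped, would then place the offending tuple in an unflipped class — a contradiction. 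This part is a finite but somewhat tedious bookkeeping over how the two partitions of $\tup w$ interleave, which I would organise in parallel with the distance arguments of \Cref{lem:normality-sublemma}.
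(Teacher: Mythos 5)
Your construction of $S$ and of the flip is, step for step, the one in the paper: representatives $\tup b_1,\ldots,\tup b_K$ of the finitely many types in $\Types{\alpha}(U^x\times(N\setminus M)^{\tup y}/M^{\tup z})$, pairwise distinguishing tuples and non-emptiness witnesses from $M^{\tup z}$, the parameters of the positive definitions $\psi_i$ supplied by \Cref{thm:definability-of-types}, and a flip that inverts the class of each $\tup b_i$ against exactly those $(S,N^{\tup z})$-classes on which $\psi_i$ holds. Your verification of the displayed conclusion (homogeneity of classes meeting $U^x\times(N\setminus M)^{\tup y}$ plus definability of types), of the first ``moreover'' clause via $\psi\equiv\bot$, and of the final ``assume further'' clause (the interleaving of $x''\tup y''\tup z''$ with $x\tup y\mid\tup z$, using that the empty classes are unflipped and that mixed tuples fall under the $|\tup y'|<k$ hypothesis) all coincide in substance with the paper's argument.

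The one genuine gap is the step you flag yourself: that the empty $(N^{\tup z},S)$-class is not flipped. You correctly reduce it to ``some $\tup b_i$ is $\alpha$-complete to $M^{\tup z}$'' (a positive $\psi_i$ all of whose atoms are false on that class can hold there only if it is a tautology), and you correctly observe that \Cref{lem:complete-anticomplete-tuples} requires the complete tuple and the anticomplete tuple to sit on \emph{opposite} sides of $\alpha$, whereas the lemma's hypothesis provides an anticomplete tuple on the same $x\tup y$-side as $\tup b_i$; you then leave the conversion unresolved. So the proposal does not prove the second ``moreover'' claim. For what it is worth, the paper's own proof is no more explicit at this point: it derives that $\tup b_i$ is complete to $M^{\tup z}$ and simply asserts a contradiction with \Cref{lem:complete-anticomplete-tuples}, without matching up the sides, so you have located a real soft spot rather than overlooked an argument present in the paper. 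Note also that in the only place the lemma is invoked (\Cref{lem:separation-higher-arity}) the difficulty evaporates, because a preliminary $\emptyset$-flip is performed there precisely so that no tuple of $N^{\tup x}\setminus M^{\tup x}$ is complete to $M^{\tup y}$ for any partition of any relation; under that normalisation, ``$\tup b_i$ is complete to $M^{\tup z}$'' is already a contradiction and the anticomplete witness is not needed. To make your write-up self-contained you should either import that normalisation as a hypothesis or supply the missing same-side argument explicitly.
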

\begin{proof}
    For every type in $\Types{\alpha}(U^x \times (N \setminus M)^{\tup y}/ M^{\tup z})$ take one tuple from $U^x \times (N \setminus M)^{\tup y}$ with this type and denote them by $\tup b_1, \ldots, \tup b_K$.
    Initialize $S$ as an empty set and for every pair of tuples $\tup b_i, \tup b_j$ with $i < j$ take a tuple $\tup c_{ij} \in M^{\tup z}$ such that $\str N \models \alpha(\tup b_i, \tup c_{ij})$ and $\str N \not \models \alpha(\tup b_j, \tup c_{ij})$ (or vice versa) and add the elements of $\tup c_{ij}$ into $S$.
    Also, for every tuple $\tup b_j$ with a non-empty $\alpha$-type over $M^{\tup z}$ add to $S$ elements of any tuple $\tup c_j$ such that $\str N \models \alpha(\tup b_j, \tup c_j)$.
    By \Cref{thm:definability-of-types}, for every tuple $\tup b_i$ there is a formula $\psi_i(\tup z)$ which is a positive boolean combination of clauses of the form $R(\tup p, \tup z)$ for some parameters $\tup p$ from $M^{x\tup y}$ such that for any $\tup m \in M^{\tup z}$ we have $\str N \models \alpha(\tup b_i, \tup m)$ if and only if $\str N \models \psi_i(\tup m)$.
    Add to $S$ all the parameters from all the formulas~$\psi_i$.

    Now we define $\str N'$.
    Take any $(S, N^{\tup z})$-class $C$.
    We can easily see that every $\tup b_i$ is $\alpha$-homogenous towards $C \cap M^{\tup z}$, i.e. for every $\tup c, \tup c' \in C \cap M^{\tup z}$ we have $\str N \models \alpha(\tup b_i, \tup c) \iff \alpha(\tup b_i, \tup c')$.
    This is because all the tuples in $C$ have the same atomic type over the set of parameters from the formula $\psi_i$.
    Therefore, we can define $\str N'$ by flipping the $(S, M^{x\tup y})$-class of every $\tup b_i$ with these $(S, M^{\tup z})$-classes $C$, such that $\tup b_i$ is $\alpha$-connected to all the tuples in $C \cap M^{\tup z}$.
    It is clear that $\str N' \not\models \alpha(v\tup s; \tup s')$ for every $v \in U^x$, $\tup s \in (N \setminus M)^{\tup y}$, and $\tup s' \in M^{\tup z}$.

    Assume that there is a tuple $\tup s \in N^{x \tup y} \setminus M^{x\tup y}$ 
    such that $\str N\models \neg \alpha(\tup s; \tup u)$ for all $\tup u\in \str M^{\tup z}$.
    Consider the $(S, N^{x\tup y})$-class $T$ consisting of tuples that do not satisfy $\alpha$ in $\str N$ with any parameters from $S$.
    Then, either $T$ doesn't contain any $b_i$ (and therefore is not flipped) or it contains one $b_i$ which is not $\alpha$-connected to any tuple in $M^{\tup z}$ (and again is not flipped).
    On the other hand, if the $(S, N^{\tup z})$-class $T'$ that has empty $\alpha$-type on $S$ is not empty and is flipped, it means that some $\tup b_i$ is $\alpha$-connected to everything in $T' \cap M^{\tup z}$.
    As the neighborhood of $\tup b_i$ is defined by a positive formula, then $\tup b_i$ is $\alpha$-connected to every tuple in $M^{\tup z}$.
    However, this contradicts  \Cref{lem:complete-anticomplete-tuples}.
    Therefore, no empty class is flipped.

    Now consider a partition of the free variables of $\alpha$ into three sets $x''$, $\tup y''$ and $\tup z''$ such that $x''$ is a singleton, $|\tup y''| \le k$ and for all tuples $\tup t \in (N \setminus M)^{\tup y''}$, $\tup t' \in M^{\tup z''}$ and all elements $w \in U^{x''}$ we have $\str N \not\models \alpha(w \tup t; \tup t')$.
    Assume that the partition $x'', \tup y'', \tup z''$ is different than $x, \tup y, \tup z$.

    Take any $w\tup t\tup t' \in U^{x''} \times (N \setminus M)^{\tup y''} \times M^{\tup z''}$.
    Our goal is to show that $\str N' \not \models \alpha(w\tup t; \tup t')$.
    If $x''\tup y'' \subseteq x\tup y$ then the $x\tup y$ part of $w\tup t\tup t'$ is not $\alpha$-connected to anything in $S^{\tup z}$, so in particular it's part is not flipped and since $\str N \not\models \alpha(w \tup t; \tup t')$ the same is true in $\str N'$.
    On the other hand, if $x''\tup y'' \subseteq \tup z$ then again the $\tup z$ part of $w\tup t\tup t'$ is not $\alpha$-connected to $S^{x\tup y}$, so it is not flipped and we get the conclusion in the same way as in the previous case.
    It remains to deal with the cases when $x'' \tup y''$ is split between $x\tup y$ and $\tup z$.
    If $x'' \in x\tup y$ but $x''\tup y'' \not \subseteq x\tup y$ then the $x\tup y$ part of $w\tup t\tup t'$ has one element from $U$ and at most $k - 1$ additional elements from $N \setminus M$, so by our assumption it it is not $\alpha$-connected to any tuple in $S^{\tup z}$ and is not flipped.
    The case when $x'' \in \tup z$ is analogous.
\end{proof}

\subsection{Final separation statement}
We now prove \Cref{lem:separation-higher-arity}, repeated below.
\separationhigherarity*
\begin{proof}
    Denote language of $\str N$ by $\Ll$.
    By \Cref{lem:complete-anticomplete-tuples}, possibly after doing an $\emptyset$-flip, we can assume that for every relation $R(\tup w) \in \Ll$ and for every partition of $\tup w$ into two sets $\tup x, \tup y$, there is no tuple $\tup s \in N^{\tup x} \setminus M^{\tup x}$ such that for every $\tup m \in M^{\tup y}$ we have $\str N \models R(\tup s; \tup m)$.

    Observe, that this lemma is obvious for $a \in M$.
    Therefore, we will assume that $a \in N \setminus M$.

    The idea of this proof is to proceed by induction on $r$ and apply in turns \Cref{lem:finite-numer-of-types} and \Cref{lem:one-separation} thus creating consecutive flips $\str N'$ of $\str N$ (it will be the case that the language of $\str N'$ is not the same as the language of $\str N$ but will always be finite).

    Assume that in the current flip $\str N'$ the ball of radius $r$ around $a$ is disjoint from $M$.
    We iterate over all relations in the language of $\str N'$ starting with the one with highest arity.
    For each such relation $R$ we iterate from $k = 0$ to $k = \text{ar}(R)$ and consider all the tuples in $R$ which have one element from $B^r_{\str N'}(a)$, $k$ elements from $N \setminus M$ and the rest elements from $M$.
    We want to make a flip in which no such tuple for $k' \le k$ is in $R$.
    For this we use \Cref{lem:finite-numer-of-types} which guarantees that the conditions of \Cref{lem:one-separation} are satisfied and therefore we can find a desired flip.

    With each such flip the structure $\str N''$ that we obtain has a finite number of new relations in its language, as it is stated in the definition of an syntactic flip (\Cref{def:syntactic-flip}).
    However, these relations are stable and of smaller arity then $R$.
    Moreover, in every such $S$-flip, the $S$-class that contain $B^r_{\str N'}(a)$ is not flipped and therefore in the Gaifman graph of $\str N''$ the ball $B^r(a)$ remains unchanged, that is, 
    $B^r_{\str N''}(a)=B^r_{\str N'}(a)$.
    Thus, we may now iterate over all newly added relations.
\end{proof}

\highnormality*
\begin{proof}
    We proceed similarly as in the case of graphs (see \Cref{lem:flip-double-separation}).
    Namely, for the same reason as in the case of graphs, the set $S$ constructed in \Cref{lem:one-separation} is at distance at most $2$ from $U$.
    Therefore, if an element $e$ is at distance $r > 3$ from $U$ before the flip, then it is at distance greater than $r - 2$ from $U$ after the flip.
    Since the total number of flips needed for achieving $a \ind {M}r M$ depends only on $r$ and $\str N$, we can separate $a$ form $bM$ by definable flips at a large distance $r'$ and then separate $b$ from $M$ using definable flips at distance $r$ which decreases the distance between $a$ and $bM$ in a controlled way.
\end{proof}

\subsection{Proof of \Cref{prop:forking-for-structures-converse}}
\label{app:converse}
In this section we prove \Cref{prop:forking-for-structures-converse} which we restate below.
\forkingforstructuresconverse*

Before we start the proof, let us remark that we defined the notion of monadic stability for classes of structures.
This is not the usual definition in model theory, where the definition of a \emph{monadically stable theory} is more common.

Let $\Ll$ be a finite relational language and let $\Ll^+$ be $\Ll$ extended with a countably infinite number of unary predicates not present in $\Ll$.
We say that an $\Ll$ theory $T$ is monadically stable if it is stable in the language $\Ll^+$.
Intuitively, we add to our language new unary predicates not mentioned by the theory $T$ and we expect $T$ to be stable in this new language.
By \cite[Lemma 4.4]{braunfeld2022existential}, we get that a class of structures $\Cc$ is monadically stable if and only if its theory $\Th(\Cc)$ is monadically stable.
To be more precise, in the statement of \cite[Lemma 4.4]{braunfeld2022existential} the authors assume that $\Cc$ is a hereditary class of structures, but this assumption can be dropped, as both 
properties ($\CC$ is monadically stable and $\Th(\CC)$ is monadically stable) are preserved by taking the hereditary closure of $\CC$ (the latter follows from \cite[Proposition 2.5]{braunfeld2022existential}).
We remark that the same holds for monadically dependent classes of structures and monadically dependent theories.

Now we state (a fragment of) \cite[Theorem 1.1]{braunfeld2021characterizations}.
We say that a theory $T$ has the \emph{f.s. dichotomy} if the following holds: for all models $\str M \preceq \str N$ of $T$ and all finite tuples $\tup a, \tup b \in N$ such that $\tp(\tup b/M\tup a)$ is finitely satisfiable in $\str M$, for any singleton $c\in N$ we have that either $\tp(\tup b/M\tup ac)$ or $\tp(\tup bc/M\tup a)$ is finitely satisfiable in $\str M$.
\cite[Theorem 1.1]{braunfeld2021characterizations} states in particular that a theory $T$ is monadically dependent if and only if it has the f.s. dichotomy.

\begin{proof}[Proof of \Cref{prop:forking-for-structures-converse}]
    Let $\str M$ be a stable structure such that for every elementary extension $\str M \preceq \str N$, every elementary substructure $\str S \preceq \str N$, and all $a,b\in V(\str N)$
    $$a\find S b\qquad\iff\qquad a\ind S r b\quad\text{for all $r\in\N$}.$$
    Take any model $\str S$ of $\Th(\str M)$ and an elementary extension $\str S \preceq \str N'$.
    By a simple application of compactness, we can find a model $\str N$ of $\Th(\str M)$ such that both $\str M \preceq \str N$ and $\str N' \preceq \str N$.
    Take any finite tuples $\tup a, \tup b \in N$ such that $\tp(\tup b/S\tup a)$ is finitely satisfiable in $\str S$.
    Since $\str S$ is stable, this is equivalent to $\tup a \find S \tup b$.
    By assumption, we have that $\tup a \ind S r \tup b$ for all $r \in \N$.

    Recall that by \Cref{lem:flip-dependence-transitivity}, for any three elements $u, v, w \in N$, if $u \nind S{r} v$ and $v \nind S{q} w$ then $u \nind S{r + q} w$ (this lemma was stated for graphs but the proof works in the same way for structures).
    
    Take any element $c \in N$.
    Observe, that it can't happen that $c \nind S{r} a$ and $c \nind S{q} b$ for some positive integers $r, q$ and elements $a \in \tup a, b \in \tup b$ as that would give us $a \nind S{r + q} b$ and in particular $\tup a \nfind S \tup b$.
    Therefore, we have that either $c \find S \tup a$ or $c \find S \tup b$.
    In the first case we have that $\tup b c \find S \tup a$ and $\tup a c \find S \tup b$ in the second.
    Since forking independence is equivalent to finite satisfiability for stable structures, we get that $\Th(\str M)$ has the f.s. dichotomy, which in turns implies that $\str M$ is monadically stable.
\end{proof}

\end{document}